  \tikzset{
    invisible/.style={opacity=0},
    visible on/.style={alt={#1{}{invisible}}},
    alt/.code args={<#1>#2#3}{%
      \alt<#1>{\pgfkeysalso{#2}}{\pgfkeysalso{#3}} 
    },
  }
\tikzset{every state/.style={minimum size=0pt}}
\newcommand{\ghot}[1]{#1\! \leadsto\! \diamond}
\newcommand{\linecondit}[3]{\ensuremath{(#1)\,\mathbf{?}\,#2\mathbf{:}\,#3}}
\newcommand{\balan}[1]{\ensuremath{\mathsf{balanced}(#1)}}
\newcommand{\subst}[2]{\{^{#1}\!/{\scriptstyle #2}\}}
\newcommand{\lrangle}[1]{\langle #1 \rangle}
\newcommand{\blrangle}[1]{\big\langle #1 \big\rangle}
\newcommand{\relS}{\ensuremath{\mathbin{\mathcal{S}}}}
\newcommand{\relSs}{\ensuremath{\,\mathcal{S}\xspace}}
\newcommand{\brtext}[1]{[\textrm{\small #1}]}
\newcommand{\eltsrule}[1]{{\footnotesize [\textrm{#1}]}}
\newcommand{\trule}[1]{{\footnotesize\brtext{#1}}}
\newcommand{\orule}[1]{{\scriptsize{\brtext{#1}}}}
\newcommand{\bnfis}{\;::=\;}
\newcommand{\bnfbar}{\mathrel{\big|}}
\newcommand{\sbnfbar}{\mathrel{\big|}}
\newcommand{\set}[1]{\{#1\}}
\newcommand{\es}{\emptyset}
\newcommand{\dom}[1]{\mathtt{dom}(#1)}
\newcommand{\freev}[1]{\lrangle{#1}}
\newcommand{\boundv}[1]{(#1)}
\newcommand{\inact}{\mathbf{0}}
\newcommand{\Par}{\;|\;}
\newcommand{\news}[1]{(\nu\, #1)\,}
\newcommand{\newsp}[2]{(\nu\, #1)(#2)}
\newcommand{\varp}[1]{#1}
\newcommand{\rvar}[1]{#1}
\newcommand{\recp}[2]{\mu \rvar{#1}. #2}
\newcommand{\Def}{\sessionfont{def}\ }
\newcommand{\defeq}{\stackrel{\Def}{=}}
\newcommand{\bn}[1]{\mathtt{bn}(#1)}
\newcommand{\fn}[1]{\mathtt{fn}(#1)}
\newcommand{\rfn}[1]{\mathtt{rn}(#1)}
\newcommand{\fv}[1]{\mathtt{fv}(#1)}
\newcommand{\fs}[1]{\mathtt{fs}(#1)}
\newcommand{\subj}[1]{\mathtt{subj}(#1)}
\newcommand{\scong}{\equiv}
\newcommand{\red}{\longrightarrow}
\def\subst#1#2{\{\raisebox{.5ex}{\small$#1$}\! / \mbox{\small$#2$}\}}
\newcommand{\sessionfont}[1]{\mathtt{#1}}
\newcommand{\vart}[1]{\mathsf{#1}}
\newcommand{\shsep}{.}
\newcommand{\outses}{!}
\newcommand{\inpses}{?}
\newcommand{\selses}{\triangleleft}
\newcommand{\brases}{\triangleright}
\newcommand{\dual}[1]{\overline{#1}}
\newcommand{\cat}{,}
\newcommand{\bout}[2]{#1 \outses \freev{#2} \shsep}
\newcommand{\about}[2]{#1 \outses \freev{#2}}
\newcommand{\boutt}[2]{#1 \outses \freev{#2}}
\newcommand{\bbout}[2]{#1 \outses \blrangle{#2} \shsep}
\newcommand{\abbout}[2]{#1 \outses \blrangle{#2}}
\newcommand{\binpt}[2]{#1 \inpses \boundv{#2}}
\newcommand{\abinp}[2]{#1 \inpses \boundv{#2}}
\newcommand{\binp}[2]{#1 \inpses \boundv{#2} \shsep}
\newcommand{\bsel}[2]{#1 \selses #2 \shsep}
\newcommand{\bbra}[2]{#1 \brases \set{#2}}
\newcommand{\tfont}[1]{\mathtt{#1}}
\newcommand{\tsep}{\textcolor{darkgray}{;}}
\newcommand{\chtype}[1]{\lrangle{#1}}
\newcommand{\outtype}{\outses}
\newcommand{\inptype}{\inpses}
\newcommand{\trec}[2]{\mu\vart{#1}.#2}
\newcommand{\tvar}[1]{\vart{#1}}
\newcommand{\tinact}{\tfont{\textcolor{darkgray}{end}}}
\newcommand{\btoutt}[1]{\outtype \lrangle{#1}}
\newcommand{\btinpt}[1]{\inptype (#1) }
\newcommand{\btout}[1]{\outtype \lrangle{#1} \tsep}
\newcommand{\btinp}[1]{\inptype (#1) \tsep}
\newcommand{\btsel}[1]{\oplus \set{#1}}
\newcommand{\btbra}[1]{\& \set{#1}}
\newcommand{\proves}{\vdash}
\newcommand{\hastype}{\triangleright}
\newcommand{\outlts}{\outses}
\newcommand{\inplts}{\inpses}
\newcommand{\bactout}[2]{#1 \outlts \freev{#2}}
\newcommand{\bactinp}[2]{#1 \inplts \freev{#2}}
\newcommand{\xMapsto}[2][]{\ext@arrow 0599{\Mapstofill@}{#1}{#2}}
\def\Mapstofill@{\arrowfill@{\Mapstochar\Relbar}\Relbar\Rightarrow}
\newcommand{\by}[1]{\mathrel{\xrightarrow{#1}}}
\newcommand{\By}[1]{\mathrel{\xRightarrow{#1}}}
\newcommand{\hby}[1]{\mathrel{\xmapsto{#1}}}
\newcommand{\Hby}[1]{\mathrel{\xMapsto{#1}}}
\definecolor{pcolor}{rgb}{0.36, 0.54, 0.66}
\definecolor{tcolor}{rgb}{0.57, 0.36, 0.51}
\newcommand{\map}[1]{\ensuremath{\textcolor{pcolor}{\llbracket}#1\textcolor{pcolor}{\rrbracket}}}
\newcommand{\dmap}[1]{\textcolor{\colorpro}{\ensuremath{\{\!|\textcolor{black}{#1}|\!\}}}}
\newcommand{\dualof}{\ \mathsf{dual}\ }
\newcommand{\HOp}{\ensuremath{\mathsf{HO}\pi}\xspace}
\newcommand{\HO}{\ensuremath{\mathsf{HO}}\xspace}
\newcommand{\msts}{MSTs\xspace}
\newcommand{\Proc}{\ensuremath{\diamond}}
\newcommand{\appl}[2]{#1\, {#2}}
\newcommand{\abs}[2]{\lambda #1.\,#2}
\newcommand{\lollipop}{\multimap}
\newcommand{\sharedop}{\rightarrow}
\newcommand{\lhot}[1]{#1\! \lollipop\! \diamond}
\newcommand{\shot}[1]{#1\! \sharedop\! \diamond}
\newcommand{\lhotup}[1]{#1^{\lollipop}}
\newcommand{\shotup}[1]{#1^{\sharedop}}
\newcommand{\x}{\ensuremath{x}}
\newcommand{\h}{\ensuremath{h}}
\newtheorem{notation}{Notation}[section]
\newif\ifny\nyfalse
\newif\ifdm\dmtrue
\newif\ifrhu\rhutrue
\newif\ifjp\jptrue
\newif\ifjp\jptrue
\newcommand{\newj}[1]{{#1}}
\newcommand{\newjb}[1]{{#1}}
\newcommand{\AT}[2]{#1 \! : \! #2}
\newcommand{\secref}[1]{\Cref{#1}}
\newcommand{\defref}[1]{\Cref{#1}}
\newcommand{\figref}[1]{\Cref{#1}}
\newcommand{\thmref}[1]{\Cref{#1}}
\newcommand{\exref}[1]{\Cref{#1}}
\newcommand{\lemref}[1]{\Cref{#1}}
\newcommand{\remref}[1]{\Cref{#1}}
\newcommand{\tabref}[1]{\Cref{#1}}
\definecolor{lightgray}{gray}{0.75}
\newcommand{\mapchar}[2]{\ensuremath{[\!\!(#1)\!\!]^{#2}}}
\newcommand{\omapchar}[1]{\ensuremath{[\!\!(#1)\!\!]_{\mathsf{c}}}}
\newcommand{\htrigger}[2]{#1 \hookleftarrow_{\mathtt{H}} #2}
\newcommand*{\rom}[1]{({\expandafter\romannumeral #1})}
\newcommand{\defas}{\triangleq}
\definecolor{fxtarget}{rgb}{0.8300,0.1400,0.1400}
\newcommand{\colorpro}{VioletRed}
\newcommand{\colortyp}{JungleGreen}
\newcommand{\B}[3]{\textcolor{\colorpro}{\mathcal{B}^{\textcolor{black}{#1}}_{\textcolor{black}{#2}}\big(\textcolor{black}{#3}\big)}}
\newcommand{\mB}[3]{\textcolor{\colorpro}{\mathsf{B}^{\textcolor{black}{#1}}_{}\big(\textcolor{black}{#3}\big)}}
\newcommand{\V}[3]{\textcolor{\colorpro}{\mathcal{V}^{}_{\textcolor{black}{#2}}\big(\textcolor{black}{#3}\big)}}
\newcommand{\mV}[3]{\textcolor{\colorpro}{\mathsf{V}^{#1}_{#2}\big(\textcolor{black}{#3}\big)}}
\newcommand{\Gt}[1]{\textcolor{\colortyp}{\mathcal{G}(\textcolor{black}{#1})}}
\newcommand{\D}[1]{\textcolor{\colorpro}{\mathcal{D}(\textcolor{black}{#1})}}
\newcommand{\indT}[1]{\textcolor{\colortyp}{[\textcolor{black}{#1}\rangle}}
\newcommand{\indTaux}[2]{\textcolor{\colortyp}{[\textcolor{black}{#1}\rangle^{\star}_{\textcolor{black}{#2}}}}
\newcommand{\Rt}[1]{\textcolor{\colortyp}{\mathcal{R}(\textcolor{black}{#1})}}
\newcommand{\Rts}[3]{\textcolor{\colortyp}{\mathcal{R}^{\star}(\textcolor{black}{#3})}}
\newcommand{\envR}{\Delta_\mu}
\newcommand{\envPropR}{\Phi}
\newcommand{\mD}[1]{\textcolor{\colorpro}{\mathsf{D}(\textcolor{black}{#1})}}
\newcommand{\plen}[1]{\textcolor{\colorpro}{\lbag \textcolor{black}{#1}\rbag }}
\newcommand{\len}[1]{|#1|}
\newcommand{\incrname}[2]{\subst{#1_{#2+1}}{#1_#2}}
\newcommand{\prop}{c}
\newcommand{\apropinp}[1]{\abinp{\prop_{#1}}}
\newcommand{\propinp}[1]{\binp{\prop_{#1}}}
\newcommand{\propinpv}[1]{\binp{\prop{#1}}}
\newcommand{\propoutv}[1]{\bout{\dual{\prop{#1}}}}
\newcommand{\apropinpv}[1]{\abinp{\prop{#1}}}
\newcommand{\apropoutv}[1]{\about{\dual{\prop{#1}}}}
\newcommand{\propout}[1]{\bout{\dual{\prop_{#1}}}}
\newcommand{\propoutt}[1]{\boutt{\dual{\prop_{#1}}}}
\newcommand{\apropout}[1]{\about{\dual{\prop_{#1}}}}
\newcommand{\proprinp}[1]{\binp{\prop^{#1}}}
\newcommand{\proprout}[1]{\bout{\prop^{#1}}}
\newcommand{\aproprout}[1]{\about{\prop^{#1}}}
\newcommand{\apropbout}[1]{\abbout{\dual{\prop_{#1}}}}
\newcommand{\slhotup}[1]{{#1}^{\leadsto}}
\newcommand{\cbpropinp}[2]{\textcolor{orange!40!gray}{\binp{\prop_{#1}}{#2}}}
\newcommand{\cbapropout}[2]{\textcolor{orange!40!gray}{\apropout{#1}{#2}}}
\newcommand{\slhot}[1]{{#1} \leadsto \diamond}
\newcommand{\degree}{l}
\newcommand{\thunk}[1]{\{\!\{#1\}\!\}}
\newcommand{\appthunk}[1]{\mathtt{run}\,{#1}}
\newcommand{\tr}{\mathsf{tr}}
\newcommand{\appendx}[1]{#1}
\newcommand{\mugamma}{\gamma}
\newcommand{\misty}{\textsf{MISTY}\xspace}
\newcommand{\nextn}[1]{\ensuremath{\mathsf{next}(#1)}}
\newcommand{\lin}[1]{\mathsf{lin}(#1)}
\newcommand{\fpn}[1]{\ensuremath{\mathsf{fpn}(#1)}}
\newcommand{\wtd}[1]{\widetilde{#1}}
\definecolor{colorrevision}{rgb}{0,0,255}
\declaretheorem[name=Theorem, numberwithin=section]{thm}
\declaretheorem[name=Lemma, numberwithin=section]{lemm}
\crefname{lemm}{Lemma}{Lemmas}
\newtheorem{theorem}{Theorem}[section]
\newtheorem{proposition}[thm]{Proposition}
\theoremstyle{definition}
\newtheorem{definition}{Definition}[section]
\theoremstyle{remark}
\newtheorem{remark}{Remark}[section]
\theoremstyle{definition}
\newtheorem{example}{Example}[section]
\newcommand{\hidejp}[1]{}
\newcommand{\Dlet}{\mathcal{J}}
\newcommand{\Db}[3]{\Dlet^{#1}_{#2}\big(#3\big)}
\newcommand{\Dset}{\Dlet}
\newcommand{\Cb}[3]{\mathcal{C}^{#1}_{#2}\big(#3\big)}
\newcommand{\Cbs}[3]{\widehat{\mathcal{C}}^{#1}_{#2}\big(#3\big)}
\newcommand{\Vb}[3]{\mathcal{C}^{#1}_{#2}\big(#3\big)}
\newcommand{\vrelate}{\boxtimes}
\newcommand{\valuesrelate}{\bowtie}
\newcommand{\iname}[1]{\ensuremath{\breve{#1}}}
\newcommand{\processrelate}{\diamond}
\newcommand{\rfni}[1]{\ensuremath{\mathsf{rfni}(#1)}}
\newcommand{\fcr}[1]{\ensuremath{\mathsf{cr}(#1)}}
\newcommand{\highlighta}[1]{\colorbox{lightgray!60}{\ensuremath{#1}}}
\newcommand{\indices}[1]{\ensuremath{\mathsf{index}(#1)}}
\newcommand{\initname}[2]{\ensuremath{\subst{#1_{#2}}{#1}}}
\newcommand{\rulename}[2]{\langle \texttt{#1}#2 \rangle}
\newcommand{\Rb}[1]{\ensuremath{\mathcal{R}_{#1}}}
\newcommand{\mstb}{\approx^{\mathtt{M}}}
\newcommand{\hob}{\approx^{\mathtt{H}}}
\newcommand{\mapcharm}[2]{\langle #1 \rangle^{#2}}
\newcommand{\omapcharm}[1]{\langle #1 \rangle}
\newcommand{\mhby}[1]{\hby{#1}_{\texttt m}}
\newcommand{\Mhby}[1]{\Hby{#1}_{\texttt m}}
\newcommand{\horelm}[5]{#1 \proves #2 #3 #4 \proves #5}
\newcommand{\namepass}[1]{\ulcorner #1 \urcorner}
\newcommand{\namepasstyp}[1]{\llceil #1 \rrceil}
\newcommand{\plscheck}[1]{#1}
\newcommand{\subsqn}[1]{\ensuremath{\mathsf{next}(#1)}}
\newcommand{\recprov}[3]{\binp{\prop^{#1}}{#2}\appl{#2}{#3}}
\newcommand{\recprovx}[3]{\binp{\prop^{#1}}{#2}\appl{#2}{#3}}
\newcommand{\recpvar}[1]{#1}
\title{A Minimal Formulation of Session Types: \\The Sessions of Trios in Concert}
\newcommand{\thesisalt}[2]{#1}
\author{Alen Arslanagi\'{c}}
\author{Jorge A. P\'{e}rez}
\author{Dan Frumin}
\affil{University of Groningen, The Netherlands}
\date{\today}                     
\begin{document}

\maketitle
\begin{abstract}
%
%


Session types are a type-based approach to the verification of message-passing programs.
They specify communication structures essential to enforcing program correctness; by relying on \emph{sequencing} constructs, a session type can precisely describe the intended order  of communication actions through a channel.

In this paper we study a fragment of session types that makes a very limited use of sequencing; we call it \emph{minimal} session types. 
In the context of a core process calculus with sessions and higher-order concurrency,
we establish two technical results. 
First, we prove that every process $P$ typable with standard session types can be compiled down into a process $\D{P}$ typable with minimal session types. 
Second, we prove that $P$ and $\D{P}$ are behaviorally equivalent. 
These results show that having sequencing in both processes and session types is convenient, but that only sequencing in processes is truly indispensable, as it can correctly codify sequencing  in types.

Our developments draw inspiration from work by Parrow on behavior-preserving decompositions of untyped processes using \emph{trios}, i.e., processes with exactly three nested prefixes.
By casting Parrow's approach in the realm of typed processes, our developments reveal a conceptually simple formulation of session types, supported by static and dynamic correctness results.
\end{abstract}

\section{Introduction}  
\label{s:intro}
Session types are a type-based approach to the verification of message-passing programs.
A session type specifies what messages should be exchanged through a channel and in what order.
This makes session types a useful tool to enforce safety and liveness properties related to communication correctness.
Originated in the realm of concurrency theory---and using the $\pi$-calculus as specification language---session types have had a significant impact on the foundations of programming languages~\cite{DBLP:journals/csur/HuttelLVCCDMPRT16}, but also on their practice~\cite{DBLP:journals/ftpl/AnconaBB0CDGGGH16}.
\plscheck{Our goal in this work is to understand to what extent session types can admit simpler formulations and to  clarify the status of such formulations.}

In session-typed languages, \emph{sequencing} constructs in types and processes specify the intended structure of message-passing protocols.
\plscheck{Sequencing is what distinguishes session types from other type systems, such as simple types~\cite{PierceSangiorgi95} and linear types~\cite{LinearPi}, in which a channel's type specifies exactly one communication action, rather than a series of actions.}
For example, in the session type $S = \btinp{\mathsf{int}} \btinp{\mathsf{int}} \btout{\mathsf{bool}} \tinact$, sequencing (denoted `$;$') allows us to specify a protocol for a channel that \emph{first} receives (?)
two integers, \emph{then} sends (!)
a boolean, and \emph{finally} ends ($\tinact$).
As such, $S$ could be the type of a service that checks for integer equality.

Sequencing in types goes hand-in-hand with sequencing in processes, which is specified using the prefix construct (denoted~`$.$').
The $\pi$-calculus process $P = \binp{s}{x_1} \binp{s}{x_2} \bout{s}{x_1=x_2} \inact$ is an implementation of the equality service: it \emph{first} expects two values on name $s$, \emph{then} outputs a boolean on $s$, and \emph{finally} stops.
We can see that name $s$ in $P$ conforms to the session type~$S$.
Session types can also specify protocols in which sequencing occurs within labeled choices and recursion; these typed constructs are also in close match with their respective process expressions.
This way, e.g., a recursive variant of 
$S$ is $\trec{t}\btinp{\mathsf{int}} \btinp{\mathsf{int}} \btout{\mathsf{bool}} \vart{t}$, which specifies a \emph{persistent} integer equality service. 

Session types were originally developed as a typing discipline for the analysis of binary (two-party) protocols~\cite{honda.vasconcelos.kubo:language-primitives}.
Since then session types have been extended in many directions.
We find, for instance, multiparty session types~\cite{HYC08}, and extensions with dependent types, assertions, and exceptions (cf.~\cite{DL10,DBLP:journals/csur/HuttelLVCCDMPRT16} for surveys).
Building upon sequencing constructs, all these extensions seek to address natural questions on the expressivity and applicability of session types theories.

\plscheck{In this paper, we study a different  direction.
We identify a \emph{fragment} of (binary) session types that makes a very limited use of sequencing constructs.
In our fragment, 
the type $S$ in 
session types such as `$\btout{U} S$' and `$\btinp{U} S$' can only correspond to $\tinact$.
Sequencing is confined to recursive session types, which can only be of the forms `$ \trec{t}{\btout{U} \vart{t}}$' and `$ \trec{t}{\btinp{U} \vart{t}}$'.
That is, we admit sequencing solely for the purpose of specifying recursive protocols with a single action.
Because of this restricted role of sequencing, we call the resulting fragment \emph{minimal session types}}.
 
\plscheck{We study minimal session types on top of \HO, the core process calculus for session-based concurrency studied by Kouzapas et al.~\cite{DBLP:journals/iandc/KouzapasPY19}.
\HO is a very small language, which only supports passing of abstractions (i.e., functions from names to processes) and lacks name-passing and recursion.
Still, \HO is very expressive, because both features can be encoded in \HO in a fully abstract way.
Moreover, \HO has a well-developed theory of behavioral equivalences~\cite{KouzapasPY17}. 
Being a strict fragment of the typed framework by Kouzapas et al., processes typed with our minimal session types for \HO inherit all the technical properties and behavioral theory from \cite{DBLP:journals/iandc/KouzapasPY19,KouzapasPY17}}.

Leveraging the economical formulation and expressivity of \HO, we investigate whether the limited use of sequencing in minimal session types 
is a too drastic restriction. 
It turns out that it is not:
we show that for every \HO process $P$ typable under standard session types, there is a  \emph{decomposition} of $P$, denoted $\D{P}$, a process that codifies the sequencing information given by the session types (protocols) of $P$ using additional synchronizations.
\Cref{f:idea} illustrates the key idea using the process $P$ and session type $S$ motivated above. 
Because $P$ contains three actions in sequence (as specified by $S$), its decomposition  $\D{P}$ consists of three processes in parallel---each implementing a single action of $P$---as well as of orchestration mechanisms between them: the synchronizations on names $c_2, \ldots, c_5$ ensure that the sequencing in $P$ is preserved and that received names are properly propagated. These three parallel processes are typable with minimal session types, which are obtained by ``slicing'' $S$ (in the figure, these types appear below each process).

\begin{figure}[!t]
\begin{mdframed}
		\begin{center}
		\hspace*{0.2cm} 
		\begin{tikzpicture}[scale=1.75]
		\tikzstyle{ann} = [draw=none,fill=none,right]
		\pgfmathsetmacro{\h}{0.0}
		\pgfmathsetmacro{\a}{-2.5}
		\pgfmathsetmacro{\d}{-1.3}

			\node[draw,dotted](A) at (\h, 0) {$P = \binp{s}{x_1}\binp{s}{x_2}\bout{s}{x_1 = x_2} \inact$};
			\node[draw=none] at (\h,-0.45) {$s : {?({\mathsf{int}})}; {?({\mathsf{int}})};
		 {!\langle {\mathsf{bool}}\rangle};\tinact$};
			\node[draw,dotted] (B) at (\h, \d) {$\D{P}$};
			\draw[->, thick] (\h,-0.575 - 0.2) -- (\h,\d+0.3);
			\node[draw,dotted] (T1) at (\h-3.0,\a) {$\cbpropinp{2}{}\binp{s_1}{x_1}\cbapropout{3}{x_1}$};
			\node at (\h-1.5,\a) {$\parallel$};
			\node[draw,dotted] (T2) at (\h,\a) {$\cbpropinp{3}{y_1}\binp{s_2}{x_2}\cbapropout{4}{y_1, x_2}$}; 
			\node at (\h+1.45, \a) {$\parallel$};
			\node[draw,dotted] (T3) at (\h+3.0,\a) {$\cbpropinp{4}{x_1, x_2}\bout{s_3}{x_1 = x_2}\cbapropout{5}{}$};
			\node at (\h-3.0,\a-0.45) {$s_1:{?({\mathsf{int}})};\tinact$};
			\node at (\h-3.0,\a-0.75) {$\prop_2:{?()};\tinact$};
			\node at (\h,\a-0.45) {$s_2:{?({\mathsf{int}})};\tinact$};
			\node at (\h,\a-0.75) {$\prop_3:{?({\mathsf{int}})};\tinact$};
			\node at (\h+3.0,\a-0.45) {$s_3:{!\langle{\mathsf{bool}}\rangle};\tinact$};
			\node at (\h+3.0,\a-0.75) {$\prop_4:{?({\mathsf{int}, \mathsf{int}})};\tinact$};
			\draw[->, color=magenta, thick](\h-2.5,\a+0.25) edge[bend left] node [left] {} (\h-1,\a+0.25);
			\draw[->, color=magenta, thick](\h+0.5,\a+0.25) edge[bend left] node [left] {} (\h+2,\a+0.25);
			\node[draw=none] {};
			\draw[->, color=magenta, thick] (B) edge[bend right=22] node [left] {} (T1); 
			\draw[<-, color=magenta, thick] (B) edge[bend left=22] node [left] {} (T3); 
		\end{tikzpicture}
		\end{center}
		\end{mdframed}
\caption{The process decomposition, illustrated. Arrows in \textcolor{magenta}{magenta} indicate synchronizations orchestrated by the decomposition $\D{P}$ into trios. \label{f:idea}}
		\end{figure}
		
The definition of $\D{P}$ is interesting on its own, as it draws inspiration from a known result by Parrow, who showed that any \emph{untyped} $\pi$-calculus process can be decomposed as a collection of \emph{trios}, i.e., processes with exactly three nested prefixes~\cite{DBLP:conf/birthday/Parrow00}.
As already mentioned, \HO is a higher-order language and so is different from the (untyped, first-order) $\pi$-calculus considered by Parrow. 
Indeed, as we will see, several aspects of our decomposition (and of its properties) take advantage of the higher-order nature of  \HO.

We establish two technical results for $\D{P}$: first, it is well-typed using minimal session types (\emph{static correctness}); second, it is behaviorally equivalent to $P$ (\emph{dynamic correctness}).
These properties ensure that having sequencing in both types and processes is convenient, but that only sequencing at the level of processes is truly indispensable, as it can correctly codify sequencing  in types.


\plscheck{As explained above, we dub our fragment of session types \emph{minimal} to stress the limited use of sequencing with respect to standard presentations of session types, in which sequencing occurs unconstrained. That is, we interpret `minimality' in terms of the role of sequencing in types for channels.  Interestingly, as we discuss later on, the constructs present in our minimal session types correspond to forms already studied in the literature.  One consequence of our technical results is that minimal session types stand between linear types (in which there is no sequencing nor recursion in types) and usual session types. 
Also, our study of the process decomposition into trios complement prior works that seek to formally relate session types with other type systems, such as, generic types~\cite{DBLP:journals/corr/GayGR14} and linear types~\cite{DBLP:conf/concur/DemangeonH11,DBLP:conf/ppdp/DardhaGS12,DBLP:journals/iandc/DardhaGS17}. See  \secref{s:rw} for a discussion on related works.}


\paragraph{Contributions \& Outline.}
In summary, in this paper we present the following contributions:
\begin{enumerate}
\item We identify \emph{minimal session types} (MSTs): a fragment of standard session types for \HO that makes a very limited use of sequencing  (\Cref{mst:d:mtypesi}).
\item We show how to correctly decompose processes typable with standard session types into processes typable with minimal session types, and how to decompose standard session types into minimal session types. This is a result of static correctness (\Cref{mst:t:decompcore}).
\item We show that the decomposition of a process is behaviorally equivalent to the original process. This is a result of \emph{dynamic correctness}, formalized in terms of \emph{MST bisimulations}, a typed behavioral equivalence that we introduce here (\Cref{mst:t:dyncorr}).
\item We develop optimizations and extensions of our decomposition that bear witness to its robustness.
\end{enumerate}
The rest of the paper is organized as follows.
In \secref{s:lang} we recall the preliminaries on the session type system for \HO, the core process calculus for session-based concurrency on which we base our developments.
In \secref{s:decomp} we present {minimal session types} and the decomposition of well-typed \HO processes into minimal session types processes, accompanied by explanations and examples.
In \secref{ss:dynamic} we show the correctness of the decomposition, by establishing an \emph{MST bisimulation} between an \HO process and its decomposition.
In \secref{s:opt} we examine two optimizations of the decomposition that are enabled by the higher-order nature of our setting.
In \secref{ss:exti} we discuss extensions of our approach to consider constructs for branching and selection.
Finally, in \secref{s:rw} we discuss related work and in \secref{s:concl} we present some closing remarks.
\appendx{The appendix contains omitted definitions and proofs.}

\paragraph{Differences with the conference version.}
An earlier version of this paper was presented at ECOOP 2019~\cite{APV19}.
The current paper revises the conference version, includes further examples, and incorporates a major addition: \secref{ss:dynamic} on dynamic correctness, including the notion of an MST bisimulation and the constructed bisimulation relation,  is completely new to this presentation.

\paragraph{Colors.}
Throughout the paper we use  colors (such as \textcolor{\colorpro}{pink}  and \textcolor{\colortyp}{green}) to improve readability.
However, the usage of colors is not indispensable to follow the paper (black and white suffice).

\section{The Source Language}
\label{s:lang}
We start by recalling the syntax, semantics, and type system for \HO, the higher-order process calculus for session-based concurrency studied by Kouzapas et al.~\cite{DBLP:journals/iandc/KouzapasPY19,KouzapasPY17}. {Our presentation of \HO follows the aforementioned papers, which concern definitions and results for \HOp, the super-calculus of \HO with name-passing, abstraction-passing, and recursion.}

\HO is arguably the simplest language for session types: it supports
passing of abstractions (functions from names to processes)
but does not support name-passing nor process recursion.
Still, \HO is very expressive:  it can encode name-passing, recursion, and polyadic communication via type-preserving encodings that are fully-abstract with respect to  contextual equivalence~\cite{DBLP:journals/iandc/KouzapasPY19}.

\subsection{Syntax and Semantics}
\begin{definition}[\HO processes]
	The syntax of names, variables, values, and \HO processes  is defined as follows:
		\begin{align*}
			n, m  & \bnfis a,b \bnfbar s, \dual{s}
			\qquad \quad 
			u,w   \bnfis n \bnfbar x,y,z
			\qquad \quad 
			V,W   \bnfis {x,y,z} \bnfbar {\abs{x}{P}}
			\\[1mm]
			P,Q
			 & \bnfis
			\bout{u}{V}{P}  \sbnfbar  \binp{u}{x}{P} \sbnfbar
      {\appl{V}{u}}
			 \sbnfbar P\Par Q \sbnfbar \news{n} P
			\sbnfbar \inact
		\end{align*}
\end{definition}
\noindent
We use $a,b,c, \dots$ to range over \emph{shared names}, and $s, \dual{s}, \dots$ to range over \emph{session names}. Shared names are used for unrestricted, non-deterministic interactions; session names are used for linear, deterministic interactions. We write $n, m$ to denote session or shared names, and assume that the sets of session and shared names are disjoint. The \emph{dual} of a name $n$ is denoted $\dual{n}$; we define $\dual{\dual{s}} = s$ and $\dual{a} = a$, i.e., duality is only relevant for session names. Variables are denoted with $x, y, z, \dots$. An abstraction 
$\abs{x}{P}$ is a process $P$ with parameter $x$.
\emph{Values} $V,W, \ldots$ include variables and abstractions, but not names.
 
Process
$\appl{V}{u}$
is the application
which substitutes name $u$ on abstraction~$V$. Constructs for
inaction $\inact$,  parallel composition $P_1 \Par P_2$, and
name restriction $\news{n} P$ are standard.
\HO lacks name-passing and recursion, but they are expressible in the language (see \exref{top:ex:np} below).

To enhance readability, we often omit trailing $\inact$'s, so we write, e.g., 
$\about{u}{V}{}$ instead of  $\bout{u}{V}{\inact}$.
Also, we write 
$\bout{u}{}{P}$ and $\binp{u}{}{P}$ whenever the exchanged value is not relevant (cf. \remref{mst:r:prefix}).

Restriction for shared names $\news{a} P$ is as usual; session name restriction $\news{s} P$ simultaneously binds session names $s$ and $\dual{s}$ in $P$.
Functions $\fv{P}$, $\fn{P}$, and $\fs{P}$ denote, respectively, the sets of free
variables, names, and session names in $P$, and are defined as expected.
If $\fv{P} = \emptyset$, we call $P$ {\em closed}.
 We write $P\subst{u}{y}$ (resp.,\,$P\subst{V}{y}$) for the capture-avoiding substitution of name $u$ (resp.,\, value $V$) for $y$ in process $P$.
We identify processes up to consistent renaming of bound names, writing $\scong_\alpha$ for this congruence.
We shall rely on Barendregt's variable convention, under which free and bound names are different in every mathematical context.

The  operational semantics of \HO is defined in terms of a \emph{reduction relation},
denoted $\red$.
Reduction is closed under \emph{structural congruence},
denoted $\scong$, which is defined as the smallest congruence on processes such that:
	\begin{align*}
		& P \Par \inact \scong P
		\qquad
		P_1 \Par P_2 \scong P_2 \Par P_1
		\qquad
		P_1 \Par (P_2 \Par P_3) \scong (P_1 \Par P_2) \Par P_3
		\\
		& \news{n} \inact \scong \inact
		\qquad
		 P \Par \news{n} Q \scong \news{n}(P \Par Q)
		\ (n \notin \fn{P})
		\qquad
		P \scong Q \textrm{ if } P \scong_\alpha Q
	\end{align*}
	We assume the expected extension of $\scong$ to values $V$.
The reduction relation expresses the behavior of processes; it is defined as follows:
	\begin{align*}
		\appl{(\abs{x}{P})}{u}   & \red  P \subst{u}{x}
		& \orule{App} &
		\\
		\bout{n}{V} P \Par \binp{\dual{n}}{x} Q & \red  P \Par Q \subst{V}{x}
		& \orule{Pass} &
		\\
		P \red P'\Rightarrow  \news{n} P   & \red    \news{n} P'
		& \orule{Res} &
		\\
		P \red P'  \Rightarrow    P \Par Q  & \red   P' \Par Q
		& \orule{Par} &
		\\
		P \scong Q \red Q' \scong P'  \Rightarrow  P  & \red  P'
		& \orule{Cong} &
	\end{align*}
 Rule $\orule{App}$ defines  name application ($\beta$-reduction). Rule $\orule{Pass}$ defines a
shared or session interaction, depending on the nature of $n$. 
Other rules are standard $\pi$-calculus rules.
We write 
$\red^k$ for a $k$-step reduction, and 
$\red^\ast$ for the reflexive, transitive closure of $\red$.

We illustrate \HO processes and their semantics by means of an example.

\begin{example}[Encoding Name-Passing]
  \label{top:ex:np}
  The \HO calculus lacks the name-passing primitives of \HOp. 
  Hence, it cannot express reductions of the form
\begin{align}
\bout{n}{m} P \Par \binp{\dual{n}}{x} Q  \red  P \Par Q \subst{m}{x}
\label{red:ex}
\end{align}
Fortunately,  name-passing can be
encoded in \HO in a fully-abstract way: as shown in \cite{DBLP:conf/esop/KouzapasPY16,DBLP:journals/iandc/KouzapasPY19}, one can use abstraction passing to ``pack'' a name. 
To this end, \figref{f:np} defines the required syntactic sugar, at the level of processes and types.
Then, the reduction \eqref{red:ex} can be mimicked as
\begin{align*}
{\bout{n}{\namepass{m}} P \Par \binp{\dual{n}}{\namepass{x}} Q} =~~ & \bbout{n}{ \abs{z}{\,\binp{z}{x} (\appl{x}{m})} } {P}  \Par \binp{\dual n}{y} \newsp{s}{\appl{y}{s} \Par \about{\dual{s}}{\abs{x}{{Q}}}}
 \\
 \red~~ &  {P}  \Par \newsp{s}{\appl{\abs{z}{\,\binp{z}{x} (\appl{x}{m})}}{s} \Par \about{\dual{s}}{\abs{x}{{Q}}}}
 \\
 \red~~ &  {P}  \Par \newsp{s}{\binp{s}{x} (\appl{x}{m}) \Par \about{\dual{s}}{\abs{x}{{Q}}}}
 \\
 \red~~ &  {P}  \Par \appl{(\abs{x}{{Q}})}{m}
 \\
 \red~~ &  {P}  \Par {Q}\subst{m}{x}
\end{align*}\hspace*{\fill} $\lhd$
\end{example}

\begin{figure}[!t]
  \begin{mdframed}
  \vspace{-4mm}
        \begin{align*}
        \bout{t}{\namepass{\tilde u}}P & \triangleq \about{t}{\lambda z. \abinp{z}{x}.(x\ {\tilde u})}.P\\
        \binp{t}{\namepass{\tilde x}}Q & \triangleq \abinp{t}{y}.\big(\news {z}(y\ z \Par \about{z}{\lambda {\tilde x}. Q})\big)
      \\ \vspace{2mm}
        {\namepass{\wtd S}} &\defas \lhot{(\btinp{\lhot{\namepasstyp{\wtd S}}}\tinact)} \\
        {\namepass{\chtype{\wtd S}}} &\defas 
        \lhot{(\btinp{\lhot{\chtype{\namepasstyp{\wtd S}}}}\tinact)} 
        \\ 
        \namepass{\lhot{\wtd C}} &\defas  \lhot{\namepasstyp{\wtd C}}
        \\ 
        \namepass{\shot{\wtd C}} &\defas  \shot{\namepasstyp{\wtd C}}
        \\
        \namepasstyp{\btout{{U}}S} &\defas 
        {\btout{\namepass{U}}\namepasstyp{S}}
        \\
        \namepasstyp{\btinp{{U}}S} &\defas 
        {\btinp{\namepass{U}}\namepasstyp{S}}
        \\
        \namepasstyp{C_1,\ldots,C_n} &\defas 
        \namepasstyp{C_1}, \ldots, \namepasstyp{C_n} 
      \end{align*}
      \end{mdframed}
  \caption{Encoding name passing in \HO}\label{f:np}\end{figure}

\begin{remark}[Polyadic Communication]
\label{r:poly}
\HO as presented above allows only for \emph{monadic communication}, i.e., the exchange of tuples of values with length 1.
We will find it convenient to use \HO with \emph{polyadic communication}, i.e., the exchange of tuples of values $\widetilde{V} = (V_1,\ldots,V_k)$, with length $\len{\widetilde{V}} = k$.
We will use similar notation for tuples of names and variables, and we will use $\epsilon$ to denote the empty tuple.

In \HO, polyadicity appears in session synchronizations and applications, but not in synchronizations on shared names. This entails having the following reduction rules:
	\begin{align*}
		\appl{(\abs{\widetilde{x}}{P})}{\widetilde{u}}   & \red  P \subst{\widetilde{u}}{\widetilde{x}}
		\\
		\bout{s}{\widetilde{V}} P \Par \binp{\dual{s}}{\widetilde{x}} Q & \red  P \Par Q \subst{\widetilde{V}}{\widetilde{x}}
\end{align*}
where the simultaneous substitutions
 $P\subst{\widetilde{u}}{\widetilde{x}}$
 and $P\subst{\widetilde{V}}{\widetilde{x}}$
 are as expected.
This polyadic \HO can be readily encoded into (monadic) \HO~\cite{KouzapasPY17}; for this reason, by a slight
 abuse of notation we will often write \HO when we actually mean ``polyadic \HO''.
\end{remark}

We discuss two simple examples that illustrate   how \HO can implement mechanisms resembling servers and forms of partial instantiation; these mechanisms shall come in handy later, when defining the process decomposition in \Cref{s:decomp}.

\begin{example}[A Server of a Kind]
	\label{e:server}
	Let $S_a$ denote the process $\binp{a}{x} (\appl{x}{r})$, which receives an abstraction on the shared name $a$ and then applies it to $r$. Consider the following process composition:
	\begin{align*}
	P &= 	\news{r}\news{a}\big(\bout{a}{V} \inact  \Par \bout{a}{W} \inact  \Par S_a 
  \Par \binp{\dual r}{x_1}\binp{\dual r}{x_2}Q \big)
	\\
	V& = \abs{y}{(\bout{y}{V'}S_a\subst{y}{r})}
	\\
	W& = \abs{z}{(\bout{z}{W'}S_a\subst{z}{r})}
		\end{align*}
    \noindent where $V'$ and $W'$ are some unspecified shared values. 
		In $P$, process $S_a$ operates as a server that provides $r$ upon an invocation on  $a$. 
		Dually, the outputs on $a$ are requests to this server. 
		One possible reduction sequence for $P$ is the following:
	\begin{align*}
		P 
		& \red 
		\news{r} \news{a}\big(\bout{a}{W} \inact  \Par \appl{V}{r} \Par 
    \binp{\dual r}{x_1}\binp{\dual r}{x_2}Q  \big) 
		\\
		 & \red 
     \news{r} \news{a}\big(\bout{a}{W} \inact  \Par \bout{r}{V'}S_a \Par 
    \binp{\dual r}{x_1}\binp{\dual r}{x_2}Q \big) 
		\\		
& 		\red 
\news{r} \news{a}\big(\bout{a}{W} \inact  \Par S_a \Par 
    \binp{\dual r}{x_2}Q\subst{V'}{x_1}  \big) = P'
	\end{align*}
        In this reduction sequence, the value $V$ in the first request is instantiated with the name $r$ by consuming a copy of $S_a$ available in $P$.
        However, a copy of the server $S_a$ is restored through the value $V$, after an communication on $r$.
        This way, in $P'$ the exchange of $W'$ on $r$ can take place:
$$
P' \red^\ast 
\news{r}\news{a}\big(S_a \Par Q\subst{V'}{x_1}\subst{W'}{x_2}\big)
$$
\end{example}

\begin{example}[Partial Instantiation] 
	\label{e:partial}
	Let $S_a$ and $S_b$ be servers as defined in the previous 	example: 
	\begin{align*}
		S_a = \binp{a}{x} (\appl{x}{r}) \qquad \qquad 
		S_b = \binp{b}{x} (\appl{x}{v})
	\end{align*}
	Further, let $R$ be a process in which requests to $S_a$ and $S_b$ are nested within abstractions: 
	 \begin{align*}
           R & = \abbout{a}{\abs{y}\about{b}{\abs{z}\appl{V}{(y,z)}}}
         \end{align*}
         \noindent Notice how the polyadic application `$\appl{V}{(y,z)}$' is enclosed in the innermost abstraction.
         Now consider the following composition:
		\begin{align*}
			P = \news{a,b}  R \Par S_a \Par S_b  
		\end{align*}
	The structure of $R$ induces a form of partial instantiation for $y, z$, implemented by combining synchronizations and $\beta$-reductions. To see this, let us inspect one possible reduction chain for $P$:  
		\begin{align*}
			P &\red \news{b} \left(\appl{\abs{y}\about{b}{\abs{z}\appl{V}{(y,z)}}}{r}\right) \Par  S_b 
			\\ 
			&\red \news{b}{\about{b}{\abs{z}\appl{V}{(r,z)}}} \Par  S_b = P'
		\end{align*}
The first request of $R$, aimed to obtain name $r$,  is 
  	realized by the first reduction, i.e., the communication with $S_a$ on name $a$: 
  	the result is the application of the top-level abstraction to $r$. 
  	Subsequently, the application step  substitutes 
  	$y$ with $r$. 
  	Hence, in $P'$,  
  	names in the nested application  are only \emph{partially instantiated}: at this point, we have `$\appl{V}{(r,z)}$'. 
  	
  	  	 Process  $P'$ can then execute the same steps to instantiate  $z$ with name $v$
  	by interacting with $S_b$. 
  	After two reductions, we obtain the fully instantiated application  $\appl{V}{(r,v)}$: 
  			\begin{align*}
			P' & \red \appl{{\abs{z}\appl{V}{(r,z)}}}{v} 
			\red \appl{V}{(r,v)}
		\end{align*}
  
  	\end{example}

\subsection{Session Types for \HO}
\label{sec:types}

We give essential definitions and properties for the session type system for \HO, following~\cite{KouzapasPY17}.
\begin{definition}[Session Types for \HO]
\label{d:types}
Let us write $\Proc$ to denote the process type.
The syntax of value types $U$, channel types $C$, and session types $S$ for \HO is defined as follows: 
	\begin{align*}
U & \bnfis		\shot{C} \bnfbar \lhot{C}
    \qquad \qquad 
		C  \bnfis		S \bnfbar  {\chtype{U}}
\\
		S & \bnfis  \btout{U} S \bnfbar \,\btinp{U} S 
		 \bnfbar  \,\trec{t}{S} \bnfbar \vart{t}   \bnfbar 	\tinact 
	\end{align*}
\end{definition}
\noindent
As we have seen, \HO only admits the exchange of abstractions; accordingly,
value types include $\shot{C}$ and $\lhot{C}$, which denote {\em shared} and
{\em linear} higher-order types, respectively. Channel types include session
types and the shared types $\chtype{U}$. 

Session types follow the standard binary session type syntax~\cite{honda.vasconcelos.kubo:language-primitives}, in which sequencing specifies communication structures.
This way, the {\em output type} $\btout{U} S$ describes a session in which first a value of type $U$ is sent, and then the session proceeds as $S$.
Dually, the \emph{input type} $\btinp{U} S$ describes a session in which first a value of type $U$ is received, and then the session proceeds as $S$.
In examples, we often assume basic types (such as $\mathsf{int}$, $\mathsf{bool}$, $\mathsf{str}$) are exchanged in communications.
Session types also include {\em recursive types} $\trec{t}{S}$, in which the variable $\vart{t}$ is assumed to occur guarded in $S$, i.e., types such as $\trec{t}{\vart{t}}$ are not allowed.
In most cases, recursive types will be \emph{tail-recursive}, although instances of \emph{non-tail-recursive} session types will also be relevant (cf.
\Cref{mst:ex:ntrsts}).
Finally, type $\tinact$ is the type of the terminated protocol.

\begin{notation}
As mentioned in the introduction, we shall study session types in which the continuation $S$ in $\btout{U} S$  and $\btinp{U} S$ is always $\tinact$. 
Given this, we may sometimes omit trailing $\tinact$'s and write  
$\btoutt{U}$  and $\btinpt{U}$
rather than
$\btout{U} \tinact$  and $\btinp{U} \tinact$, respectively. 
\end{notation}

In theories of session types  \emph{duality} is a key notion:
implementations derived from dual session types will respect their protocols at run-time, avoiding communication errors.
Intuitively, duality is  obtained by
exchanging $!$ by $?$ (and vice versa), 
including the fixed point construction.
We write $S \dualof T$ if session types $S$ and $T$ are dual according to this intuition;
the formal definition is coinductive, and given in~\cite{KouzapasPY17} (see also \cite{DBLP:journals/corr/abs-2004-01322}).

We consider shared, linear, and session
\emph{environments}, denoted $\Gamma$, $\Lambda$, and $\Delta$, \newjb{resp.}:
\begin{align*}
		\Gamma  & \bnfis  \emptyset \bnfbar \Gamma \cat \varp{x}: \shot{C} 
		\bnfbar \Gamma \cat u: \chtype{U}
\\
		\Lambda  & \bnfis  \emptyset \bnfbar \Lambda \cat \AT{x}{\lhot{C}}
		 \\
		\Delta    & \bnfis   \emptyset \bnfbar \Delta \cat \AT{u}{S}
\end{align*}
$\Gamma$ maps variables and shared names to value types; 
$\Lambda$ maps variables to
linear
higher-order types. $\Delta$  maps session names to session types. While
$\Gamma$ admits weakening, contraction, and exchange principles, both $\Lambda$
and $\Delta$
are only subject to exchange. The domains of $\Gamma, \Lambda$, and $\Delta$ are
assumed pairwise distinct. We write $\Delta_1\cdot \Delta_2$ to denote the
disjoint union of $\Delta_1$ and $\Delta_2$.

We write $\Gamma\backslash x$ to denote the environment obtained from $\Gamma$ by removing the assignment $x : \shot{C}$, for some $C$.
Notations $\Delta\backslash u$ and $\Gamma\backslash \widetilde{x}$ are defined similarly and have the expected readings.
With a slight abuse of notation, given a tuple of variables $\widetilde x$, we sometimes write $(\Gamma, \Delta)(\widetilde x)$ to denote the tuple of types assigned to the variables in $\widetilde x$ by the environments $\Gamma$ and $\Delta$.

The typing judgements for values $V$ and processes $P$ are denoted
\[
\Gamma; \Lambda; \Delta \proves V \hastype U \qquad \text{and} \qquad\Gamma; \Lambda; \Delta \proves P \hastype \Proc
\]

\begin{figure}[t!]
	\begin{mdframed}
	\[
		\begin{array}{c}
		\inferrule[(Sess)]{}{\Gamma; \emptyset; \set{u:S} \proves u \hastype S} 
			\qquad
			\inferrule[(Sh)]{}{\Gamma \cat u : U; \emptyset; \emptyset \proves u \hastype U}
			\\ \\ 
			\inferrule[(LVar)]{}{\Gamma; \set{x: \lhot{C}}; \emptyset \proves x \hastype \lhot{C}}
							\qquad
			\inferrule[(RVar)]{}{\Gamma \cat \rvar{X}: \Delta; \emptyset; \Delta  \proves \rvar{X} \hastype \Proc}
					\\  \\
			\inferrule[(Abs)]{
				\Gamma; \Lambda; \Delta_1 \proves P \hastype \Proc
				\quad
				\Gamma; \es; \Delta_2 \proves x \hastype C
			}{
				\Gamma\backslash x; \Lambda; \Delta_1 \backslash \Delta_2 \proves \abs{{x}}{P} \hastype \lhot{{C}}
			}
			\quad
			\inferrule[(App)]{
					{\Gamma; \Lambda; \Delta_1 \proves V \hastype \ghot{C} ~~
					\leadsto\, \in \{\lollipop, \sharedop\} \quad
					\Gamma; \es; \Delta_2 \proves u \hastype C
					}
			}{
				\Gamma; \Lambda; \Delta_1 \cat \Delta_2 \proves \appl{V}{u} \hastype \Proc
			} 
	
			\\  \\
			\inferrule[(Prom)]{
				\Gamma; \emptyset; \emptyset \proves V \hastype 
							 \lhot{C}
			}{
				\Gamma; \emptyset; \emptyset \proves V \hastype 
							 \shot{C}
			} 
			\quad
			\inferrule[(EProm)]{
			\Gamma; \Lambda \cat x : \lhot{C}; \Delta \proves P \hastype \Proc
			}{
				\Gamma \cat x:\shot{C}; \Lambda; \Delta \proves P \hastype \Proc
			}
					\quad
			\inferrule[(End)]{
				\Gamma; \Lambda; \Delta  \proves P \hastype T \quad u \not\in \dom{\Gamma, \Lambda,\Delta}
			}{
				\Gamma; \Lambda; \Delta \cat u: \tinact  \proves P \hastype \Proc
			}
			\\  \\
			\inferrule[(Rec)]{
				\Gamma \cat \rvar{X}: \Delta; \emptyset; \Delta  \proves P \hastype \Proc
			}{
				\Gamma ; \emptyset; \Delta  \proves \recp{X}{P} \hastype \Proc
			}
			\qquad
				\inferrule[(Par)]{
				\Gamma; \Lambda_{i}; \Delta_{i} \proves P_{i} \hastype \Proc \quad i=1,2
			}{
				\Gamma; \Lambda_{1} \cat \Lambda_2; \Delta_{1} \cat \Delta_2 \proves P_1 \Par P_2 \hastype \Proc
			}
			\qquad
					\inferrule[(Nil)]{ }{\Gamma; \emptyset; \emptyset \proves \inact \hastype \Proc}
			\\  \\
			\inferrule[(Req)]{
			\begin{array}{c}
			\Gamma; \Lambda; \Delta_1 \proves P \hastype \Proc
			\quad
			\Gamma; \es; \es \proves u \hastype	\chtype{U}
			\\
			\Gamma; \es; \Delta_2 \proves V \hastype U		
				\end{array}
			}{
				\Gamma; \Lambda; \Delta_1 \cat \Delta_2 \proves \bout{u}{V} P \hastype \Proc
			}
			~~
			\inferrule[(Acc)]{
				\begin{array}{c}
					\Gamma; \Lambda_1; \Delta_1 \proves P \hastype \Proc
					\quad 
					\Gamma; \emptyset; \emptyset \proves u \hastype \chtype{U} 
					\\
					\Gamma; \Lambda_2; \Delta_2 \proves x \hastype U
					   \end{array}
			}{
				\Gamma\backslash x; \Lambda_1 \backslash \Lambda_2; \Delta_1 \backslash \Delta_2 \proves \binp{u}{x} P \hastype \Proc
			}	
			\\  \\
			\inferrule[(Send)]{
						u:S \in \Delta_1 \cat \Delta_2 
						\quad
				\Gamma; \Lambda_1; \Delta_1 \proves P \hastype \Proc
				\quad
				\Gamma; \Lambda_2; \Delta_2 \proves V \hastype U
			}{
				\Gamma; \Lambda_1 \cat \Lambda_2; ((\Delta_1 \cat \Delta_2) \setminus u:S) \cat u:\btout{U} S \proves \bout{u}{V} P \hastype \Proc
			}
			\\  \\
								\inferrule[(Rcv)]{
				\Gamma; \Lambda_1; \Delta_1 \cat u: S \proves P \hastype \Proc
				\quad
				\Gamma; \Lambda_2; \Delta_2 \proves {x} \hastype {U}
			}{
				\Gamma \backslash x; \Lambda_1\backslash \Lambda_2; \Delta_1\backslash \Delta_2 \cat u: \btinp{U} S \vdash \binp{u}{{x}} P \hastype \Proc
			}
\\ \\ 
			\inferrule[(ResS)]{
				\Gamma; \Lambda; \Delta \cat s:S_1 \cat \dual{s}: S_2 \proves P \hastype \Proc \quad S_1 \dualof S_2
			}{
				\Gamma; \Lambda; \Delta \proves \news{s} P \hastype \Proc
			}
			\qquad
			\inferrule[(Res)]{
				\Gamma\cat a:\chtype{S} ; \Lambda; \Delta \proves P \hastype \Proc
			}{
				\Gamma; \Lambda; \Delta \proves \news{a} P \hastype \Proc
			}
			\end{array}
	\]
	\end{mdframed}
	\caption{Typing Rules for $\HO$. 
	\label{fig:typerulesmys}}
	\end{figure}

\noindent
\figref{fig:typerulesmys} shows the typing rules; we briefly describe them and refer the reader to~\cite{KouzapasPY17} for a full account.
The shared type $\shot{C}$ 
is derived using Rule~\textsc{(Prom)} only
if the value has a linear type with an empty linear
environment.
Rule~\textsc{(EProm)} allows us to freely use a \newj{shared
type variable as linear}.
Abstraction values are typed with Rule~\textsc{(Abs)}.
Application typing
is governed by Rule~\textsc{(App)}:  
the type $C$ of an application name $u$
must match the type of the application variable $x$ ($\lhot{C}$ or $\shot{C}$).
%
Rules~\textsc{(Req)} and~\textsc{(Acc)} type interaction along shared names; the type of the sent/received object $V$ (i.e., $U$) should match the type of the subject $s$ ($\chtype{U}$).
In Rule~\textsc{(Send)}, the type $U$ of the value $V$ should appear as a prefix in the session type $\btout{U} S$ of $u$.
Rule~\textsc{(Rcv)} is its dual.

To state type soundness, we require two auxiliary definitions on session environments.
First, a session environment $\Delta$ is {\em balanced} (written $\balan{\Delta}$) if whenever $s: S_1, \dual{s}: S_2 \in \Delta$ then $S_1 \dualof S_2$.
Second, we define the reduction relation $\red$ on session environments as:
	\begin{eqnarray*}
			\Delta \cat s: \btout{U} S_1 \cat \dual{s}: \btinp{U} S_2 & \red &
			\Delta \cat s: S_1 \cat \dual{s}: S_2\\
			\Delta \cat s: \btsel{l_i: S_i}_{i \in I} \cat \dual{s}: \btbra{l_i: S_i'}_{i \in I} &\red&
			 \Delta \cat s: S_k \cat \dual{s}: S_k' \ (k \in I)
		\end{eqnarray*}


\begin{theorem}[Type Soundness~\cite{KouzapasPY17}]\label{t:sr}\rm
%
			Suppose $\Gamma; \es; \Delta \proves P \hastype \Proc$
			with
			$\balan{\Delta}$.
			Then $P \red P'$ implies $\Gamma; \es; \Delta'  \proves P' \hastype \Proc$
			and $\Delta = \Delta'$ or $\Delta \red \Delta'$
			with $\balan{\Delta'}$.
\end{theorem}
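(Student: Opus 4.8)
The plan is to prove this as a standard \emph{subject reduction} (type preservation) result, by induction on the derivation of the reduction $P \red P'$, with a case analysis on the last rule applied. Before the induction I would set up the customary auxiliary machinery: (i) a family of \emph{inversion lemmas} that, from a judgement $\Gamma; \es; \Delta \proves P \hastype \Proc$, recover the shape of the derivation forced by the outermost constructor of $P$ (and, correspondingly, how $\Delta$ and any linear environment are split); (ii) a \emph{substitution lemma} in two forms---one for names, one for values---stating that typed substitution preserves typing; and (iii) a \emph{subject congruence lemma} asserting that $P \scong Q$ together with $\Gamma; \es; \Delta \proves P \hastype \Proc$ gives $\Gamma; \es; \Delta \proves Q \hastype \Proc$. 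The value-substitution lemma is the workhorse and must be stated in full generality (allowing a non-empty linear environment $\Lambda$ and permitting the substituted value $V$ to carry its own session environment $\Delta_2$): if $\Gamma; \Lambda_1; \Delta_1 \proves P \hastype \Proc$ uses a variable $x : U$ and $\Gamma; \Lambda_2; \Delta_2 \proves V \hastype U$, then $\Gamma; \Lambda_1 \cat \Lambda_2; \Delta_1 \cat \Delta_2 \proves P\subst{V}{x} \hastype \Proc$; the name version is analogous for $\chtype{U}$-typed subjects.

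The inductive cases are routine. For \orule{Res} and \orule{Par} I would invert Rules \textsc{(ResS)}/\textsc{(Res)} and \textsc{(Par)}, apply the induction hypothesis to the reducing component, and recompose; in the parallel case one splits $\Delta = \Delta_1 \cat \Delta_2$, applies the hypothesis to obtain $\Delta_1 = \Delta_1'$ or $\Delta_1 \red \Delta_1'$, and observes that this lifts to $\Delta = \Delta'$ or $\Delta \red \Delta'$ on the whole environment, with balancing preserved. The case \orule{Cong} is discharged immediately by the subject congruence lemma on both sides of the reduction. The case \orule{App}, i.e.\ $\appl{(\abs{x}{P})}{u} \red P\subst{u}{x}$, follows by inverting Rules \textsc{(App)} and \textsc{(Abs)} and invoking the name-substitution lemma; here the session environment is untouched, so $\Delta = \Delta'$.

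The crux is the base case \orule{Pass}: $\bout{n}{V} P \Par \binp{\dual{n}}{x} Q \red P \Par Q\subst{V}{x}$. Inverting \textsc{(Par)} splits the environments, and then the typing of the two components is inverted. When $n$ is a \emph{shared} name the subjects are typed by \textsc{(Req)}/\textsc{(Acc)} against $\chtype{U}$, the session environment is not affected, and $\Delta = \Delta'$. When $n$ is a \emph{session} name, inversion of \textsc{(Send)} and \textsc{(Rcv)} shows $\Delta$ assigns $\btout{U} S_1$ to $n$ and $\btinp{U} S_2$ to $\dual{n}$; since $\balan{\Delta}$ gives $\btout{U} S_1 \dualof \btinp{U} S_2$, the payload types agree and $S_1 \dualof S_2$. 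Applying the value-substitution lemma to type $Q\subst{V}{x}$ and recomposing via \textsc{(Par)} yields $\Gamma; \es; \Delta' \proves P \Par Q\subst{V}{x} \hastype \Proc$, where $\Delta'$ replaces $n : \btout{U} S_1,\ \dual{n} : \btinp{U} S_2$ by $n : S_1,\ \dual{n} : S_2$. This is precisely one step of $\Delta \red \Delta'$, and duality of the residual types keeps $\balan{\Delta'}$.

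I expect the main obstacle to lie exactly in this session \orule{Pass} case: the delicate bookkeeping of how the \emph{linear} higher-order environment and the session environment are partitioned and recombined through the successive inversions of \textsc{(Send)}, \textsc{(Rcv)}, and \textsc{(Par)}. One must ensure that any linear capabilities carried inside the transmitted value $V$ are consumed exactly once after substitution, and that the promotion rules \textsc{(Prom)}/\textsc{(EProm)} neither duplicate nor silently discard linear resources. A closely related subtlety is establishing the value-substitution lemma itself with a statement general enough to be applicable here (non-empty $\Lambda$, and $V$ carrying its own $\Delta_2$), which forces the supporting induction to respect the linearity discipline throughout.
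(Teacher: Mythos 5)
Your proposal is correct: this theorem is not proved in the paper at all---it is imported verbatim from Kouzapas et al.~\cite{KouzapasPY17}---and the proof given there is exactly the standard subject-reduction argument you sketch (induction on the reduction derivation, with inversion, name/value substitution lemmas stated for non-empty linear environments, subject congruence, and the session-passing case resolved via balancedness and duality of the residual types). Since your approach matches the cited proof, including the identification of the \orule{Pass} case and the linearity bookkeeping as the crux, there is nothing to correct.
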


\begin{remark}[Typed Polyadic Communication]
When using processes with polyadic communication (cf.~\remref{r:poly}),  
we shall assume the extension of the type system defined in~\cite{KouzapasPY17}.
\end{remark}

\begin{example}[Typing name-passing constructs]
 In \exref{top:ex:np} we recalled how to encode name-passing constructs in \HO; now we show that this translation is typed.
 Following the name-passing encoding from~\cite{DBLP:conf/esop/KouzapasPY16,DBLP:journals/iandc/KouzapasPY19} 
 we define a syntactic sugar for types. 
The following typing rules for name-passing are derivable:
    \begin{mathpar}
      \inferrule*[left=(SendN)]{
        \Gamma; \Lambda_1; \Delta_1 \proves P \hastype \Proc
        \quad
        \Gamma; \Lambda_2; \Delta_2 \proves \wtd{b} \hastype \wtd{C}
      }{
        \Gamma; \Lambda_1 \cat \Lambda_2; \Delta_1 \cat \Delta_2 \cat 
        t:\btout{\namepass{\wtd{C}}}\tinact \proves \bout{t}{\namepass{\wtd{b}}} P \hastype \Proc
      }
      \and
      \inferrule*[left=(RcvN)]{
        \begin{array}{c}
          \Gamma; \Lambda_1; \Delta_1 \proves P \hastype \Proc
          \quad
          \Gamma; \Lambda_2; \Delta_2 \proves \wtd{x} \hastype \wtd{C}
        \end{array}
      }{
        \Gamma \backslash x; \Lambda_1\backslash \Lambda_2; \Delta_1 \backslash \Delta_2 \cat t: \btinp{\namepass{\wtd{C}}}\tinact \vdash 
        \binp{t}{\namepass{\wtd{x}}} P \hastype \Proc
      }
    \end{mathpar}
\end{example}

\begin{example}[Typing Recursive Servers]
  \label{e:server-types}
Here we show how to type the processes from~\Cref{e:server}. Let us define: 
\begin{align*}
  T = \trec{t}{\btout{U}\vart{t}} \qquad 
  C = \chtype{\lhot{T}} 
\end{align*}
where $U$ is some value type. We recall process $P$ 
from~\Cref{e:server} with the additional typing information on bound names 
$r$ and $a$: 
\begin{align*}
	P &= 	\news{r:T}\news{a : C}\big(\bout{a}{V} \inact  \Par \bout{a}{W} \inact  \Par S_a 
  \Par \binp{\dual r}{x_1}\binp{\dual r}{x_2}Q \big)
	\\
	V& = \abs{y}{(\bout{y}{V'}S_a\subst{y}{r})}
	\\
	W& = \abs{z}{(\bout{z}{W'}S_a\subst{z}{r})}
		\end{align*}
    where $S_a$ stands for $\binp{a}{x} (\appl{x}{r})$. 
    Let us assume 
    there is a shared environment 
    $\Gamma$ under which $V'$ and $W'$ implement type $U$: 
    \begin{align} 
        \label{dt:ex1-asm1}
       \Gamma;\es; \es &\proves
       V' \hastype U \\
       \label{dt:ex1-asm2}
       \Gamma;\es; \es &\proves
       W' \hastype U
    \end{align} 
    Also, we assume  that process $\binp{\dual r}{x_1}\binp{\dual r}{x_2}Q$ is well-typed under 
    the following environments: 
    \begin{align}
      \label{dt:ex1-asm3}
      \Gamma \cat a:C; \es ; \dual r: \dual T  \proves 
      \binp{\dual r}{x_1}\binp{\dual r}{x_2}Q \hastype \Proc 
    \end{align} 

Under these assumptions, 
    it holds that the body of process  $P$ correctly implements name $a$ 
    with type $C$, i.e., 
     $$  \Gamma;\es; r : T \proves  
      \news{a : C}
      \big(\bout{a}{V} \inact  \Par \bout{a}{W} \inact  \Par S_a \big)  \hastype \Proc$$
 We detail the corresponding typing derivations: 
    \begin{align}
      \label{dt:ex-2}
      \AxiomC{} 
      \LeftLabel{\scriptsize (LVar)}
      \UnaryInfC{$\Gamma \cat a:C; x:\lhot{T}; \es \proves x \hastype \lhot{T}$}
      \AxiomC{}  
      \LeftLabel{\scriptsize (Sess)}
      \UnaryInfC{$\Gamma \cat a:C;\es;y:T \proves y \hastype T$} 
      \LeftLabel{\scriptsize (App)}
      \BinaryInfC{
        $\Gamma \cat a:C;x:\lhot{T}; y:T \proves  \appl{x}{y}  \hastype \Proc$
      } 
      \DisplayProof 
    \end{align}

    \begin{align}
      \label{dt:ex-5}
      \AxiomC{\eqref{dt:ex-2}} 
      \AxiomC{}
      \LeftLabel{\scriptsize (Sh)}
      \UnaryInfC{ 
        $\Gamma \cat a:C;\es; \es \proves a \hastype \chtype{\lhot{T}}$
        } 
        \AxiomC{} 
        \LeftLabel{\scriptsize (LVar)} 
      \UnaryInfC{$
        \Gamma \cat a:C;x:\lhot{T}; \es \proves x \hastype \lhot{T}
      $}
      \LeftLabel{\scriptsize (Acc)}
      \TrinaryInfC{
        $\Gamma \cat a:C;\es; y:T \proves \binp{a}{x} (\appl{x}{y})  \hastype \Proc$
      }
      \DisplayProof 
    \end{align}


   \begin{align}
    \label{dt:ex-4}
  \AxiomC{
    \eqref{dt:ex-5} 
   } 
   \AxiomC{\eqref{dt:ex1-asm1}} 
   \LeftLabel{\scriptsize (Send)}
   \BinaryInfC{
      $\Gamma \cat a:C;\es; y:T \proves
      \bout{y}{V'}S_a\subst{y}{r}  \hastype \Proc$
    } 
    \AxiomC{} 
    \LeftLabel{\scriptsize (Sess)}
    \UnaryInfC{
      $\Gamma \cat a:C;\es; y:T \proves y \hastype T$
    } 
    \LeftLabel{\scriptsize (Abs)}
    \BinaryInfC{ 
      $\Gamma \cat a:C;\es;\es \proves V \hastype \lhot{T}$
    }
    \DisplayProof 
   \end{align}

    \begin{align}
      \label{dt:ex-6}
      \AxiomC{} 
      \LeftLabel{\scriptsize (Nil)}
      \UnaryInfC{
        $\Gamma \cat a:C;\es;\es \proves \inact  \hastype \Proc$
      } 
      \AxiomC{} 
      \LeftLabel{\scriptsize (Sh)}
      \UnaryInfC{
        $\Gamma \cat a:C;\es;\es \proves a \hastype \chtype{\lhot{T}}$
      } 
      \AxiomC{
        \eqref{dt:ex-4}
      }
      \LeftLabel{\scriptsize (Req)}
      \TrinaryInfC{$\Gamma \cat a:C;\es;\es \proves \bout{a}{V} \inact  \hastype \Proc$} 
      \DisplayProof 
    \end{align}

  \noindent In the following derivation tree the  
  right-hand side is shown similarly to \eqref{dt:ex-6}
  using assumption \eqref{dt:ex1-asm2} instead of \eqref{dt:ex1-asm1}: 
\begin{align}
  \label{dt:ex-8}
  \AxiomC{
    \eqref{dt:ex-6}
  } 
  \AxiomC{} 
  \UnaryInfC{
    $\Gamma \cat a:C;\es;\es \proves \bout{a}{W} \inact  \hastype \Proc$
  } 
  \LeftLabel{\scriptsize (Par)}
  \BinaryInfC{
    $\Gamma \cat a:C;\es;\es \proves \bout{a}{V} \inact  \Par \bout{a}{W} \inact
    \hastype \Proc$
    } 
    \DisplayProof 
\end{align}
Finally we have:
    \begin{align*}
      \AxiomC{\eqref{dt:ex-8}} 
        \AxiomC{} 
      \UnaryInfC{$\Gamma \cat a:C;\es; r : T \proves S_a  \hastype \Proc$} 
      \LeftLabel{\scriptsize (Par)}
      \BinaryInfC{
        $\Gamma \cat a:C;\es; r : T \proves
        \bout{a}{V} \inact  \Par \bout{a}{W} \inact  \Par S_a 
        \hastype \Proc$
      } 
      \AxiomC{\eqref{dt:ex1-asm3}} 
      \LeftLabel{\scriptsize (Par)}
      \BinaryInfC{
        $\Gamma \cat a:C;\es; r : T \cat \dual r : \dual T \proves
        \bout{a}{V} \inact  \Par \bout{a}{W} \inact  \Par S_a 
      \Par \binp{\dual r}{x_1}\binp{\dual r}{x_2}Q 
        \hastype \Proc$
      } 
      \LeftLabel{\scriptsize (Res)}
      \UnaryInfC{
      $\Gamma;\es; r : T \cat \dual r : \dual T \proves  
      \news{a : C}\big(\bout{a}{V} \inact  \Par \bout{a}{W} \inact  \Par S_a 
      \Par \binp{\dual r}{x_1}\binp{\dual r}{x_2}Q \big) 
      \hastype \Proc$
        } 
        \LeftLabel{\scriptsize (ResS)}
        \UnaryInfC{
          $\Gamma;\es; \es \proves  
          \news{r:T}\news{a : C}\big(\bout{a}{V} \inact  \Par \bout{a}{W} \inact  \Par S_a 
          \Par \binp{\dual r}{x_1}\binp{\dual r}{x_2}Q \big) 
          \hastype \Proc$
        } 
        \DisplayProof 
    \end{align*}
    \noindent Above, we have omitted details of the right-hand 
    side derivation; it the same as \eqref{dt:ex-5} 
    with name $y$ substituted with $r$. 
\end{example}

\begin{example}[Typing Nested Abstractions]
  Here we show how to type process $P$ from~\Cref{e:partial}. 
  Let types $C_1$ and $C_2$ be defined as 
  $C_i = \chtype{\lhot{S_i}} $ 
  where $i \in \{1,2\}$ and $S_i$ stands for a tail-recursive type.
%
  For simplicity, we assume that value $V$ has the following typing: 
  \begin{align}
    \label{dt:ex2-a}
    a:C_1 \cat b:C_2; \es; \es \proves V \hastype 
    \lhot{(S_1,S_2)}
  \end{align}
  The following holds:
  \begin{align} 
    \label{eq:ex2-types-goal}
        \es; \es; \es \proves \news{a : C_1} \news{b:C_2}  R \Par S_a \Par S_b
      \hastype \Proc
  \end{align} 

  \noindent In the following typing derivations, we rely on the following two  typing rules for polyadic elements; they  can be derived from monadic typing rules 
  from~\Cref{fig:typerulesmys} (see~\Cref{r:polyadic-rules} 
  for details):
  \begin{prooftree}
  	\AxiomC{}
  	\LeftLabel{\scriptsize (PolySess)}
  	\UnaryInfC{$\Gamma;\es;\wtd u:\wtd S
  				\proves \wtd u \hastype \wtd S$}
  \end{prooftree}

	\begin{prooftree}
		\AxiomC{$\leadsto \in \{\multimap,\rightarrow \}$}
		\AxiomC{$\Gamma;\Lambda;\Delta_1 \proves V \hastype \wtd C \leadsto \diamond$}
		\AxiomC{$\Gamma;\es;\Delta_2 \proves \wtd u \hastype \wtd C$}
		\LeftLabel{\scriptsize (PolyApp)}
		\TrinaryInfC{$\Gamma;\Lambda;\Delta_1 \cat \Delta_2 \proves
						\appl{V}{\wtd u}$}
	\end{prooftree}

  \noindent Now, we detail the typing derivations that 
  show \eqref{eq:ex2-types-goal}:  

  \begin{align}
    \label{dt:ex2-t5}
    \AxiomC{\eqref{dt:ex2-a}} 
    \AxiomC{} 
    \LeftLabel{\scriptsize (PolySess)}
    \UnaryInfC{
      $a:C_1 \cat b:C_2; \es; y : S_1 \cat z:S_2 \proves y,z \hastype S_1,S_2$
      } 
    \LeftLabel{\scriptsize (PolyApp)}
    \BinaryInfC{
      $a:C_1 \cat b:C_2; \es; y : S_1 \cat z:S_2 \proves \appl{V}{(y,z)} \hastype \Proc$
    } 
    \DisplayProof 
  \end{align}
  \begin{align}
    \label{dt:ex2-t4}
\AxiomC{\eqref{dt:ex2-t5}} 
    \AxiomC{} 
    \LeftLabel{\scriptsize (Sess)}
    \UnaryInfC{
      $a:C_1 \cat b:C_2; \es;z:S_2 \proves z \hastype S_2$
      } 
    \LeftLabel{\scriptsize (Abs)}
    \BinaryInfC{
      $a:C_1 \cat b:C_2; \es; y : S_1 \proves \abs{z}\appl{V}{(y,z)} \hastype \lhot{S_2}$
      } 
      \DisplayProof 
  \end{align}

  \begin{align}
    \label{dt:ex2-t3}
    \AxiomC{} 
    \LeftLabel{\scriptsize (Nil)}
    \UnaryInfC{$a:C_1 \cat b:C_2; \es; \es \proves \inact \hastype \Proc$} 
    \AxiomC{} 
    \LeftLabel{\scriptsize (Sh)}
    \UnaryInfC{$a:C_1 \cat b:C_2; \es; \es \proves b \hastype \chtype{\lhot{S_2}}$} 
    \AxiomC{
      \eqref{dt:ex2-t4}
      } 
    \LeftLabel{\scriptsize (Req)}
    \TrinaryInfC{
      $a:C_1 \cat b:C_2; \es; y:S_1 \proves
       \about{b}{\abs{z}\appl{V}{(y,z)}}
       \hastype \Proc$
      } 
      \DisplayProof 
  \end{align}
  \begin{align}
    \label{dt:ex2-t2}
    \AxiomC{
      \eqref{dt:ex2-t3} 
      } 
      \AxiomC{} 
      \LeftLabel{\scriptsize (Sess)}
    \UnaryInfC{
      $a:C_1 \cat b:C_2; \es; y:S_1 \proves y \hastype S_1$
    } 
    \LeftLabel{\scriptsize (Abs)}
    \BinaryInfC{
      $a:C_1 \cat b:C_2; \es; \es \proves  
      {\abs{y}\about{b}{\abs{z}\appl{V}{(y,z)}}} \hastype \lhot{S_1}$
    }
    \DisplayProof 
  \end{align}

  \begin{align}
    \label{dt:ex2-t1}
    \AxiomC{$a:C_1 \cat b:C_2; \es; \es \proves \inact \hastype \Proc$} 
    \AxiomC{} 
    \LeftLabel{\scriptsize (Sh)}
    \UnaryInfC{$a:C_1 \cat b:C_2; \es; \es \proves a \hastype \chtype{\lhot{S_1}}$} 
    \AxiomC{
      \eqref{dt:ex2-t2}
    }
    \LeftLabel{\scriptsize (Req)}
    \TrinaryInfC{
      $a:C_1 \cat b:C_2; \es; \es \proves  
      \abbout{a}{\abs{y}\about{b}{\abs{z}\appl{V}{(y,z)}}}
      \hastype \Proc$
    } 
    \DisplayProof
  \end{align}
Finally we have:
  \begin{align*}
    \AxiomC{
      \eqref{dt:ex2-t1} 
    } 
    \AxiomC{} 
    \LeftLabel{}
    \UnaryInfC{
      $a:C_1 \cat b:C_2; \es; \es \proves S_a  \hastype \Proc$
    } 
    \LeftLabel{\scriptsize (Par)} 
    \BinaryInfC{
      $a:C_1 \cat b:C_2; \es; \es \proves  R \Par S_a 
      \hastype \Proc$
    } 
    \AxiomC{} 
    \LeftLabel{}
    \UnaryInfC{
      $a:C_1 \cat b:C_2; \es; \es \proves  S_b
      \hastype \Proc$
    }
    \LeftLabel{\scriptsize (Par)} 
    \BinaryInfC{
      $a:C_1 \cat b:C_2; \es; \es \proves   R \Par S_a \Par S_b
      \hastype \Proc$
    } 
    \LeftLabel{\scriptsize (Res)}
    \UnaryInfC{
      $a:C_1; \es; \es \proves  \news{b:C_2}  R \Par S_a \Par S_b
      \hastype \Proc$
    } 
    \LeftLabel{\scriptsize (Res)}
    \UnaryInfC{
      $\es; \es; \es \proves \news{a : C_1} \news{b:C_2}  R \Par S_a \Par S_b
      \hastype \Proc$
    } 
    \DisplayProof 
  \end{align*}

  \noindent In the above typing derivation, we remark that   the  judgments
  \begin{align}
    \label{j:sa}
    a:C_1 \cat b:C_2; \es; \es &\proves S_a  \hastype \Proc \\
    \label{j:sb}
    a:C_1 \cat b:C_2; \es; \es &\proves  S_b
      \hastype \Proc
  \end{align}
  \noindent are shown similarly as in~\eqref{dt:ex-5} 
  from~\Cref{e:server-types}. 
  Indeed, 
   to derive~\eqref{j:sa} reusing
  the derivation tree from~\eqref{dt:ex-5}
  we need to substitute $y$ with $r$ and then weaken 
  the shared environment~\eqref{dt:ex-5} 
  with $b:C_2$ (see~\Cref{lem:weaken}). 
  Similarly, 
  by substituting $a$ with $b$ and $y$ with $r$ 
  in the derivation tree~\eqref{dt:ex-5}
   and 
  then by weakening the shared environment with $a:C_1$ in  
  its conclusion we can obtain~\eqref{j:sb}. 

\end{example} 

\begin{notation}[Type Annotations]
We shall often annotate bound names and variables with their respective type.
We will write, e.g., $\news{s:S}P$ to denote that the type of $s$ in $P$ is $S$.
Similarly for values: we shall write $\abs{u:C}{P}$.
Also, letting $\leadsto \in \{\lollipop, \sharedop\}$, we may write
$\abs{u:\slhotup{C}}{P}$ to denote that the value is linear (if $\leadsto = \lollipop$) or shared (if $\leadsto = \sharedop$).
That is, we write $\abs{u:\slhotup{C}}{P}$ if
$\Gamma ; \Lambda; \Delta \proves \abs{{u}}{P} \hastype \slhot{{C}}$, for some $\Gamma$, $\Lambda$, and $\Delta$.
\end{notation}

Having introduced the core session process language \HO, we now move to detail its type-preserving decomposition into minimal session types.

\section{Decomposing Session-Typed Processes}
\label{s:decomp}

In this section we define the fragment of \emph{minimal session types} and present a \emph{decomposition} of well-typed processes: given a process $P$ typable with (standard) session types, our decomposition returns a process, denoted $\D{P}$, typable with minimal session types.

The definition of  $\D{P}$ follows Parrow's \emph{trios} for the $\pi$-calculus~\cite{DBLP:conf/birthday/Parrow00}.
A {trio} is a process with exactly three sequential prefixes.
Roughly speaking, if $P$ is a process with $k$ sequential actions, then $\D{P}$ will contain $k$ trios running in parallel: each of them will enact exactly one action from $P$.
The decomposition is carefully designed to ensure that trios trigger each other by preserving the sequencing in $P$.

This section is organized as follows.
First, in \Cref{mst:ss:key}, we use examples to discuss some key ideas of the decomposition.
Then, in \Cref{mst:ss:core}, we give the full definitions of minimal session types and the decomposition functions for types and  processes, denoted $\Gt{-}$ and $\D{-}$, respectively.
The former ``slices'' a session type $S$ and returns a list of minimal session types, corresponding to individual actions in $S$; the latter breaks down an \HO process into a parallel composition of processes.
We demonstrate these notions on a number of examples in \Cref{mst:ss:examples}.
Finally, in \Cref{mst:ss:staticcorr} we establish the \emph{static correctness} result (\Cref{mst:t:decompcore}):
if $P$ is well-typed
under session types $S_1, \ldots, S_n$, then $\D{P}$ is typable using the
minimal session types $\Gt{S_1}, \ldots, \Gt{S_n}$.
The issue of dynamic correctness, i.e., the operational correspondence between $P$ and $\D{P}$, is treated separately in \Cref{ss:dynamic}.

\begin{remark}[Color Convention]
  We use colors to differentiate the operations on processes (in \textcolor{\colorpro}{pink}) and on types (in \textcolor{\colortyp}{green}).
  The usage of the colors is for visual aid only, and is not important for the mathematical content of the presented material.
\end{remark}

\subsection{Key Ideas}
\label{mst:ss:key} 
\newcommand{\processshape}{circle}
\captionsetup[figure]{font=small}
\begin{figure}[!t]
  \begin{mdframed} 
\begin{center} 
  \hspace*{-30.0pt}
\begin{tikzpicture}
  \tikzstyle{ann} = [draw=none,fill=none,right]
  \pgfmathsetmacro{\h}{2.0}
  \pgfmathsetmacro{\x}{3.0}
  \pgfmathsetmacro{\d}{1.75}

  \node[] (L1) at (-\x*1.5-0.1, 1.5) {Source process $P_1$:}; 

  {\node[draw, \processshape] (P1) at (-\x*1.5, 0.5) {$P_1$};
  \node[draw, \processshape] (P2) at (-\x/2.0, 0.5) {$P_2$};
  \node[draw, \processshape] (P3) at (\x/2.0, 0.5) {$P_3$};
  \node[draw, circle] (P4) at (\x*1.5, 0.5) {$\inact$};} 

  {\draw[->, color=black](P1) edge[] node [above] 
    {\footnotesize $u:\abinp{}{\mathsf{str}}$} (P2);}
  {\draw[->, color=black](P2) edge[] node [above] 
    {\footnotesize $u:\abinp{}{\mathsf{int}}$} (P3);}
  {\draw[->, color=black](P3) edge[] node [above] 
    {\footnotesize $u:\about{}{\mathsf{bool}}$} (P4);}


    \tikzstyle{ann} = [draw=none,fill=none,right]
  \node[align=left] (L2) at (-\x*1.25, -0.5) {Decomposed process $\D{P_1}$:};

    
    \node[draw, \processshape] (T1) at (-\x*1.5, -\d) {$Q_1$};
    \node[draw, \processshape] (T11) at (-\x*1.5, -\d-\h) {$Q'_1$};

    {\draw[->, color=black](T1) edge[] node [left] 
        {\footnotesize $u_1:\abinp{}{\mathsf{str}}$} (T11);}

    \node at (-\x, -\d) {$\parallel$};
    \node[draw, \processshape] (T2) at (-\x/2.0, -\d) {$Q_2$};
    \node[draw, \processshape] (T22) at (-\x/2.0, -\d-\h) {$Q'_2$};

    {\draw[->, color=black](T2) edge[] node [left] 
        {\footnotesize $u_2:\abinp{}{\mathsf{int}}$} (T22);}

    \node at (0, -\d) {$\parallel$};
    \node[draw, \processshape] (T3) at (\x/2.0, -\d) {$Q_3$};
    \node[draw, \processshape] (T33) at (\x/2.0, -\d-\h) {$Q'_3$};
    {\draw[->, color=black](T3) edge[] node [left] 
        {\footnotesize $u_3:\about{}{\mathsf{bool}}$} (T33);}

        \node at (\x, -\d) {$\parallel$};

    \node[draw, \processshape] (T4) at (\x*1.5, -\d) {$Q_4$};


    {\draw[->, dashed, color=red, align=left](T11) edge[bend left=25] node [pos=0.05, right] 
      {\footnotesize{$x:\mathsf{str}$}} (T2);}
    {\draw[->, dashed, color=red, align=left](T22) edge[bend left=25] node [pos=0.05, right] 
     {\footnotesize{$x:\mathsf{str}$}, \footnotesize{$y:\mathsf{int}$}} (T3);}
    {\draw[->, dashed, color=red](T33) edge[bend left=20] node [left] {} (T4);}

    \node[draw=none] {};
  \end{tikzpicture}
\end{center} 
\end{mdframed} 
\caption[Our decomposition function 
$\D{-}$, illustrated]{Our decomposition function 
$\D{-}$, illustrated. Nodes 
represent process states, `$\parallel$' represents 
parallel composition of processes, black arrows stand for actions, and red arrows 
indicate synchronizations that preserve the sequentiality of the source process by activating trios and propagating (bound) values.}
\label{mst:fig:decomp}
\end{figure} 

Consider a process $P_1$ that implements the (standard) session type $S=\btinp{\mathsf{str}}\btinp{\mathsf{int}}\btout{\mathsf{bool}}\tinact$ along name $u$.
In   $P_1$, name $u$ is not a single-use resource; rather, it is used several times to implement the communication actions in $S$; \Cref{mst:fig:decomp}~(top) graphically depicts the actions and the corresponding states.

  The decomposition $\D{P_1}$ is illustrated
  in the bottom part of~\Cref{mst:fig:decomp}: it is defined as the parallel composition of four processes 
  $Q_i$ (for $i \in \{1, \ldots, 4\}$).
  Each process $Q_1$, $Q_2$, and $Q_3$ mimic one action of $P_1$ on an indexed name $u_i$, while $Q_4$ simulates the termination   of the session.
  This way, a single name $u$ in $P_1$ is decomposed into a sequence of names $u_1,u_2,u_3$ in $\D{P_1}$.

  The processes $Q_1$, $Q_2$, $Q_3$, and $Q_4$ are composed in parallel, but we would like to retain the same sequentiality of actions on the channels $u_i$ as we have on the channel $u$.
  To that end, each process $Q_i$, with the exception of $Q_1$, does not perform its designated action on $u_i$ until it gets activated by the previous process.
  In turn, after $Q_i$ performs an action on $u_i$ it evolves to a state $Q'_i$, which is responsible for activating the next process $Q_{i+1}$.
  In~\Cref{mst:fig:decomp}, the activations are indicated by red arrows.
  In general, the decomposition orchestrates the activation of sub-processes, following the sequencing prescribed by the session types of the given process. 
  Therefore, assuming a well-typed source process, our decomposition codifies the sequentiality in
  session types into the process level. 

  The activation mechanism includes the \emph{propagation} of values across sub-processes (cf. the labels on red arrows).
  This establishes a flow of values from sub-processes binding them to those that use them (i.e., it makes  variable bindings explicit).  For example, in $P_1$, the Boolean value being sent over as part of the session $S$ might depend on the previously received string and integer values.
  Therefore, both of those values have to be propagated to the process $Q_3$, which is responsible for sending out the Boolean.

  In this example a single name $u : S$ is decomposed into a sequence $\wtd u = (u_1,\ldots,u_n)$: each $u_i \in \wtd u$ is a single-use resource, as prescribed by its minimal session type.
  Such is the case for non-recursive types $S$.
  When $S$ is recursive, the situation is more interesting: each action of $S$ can be repeated many times, and therefore the names $\wtd u$ should be propagated across trios to enable potentially many uses.
  As an example, consider the recursive session type 
  $S=\trec{t}{\btinp{\mathsf{int}}\btout{\mathsf{\mathsf{int}}}}{\vart{t}}$, in which an input and an output actions are repeated indefinitely.
Consider the following process 
  \begin{align*}
    R_1=\underbrace{\binp{r}{z}}_{{T_1}}
    \underbrace{\bout{r}{-z}}_{{T_2}}
		\underbrace{\binp{r}{z}}_{{T_3}}
    \underbrace{\bout{r}{z}}_{{T_4}}
    \underbrace{\appl{V}{r}}_{{T_5}}
  \end{align*}
which makes use of the channel $r : S$ and where $V$ has type $\shot{S}$.
\Cref{mst:fig:rec-subfig2} (top) gives the first four actions of $R_1$ and the corresponding sates: the body of type $S$ prescribes two actions on name $r$, performed sequentially in $R_1$ and $R_2$; subsequent actions (enabled in $R_3$
    and $R_4$) correspond to a ``new instance'' of the body of $S$. 
     
    The decomposition $\D{R_1}$, depicted in \Cref{mst:fig:rec-subfig2} (bottom), generates a trio  for each prefix in $R_1$; we denote prefixes with their corresponding trios ${T_1}, \ldots, {T_5}$.
    The type decomposition function on types, $\Gt{-}$, slices $S$ into two \emph{minimal tail-recursive types}: $M_1 = \trec{t}\btinp{\textsf{int}}\vart{t}$ and $M_2 = \trec{t}\btout{\textsf{int}}\vart{t}$.

  In the recursive case, a key idea is that trios that mimic actions prescribed by a recursive session types should reuse names, which should be propagated across trios. 
    This way, for instance, trios $T_1$ and $T_3$ mimic the same (input) action, and so they both should use the same name ($r_1$).
    To achieve this, we devise a mechanism that propagates names with tail-recursive types (such as $(r_1,r_2)$) through the trios.
    These propagation actions are represented by blue arrows in~\Cref{mst:fig:rec-subfig2} (bottom).
    In our example, $T_3$ gathers the complete decomposition of names from preceding trios $(r_1,r_2)$; it mimics an input action on $r_1$ and makes $(r_1,r_2)$ available to future trios (i.e., $T_4$ and $T_5$).
    
    Since the same tail-recursive names can be (re)used infinitely often, we propagate tail-recursive names through the following process.
    All the names ${\wtd r}$ corresponding to the decomposition of a tail-recursive name $r$ are bound in the process
    $$\recprov{r}{x}{\wtd r},$$
    which is similar to the servers discussed in \Cref{e:server}.
    We call these processes \emph{recursive propagators}, and each tail-recursive name in the original process $P$ has a dedicated propagator in $\D{P}$ on the channel $\prop^{r}$.
    Whenever a trio has to perform an action $\alpha(r_i)$ on one of the decomposed tail-recursive names (i.e., a decomposition of an input action `$\binp{r}{y}$' or an output action `$\bout{r}{V}$' on the name $r$), it first has to request the name from the corresponding recursive propagator by performing an output action $\abbout{\prop^r}{N}$, where value $N$ is the abstraction
$$N = \abs{\wtd z} \alpha({z_i}).\big(\apropout{k+1}{\wtd w} \Par \recprov{r}{x}{\wtd z} \big).$$
      A synchronization on $\prop^r$ will result in the reduction:
      \[
        \recprov{r}{x}{\wtd r} \Par \abbout{\prop^r}{N} \red \alpha({r_i}).\big(\apropout{k+1}{\wtd w} \Par 
  \recprov{r}{x}{\wtd r}
      \big). 
      \]
      The resulting process first simulates $\alpha(r)$  and subsequently  reinstates the recursive propagator on $\prop^r$, for the benefit of the other trios requiring access to the names $\wtd r$.
See  \Cref{mst:ex:recdec,mst:ex:rec2} below (Page~\pageref{mst:ex:recdec}) for further illustration of this method.
  
This decomposition strategy handles \HO processes with recursive types which are \emph{simple} and \emph{contractive}.
That is, recursive types of the form $\trec{t}{S}$, where the body $S \neq \vart{t}$ does not itself contain recursive types.
Unless stated otherwise, we consider \emph{tail-recursive} session types such as, e.g., $S = \trec{t}\btinp{\mathsf{int}}\btinp{\mathsf{bool}}\btout{\mathsf{bool}}\vart{t}$.
Non-tail-recursive session types such as $\trec{t}{\btinp{\shot{(\widetilde T,\vart{t})}}\tinact}$,  used in the fully-abstract encoding of \HOp into \HO~\cite{DBLP:conf/esop/KouzapasPY16}, can also be accommodated; see \Cref{mst:ex:ntrsts} below.

\begin{figure}[!t]
  \begin{mdframed} 
        \centering 
          \hspace*{-20.0pt}
        \begin{tikzpicture}
          \tikzstyle{ann} = [draw=none,fill=none,right]
      \pgfmathsetmacro{\x}{3}
      \pgfmathsetmacro{\h}{2}
      \pgfmathsetmacro{\xr}{2.5}
      \pgfmathsetmacro{\hr}{2}

      \node[] (L1) at (-\xr*2.0-0.175, 6.0) {Source process $R_1$:}; 

      {\node[draw, \processshape] (P1) at (-2*\xr, 5) {$R_1$};
      \node[draw, \processshape] (P2) at (-\xr, 5) {$R_2$};
      \node[draw, \processshape] (P3) at (0, 5) {$R_3$};
    
      
      \node[draw, circle] (P4) at (\xr, 5) {$R_4$}; 

      \node[draw, circle] (P5) at (2*\xr, 5) {$R_5$};
      } 

      {\draw[->, color=black](P1) edge[] node [above] 
    {\footnotesize $r:\abinp{}{\mathsf{int}}$} (P2);}
  {\draw[->, color=black](P2) edge[] node [above] 
    {\footnotesize $r:\about{}{\mathsf{int}}$} (P3);}
  {\draw[->, color=black](P3) edge[] node [above] 
    {\footnotesize $r:\abinp{}{\mathsf{int}}$} (P4);
    \draw[->, color=black](P4) edge[] node [above] 
    {\footnotesize $r:\about{}{\mathsf{int}}$} (P5);
    }

    \node[] (L2) at (-\xr*1.75-0.1, \x+1) {Decomposed process $\D{R_1}$:}; 

            \node[draw, \processshape] (T1) at (-2*\xr, \x) {$T_1$};
            \node[draw, \processshape] (T11) at (-2*\xr, \x-\h) {$T'_1$};
        
            {\draw[->, color=black, align=center](T1) edge[] node [pos=0.8, left] 
                {\textcolor{black}{\footnotesize{$r_1:$} 
                \footnotesize{$\abinp{}{\mathsf{int}}$}}} (T11);}
        
            \node at (-1.5*\xr,\x) {$\parallel$};
            \node[draw, \processshape] (T2) at (-\xr,\x) {$T_2$};
            \node[draw, \processshape] (T22) at (-\xr, \x-\h) {$T'_2$};
        
            {\draw[->, color=black, align=center](T2) edge[] node [pos=0.8, left] 
                {\footnotesize{$r_2:$}
                \footnotesize{$\about{}{\mathsf{int}}$}} (T22);}
        
            \node at (-0.5*\xr, \x) {$\parallel$};
            \node[draw, \processshape] (T3) at (0, \x) {$T_3$};
            \node[draw, \processshape] (T33) at (0, \x-\h) {$T'_3$};
            {\draw[->, color=black](T3) edge[] node [pos=0.8, left] 
                {{\footnotesize $r_1:$$\abinp{}{\mathsf{int}}$}} (T33);}
        
                \node at (0.5*\xr, \x) {$\parallel$};
        
            \node[draw, \processshape] (T4) at (\xr, \x) {$T_4$};
            \node[draw, \processshape] (T44) at (\xr, \x-\h) {$T'_4$};
      
            {\draw[->, color=black](T4) edge[] node [pos=0.8, left] 
            {{\footnotesize $r_2:$$\about{}{\mathsf{int}}$}} (T44);}
      
            \node at (1.5*\xr, \x) {$\parallel$};
        
            \node[draw, \processshape] (T5) at (2*\xr, \x) {$T_5$};
      
      
            \pgfmathsetmacro{\x}{15}
            \pgfmathsetmacro{\pos}{0.65}
            {\draw[->, dashed, color=blue](T11) 
            edge[bend left=\x] node 
              [pos=\pos, left] {\footnotesize{$r_1$,$r_2$}} (T2);}
            {\draw[->, dashed, color=blue](T22) 
            edge[bend left=\x] node [pos=\pos, left] {\footnotesize{$r_1$,$r_2$}} (T3);}
            {\draw[->, dashed, color=blue](T33) 
            edge[bend left=\x] node [pos=\pos, left] {\footnotesize{$r_1$,$r_2$}} (T4);}
            {\draw[->, dashed, color=blue](T44) 
            edge[bend left=\x] node [pos=\pos, left] {\footnotesize{$r_1$,$r_2$}} (T5);}
        
          \end{tikzpicture}
        \end{mdframed} 
          \caption[Decomposition of processes with recursive session types, illustrated]{Decomposition of processes with recursive session types, illustrated. Dashed blue arrows represent the propagation of tail-recursive names ($r_1$,$r_2$) across trios. 
          }
          \label{mst:fig:rec-subfig2}
\end{figure}

\subsection{The Decomposition}
\label{mst:ss:core}
Here we formally present the decomposition of \HO processes. 
We start introducing some preliminary definitions, including the definition of an auxiliary function, called the \emph{breakdown function}. 

Following Parrow~\cite{DBLP:conf/birthday/Parrow00} we adopt some useful terminology
and notation on trios. 
The \textit{context} of a trio is a tuple of variables
$\widetilde x$, possibly empty, which  makes variable bindings explicit. 
We use
a reserved set of \textit{propagator names} (or simply \emph{propagators}),
denoted with $\prop_k, \prop_{k+1}, \ldots$, to carry contexts and trigger the
subsequent trio. A process with less than three sequential prefixes is called a
\textit{degenerate trio}. Also, a \emph{leading trio} is the one that receives a
context, performs an action, and triggers the next trio; a \emph{control trio}
only activates other trios.

The breakdown function works on both processes and values. The breakdown of
process $P$ is denoted by $\B{k}{\tilde x}{P}$, where $k$ is the index for the
propagators $\prop_k$, and $\widetilde x$ is the context to be received by the
previous trio. Similarly, the breakdown  of a value $V$ is denoted by
$\V{k}{\tilde x}{V}$.

\subsubsection{Minimal Session Types and Decomposing Types}
We start by introducing minimal session types as a fragment of \Cref{d:types}:
\begin{definition}[Minimal Session Types (MSTs)]
\label{mst:d:mtypesi}
The syntax of \emph{minimal session types} for \HO is defined as follows: 
	\begin{align*}
U & \bnfis		\shot{\widetilde{C}} \bnfbar \lhot{\widetilde{C}}
		\\
		C  & \bnfis		M  \bnfbar  {\chtype{U}}
\\
    M & 
    \bnfis 	\mugamma  \bnfbar~  \btout{\widetilde{U}} \mugamma 		
    	\bnfbar~ \btinp{\widetilde{U}} \mugamma \bnfbar \trec{t}{M }	
    	\\
    \mugamma & \bnfis \tinact \bnfbar \vart{t}
	\end{align*}
\end{definition}
The above definition is minimal in its use of sequencing, which is only present in  
  recursive session types such as $\trec{t}{\btout{U}\vart{t}}$ and
$\trec{t}{\btinp{U}\vart{t}}$---these are  tail-recursive session types with exactly one session prefix. 
Clearly, this minimal type structure induces a reduced set of typable \HO processes. 
A type system for \HO with minimal session types can be straightforwardly obtained by specializing the definitions, typing rules, and results summarized in \Cref{sec:types}.

We refer to \HO processes and terms typeable with minimal session types as \emph{MST processes and terms}, respectively.

We now define how to ``slice'' a standard session type into a \emph{list} of
minimal session types.
We need the following auxiliary definition.
  \begin{definition}[Predicates on Types and Names]
    Let $C$ be a channel type.
    \begin{itemize}
     \item We write $\tr(C)$  to indicate that $C$ is a tail-recursive session type. 
     \item Given $u:C$, we write $\lin{u}$ if a session type (i.e. $C = S$ for some $S$) that is not tail recursive.
    \end{itemize}
    With a slight abuse of notation, we write $\tr(u)$ to mean $u:C$ and $\tr(C)$ (and similarly for $\neg \tr(u)$).
  \end{definition}

\begin{definition}[Decomposing Session Types] 
  \label{mst:def:typesdecomp}
  Given the session, higher-order, and shared types of \Cref{d:types}, the \emph{type decomposition function}  $\Gt{-}$ is defined using the auxiliary function $\Rt{-}$ as in \Cref{mst:fig:typesdecomp}. 
We write $\len{\Gt{S}}$ to denote the length of  $\Gt{S}$ (and similarly for $\Rt{-}$).
\end{definition}

\begin{figure}[!t]
\begin{mdframed}
\vspace{-4mm}
	\begin{align*}
  \Gt{\btout{U}{S}} &=
  \begin{cases}
  \btoutt{\Gt{U}}   &  \text{if $S = \tinact$} \\
  \btoutt{\Gt{U}}\, ,\Gt{S} & \text{otherwise}  
\end{cases} 
 \\
 \Gt{\btinp{U}{S}} &=
  \begin{cases}
  \btinpt{\Gt{U}}  &  \text{if $S = \tinact$} \\
  \btinpt{\Gt{U}} \, , \Gt{S} & \text{otherwise}
  \end{cases}   
  \\
  \Gt{\trec{t}{S}} & = 
  \begin{cases} 	
    \Rt{S} & \text{if $\tr(\trec{t}{S})$} \\ 
    \trec{t}{\Gt{S}} & \text{if $\neg\tr(\trec{t}{S})$ and $\Gt{S}$ is a singleton}
 \end{cases} 
 \\
  \Gt{\tinact} &= \tinact 
  \\
   \Gt{\vart{t}} & = \vart{t} 
  \\
  \Gt{\lhot{C}} &= \lhot{\Gt{C}} 
  \\
  \Gt{\shot{C}} &= \shot{\Gt{C}}
  \\
    \Gt{\chtype{U}} &= \chtype{\Gt{U}}
   \\
   \Rt{\vart{t}} & = \epsilon
  \\
    \Rt{\btout{U}S} &= \trec{t}{\btout{\Gt{U}}} \tvar{t}, \Rt{S}
   \\
   \Rt{\btinp{U}S} & = \trec{t}{\btinp{\Gt{U}}} \tvar{t}, \Rt{S} 
\end{align*}
\end{mdframed}	
\caption[Decomposing session types into minimal session types]{Decomposing session types into minimal session types (\Cref{mst:def:typesdecomp})   \label{mst:fig:typesdecomp}}
\end{figure}
\noindent
The decomposition is self-explanatory; 
intuitively, if a session type $S$ contains $k$ input/output actions, the list $\Gt{S}$ will contain $k$ minimal session types.
For a tail recursive $\trec{t}{S}$, $\Gt{\trec{t}{S}}$ is a list of minimal recursive session types, obtained
 using the auxiliary function $\Rt{-}$ on   $S$: if $S$ has $k$ prefixes
 then the list  $\Gt{\trec{t}{S}}$ will contain $k$ minimal recursive session types.

 We illustrate \Cref{mst:def:typesdecomp} with three examples.
\begin{example}[Decomposition a Non-recursive Type]
Let $S = \btinp{\mathsf{int}} \btinp{\mathsf{int}} \btout{\mathsf{bool}} \tinact$ be the session type given in \Cref{s:intro}.
Then $\Gt{S}$ denotes the list   $\btinpt{\mathsf{int}}  \, ,  \btinpt{\mathsf{int}}   \, , \btoutt{\mathsf{bool}} $. 
 \hspace*{\fill} $\lhd$
\end{example}
\begin{example}[Decomposing a Recursive Type]
  \label{mst:ex:rtype}
  Let $S = \trec{t}S'$ be a recursive session type, with
  $S'=\btinp{\mathsf{int}}\btinp{\mathsf{bool}}\btout{\mathsf{bool}}\vart{t}$.
  By \Cref{mst:def:typesdecomp}, since $S$ is tail-recursive,  $\Gt{S} = \Rt{S'}$.
  Further, $\Rt{S'} = \trec{t}\btinp{\mathsf{\Gt{int}}} \vart{t},
  \Rt{\btinp{\mathsf{bool}}\btout{\mathsf{bool}}\vart{t}}$. By definition of
  $\Rt{-}$, we obtain $$\Gt{S} = \trec{t}\btinp{\mathsf{int}} \vart{t},~
  \trec{t}\btinp{\mathsf{bool}} \vart{t},~ \trec{t}\btout{\mathsf{bool}}
  \vart{t}, \Rt{t}$$ (using $\Gt{\mathsf{int}} = \mathsf{int}$ and
  $\Gt{\mathsf{bool}} = \mathsf{bool}$). Since $\Rt{\vart{t}} = \epsilon$, we
  obtain $$\Gt{S} = \trec{t}\btinp{\mathsf{int}} \vart{t},~
  \trec{t}\btinp{\mathsf{bool}} \vart{t},~ \trec{t}\btout{\mathsf{bool}}
  \vart{t}$$  	
  \hspace*{\fill} $\lhd$
\end{example}

In addition to tail-recursive types that are handled by $\Rt{-}$, we need to
support non-tail-recursive types of form $\trec{t}{\btinp{\shot{(\widetilde
T,\vart{t})}}\tinact}$ that are essential for the encoding of recursion in \HOp 
 into \HO. 
The following example illustrates such a decomposition. 
\begin{example}[Decomposing a Non-tail-recursive Type]
  \label{mst:ex:ntrsts}
  Let 
  $S =
  \trec{t}{\btinp{\shot{(\btinp{\mathsf{str}}\btout{\mathsf{str}}\tinact,
  \vart{t})}}\tinact}$ 
  be a non-tail-recursive type. We obtain the following decomposition: 
  \begin{align*} 
  \Gt{S} &= \trec{t}{\Gt{\btinp{\shot{(\btinp{\mathsf{str}}
  \btout{\mathsf{str}}\tinact, \vart{t})}}\tinact}} \\
  &= \trec{t}{{?({\Gt{\shot{(\btinp{\mathsf{str}}
  \btout{\mathsf{str}}\tinact,\vart{t})}}})}} 
  \\
  &= 
  \trec{t}{\btinpt{{{{\shot{(\btinpt{\mathsf{str}}, 
  \btoutt{\mathsf{str}},\vart{t})}}}}}} = M 
  \end{align*} 
  \noindent We can see that we have generated 
  minimal non-tail-recursive type $M$. 
  \hspace*{\fill} $\lhd$
\end{example}

Now, we illustrate the encoding of \HOp recursive
processes into \HO from~\cite{DBLP:conf/esop/KouzapasPY16} using the 
non-tail-recursive type  $S$ given in the above example.
\begin{example}[Encoding Recursion]
  \label{mst:e:en-rec}
  Consider the process
    $P = \recp{X}\binp{a}{m}\bout{a}{m}X$, which contains recursion and so it  is not an $\HO$ process.  
    Still, $P$ can be encoded into $\HO$ as follows~\cite{DBLP:conf/esop/KouzapasPY16}:
    \begin{align*}
    \map{P} = \binp{a}{m}\bout{a}{m} \news{s}
    {(\appl{V}{(a,s)} \Par \about{\dual s}{V})}
    \end{align*}
    where the value $V$ is an abstraction that potentially reduces to $\map{P}$: 
    \begin{align*}
    V = \abs{(x_a,y_1)}\binp{y_1}{z_x}\binp{x_a}{m}
    \bout{x_a}{m}\news{s}
    {(\appl{z_x}{(x_a,s)} \Par \bout{\dual s}{z_x}\inact)} 
    \end{align*}
  
  As detailed in~\cite{DBLP:conf/esop/KouzapasPY16}, this encoding relies on non-tail-recursive types. 
  In particular, the bound name $s$ in $\map{P}$ is typed 
  with the following type, discussed above in~\Cref{mst:ex:ntrsts}:
  \begin{align*}
    S = \trec{t}{\btinp{\shot{(\btinp{\mathsf{str}}\btout{\mathsf{str}}\tinact,
\vart{t})}}\tinact} 
  \end{align*}

    We compose $\map P$ with an appropriate client process to illustrate the encoding of recursion. Below $R$ stands for some unspecified process such that $a \in \rfn{R}$:
    \begin{align*}
   \map{P} \Par \bout{a}{W} \binp{a}{b}R  & \red^2 \news{s} 
    {(\appl{V}{(a,s)} \Par \about{\dual s}{V})} \Par R \\
    &  \red  \news{s} {(\binp{s}{z_x}\binp{a}{m}
      \bout{a}{m}\news{s'}
      {(\appl{z_x}{(a,s')} \Par \about{\dual {s'}}{z_x})} \Par \about{\dual s}{V})} \Par R \\
    & \red \binp{a}{m}
      \bout{a}{m}\news{s'}
      {(\appl{V}{(a,s')} \Par \about{\dual {s'}}{V})} \Par R
      \\ 
      & = \map{P} \Par R
    \end{align*}
\end{example} 

\subsubsection{Decomposing Processes}
As we have seen, each session type $S$ is decomposed into $\Gt{S}$, a list  of minimal session types.
Accordingly, given an assignment $s : S$, we decompose $s$ into a series of names, one for each action in $S$.
We use \emph{indexed names} to formalize the names used by minimally typed processes.
Formally, an indexed name is a pair $(n, i)$ with $i \in \mathbb{N}$, which we denote as $n_i$.
We refer to processes with indexed names  as \emph{indexed processes}.

The decomposition of processes is defined in \Cref{mst:def:decomp}, and it relies on a breakdown function, denoted $\B{k}{\tilde x}{-}$, which operates on indexed processes.
Before we dive into those functions we present some auxiliary definitions.

\paragraph{Preliminaries.}
To handle the unfolding of recursive types, we shall use the following auxiliary function, which decomposes guarded recursive types, by first ignoring all the actions until the recursion.
\begin{definition}[Decomposing an Unfolded Recursive Type]
Let $S$ be a session type. The function $\Rts{}{s}{-}$: is defined as follows
  \begin{align*}
    \Rts{}{s}{\trec{t}{S}} & = \Rt{S}
    \\
    \Rts{}{s}{\btinp{U}S} & = \Rts{}{s}{S} 
    \\
    \Rts{}{s}{\btout{U}S} & = \Rts{}{s}{S}
  \end{align*} 
\end{definition}
\begin{example}
   Let $T = \btinp{\mathsf{bool}}\btout{\mathsf{bool}}S$ be a derived unfolding
    of  $S$ from \Cref{mst:ex:rtype}. Then, by \Cref{mst:def:typesdecomp},
    $\Rts{}{s}{T}$ is the list of minimal recursive types obtained as follows:
    first, $\Rts{}{s}{T} = \Rts{}{s}{\btout{\mathsf{bool}}\trec{t}S'}$ and after
    one more step, $\Rts{}{s}{\btout{\mathsf{bool}}\trec{t}S'} =
    \Rts{}{s}{\trec{t}S'}$. Finally, we have $\Rts{}{s}{\trec{t}S'} = \Rt{S'}$.
    We get the same list of minimal types as in \Cref{mst:ex:rtype}: $\Rts{}{s}{T}
    = \trec{t}{\btinp{\mathsf{int}}\vart{t}},
    \trec{t}{\btinp{\mathsf{bool}}\vart{t}}, \trec{t}{\btout{\mathsf{bool}}
    \vart{t}}$. 
    \hspace*{\fill} $\lhd$
\end{example}

Given an unfolded recursive session type $S$, the auxiliary function $\indT{S}$ returns the position of the top-most prefix of $S$ within its body.
  \begin{definition}[Index function]
  \label{mst:def:indexfunction}
  Let $S$ be an (unfolded) recursive session type. The function $\indT{S}$ is defined as follows:  
  \begin{align*}
    \indT{S} = \begin{cases}
     \indTaux{S'\subst{S}{\vart{t}}}{0} & \text{if} \ S =\trec{t}{S'} \\
     \indTaux{S}{0} & \text{otherwise}
   \end{cases}
  \end{align*}
  \noindent where $ \indTaux{S}{l}$:
  \begin{align*}
  \indTaux{\trec{t}{S}}{l} & = \len{\Rt{S}} - l + 1
    \\
 \indTaux{\btout{U}S}{l}  & = \indTaux{S}{l+1}
  \\
   \indTaux{\btinp{U}S}{l} & = \indTaux{S}{l+1}
  \end{align*}
  \end{definition}
    
  \begin{example}
  \label{mst:ex:fs}
      Let  
      $S' = \btinp{\mathsf{bool}}\btout{\mathsf{bool}}S$ where 
      $S$ is as in \Cref{mst:ex:rtype}.
      Then $\indT{S'} = 2$ since the top-most prefix of $S'$ (`$\btinp{\mathsf{bool}}$') is 
      the second prefix in the body of $S$. 
      \hspace*{\fill} $\lhd$
  \end{example}
\noindent
In order to determine the required number of propagators ($\prop_k, \prop_{k+1}, \ldots$) required in the breakdown of processes and  values, we  define the \emph{degree} of a process:
\begin{definition}[Degree of a Process]
  \label{mst:def:sizeproc}
	Let $P$ be an \HO process.
	The \emph{degree} of $P$, denoted $\plen{P}$, is defined as
	follows:
	$$ 
	\plen{P} =
	\begin{cases}
    \plen{Q} + 1 & \text{if $P = \bout{u_i}{V}Q$ or $P=\binp{u_i}{y}Q$}
    \\
	\plen{P'} & \text{if $P = \news{s:S}P'$}
	\\
	\plen{Q} + \plen{R} + 1 & \text{if $P = Q \Par R$}
	\\
	1 & \text{if $P = \appl{V}{u_i}$ or $P = \inact$}
	\end{cases}
	$$

\end{definition}

\noindent
We define an auxiliary function that ``initializes'' the indices of a tuple of
names, for turning a regular process into an indexed process.

\begin{definition}[Initializing an indexed process]
  \label{mst:d:counterinit}
Let $\widetilde u = (a,b,s,s',r,r',\ldots)$ be a finite tuple of names.
We shall write $\mathsf{init}(\widetilde u)$ to denote the tuple of indexed names
$(a_1,b_1,s_1,s'_1,r_1,r_1',\ldots)$.
\end{definition}

  \begin{definition}[Subsequent index substitution]
    \label{mst:d:nextn}
    Let $n_i$ be an indexed name. We define $\nextn{n_i} = \linecondit{\lin{n_i}}{\incrname{n}{i}}{\{\}}$.
  \end{definition}

\begin{remark}\label{mst:r:prefix}
Recall that we write
`$\apropinp{k}{}$' and  `$\apropout{k}{}$'
to denote input and output prefixes in which the value communicated along $\prop_k$ is not relevant.
While `$\apropinp{k}{}$' stands for `$\apropinp{k}{x}$', 
`$\apropout{k}{}$' stands for `$\apropout{k}{\abs{x}{\inact}}$'.
Their corresponding minimal types are 
$\btinpt{\shot{\tinact}}$ 
and 
$\btoutt{\shot{\tinact}}$, 
which are 
denoted by
$\btinpt{-}$ and  $\btoutt{-}$, respectively.
\end{remark}

Given a typed process $P$, we write $\rfn{P}$ to denote the set of free names of
$P$ whose types are recursive. As mentioned above, for each $r \in \rfn{P}$ with
$r : {S}$ we shall rely on a  control trio of the form
$\binp{\prop^r}{x}\appl{x}{\widetilde r}$, where $\widetilde{r} = r_1, \ldots,
r_{\len{\Gt{S}}}$.

\begin{definition}[Decomposition of a Process]
	\label{mst:def:decomp}
	Let $P$ be a closed \HO process with $\widetilde u = \fn{P}$ and $\widetilde v = \rfn{P}$.
  The \emph{decomposition} of $P$, denoted $\D{P}$, is
defined as:
$$
  \D{P} = \news{\widetilde \prop}\news{\widetilde \prop_r}\Big(
  \propout{k}{} \inact \Par \B{k}{\epsilon}{P\sigma} \Par \prod_{r \in \tilde{v} } 
  \recprov{r}{x}{\wtd r}
  \Big)
  $$
  \noindent where: $k >0$;
    $\widetilde \prop = (\prop_k,\ldots,\prop_{k+\plen{P}-1})$;
  $\widetilde{\prop_r} = \bigcup_{r\in \tilde{v}}\prop^r$;
    $\sigma = \subst{\mathsf{init}(\widetilde u)}{\widetilde u}$.
\end{definition}
\noindent
Notice that when $\rfn{P} = \emptyset$, then 
$
  \D{P} = \news{\widetilde \prop}(
  \propout{k}{} \inact \Par \B{k}{\epsilon}{P\sigma})
  $.
We now discuss the {breakdown} of process $P$, denoted $\B{k}{\tilde x}{P}$.

\paragraph{The Breakdown Function.}
Given  a context $\widetilde x$ and a  $k>0$, the breakdown of an indexed
process $P$, denoted $\B{k}{\tilde x}{P}$, is defined recursively on the structure of processes. The definition of $\B{k}{\tilde x}{-}$ relies on an auxiliary
breakdown function on values, denoted  $\V{k}{\tilde x}{-}$.
When $V = y$, then the breakdown function is simply the identity: 
$\V{}{\tilde x}{y} = y$.

  \begin{table}[!t]
\begin{tabular}{ |l|l|l|}
  \rowcolor{gray!25}
  \hline
  $P$ &
    \multicolumn{2}{l|}{
  \begin{tabular}{l}
    \noalign{\smallskip}
    $\B{k}{\tilde x}{P}$
    \smallskip
  \end{tabular}
}  \\
  \hline

$\bout{u_i}{V}{Q}$ &
    \begin{tabular}{l}
      \noalign{\smallskip}
      $\bullet~ \neg\tr(S)$: 
      \\
      \quad $\propinp{k}{\wtd x}
			\bbout{u_i}{\V{k+1}{\tilde y}{V\sigma}}
			\apropout{k+1}{\wtd w}  \Par \B{k+1}{\tilde w}{Q\sigma}$ 
      \smallskip 
      \\
      \hdashline 
      \noalign{\smallskip}
      $\bullet~ \tr(S)$:  \\
      \quad $\propinp{k}{\wtd x}
      		\abbout{\prop^u}{N_V}   \Par 
        \B{k+1}{\tilde w}{Q}$
        \\
        \quad where:
        \\
        \qquad $N_V = \abs{\wtd z}
      		{\bbout{z_{\indT{S}}}{\V{k+1}{\tilde y}{V}}}{}$ 
          \\
          \qquad \qquad \qquad \qquad \quad 
          $\big(\apropout{k+1}{\wtd w} \Par 
          \recprov{u}{x}{\wtd z}\big) 
      $
     \smallskip
  \end{tabular}
  &
  \begin{tabular}{l}
    \noalign{\smallskip}
    $u_i:S$ \\
    $\widetilde y = \fv{V}$\\
    $\widetilde w = \fv{Q}$ \\
    $\sigma = \nextn{u_i}$ \\
    $\wtd z = (z_1,\ldots,z_{\len{\Rts{}{s}{S}}})$
    \smallskip
  \end{tabular}
  \\
\hline 
$\binp{u_i}{y}Q$
&

  \begin{tabular}{l}
      \noalign{\smallskip}
      $\bullet~ \tr(S)$: 
      \\
      \quad $\propinp{k}{\wtd x}\binp{u_i}{y} \apropout{k+1}{\wtd w}
     \Par \B{k+1}{\tilde w}{Q\sigma}$ 
     \smallskip 
      \\
      \hdashline 
      \noalign{\smallskip} 
         $\bullet~ \neg\tr(S)$: 
      \\ 
     \quad $\propinp{k}{\wtd x}\abbout{\prop^u}
     {N_y} 
     \Par \B{k+1}{\tilde w}{Q}$ 
     \\
     \quad where:
     \\
     \qquad $N_y = \abs{\wtd z}{\binp{z_{\indT{S}}}{y}$
     \\
     \qquad \qquad \qquad \qquad \quad 
     $\big(\apropout{k+1}{\wtd w} \Par
     \recprov{u}{x}{\wtd z}} \big)$
      \smallskip
  \end{tabular}
  &
  \begin{tabular}{l}
    \noalign{\smallskip}
   $u_i:S$ \\
	$\wtd w = \fv{Q} $ \\
  $\sigma = \nextn{u_i}$ \\
  $\wtd z = (z_1,\ldots,z_{\len{\Rts{}{s}{S}}})$
    \smallskip
  \end{tabular}
    \\
  \hline

    $\appl{V}{(\wtd r, u_i)}$ &
  \begin{tabular}{l}
    \noalign{\smallskip}
    $\propinp{k}{\widetilde x}\overbracket{\prop^{r_1}!\big\langle 
      \lambda \widetilde z_1. \prop^{r_2}!\langle\lambda \widetilde z_2.\cdots. 
   \prop^{r_n}!\langle \lambda \widetilde z_n.}^{n = |\tilde r|} 
   Q \rangle \,\rangle \big\rangle$ \\
  where:
  \\
   \quad $Q = \appl{\V{k+1}{\tilde x}{V}}{(\widetilde z_1,\ldots,
   \widetilde z_n, \widetilde m)}$
    \smallskip
  \end{tabular}
  &
  \begin{tabular}{l}
  	\noalign{\smallskip}
  	$u_i:C$ \\
    $\forall r_i \in \widetilde r.(r_i: S_i \wedge \mathsf{tr}(S_i) \wedge$
    \\
    \quad $\wtd{z_i} = (z^i_1,\ldots,z^i_{\len{\Rts{}{s}{S_i}}}))$\\
    $\wtd m = (u_i, \ldots, u_{i+\len{\Gt{C}}-1})$
    \smallskip
  \end{tabular}
  \\
  \hline
  $\news{s:C}{P'}$ &
  \begin{tabular}{l}
    \noalign{\smallskip}
     $\bullet \neg\tr(C):$
     \\
 	$\news{\widetilde{s}:\Gt{C}}{\,\B{k}{\tilde x}
 			{P'\sigma}}$ 
       \smallskip 
       \\
       \hdashline
       \noalign{\smallskip}
       $\bullet ~\tr(C):$
       \\
   $\news{\widetilde{s}:\Gt{C}}
 		\news{c^s}
    \recprov{s}{x}{\wtd s}
    \Par$
     \\ 
 		\qquad 
 		$\news{c^{\bar{s}}}
    \recprov{\bar s}{x}{\wtd {\dual s}}
      \Par$ 
 		$\B{k}{\tilde x}{P'\sigma}$
    \smallskip
  \end{tabular}
  &
  \begin{tabular}{l}
    \noalign{\smallskip}
    $\wtd x = \fv{P'}$ \\
    $\wtd{s} = (s_1,\ldots,s_{\len{\Gt{C}}})$ \\
    $\wtd {\dual{s}} = (\dual{s_1},\ldots,\dual{s_{\len{\Gt{C}}}})$ \\
    $\sigma=\subst{s_1 \dual{s_1}}{s \dual{s}}$
    \smallskip
  \end{tabular}
  \\
  \hline
  $Q \Par R$ &
  \begin{tabular}{l}
    \noalign{\smallskip}
    $\propinp{k}{\wtd x} \propout{k+1}{\wtd y}
    \apropout{k+\degree+1}{\wtd w}   \Par$
    \\ 
    $\B{k+1}{\tilde y}{Q} \Par \B{k+\degree+1}{\tilde w}{R}$
    \smallskip
  \end{tabular}
  &
  \begin{tabular}{l}
    \noalign{\smallskip}
    $\wtd y  = \fv{Q}$ \\ 
    $\wtd w = \fv{R}$ \\
    $\degree = \plen{Q}$
    \smallskip
  \end{tabular}
      \\
  \hline

  $\inact$ &
\begin{tabular}{l}
  \noalign{\smallskip}
  $\propinp{k}{}\inact$
 \smallskip
\end{tabular}
&
\begin{tabular}{l}
	\noalign{\smallskip}
\smallskip
\end{tabular}
  \\
  \hline

\rowcolor{gray!25}
$V$ &
  \multicolumn{2}{l|}{
\begin{tabular}{l}
  \noalign{\smallskip}
  $\V{}{\tilde x}{V}$
  \smallskip
\end{tabular} } 
\\
\hline
$y$ &
\begin{tabular}{l}
  \noalign{\smallskip}
  $y$
  \smallskip
\end{tabular}
&
\begin{tabular}{l}
  \noalign{\smallskip}
  \smallskip
\end{tabular}
\\
\hline

$\abs{(\wtd y z)}P$ 
& 
\begin{tabular}{l}
  \noalign{\smallskip}
$\abs{(\widetilde{y^1},\ldots,\widetilde{y^n}, \widetilde z):
\slhotup{(\widetilde{M})}}{N}$ 
\\
\smallskip
  where:
  \\
  $\widetilde{M} = (\Gt{S_1},\ldots,\Gt{S_n}, \Gt{C})$
  \\
         $N = \news{\widetilde \prop} \news{\widetilde \prop_r}
     \prod_{i \in \len{\widetilde y}}
     (\recprov{y_i}{x}{\wtd y^i})
     \Par$ 
     \\
\quad \quad \qquad \qquad \qquad  \quad  
     $\apropout{1}{\widetilde x}
 \Par$ 
$\B{1}{\tilde x}{P \subst{z_1}{z}}$
\smallskip 
\end{tabular}
&
\begin{tabular}{l}
  \noalign{\smallskip}
  $\wtd y z : \wtd S C$ \\
  $\forall y_i \in \widetilde y.(y_i: S_i \wedge \mathsf{tr}(S_i) \wedge$\\
   \quad $\widetilde{y^i} = (y^i_1,\ldots,y^i_{\len{\Gt{S_i}}}))$\\
   $\widetilde z = (z_1,\ldots,z_{\len{\Gt{C}}})$ \\
   $\wtd \prop = (\prop_1,\ldots,\prop_{\plen{P}})$ \\ 
$\widetilde{\prop_r} = \bigcup_{r\in \tilde{y}}\prop^r$ 
  \smallskip
\end{tabular}
\\
\hline
\end{tabular}
\caption{The breakdown function for processes and values. \label{mst:t:bdowncore}}
\end{table}
 
The breakdown function relies on type information, in two ways.
First, names are decomposed based on their session types.
Second, for most constructs the shape of decomposed process depends on whether the associated session type is tail-recursive or not.
The definition of the breakdown function is given in \Cref{mst:t:bdowncore}.
Next, we describe each of the cases of the definition.
In \Cref{mst:ss:examples} (Page~\pageref{mst:ss:examples}) we develop several examples.

\paragraph{Output:}
The decomposition of $\bout{u_i}{V}Q$ is arguably the most interesting case, as both the sent value $V$ and the continuation $Q$ have to be decomposed. 
We distinguish two cases:
\begin{itemize}
  \item If $\neg\tr(u_i)$ then  $u_i$ is linear or shared, and then we have:
\begin{align*} 
  \B{k}{{\tilde x}}{\bout{u_i}{V}Q} =
    \propinp{k}{\widetilde x}
        \bbout{u_i}{\V{}{\tilde y}{V\sigma}}
        \apropout{k+1}{\wtd w}  \Par
  \B{k+1}{\tilde w}{Q\sigma}
  \end{align*} 

This decomposition consists of a leading trio that
mimics an output action in parallel with the breakdown of  $Q$.
The context  $\wtd x$ must include the free variables of $V$ and $Q$, which are 
denoted $\wtd y$ and $\wtd w$, respectively. These tuples are not
necessarily disjoint: variables with shared types can appear free in both $V$
and $Q$. The value $V$ is then broken down with parameters $\wtd
y$ and $k+1$; the latter serves to consistently generate propagators for the
trios in the breakdown of $V$, denoted $\V{}{\tilde y}{V\sigma}$ (see below). 
The substitution $\sigma$  increments the index of session
names; it is applied to both $V$ and $Q$ before they are broken down. 
By taking $\sigma = \nextn{u_i}$ we 
distinguish two cases (see \Cref{mst:d:nextn}): 
\begin{itemize}
\item If name $u_i$ is linear (i.e., it has a session type) then its future
occurrences are renamed into $u_{i+1}$, and $\sigma = \subst{u_{i+1}}{u_i}$; 
\item Otherwise, if $u_i$ is {shared}, then $\sigma = \{\}$.
\end{itemize}
Note that if $u_i$ is linear then it appears either in $V$ or $Q$  and $\sigma$
affects only one of them. The last prefix   activates the
breakdown of  $Q$ with its corresponding context $\wtd w$.

\smallskip

In case $V  = y$, the same strategy applies; because $\V{k}{\tilde y}{y\sigma} = y$, we have: 
\begin{align*} 
\B{k}{\tilde x}{\bout{u_i}{y}Q} =
\binp{c_k}{\widetilde x} \bout{u_i}{y} \apropout{k+1}{\wtd w}
\Par \B{k+1}{\tilde w}{Q\sigma}
\end{align*} 
Notice that variable $y$ is not propagated further if it does not appear free in $Q$. 

\item 
If $\tr(u_i)$ then $u_i$ is tail-recursive and then we have: 

\begin{align*}
  \B{k}{\tilde x}{\bout{u_i}{V}Q}=~~&\propinp{k}{\wtd x}
        \abbout{\prop^u}{N_V} \Par
      \B{k+1}{\tilde w}{Q} \\
     \text{where:}~~  
 N_V =~~& \abs{\wtd z}
        {\bbout{z_{\indT{S}}}{\V{}{\tilde y}{V}}}
        \big(\apropout{k+1}{\wtd w} \Par 
        \recprov{u}{x}{\wtd z}
    \big) 
\end{align*}

The decomposition consists of a leading trio that mimics the output action
running in parallel with the breakdown of  $Q$. After receiving the context
$\widetilde x$, the leading trio sends an abstraction $N_V$ along $\prop^u$, which 
performs several tasks. First, $N_V$ collects the sequence of names $\tilde u$;
then, it mimics the output action of $P$ along one of such names ($u_{\indT{S}}$) and
triggers the next trio, with context $\wtd w$; finally, it reinstates the
server on $\prop^u$ for the next trio that uses~$u$.  
Notice that indexing is not relevant in this case. 

In case $V = y$, we have $\V{}{\tilde y}{y\sigma}=y$ and $\plen{y}=0$, hence: 
\begin{align*} 
  \B{k}{\tilde x}{\bout{u_i}{y}Q} = 
  \propinp{k}{\wtd x} \abbout{\prop^u}
  {\abs{\wtd z}{\bout{z_{\indT{S}}}{y}}
  \big(\apropout{k+1}{\wtd w} \Par 
  \recprov{u}{x}{\wtd z}
  \big) } 
  \Par \B{k+1}{\tilde w}{Q}
\end{align*} 
\end{itemize}

\paragraph{Input:}
To decompose a process $\binp{u_i}{y}Q$ we distinguish two cases, as before: 
\rom{1}  name $u_i$ is linear or shared or \rom{2} tail-recursive.
In case~\rom{1}, the breakdown   is defined as follows: 
\begin{align*} 
  \B{k}{\tilde x}{\binp{u_i}{y}Q} =
  \propinp{k}{\wtd x}\binp{u_i}{y} \apropout{k+1}{\wtd w}
      \Par \B{k+1}{\tilde w}{Q\sigma}
\end{align*} 
\noindent where $\wtd w = \fv{Q}$.
A leading trio mimics the input action and possibly extends the context with the
received variable $y$. The substitution $\sigma$ is defined as in the output
case.

In case~\rom{2}, when $u_i$ has tail-recursive session type $S$, the decomposition is as in the output case: 
\begin{align*} 
\B{k}{\tilde x}{\binp{u_i}{y}Q}=\propinp{k}{\wtd x}\abbout{\prop^u}
      {\abs{\wtd z}{\binp{z_{\indT{S}}}{y}
      \big(\apropout{k+1}{\wtd w} \Par 
      \recprov{u}{x}{\wtd z}
      \big)}} 
      \Par \B{k+1}{\tilde w}{Q} 
\end{align*} 

\paragraph{Application:}
For simplicity we consider the breakdown of applications of the form $\appl{V}{(\wtd r, u_i)}$, where
every $r_i \in \wtd r$ is such that $\tr(r_i)$ and only  $u_i$ is such that $\neg\tr(u_i)$.
The general case (involving different orders in names and multiple names with non-recursive types) is similar. We have:  
\begin{align*}
 \B{k}{\tilde x}{\appl{V}{(\widetilde r, u_i)}} = &
 \propinp{k}{\widetilde x}\overbracket{\prop^{r_1}!\big\langle 
        \lambda \widetilde z_1. \prop^{r_2}!\langle\lambda 
        \widetilde z_2.\cdots. 
	 	\prop^{r_n}!\langle \lambda \widetilde z_n.}^{n = |\tilde r|} 
	 	\appl{\V{}{\tilde x}{V}}{(\widetilde z_1,\ldots,
	 	\widetilde z_n, \widetilde m)} \rangle \,\rangle \big\rangle
\end{align*}

Let us first discuss how names in $(\wtd r, u_i)$ are decomposed using types.
Letting $|\tilde
r| = n$ and $i \in \{1,\ldots,n\}$, for each $r_i \in \widetilde r$ (with
$r_i:S_i$) we generate a sequence $\wtd
z_i=(z^i_1,\ldots,z^i_{\len{\Rts{}{s}{S_i}}})$ as in the output case. 
We decompose name $u_i$ (with $u_i:C$) as $\wtd m = (u_i,\ldots,u_{i+\len{\Gt{C}}-1})$. 

The decomposition first receives a context $\wtd x$ for value $V$: we break down $V$
   with $\wtd x$ as a context since these variables need to be propagated to the
   abstracted process.
Subsequently, an output on $\prop^{r_1}$ sends a value containing $n$ abstractions that occur nested
within output prefixes---this is similar to the mechanism for partial instantiation shown in 	\Cref{e:partial}.
For each $j \in \{1,\ldots,n-1\}$, each abstraction
binds $\widetilde z_j$ and sends the next abstraction along $\prop^{r_{j+1}}$.
The innermost abstraction abstracts over $\widetilde z_n$ and encloses
the process $\appl{\V{}{\tilde x}{V}}{(\widetilde z_1,\ldots, \widetilde z_n,
\widetilde m)}$, which effectively mimics the  application. 
This abstraction nesting binds all variables $\widetilde z_i$, the 
decompositions of all tail-recursive names ($\wtd r$). 

The breakdown of a value application of the form 
$\appl{y}{(\widetilde r, u_i)}$ results into the following specific case:
$$
	 \B{k}{\tilde x}{\appl{y}{(\widetilde r, u_i)}} = 
	 \propinp{k}{\widetilde x}\overbracket{\prop^{r_1}!\big\langle 
        \lambda \widetilde z_1. \prop^{r_2}!\langle\lambda \widetilde z_2.\cdots. 
	 	\prop^{r_n}!\langle \lambda \widetilde z_n.}^{n = |\tilde r|} 
	 	\appl{y}{(\widetilde z_1,\ldots,
	 	\widetilde z_n, \widetilde m)} \rangle \,\rangle \big\rangle
$$ 

\paragraph{Restriction:}
The decomposition of $\news{s:C}{P'}$ depends on $C$:
  \begin{itemize}
    \item If $\neg\tr(C)$  then     \begin{align*} 
\B{k}{\tilde x}{\news{s:C}{P'}}
=
\news{\widetilde{s}:\Gt{C}}{\,\B{k}{\tilde x}
	{P'\sigma}}
    \end{align*} 
By construction, $\widetilde x = \fv{P'}$. Similarly as in the decomposition of
$u_i$ into $\widetilde m$ discussed above, we use the type $C$ of $s$ to obtain
the tuple $\widetilde s$ of length $\len{\Gt{C}}$. 
We initialize the index of
$s$ in $P'$ by applying the substitution $\sigma$. This substitution depends on
$C$: if it is a shared type then $\sigma = \subst{s_1}{s}$; otherwise, if $C$ is
a session type, then $\sigma = \subst{s_1\dual{s_1}}{s\dual{s}}$.

    \item Otherwise, if $\tr(C)$ then we have: 
    \begin{align*}
    \B{k}{\tilde x}{\news{s:C}{P'}} = 
    \news{\widetilde{s}:\mathcal{R}(S)}
          \news{c^s}
          \recprov{s}{x}{\wtd s}
          \Par 
       \news{c^{\bar{s}}}
       \recprov{\bar{s}}{x}{\wtd{\dual s}}
       \Par
         \B{k}{\tilde x}{P'}
    \end{align*}
    We decompose $s$ into $\wtd{s} = (s_1,\ldots,s_{\len{\Gt{S}}})$ and
    $\dual{s}$ into $\widetilde{\dual{s}} =
    (\dual{s_1},\ldots,\dual{s_{\len{\Gt{S}}}})$. Notice
    that as $\tr(C)$ we have  
    $C \equiv \trec{t}S$, therefore $\Gt{C} = \Rt{S}$.  
     The breakdown introduces two servers in parallel with the breakdown of
      $P'$; they provide names for $s$ and $\dual{s}$ along $\prop^s$
      and $\prop^{\dual{s}}$, respectively. The server on $\prop^s$ (resp.
      $\prop^{\dual{s}}$) receives a value   
      and applies it to the sequence $\widetilde s$ (resp.
      $\widetilde{\dual{s}}$). We restrict over $\widetilde s$ and propagators
      $\prop^s$ and $\prop^{\dual{s}}$. 
  \end{itemize}

\paragraph{Composition:}
 The breakdown of a process $Q \Par R$ is as follows:
$$
\B{k}{\tilde x}{Q \Par R}
=
 \propinp{k}{\widetilde x} \propout{k+1}{\wtd y}
    \apropout{k+\degree+1}{\wtd w}   \Par
\B{k+1}{\tilde y}{Q} \Par \B{k+\degree+1}{\tilde w}{R}
$$
A control trio triggers the breakdowns of $Q$ and $R$; it does not mimic any
 action of the source process. The tuple $\wtd y \subseteq \wtd x$ (resp. $\wtd
 w \subseteq \wtd x$) collects the free variables in  $Q$ (resp. $R$). To avoid
 name conflicts, the trigger for the breakdown of  $R$ is $\dual
 {c_{k+\degree+1}}$, with $\degree = \plen{Q}$.

\paragraph{Inaction:}
To breakdown $\inact$, we define a degenerate trio with only one input prefix
that receives a context that by construction will always be empty (i.e., $\widetilde x = \epsilon$, cf. \Cref{mst:r:prefix}):
$$
\B{k}{\tilde x}{\inact}
=
	\propinp{k}{}\inact
$$

\paragraph{Value:}
For simplicity, let us consider   values of the form 
$V=\abs{(\widetilde y, z):\slhotup{(\widetilde S, C)}}P$, 
where $\tr(y_i)$ holds for every $y_i \in \widetilde y$ and $\neg \tr(z)$, and 
$\leadsto \in \{\lollipop, \sharedop\}$.
The general case is defined similarly. 
We have: 
\begin{align*}
\V{}{\tilde x}{\abs{(\widetilde y, z):\slhotup{(\widetilde S, C)}}P} & =  
\abs{(\widetilde{y^1},\ldots,\widetilde{y^n}, \widetilde z):
  	\slhotup{(\widetilde{M})}}{N}  \quad \text{where:}
	\\
	\widetilde{M} & =\Gt{S_1},\ldots,\Gt{S_n}, \Gt{C}
	\\
  	N & = \news{\widetilde \prop}
    \news{\wtd \prop_r}
   \prod_{i \in \len{\widetilde y}}
   \recprov{y_i}{x}{\wtd y^i}
   \Par \apropout{1}{\widetilde x}
   \Par \B{1}{\tilde x}{P \subst{z_1}{z}}
\end{align*}  
 
Every $y_i$ (with $y_i : S_i$) is
decomposed into $\widetilde y^i=(y_1,\ldots,y_{\len{\Gt{S_i}}})$. We use
$C$ to decompose $z$ into $\widetilde{z}$. 
We abstract over $\wtd
y^1,\ldots,\wtd y^n, \wtd z$; 
 the body of the abstraction (i.e. $N$) is the composition of recursive names
 propagators, the control trio, and the breakdown of $P$, with name index
 initialized with the substitution $\subst{z_1}{z}$.  
 For every $y_i \in \widetilde y$ there is a server
$\recprov{y_i}{x}{\wtd y^i}$
as a subprocess in the
abstracted composition---the rationale for these servers is as in previous cases.
  We restrict the
 propagators 
 $\wtd \prop = (\prop_1,\ldots,\prop_{\plen{P}})$: this
 enables us to type the value in a shared environment when $\leadsto =
 \rightarrow$. 
 Also, we restrict special propagator names $\wtd \prop_r
= \bigcup_{r\in \tilde{v}}\prop^r$. 

\subsection{The Decomposition by Example}
\label{mst:ss:examples}
We illustrate the decompositions by means of several examples.

\subsubsection{Decomposing Processes with Non-Recursive Names}
\begin{example}
\label{mst:e:decomp-proc-types}
  Consider process $P = \news{u} (Q \Par R)$ 
 whose body implements end-points of channel $u$ with
 session type $S=\btinp{U}\btinp{\textsf{bool}}\tinact$, 
  with $U=\lhot{(\btinp{\textsf{bool}}\tinact)}$, 
 	where: 
  \begin{align*}
    Q &= \binp{u}{x}
    \overbrace{\binp{u}{y}
    \news{s}\big( \appl{x}{\dual s} \Par \about{s}{y} \big)}^{Q'} 
    \\
    R &= \bout{\dual u}{V}\bout{\dual u}{\textsf{true}}\inact 
    \\ 
    V &= \abs{z}\binp{z}{b}\inact 
  \end{align*}
  The process $P$ reduces as follows: 
\begin{align*}
  P &\red 
  \news{u}\left( \binp{u}{y}
    \news{s}\big( \appl{V}{\dual s} \Par \about{s}{y} \big)  
    \Par 
      \bout{\dual u}{\textsf{true}}\inact\right)
  \red 
    \news{s}\big( \appl{V}{\dual s} \Par \about{s}{\textsf{true}} \big)  
    \\
    &
    \red 
    \news{s}\big( \binp{\dual s}{b}\inact  \Par \about{s}{\textsf{true}} \big)  
    = P' 
\end{align*}
  \noindent 
  By~\Cref{mst:def:decomp} we have that the decomposition of $P$ is as 
  follows: 
  \begin{align*}
    \D{P}=
	\news{\prop_1, \ldots, \prop_{10}}(\apropout{1}{} \Par 
	\B{1}{\epsilon}{P\sigma}) 
  \end{align*}
  where  
  $\sigma = \subst{u_1 \dual u_1}{u \dual u}$. We have: 
  \begin{align*}
  	\B{1}{\epsilon}{P\sigma} &= 
	\news{u_1,u_2} \propinp{1}{}\propout{2}{}\apropout{8}{}
	\Par \B{2}{\epsilon}{Q\sigma} \Par \B{8}{\epsilon}{R\sigma} 
  \end{align*}
  \noindent The breakdowns of sub-processes $Q$ and $R$ are 
  as follows: 
  \begin{align*}
	\B{2}{\epsilon}{Q\sigma} &= 
	\propinp{2}{}\binp{u_1}{x}\apropout{3}{x} 
	\Par 
  \B{3}{\epsilon}{Q'\sigma'} 
  \\
  \B{3}{x}{Q'\sigma'} &=
  \propinp{3}{x}\binp{u_2}{y}\apropout{4}{x,y} 
	\Par 
  \B{4}{}{\news{s}\big( \appl{x}{\dual s} \Par \about{s}{y} \big)}
  \\
  \B{4}{x,y}{\news{s}\big( \appl{x}{\dual s} \Par \about{s}{y} \big)} &= 
  \news{s_1}
	\big( 
	\propinp{4}{x,y}\propout{5}{x}\apropout{6}{y}  
	\Par 
	\propinp{5}{x}\appl{x}{\dual s_1} 
	\Par 
	\propinp{6}{y}\bout{s_1}{y}\apropout{7}{} 
	\Par \propinp{7}{}\inact 
	\big) 
	\\
	\B{8}{\epsilon}{R\sigma} &= 
	\propinp{8}{}\bout{\dual u_1}{\V{}{\epsilon}{V}}\apropout{9}{} 
	\Par 
  \B{9}{\epsilon}{\bout{u_2}{\textsf{true}}\inact} 
	\\
  \B{9}{\epsilon}{\bout{u_2}{\textsf{true}}\inact} 
  &=
  \propinp{9}{}\bout{\dual u_2}{\textsf{true}}\apropout{10}{} \Par 
	\propinp{10}{}\inact 
  \\
	\V{}{\epsilon}{V} &= \abs{z_1}
	(\news{\prop^V_1,\prop^V_2} 
	\apropoutv{^V_1}{} \Par \propinpv{^V_1}{}
	\binp{z_1}{b}\apropoutv{^V_2}{}
	\Par \propinpv{^V_2}{}\inact)
	  \end{align*}
	\noindent where $\sigma'= \subst{u_2 \dual u_2}{u \dual u}$. 
	By $\Gt{-}$ from~\Cref{mst:def:typesdecomp} we decompose $S$ into 
	$M_1$ and $M_2$ given as follows: 
	\begin{align*}
		M_1&=\btinp{\Gt{U}}\tinact=\btinp{{U}}\tinact 
		\\
		M_2&=\btinp{\textsf{bool}}\tinact
	\end{align*} 
	\noindent Above we may notice that 
	$\Gt{U}=U$. 
	We remark that $\D{P}$ accordingly   
	implements indexed names $u_1,u_2$ typed with 
	$M_1,M_2$, 
	respectively. 
	
    Let us inspect the reductions of $\D{P}$. 
    First, there are three synchronizations on $\prop_1,\prop_2$, and $\prop_8$: 
    \begin{align*}
      \D{P} &\red 
      \news{\prop_2,\ldots,\prop_{10}}
      \news{u_1,u_2} 
      \propout{2}{}\apropout{8}{}
      \Par \B{2}{\epsilon}{Q\sigma} \Par \B{8}{\epsilon}{R\sigma}
      \\ 
      &\red^2  
       \news{\prop_3,\ldots, \prop_7,\prop_9, \prop_{10}}
       \highlighta{\binp{u_1}{x}}\apropout{3}{x} 
	\Par 
  \B{3}{\epsilon}{Q'\sigma'} 
  \\
  & 
  \qquad 
  \Par 
  \highlighta{\bout{\dual u_1}{\V{}{\epsilon}{V}}}\apropout{9}{} 
	\quad \Par 
	\B{9}{\epsilon}{\bout{u_2}{\textsf{true}}\inact}
	= D^1 
    \end{align*}
    After reductions on propagators, $D^1$ is able to mimic the original synchronization  
    on channel $u$ (highlighted above). It is followed by 
    two administrative reductions on $\prop_3$ and $\prop_9$: 
    \begin{align*}
      D^1 &\red 
       \news{\prop_3,\ldots, \prop_7,\prop_9, \prop_{10}}
    \apropout{3}{\V{}{\epsilon}{V}} 
	\Par 
  \propinp{3}{x}\binp{u_2}{y}\apropout{4}{x,y} 
	\Par 
  \B{4}{}{\news{s}\big( \appl{x}{\dual s} \Par \about{s}{y} \big)}
  \Par 
  \\
  & \qquad 
  \apropout{9}{} 
	\Par 
	\propinp{9}{}\bout{\dual u_2}{\textsf{true}}\apropout{10}{} \Par 
	\propinp{10}{}\inact 
  \\
  & \red^2 
   \news{\prop_4,\ldots, \prop_7, \prop_{10}}
  \highlighta{\binp{u_2}{y}}\apropout{4}{\V{}{\epsilon}{V},y} 
	\Par 
	\\
	& 
	\quad 
	\news{s_1}
	\big( 
	\propinp{4}{x,y}\propout{5}{x}\apropout{6}{y}  
	\Par 
	\propinp{5}{x}\appl{x}{\dual s_1} 
	\Par 
	\propinp{6}{y}\bout{s_1}{y}\apropout{7}{} 
	\Par \propinp{7}{}\inact 
	\big) 
  \Par 
  \\
  & \qquad 
  \highlighta{\bout{\dual u_2}{\textsf{true}}}\apropout{10}{} \Par 
	\propinp{10}{}\inact = D^2 
    \end{align*}
    Similarly, $D^2$ can mimic the next synchronization of 
    the original process on name $u_2$. Following up on that, 
    syncronization on $\prop_{10}$ takes place: 
   \begin{align*}
    D^2 &\red^2 
    \news{\prop_4, \ldots, \prop_7} 
   \apropout{4}{\V{}{\epsilon}{V},\textsf{true}} 
	\Par 
	\\
	& 
	\quad 
	\news{s_1}
	\big( 
	\propinp{4}{x,y}\propout{5}{x}\apropout{6}{y}  
	\Par 
	\propinp{5}{x}\appl{x}{\dual s_1} 
	\Par 
	\propinp{6}{y}\bout{s_1}{y}\apropout{7}{} 
	\Par \propinp{7}{}\inact 
	\big) = D^3 
   \end{align*}   
   Now, we can see that the next three reductions on $\prop_4$, 
   $\prop_5$, and $\prop_6$ 
   appropriately propagate values $\V{}{\epsilon}{V}$ 
   and $\textsf{true}$ to the breakdown of sub-processes. 
   Subsequently, value $\V{}{\epsilon}{V}$ is applied to 
   name $\dual s_1$: 
   \begin{align*}
   	D^3 
  &\red 
   \news{\prop_5, \ldots, \prop_7} 
  \news{s_1}
	\big( 
	\propout{5}{\V{}{\epsilon}{V}}\apropout{6}{\textsf{true}}  
	\Par 
	\propinp{5}{x}\appl{x}{\dual s_1} 
	\Par 
	\propinp{6}{y}\bout{s_1}{y}\apropout{7}{} 
	\Par \propinp{7}{}\inact 
	\big)  
 \\
 & 
 \red^2 
 \news{\prop_7} 
 \news{s_1}
	\big( 
	\appl{\V{}{\epsilon}{V}}{\dual s_1} 
	\Par \bout{s_1}{\textsf{true}}\apropout{7}{} 
	\Par \propinp{7}{}\inact 
	\big) 
  \\
  & 
  \red 
  \news{\prop_7}
  \news{s_1}
	(\news{\prop^V_1,\prop^V_2} 
	\apropoutv{^V_1}{} \Par \propinpv{^V_1}{}
	\binp{s_1}{b}\apropoutv{^V_2}{}
	\Par \propinpv{^V_2}{}\inact)
  \Par 
  \bout{s_1}{\textsf{true}}\apropout{7}{} 
	\Par \propinp{7}{}\inact = D^4
   \end{align*}
   
\noindent 
   Finally, after syncronization on $\prop^V_1$ 
   we reach the process $D^5$ that is clearly able to simulate $P'$, and its internal communication on the channel $s$: 
   \begin{align*}
   D^4 \red 
   \news{\prop_7}
   	\news{s_1}
	(\news{\prop^V_2} 
	\binp{s_1}{b}\apropoutv{^V_2}{}
	\Par \propinpv{^V_2}{}\inact)
  \Par 
  \bout{s_1}{\textsf{true}}\apropout{7}{} 
	\Par \propinp{7}{}\inact = D^5 
   \end{align*}

  $\hfill \lhd$
\end{example}

\begin{example}[Breaking Down Name-Passing]
  \label{mst:ex:bdnp}
  Consider the following process $P$, in which a channel $m$ is passed, through which a boolean value is sent back: 
  \begin{align*}
    P = \news{u}(\bout{u}{\namepass{m}} \abinp{\dual m}{b}  \Par
      \binp{\dual u}{\namepass{x}} \about{x}{\textsf{true}} )
  \end{align*}
  After expanding the syntactic sugar of name-passing, we get a process $P = \news{u} (Q \Par R)$, where 
  \begin{align*}
    & Q = \bout{u}{V}
      \binp{\dual m}{y}\news{s}{(\appl{y}{s} \Par
      \about{\dual s}{\abs{b}{\inact}}
      )} & V = \abs{z}\binp{z}{x}(\appl{x}{m})\\
    & R = \binp{\dual u}{y}\news{s}{(\appl{y}{s} \Par
      \about{\dual s}{W})} & W =\abs{x}{\about{x}{W'}} \text{ with }W' = \abs{z}\binp{z}{x}(\appl{x}{\textsf{true})}
  \end{align*}
  Note that to mimic the name-passing synchronization, we require exactly four reduction steps:
  \begin{align}
  \label{mst:ex:reduction-chain}
  &P \red^4  \map { \abinp{\dual m}{b}  \Par
  \about{m}{\textsf{true}}  } \red^4 \inact
  \end{align}
  We will now investigate the decomposition of $P$ and its reduction chain.
  First, we use \Cref{mst:def:sizeproc} to compute $\plen{Q} = 6$, and similarly, $\plen{R} = 5$.
  Therefore, $\plen{{P}} = 12$.
  Following \Cref{mst:def:decomp}, we see that $\sigma = \subst{m_1\dual{m_1}}{m\dual m}$, which we silently apply.
  Taking $k=1$, the breakdown of $P$ and its subprocesses is shown in \Cref{mst:t:examp}.
  
  \begin{table}[t!]
  \begin{mdframed}
  \vspace{-4mm}
  	  \begin{align*}
     \D{P} & = \news{\prop_1,\ldots, \prop_{12}} \Big( 
       \apropout{1}{}   \Par 
    \news{u_1}\big( \propinp{1}{} \propout{2}{} \apropout{8}{}   \Par 
    \B{2}{\epsilon}{Q} \Par \B{8}{\epsilon}{R} \big) \Big)
    \\
     \B{2}{\epsilon}{Q} & = \propinp{2}{} \bout{u_1}{\V{}{\epsilon}{V}}
    \apropout{3}{}  \Par \propinp{3}{} \binp{\dual{m_1}}{y} \apropout{4}{y}  
    \Par 
    \\ 
    &   \qquad \news{s_1}(\propinp{4}{y}\propout{5}{y}\apropout{6}{}  \Par 
    \propinp{5}{y}(\appl{y}{s_1}) \Par 
    \propinp{6}{}\bout{\dual {s_1}}{\V{}{\epsilon}{\abs{b}{\inact}}}
    \apropout{7}{}  \Par \apropinp{7}{}  )
    \\
    \B{8}{\epsilon}{R} & = 
    \propinp{8}{}\binp{\dual {u_1}}{y}\apropout{9}{y}  \Par  
    \\
    & \qquad \news{s_1}\big( \propinp{9}{y}\propout{10}{y}\apropout{11}{} 
    \Par \propinp{10}{y}(\appl{y}{s_1}) \Par 
    \propinp{11}{}\bout{\dual{s_1}}{\V{}{\epsilon}{W}}\apropout{12}{}  \Par 
    \apropinp{12}{}  \big) 
    \\
    \V{}{\epsilon}{V} & =
     \abs{z_1}{
       \news{\prop^V_1, \prop^V_2}
       (\apropoutv{^V_1}{}  \Par 
    \propinpv{^V_1}{} \binp{z_1}{x} \apropoutv{^V_2}{x}   \Par 
    \propinpv{^V_2}{x} (\appl{x}{m_1}) )}
    \\
     \V{}{\epsilon}{\abs{b}{\inact}} & = 
     \abs{b_1}{
       \news{\prop^b_1}
       (\apropoutv{^b_1}{}  \Par 
      \apropinpv{^b_1}{})} 
      \\
     \V{}{\epsilon}{W} & = 
     \abs{x_1}{
       \news{\prop^W_{1}, \prop^W_{2}}
       ( \apropoutv{^W_1}{}  
        \Par \propinpv{^W_1}{} \bout{x_1}{\V{}{\epsilon}{W'}} 
       \apropoutv{^W_2}{}   \Par \apropinpv{^W_2}{}  )} 
     \\
     \V{}{\epsilon}{W'} &  = \abs{z_1}
     { \news{\prop^{W'}_{1}, \prop^{W'}_{2}}
       (\apropoutv{^{W'}_{1}}{}  
       \Par \propinpv{^{W'}_{1}}{} \binp{z_1}{x} 
       \apropoutv{^{W'}_{2}}{x}  \Par
        \propinpv{^{W'}_{2}}{x} (\appl{x}{\textsf{true}}) )}
  \end{align*}
    \end{mdframed}
  \caption[Example: Decomposition of processes]{The decomposition on processes discussed in \Cref{mst:ex:bdnp}. \label{mst:t:examp}}
  \end{table}
  In \Cref{mst:t:examp} we have omitted substitutions that have no effect and trailing $\inact$s.
  The first interesting action appears after synchronizations on $\prop_1$, $\prop_2$, and $\prop_{8}$.
  At that point, the process will be ready to mimic the first action that is performed by $P $, i.e., $u_1$ will send $\V{}{\epsilon}{V}$, the breakdown of $V$, from the breakdown of $Q$ to the breakdown of $R$.
  Next, $\prop_{9}$ and $\prop_{10}$ will synchronize, and $\V{}{\epsilon}{V}$ is passed further along, until $s_1$ is ready to be applied to it in the breakdown of $R$.
  At this point, we know that ${P} \red^{7} \news{\wtd \prop}P'$, where $\widetilde \prop = (\prop_3,\ldots,\prop_{12})$, and
  \begin{align*}
   P' = &~
  \apropout{3}{} \Par \propinp{3}{} \binp{ \dual{m_1}}{y} 
  \apropout{4}{y}  
  \\
  & \Par  
  \news{s_1}(\propinp{4}{y}\propout{5}{y}\apropout{6}{} \Par 
  \propinp{5}{y}\appl{y}{s_1} \Par 
  \propinp{6}{}\bout{\dual {s_1}}
  {\V{}{\epsilon}{\abs{b}{\inact}}}\apropout{7}{} 
   \Par \apropinp{7}{})\\
  & \Par
  \news{s_1}\big(  \appl{\V{}{\epsilon}{V}}{s_1} \Par 
  \bout{\dual{s_1}}{\V{}{\epsilon}{W}}\apropout{12}{} \Par 
  \apropinp{12}{} \big) 
  \end{align*} 
  After $s_1$ is applied, the trio guarded by $c_3$ will be activated, where
  $z_1$ has been substituted by $s_1$. Then $\dual{s_1}$ and $s_1$ will
  synchronize, and the breakdown of $W$ is passed along. Then $c_4$ and $c_{19}$
  synchronize, and now $m_1$ is ready to be applied to $\V{}{\epsilon}{W}$,
  which was the input for $c_4$ in the breakdown of $V$. After this application,
  $\prop_3$ and $\prop{^W_1}$ can synchronize with their duals, and we know that
  $\news{\widetilde c}P' \red^8 \news{\widetilde{c}'}P''$, where
  $\widetilde{c}'=(\prop_4,\ldots,\prop_{7},\prop{^{W'}_2})$, and
  \begin{align*}
  P'' = & ~\binp{\dual{m_1}}{y} \apropout{4}{y}   
  \Par
   \bout{m_1}{\V{}{\epsilon}{W'}} \apropoutv{^{W'}_2}{}  \Par 
   \apropinpv{^{W'}_2}{}
  \\
  & \Par \news{s_1}(\propinp{4}{y}\propout{5}{y}\apropout{6}{} \Par 
  \propinp{5}{y}\appl{y}{s_1} \Par 
  \propinp{6}{}\bout{\dual {s_1}}{\V{}{\epsilon}{\abs{b}{\inact}}}
  \apropout{7}{}  
  \Par \apropinp{7}{} )
  \end{align*}
  Remarkably, $P''$ is standing by to mimic the encoded exchange of value \textsf{true}. Indeed, the decomposition of the four-step reduced process in \eqref{mst:ex:reduction-chain} will reduce in three steps to a process that is equal (up to $\scong_\alpha$) to the process we obtained here. This strongly suggests  a tight operational correspondence between a process and its decomposition, which we will explore in \Cref{ss:dynamic}. \hspace*{\fill} $\lhd$
  \end{example}



\subsubsection{Decomposing Processes with Recursive Names}
Next, we illustrate the decomposition of processes involving names with
tail-recursive types. Recall process $R_1$, which  we used in \Cref{mst:ss:key} to
motivate the need for recursive propagators: 
\begin{align*} 
  R_1={\binp{r}{z}}_{}
  {\bout{r}{-z}}_{}
  {\binp{r}{z}}_{}
  {\bout{r}{z}}_{}
  {\appl{V}{r}}_{}
\end{align*} 

Two following examples illustrate the low-level workings of the propagation
mechanism of the decomposition 
in~\Cref{mst:fig:rec-subfig2}. 
The first example illustrates how the 
propagation of recursive names works in the case of  
input and output actions on names with recursive types (the ``first part'' of
$R_1$). The second example shows how an application where a value is applied to
a tuple of names with recursive names is broken down (the ``second part'' of
$R_1$). 

\begin{example}[Decomposing Processes with Recursive Names (I)]
\label{mst:ex:recdec}    
Let $P = \binp{r}{x}\bout{r}{x}P'$ be a process where $r$ has type
$S=\trec{t}{\btinp{\mathsf{int}}\btout{\mathsf{\mathsf{int}}}}{\vart{t}}$ and $r
\in \fn{P'}$. 
By \Cref{mst:def:decomp} we have: 
\begin{align*}
	\D{P} = \news{\wtd \prop}\news{\prop^r} 
	\big( 
\apropout{1}{} \Par  \B{1}{\epsilon}{P} \Par 
\recprov{r}{x}{(r_1,r_2)}
\big)
\end{align*}
\noindent where $\wtd \prop = (\prop_1, \ldots, \prop_{\len{P}})$. 
 The control trio in the parallel composition provides a
decomposition of $r$ on name $\prop^r$, which is \emph{shared}.
The decomposition $\B{1}{\epsilon}{P}$ is 
defined as follows: 
\begin{align*}
	\B{1}{\epsilon}{P} & = 
\apropout{1}{} \Par  \propinp{1}{} \aproprout{r}{N_1}  \Par 
	\propinp{2}{y} \aproprout{r}{N_2} 
	\Par \B{3}{\epsilon}{P'}
\big)
\\
  N_1 & = \abs{(z_1, z_2)}\binp{z_1}{x}
  \recprov{r}{x}{(z_1, z_2)}
  \\
  N_2 & = \abs{(z_1, z_2)}\bout{z_2}{x}\propout{3}{}
  \recprov{r}{x}{(z_1,z_2)}
\end{align*}

\noindent Each trio in $\B{1}{\epsilon}{P}$ that mimics some action on $r$
requests the sequence $\tilde r$ from the server on $\prop^r$. We can see that
this request is realized by a higher-order communication: trios send
abstractions ($N_1$ and $N_2$) to the server; these abstractions contain
further actions of trios and it will be applied to the sequence $\tilde r$.
Hence, the formal arguments for these values are meant to correspond to 
$\tilde r$.

After two reductions (the trio activation on $\prop_1$ 
and the communication on $\prop^r$), we have: 
\begin{align*}
  \D{P} \red^2 
  \news{\prop_2, \ldots, \prop_{\plen{P}}}\, 
  \binp{r_1}{x}\propout{2}{x}
  \recprov{r}{x}{(r_1,r_2)}
  \Par 
  \propinp{2}{y} \aproprout{r}{N_2}   
	\Par \B{3}{\epsilon}{P'} = P_1 
\end{align*}
By synchronizing with the top-level server on $\prop^r$, the bound names
in $N_1$ are instantiated with $r_1,r_2$. 
Now, the first trio in $P_1$ is able
to mimic the action on $r_1$ that is followed by the activation of the next
trio on $\prop_2$. Then, the server on $\prop^r$ gets reinstantiated making names $r_1,r_2$
available for future trios. 
The break down of the output action follows the same pattern. 
    \hspace*{\fill} $\lhd$ 



\begin{revisionrem} 
To define $\B{1}{\epsilon}{P}$ in a compositional way, names
$(r_1,r_2)$ should be provided to its first trio; they cannot be known
beforehand. To this end, we introduce a new  control trio that will hold these
names:
\begin{align*} 
    \proprinp{r}{b}{\appl{b}{(r_1, r_2)}}
\end{align*}
where the \emph{shared} name $\prop^r$ provides a decomposition of the
(recursive) name $r$. The intention is that each name with a recursive type $r$
will get its own dedicated propagator channel $\prop^r$. Since there is only one
recursive name in $P$, its decomposition will be of the following form:
\begin{align*}
	\D{P} = \news{\tilde \prop}\news{\prop^r} 
	\big( \proprinp{r}{b}{\appl{b}{(r_1,r_2)}} \Par
\propout{1}{}\inact \Par  \B{1}{\epsilon}{P} \big)
\end{align*}
The new control trio can be seen as a server that provides names: each trio that
mimics some action on $r$ should request the sequence $\tilde r$ from the server
on $\prop^r$. This request will be realized by a higher-order communication:
trios should send an abstraction to the server; such an abstraction will contain
further actions of a trio and it will be applied to the sequence $\tilde r$.
Following this idea, we may refine the definition of $\D{P}$ by expanding
$\B{1}{\epsilon}{P}$:
\begin{align*}
	\D{P} = \news{\tilde \prop}\news{\prop^r} 
	\big( \proprinp{r}{b}{\appl{b}{(r_1,r_2)}} \Par
\propout{1}{}\inact \Par  \propinp{1}{} \proprout{r}{N_1} \inact \Par 
	\propinp{2}{y} \proprout{r}{N_2} \inact  
	\Par \B{3}{\epsilon}{P'}
\big)
\end{align*}
The trios involving names with recursive types have now a different shape. After
being triggered by a previous trio, rather than immediately mimicking an action,
they will send an abstraction to the server available on $\prop^r$. The
abstractions $N_1$ and $N_2$ are defined as follows:
\begin{align*}
N_1 = \abs{(z_1, z_2)}\binp{z_1}{x}\propout{2}{x}\proprinp{r}{b}{
\appl{b}{(z_1,z_2)}}	 
\quad
N_2 = \abs{(z_1, z_2)}\bout{z_2}{x}\propout{3}{}
\proprinp{r}{b}{\appl{b}{(z_1,z_2)}}	
\end{align*}
Hence, the formal arguments for these values are meant to correspond to $\tilde
		r$. The server on name $\prop^r$ will appropriately instantiate these names.
		Notice that all names in $\tilde r$ are propagated, even if the abstractions
		only use some of them. For instance, $N_1$ only uses $r_1$, whereas $N_2$
		uses $r_2$. After simulating an action on $r_i$ and activating the next
		trio, these values reinstate the server on $\prop^r$ for the benefit of
		future trios mimicking actions on $r$. 
		\hspace*{\fill} $\lhd$
\end{revisionrem} 
\end{example}

\begin{example}
  [Decomposing Processes with Recursive Names (II)]
  \label{mst:ex:rec2}
  Let 
  $S = \trec{\vart{t}}\btinp{\mathsf{int}}\btout{\mathsf{int}} \vart{t}$
  and
  $T = \trec{\vart{t}}\btinp{\mathsf{bool}}\btout{\mathsf{bool}} \vart{t}$, and
  define $Q = \appl{V}{(u,v)}$ as a process where $u : S$ and $v : T$, where $V$
  is some value of type $\shot{(S, T)}$. 
  By \Cref{mst:def:decomp}, 
  the  decomposition of $Q$ is as in the previous example, except that now there
   are two servers, one for $u$ and one for $v$:
  \begin{align*}
    \D{Q} & = \news{\prop_1 \tilde \prop}\news{\prop^u \prop^v} \big( 
      \recprov{u}{x}{(u_1, u_2)} \Par 
    \recprov{v}{x}{(v_1, v_2)} \Par  
    \apropout{1}{} \Par 
    \B{1}{\epsilon}{Q}
    \big)	
\\
    \B{1}{\epsilon}{Q} & = \propinp{1}{}\abbout{\prop^u}{\abs{(x_1,x_2)}{
    \aproprout{v}{\abs{(y_1,y_2)}{\appl{\V{}{\epsilon}{V}}{(x_1,x_2,y_1,y_2)}}}
	}}
  \end{align*}
  
  \noindent with $\tilde \prop = (\prop_2,\ldots,\prop_{\plen{Q}})$. Process $Q$
  is broken down in such a way that it communicates with both servers to collect
    $\tilde u$ and $\tilde v$.  
  To  this end, $\B{1}{\epsilon}{Q}$ is a process in which abstractions are
  nested using output prefixes and whose innermost process is an application.
  After successive communications with multiple servers this innermost
  application will have collected all names in $\tilde u$ and $\tilde v$. 
  

  Observe that we use two nested outputs, one for each name with recursive types
  in $Q$. 
  We now look at the reductions of $\D{Q}$ to analyze how the communication of nested abstractions allows us to collect all name sequences needed. After the first reduction along $\prop_1$ we have: 
  \begin{align*}
    \D{Q} \red & \news{\tilde \prop}\news{\prop^u \prop^v} \big( 
      \recprov{u}{x}{(u_1, u_2)} \Par 
    \recprov{v}{x}{(v_1, v_2)} \Par 
    \\
    & \qquad \qquad \qquad  \abbout{\prop^u}{\abs{(x_1,x_2)}{
    \aproprout{v}{\abs{(y_1,y_2)}{\appl{\V{}{\epsilon}{V}}{(x_1,x_2,y_1,y_2)}}}}} \big)
   = R^1
  \end{align*}
  From $R^1$ we have a synchronization along  name $\prop^u$: 
  \begin{align*}
    R^1 \red & \news{\tilde \prop}\news{\prop^u \prop^v} \big( 
    \appl{(\abs{(x_1,x_2)}{
    \aproprout{v}{\abs{(y_1,y_2)}{\appl{\V{}{\epsilon}{V}}{(x_1,x_2,y_1,y_2)}}}})}{(u_1,u_2)} \Par 
    \recprov{v}{x}{(v_1, v_2)}
     \big)
   = R^2
  \end{align*}
  Upon receiving the value, the server applies it to $(u_1,u_2)$, thus 
  obtaining the following process: 
  \begin{align*}
    R^2 \red & \news{\tilde \prop}\news{\prop^u \prop^v} \big( 
      \aproprout{v}{\abs{(y_1,y_2)}{\appl{\V{}{\epsilon}{V}}{(u_1,u_2},y_1,y_2)}} \Par 
      \recprov{v}{x}{(v_1, v_2)}
    \big)  = R^3
  \end{align*}
  Up to here, we have partially  instantiated name variables of a value 
  with the sequence $\tilde u$. Next, the first trio in $R^3$ can communicate with the server on name $\prop^v$:  
  \begin{align*}
    R^3 \red & \news{\tilde \prop}\news{\prop^u \prop^v} \big( \appl{\abs{(y_1,y_2)}{\appl{\V{2}{\epsilon}{V}}{(u_1,u_2,y_1,y_2)}}}{(v_1,v_2)}\big)  
    \\
    \red & \news{\tilde \prop}\news{\prop^u \prop^v} 
	\big( {\appl{\V{}{\epsilon}{V}}{(u_1,u_2,v_1,v_2)}}\big) 
  \end{align*}
  This completes the instantiation of name variables with appropriate sequences of names with recursive types.
  At this point, $\D{Q}$ can proceed to mimic the application in $Q$. 

  \hfill $\lhd$
  \end{example}

  \begin{example}[Breakdown of Recursion Encoding]
  We recall process $\map{P}$ from~\Cref{mst:e:en-rec}:
  \begin{align*}
    \map{P} &= \binp{a}{m}\bout{a}{m} \news{s}
    {(\appl{V}{(a,s)} \Par \about{\dual s}{V})}  \\ 
  V & = \abs{(x_a,y_1)}\binp{y_1}{z_x}\binp{x_a}{m}
    \bout{x_a}{m}\news{s}
    {(\appl{z_x}{(x_a,s)} \Par \bout{\dual s}{z_x}\inact)}
    \end{align*}
    \noindent Here, bound name $s$ is typed with $S$, from~\Cref{mst:ex:ntrsts}, defined as: 
    \begin{align*}
      S = \trec{t}{\btinp{\shot{(\btinp{\mathsf{str}}\btout{\mathsf{str}}\tinact,
  \vart{t})}}\tinact} 
    \end{align*} 
    We now analyze $\D{\map{P}}$ and its reduction chain. 
    By \Cref{mst:def:sizeproc}, we have $\plen{\map P} = 7$.
    Then, we choose $k=1$ and observe that $\sigma =
    \subst{a_1\dual{a_1}}{a\dual a}$. Following \Cref{mst:def:decomp}, we get:
    \begin{align*}
    &\D{\map P} = \news{\prop_1,\ldots,\prop_7}
    \news{\prop^a}(
      \recprov{a}{x}{(a_1, a_2)}
      \Par 
    \apropout{1}{}  \Par \B{1}{\epsilon}{\map P\sigma})\\
    &\B{1}{\epsilon}{\map P} =
    \propinp{1}{} \about{\prop^a}{\abs{(z_1,z_2)}\binp{z_1}
      {m}\propout{2}{m}
      \recprov{a}{x}{(z_1, z_2)}
      }
    \\
    &\qquad \qquad 
    \Par
    \propinp{2}{m}
    \about{\prop^a}{\abs{(z_1,z_2)}\bout{z_2}
      {m}\propout{3}{}
      \recprov{a}{x}{(z_1, z_2)}
      }
     \\
    &\qquad \qquad \Par \news{s_1}\big(\propinp{3}{} \propout{4} {} 
      \apropout{5}{} 
      \Par
      \propinp{4}{}\about{\dual{\prop^a}}
      {\abs{(z_1,z_2)}\appl{\V{5}{\epsilon}{V}}{(z_1,z_2,s_1)}}
    \\
    &\qquad \qquad \qquad \quad 
    \Par \propinp{5}{}\bout{\dual s_1}{\V{6}{\epsilon}{V}}\apropout{7}{} 
      \Par \apropinp{7}{}\big)
    \end{align*}

    In accordance with~\Cref{mst:ex:ntrsts}, the type of $s_1$ in the decomposed process is
    \begin{align*}
      M = \trec{t}{\btinpt{{{{\shot{(\btinpt{\mathsf{str}}, 
  \btoutt{\mathsf{str}},\vart{t})}}}}}}.
    \end{align*}
    
    The decomposition relies twice on $\V{}{\epsilon}{V}$, the breakdown of value $V$, which we give below.
    For this, we observe that $V$
    is an abstraction of a process $Q$ with $\len Q = 7$. 
    We also $\alpha$-convert the process abstracted in $\V{}{\epsilon}{V}$ 
    renaming bound propagators $\prop_1,\ldots,\prop_7$ to 
    $\prop^V_1,\ldots,\prop^V_7$ to avoid name clashes. 
    \begin{align*}
    \V{}{\epsilon}{V} & = \abs{(x_{a_1},x_{a_2},y_1)}
    {	\news{\prop{^V_1},\ldots,\prop{^V_7}} \big (
       \apropoutv{^V_1}{}  \Par \B{1}{\epsilon}{Q}
      \subst{\prop{^V_1},\ldots,\prop{^V_7}}
      {\prop{_1},\ldots,\prop{_7}}} \Par 
      \recprovx{x_{a}}{x}{(x_{a_1},x_{a_2})} 
      \big) 
      \\
    \B{1}{\epsilon}{Q} & = \propinpv{_1}{}\binp{y_1}{z_x}
    \apropout{2}{z_x} 
    \\
    & \quad
    \Par
    \propinp{2}{z_x}\about{\prop^a}
    {\abs{(z_1,z_2)}{\binp{z_1}{m}\propout{3}{z_x,m}
      \recprov{a}{x}{(z_1, z_2)}
        }}
        \\
    & \quad \Par
     \propinp{3}{z_x}\about{\prop^a}
    {\abs{(z_1,z_2)}{\bout{z_2}{m}} \propout{4}{z_x}
    \recprov{a}{x}{(z_1, z_2)}
    }   
      \\
    & \quad \Par \news{s_1}\big(\propinp{4}{x_z} \propout{5}{z_x}
      \apropout{6}{z_x} \Par  
      \\
    &\quad \qquad  \qquad \propinp{5}{z_x}
    \about{\prop^a}{\abs{(z_1,z_2)}\appl{z_x}{(z_1,z_2,s_1)}}
     \Par \propinp{6}{z_x} \bout{\dual{s_1}}{z_x}
    \apropout{7}{} \Par \apropinp{7}{}\big)
    \end{align*}
    We follow the reduction chain on $\D{\map P}$ until it is ready to mimic the first action with channel $a$, which is an input. First, $\prop_1$ will synchronize, after which $\prop^a$ sends the abstraction to which then $(a_1,a_2)$ is applied. We obtain $\D{\map P} \red^3 \news{c_2,\ldots, c_7,c^a}P'$, where
    \begin{align*}
      & P' = \binp{a_1}
      {m}\propout{2}{m}
      \recprov{a}{x}{(a_1, a_2)}
      \\
      &\qquad \qquad \Par
      \propinp{2}{m}
      \about{\prop^a}{\abs{(z_1,z_2)}\bout{z_2}
        {m}\propout{3}{}
        \recprov{a}{x}{(z_1, z_2)}
        }
      \\
      &\qquad \qquad \Par \news{s_1}\big(\propinp{3}{} \propout{4} {} 
      \propout{5}{} 
      \Par
      \propinp{4}{}\about{\dual{\prop^a}}
      {\abs{(z_1,z_2)}\appl{\V{}{\epsilon}{V}}{(z_1,z_2,s_1)}} \Par
      \\
      &\qquad \qquad \qquad \qquad  \propinp{5}{}\bout{\dual s_1}
      {\V{}{\epsilon}{V}}\apropout{7}{} 
      \Par \apropinp{7}{}\big)
    \end{align*} 
  \noindent  Note that this process is awaiting an input on channel $a_1$, after which
  $c_2$ can synchronize with its dual. At that point, $c^a$ is ready to receive
  another abstraction that mimics an input on $a_1$. This strongly suggests a
  tight operational correspondence between a process $P$ and its decomposition
  in the case where $P$ performs higher-order recursion.
  $\hfill \lhd$
  \end{example}

\subsection{Static Correctness}
\label{mst:ss:staticcorr}
Having presented and illustrated our decomposition, we may now state its technical results.
Given an environment $\Delta = \Delta_1, \Delta_2$, below we write $\Delta_1 \circ \Delta_2$ to indicate the split of $\Delta$ into a $\Delta_1$ containing non-recursive names and a $\Delta_2$ containing recursive names. 

We extend the decomposition function $\Gt{-}$ to typing
environments in the obvious way.
We rely on the following notation.
Given a tuple of names $\widetilde{s} = s_1, \ldots, s_n$ and a tuple of
(session) types $\widetilde{S} = S_1, \ldots, S_n$ of the same length, we write
$\widetilde{s}:\widetilde{S}$ to denote a list of typing assignments $s_1:S_1,
\ldots, s_n:S_n$.

\begin{definition}[Decomposition of Environments] \label{def:typesdecompenv}
  Let $\Gamma$, $\Lambda$, and $\Delta$ be typing environments.
  We define $\Gt{\Gamma}$, $\Gt{\Lambda}$, and $\Gt{\Delta}$ inductively as follows:
  \begin{align*}
        \Gt{\Delta \cat u_i:S} &= \Gt{\Delta},(u_i,\ldots,u_{i+\len{\Gt{S}}-1}) : \Gt{S} \\
      \Gt{\Gamma \cat u_i:\chtype{U}} &= \Gt{\Gamma} \cat u_i : \Gt{\chtype{U}} \\
    \Gt{\Gamma \cat x:U} &= \Gt{\Gamma} \cat x:\Gt{U} \\
    \Gt{\Lambda \cat x:U} &= \Gt{\Lambda} \cat x:\Gt{U} \\
\Gt{\emptyset} &=  \emptyset
  \end{align*}
\end{definition}

\begin{restatable}[]{lemm}{typerec}
  \label{mst:t:typecore}
  Let $P$ be an indexed \HO process and $V$ be a value.
  \begin{enumerate}
  \item
  If $\Gamma;\Lambda;\Delta \circ  \envR \proves P \hastype \Proc$ then
  $\Gt{\Gamma_1}\cat \envPropR;\es;\Gt{\Delta} \cat \Theta \proves \B{k}{\tilde x}{P} \hastype \Proc$, 
  where: 
  \begin{itemize}
  \item $k > 0$
  \item $\widetilde r = \text{dom}(\envR)$
  \item $\envPropR = \prod_{r \in \tilde r} c^r:\chtype{\lhot{\Rts{}{s}{\envR(r )}}}$
  \item $\widetilde x = \fv{P}$ 
  \item $\Gamma_1=\Gamma \setminus \widetilde x$ 
  \item $\text{dom}(\Theta) =
  \{\prop_k,\ldots,\prop_{k+\plen{P}-1}\} \cup
  \{\dual{\prop_{k+1}},\ldots,\dual{\prop_{k+\plen{P}-1}}\}$
  \item $\Theta(\prop_k)=
  \btinpt{U_1, \ldots, U_n} $, where $(\Gt{\Gamma}\cat\Gt{\Lambda})(\widetilde x) = (x_1 : U_1, \ldots, x_n : U_n)$
  \item $\balan{\Theta}$
  \end{itemize}

  \item If $\Gamma;\Lambda;\Delta \circ  \envR \proves V \hastype 
  \lhot{\wtd T}$ then
  $\Gt{\Gamma} \cat \envPropR; \Gt{\Lambda};\Gt{\Delta} \proves 
  \V{k}{\tilde x}{V} \hastype \lhot{\Gt{\wtd T}}$, 
  where: 
  \begin{itemize}
  	\item 
$\wtd x = \fv{V}$ 
\item 
  $\envPropR = \prod_{r \in \tilde r} c^r:\chtype{\lhot{\Rts{}{s}{\envR(r)}}}$
    \end{itemize}

\end{enumerate}
\end{restatable} 

  \begin{proof}
    By mutual induction on the structure of $P$ and $V$. 
    \appendx{See~\Cref{mst:app:typecore} for details.}
    \end{proof}

Using the above lemma we can prove our static correctness result, which explains how our decomposition induces minimal session types.

\begin{restatable}[Static Correctness]{thm}{decompcore}
  \label{mst:t:decompcore}
	Let $P$ be a closed \HO process (i.e. $\fv{P} = \emptyset$) with $\widetilde u = \fn{P}$.
	If $\Gamma;\es;\Delta \circ  \envR \proves P \hastype \Proc$, 
	then $\Gt{\Gamma
	\sigma};\es;\Gt{\Delta\sigma}  \cat \Gt{\envR \sigma} \proves \D{P} \hastype
	\Proc$, where $\sigma = \subst{\mathsf{init}(\widetilde u)}{\widetilde u}$.
\end{restatable}

\begin{proof}
Directly from the definitions, using \Cref{mst:t:typecore}.
\appendx{See~\Cref{mst:app:decompcore} for details.}
\end{proof}

\section{Dynamic Correctness}
\label{ss:dynamic}



\newcommand{\tni}{t_1}

In this section, we establish the dynamic correctness of our decomposition, stated in terms of a typed behavioral equivalence.
More specifically, we would like to show that any typed process $P$ is equivalent to its decomposition $\D{P}$.
But how do we even state it formally?
Both $P$ and $\D{P}$ are typed $\HO$ processes (as any minimally typed process is also an $\HO$ process), so we can consider compare them as $\HO$ terms inside the $\HO$ type system.
The conventional notion of typed equivalence for $\HO$ processes is contextual equivalence, which is given a local characterization in terms of \emph{higher-order bisimulations}~\cite{KouzapasPY17}.
In our case, however, contextual equivalence is not the right choice: contextual equivalence applies to processes of the \emph{same type}, whereas the process $P$ and its decomposition $\D{P}$ have different types and typing contexts.
Instead of using contextual equivalence, we generalize the notion of higher-order bisimilarity to a notion that we call \emph{MST bisimilarity}, which relates processes of (potentially) different types.

This section is organized as follows.
In \Cref{mst:sec:hobisim} we recall the notion of higher-order bisimulation, used for characterizing behavioral equivalence in \HO, and discuss its limitations for our purposes.
We use higher-order bisimulation as a basis to give a formal definition of MST bisimulation in \Cref{mst:sec:mstbisim}, which we will use as a notion of behavioral equivalence for comparing $P$ and $\D{P}$.
In order to show that our decomposition is correct, in \Cref{mst:sec:relationS} we exhibit a bisimulation relation $\relS$ which relates a process and its decomposition, containing a number of intermediate pairs, working from a motivating example in \Cref{mst:sec:motexample}.
Finally, in \Cref{mst:sec:opcorrproof} we show that $\relS$ is indeed an MST bisimulation.

\subsection{Behavioral Equivalence in \HO and its Limitations}
\label{mst:sec:hobisim}
Let us begin by recalling the notion of \HO bisimulation, defined in~\cite{KouzapasPY17} to characterize contextual equivalence of \HO processes.

\begin{definition}[Definition 17 in \cite{KouzapasPY17}]
  A typed relation $\Re$ is an {\em \HO bisimulation} if 
  for all $\horelm{\Gamma_1;\Lambda_1;\Delta_1}{P_1}{\ \Re \ }
  {\Gamma_2;\Lambda_2;\Delta_2}{Q_1}$, 
  \begin{enumerate}[1)]
  \item 
    Whenever 
    $\horelm{\Gamma_1;\Lambda_1;\Delta_1}{P_1}
    {\hby{\news{\widetilde{m_1}} \bactout{n}{V_1}}}
    {\Lambda'_1;\Delta'_1}{P_2}$ 
    then there exist
    $Q_2$, $\Delta'_2$, and $\Lambda'_2$ such that 
    $\horelm{\Gamma_2;\Lambda_2;\Delta_2}{Q_1}
    {\Hby{\news{\widetilde{m_2}}\bactout{n}{V_2}}}
    {\Lambda'_2;\Delta_2'}{Q_2}$
    where, for a fresh $t$,
    \[
      \Gamma_1; \Lambda_1; \Delta''_1 \proves 
      {\newsp{\widetilde{m_1}}{P_2 \Par \htrigger{t}{V_1}}}
      \ \Re\ 
      \Gamma_2; \Lambda_2; \Delta''_2\proves 
      {\newsp{\widetilde{m_2}}{Q_2 \Par \htrigger{t}{V_2}}}
    \]

  \item	
    Whenever  
    $\horelm{\Gamma_1;\Lambda_1;\Delta_1}{P_1}{\hby{\ell}}
    {\Lambda_1'; \Delta_1'}{P_2}$, with  
    $\ell$ not an output, then there exist 
    $Q_2$, $\Lambda_2'$, and $\Delta'_2$ such that 
    $\horelm{\Gamma_2;\Lambda_2;\Delta_2}{Q_1}
    {\Hby{\hat{\ell}}}{\Lambda_2';\Delta_2'}{Q_2}$
    and
    $\horelm{\Gamma_1;\Lambda_1';\Delta_1'}{P_2}{\ \Re \ }
    {\Gamma_2;\Lambda_2';\Delta_2'}{Q_2}$.
  \item	The symmetric cases of 1, 2.                 
  \end{enumerate}
  The largest such bisimulation is called \emph{\HO bisimilarity}, denoted by $\hob$.
\end{definition}
There are two points worth highlighting in this definition.
Firstly, the labeled transition system $\hby{\ell}$ used in the definition of $\hob$ is what is called the \emph{refined transition system}, different from the standard labeled transition system for the higher-order $\pi$-calculus.
The idea behind the refined transition system is that we want to disallow arbitrary inputs $P \hby{x(V)} P'$; having to consider such transitions in the definition of bisimilarity is undesirable, because it involves input of an arbitrary (higher-order) value $V$, making the definition very much non-local and ensuring that the bisimulations are very large.
As it turns out, due to the typed nature of the system, it suffices to consider inputs of the processes of a very particular kind---\emph{characteristic values}, defined based on the type.

Secondly, because the inputs are restricted in the refined LTS, there is some price to pay in the handling of the outputs.
If an output action ${P_1}\hby{\news{\widetilde{m_1}} \bactout{n}{V_1}}{P_2}$ is matched by an output action ${Q_1}\Hby{\news{\widetilde{m_2}} \bactout{n}{V_2}}{Q_2}$, then we need to ensure that that the output processes $V_1$ and $V_2$ are somehow related.
We have to ensure this in the output clause, because on the receiving end transitions inputing values $V_1$ or $V_2$ might not even be considered.
To that extent, we package the values $V_1$ or $V_2$ in \emph{trigger processes} (denoted $\htrigger{t}{V_1}$ and $\htrigger{t}{V_2}$), which are defined based on the typing.
We then make them part of the processes that are considered at the ``next step'' of the bisimulation.

This notion of \HO bisimilarity works for processes of the same type.
For our case, we need to compare processes of different but related types.
To that extent we make several changes to the definition above.
Firstly, during the decomposition a single name $x$ in a source process is decomposed into a sequence of names $x_1, \dots, x_k$ in the target process.
So in the definition of MST bisimilarity we match an action on a name $x$ with an action on an \emph{indexed} name $x_i$.
Secondly, such discrepancy between names might arise in input and output values.
This also needs to be considered as part of the definition.
For this, we need to accommodate the difference between characteristic values and trigger processes for MST and \HO bisimilarities.
In the next subsection we work out the details sketched above.

\subsection{MST Bisimilarity}
\label{mst:sec:mstbisim}
In this section we define a generalized version of \HO bisimilarity allowing for comparing MST and \HO process terms.
Our goal is to define \emph{MST bisimilarity} (denoted $\mstb$), a typed behavioral equivalence, which we give in \Cref{mst:d:fwb}.
To define $\mstb$, we require some auxiliary definitions, in particular:
\begin{itemize}
\item A refined LTS on typed processes (\Cref{mst:def:mlts});
\item A relation $\valuesrelate$ on values (\Cref{mst:def:valuesrelate}) and on names (\Cref{mst:def:namesrelate});
\item A revised notion of trigger processes  (\Cref{mst:def:trigger}).
\end{itemize}

\paragraph{Refined LTS and characteristic values.}
The idea behind defining the refined LTS is to restrict the input of arbitrary processes (values) and make the transition system image-finite (modulo names).

The \emph{refined LTS} for \HO is defined in \cite{KouzapasPY17} in three layers.
First comes the \emph{untyped LTS} $P \by{\ell} P'$, which describes reductions of untyped processes in the usual style of the LTS semantics for $\pi$-calculus.
Secondly, there is a notion  of the \emph{environmental LTS} $(\Gamma_1; \Lambda_1; \Delta_1) \by{\ell} (\Gamma_2; \Lambda_2; \Delta_2)$, which describes reductions of typing environments.
This LTS describes the way a typing context can evolve in accordance with its session types.
On top of these layers there are notions of  \emph{refined environmental LTS} and  \emph{refined LTS for processes}.
The former restricts the environmental LTS to inputs on characteristic values, as we discussed in \Cref{mst:sec:hobisim}.
Finally, the refined LTS for processes restricts the untyped LTS to those actions which are supported by the refined environmental LTS.

We follow this approach for defining the refined LTS for MST processes.
Both the untyped LTS for processes and the environmental LTS for MST processes coincides with the same LTSs for \HO (or, to be more precise, with its restriction to minimal session types).
It remains, then, to define the refined environmental LTS for MST processes, with the idea that the
refined LTS restricts inputs to the inputs on \emph{minimal characteristic values} and \emph{minimal trigger values}.
\begin{definition}[Minimal trigger value]
  \label{mst:d:mtv}
  Given a value type $\slhot{C}$ and fresh (indexed) name $\tni$, the \emph{minimal trigger
  value}  on $\tni$ of type $\slhot{\Gt{C}}$ is defined as the abstraction
  $$\abs{\wtd x}\binp{\tni}{y}\appl{y}{\wtd x}$$ where $\wtd x =
  (x_1,\ldots,x_{\len{\Gt{C}}})$. 
\end{definition}
%
\begin{definition}[Minimal characteristic values]
  \label{mst:d:mcv}
  Let $u$ be a name and $i > 0$.
  We define  $\mapcharm{-}{u}_i$ and $\omapcharm{-}$ on types as follows.
  \thesisalt{
  \[
    \begin{array}{rclcrcl}
      \mapcharm{\btinp{L} S}{u}_i
      &\defeq&
      \binp{u_i}{x} (\bout{\tni}{\namepass{u_{i+1}, 
      \ldots,u_{i+\len{\Gt{S}}}}}
       \inact \Par 
      \mapcharm{L}{x}_i)
      &\qquad &
      \omapcharm{S}  & \defeq &  \wtd s ~~ (\len{\wtd s} = 
      \len{\Gt{S}}, \wtd s \textrm{ fresh})
      \\
      \mapcharm{\btout{L} S}{u}_i
      &\defeq&
      \bout{u_i}{\omapcharm{L}} 
      \bout{\tni}{\namepass{u_{i+1}, \ldots,u_{i+\len{\Gt{S}}}}} \inact
      &&
      \omapcharm{\chtype{L}} &\defeq& a_1 ~~ (a_1 \textrm{ fresh})
      \\
      \mapcharm{\tinact}{u}_i
      & \defeq &
      \inact	 
      &&
      \omapcharm{\shot{C}} &\defeq&  \abs{(x_1,\ldots,x_{\len{\Gt{C}}})}{\mapcharm{C}{x}_1}
      \\
      \mapcharm{ \trec{t}{S} }{u}_i &\defeq& \mapcharm{S \subst{\tinact}{\vart{t}} }{u}_i
  
      &&
      \omapcharm{\lhot{C}} & \defeq &  \abs{(x_1,\ldots,x_{\len{\Gt{C}}})}{\mapcharm{C}{x}_1}
      \\
      \mapcharm{\chtype{L}}{u}_i
      &\defeq&
       \bout{u_1}{\omapcharm{L}} \bout{\tni}{\namepass{u_1}} \inact 
      \\
      \mapcharm{\shot{C}}{x}_i
      &\defeq& 
      \appl{x}{\omapcharm{C}}
      \\
      \mapcharm{\lhot{C}}{x}_i
      &\defeq &
      \appl{x}{\omapcharm{C}}
    \end{array}
    \]
    }
    {
      \begin{align*}
        \mapcharm{\btinp{L} S}{u}_i
        &\defeq
        \binp{u_i}{x} (\bout{\tni}{\namepass{u_{i+1}, 
        \ldots,u_{i+\len{\Gt{S}}}}}
         \inact \Par 
        \mapcharm{L}{x}_i) 
        \\
        \mapcharm{\btout{L} S}{u}_i
      &\defeq
      \bout{u_i}{\omapcharm{L}} 
      \bout{\tni}{\namepass{u_{i+1}, \ldots,u_{i+\len{\Gt{S}}}}} \inact
      \\
      \mapcharm{\tinact}{u}_i
      & \defeq 
      \inact	
        \\
        \mapcharm{ \trec{t}{S} }{u}_i &\defeq \mapcharm{S \subst{\tinact}{\vart{t}} }{u}_i
        \\
        \mapcharm{\chtype{L}}{u}_i
        &\defeq
         \bout{u_1}{\omapcharm{L}} \bout{\tni}{\namepass{u_1}} \inact 
         \\
         \mapcharm{\shot{C}}{x}_i
         &\defeq
         \appl{x}{\omapcharm{C}}
         \\
         \mapcharm{\lhot{C}}{x}_i
         &\defeq 
         \appl{x}{\omapcharm{C}}
        \\
        \omapcharm{S}  & \defeq  \wtd s ~~ (\len{\wtd s} = 
        \len{\Gt{S}}, \wtd s \textrm{ fresh})
        \\
        \omapcharm{\chtype{L}} &\defeq a_1 ~~ (a_1 \textrm{ fresh})
        \\
        \omapcharm{\shot{C}} &\defeq  \abs{(x_1,\ldots,x_{\len{\Gt{C}}})}{\mapcharm{C}{x}_1}
        \\
        \omapcharm{\lhot{C}} & \defeq  \abs{(x_1,\ldots,x_{\len{\Gt{C}}})}{\mapcharm{C}{x}_1}
      \end{align*}

  
    } 
    \noindent where $\tni$ is a fresh (indexed) name.
    In this definition we use name-passing constructs, as outlined in \Cref{top:ex:np}.
\end{definition}
\begin{definition}[Refined environmental LTS]
  The refined LTS, denoted $\mhby{~\ell~}$, is defined 
  on top of the environmental LTS using the following rules: 
  \begin{mathpar}
		\inferrule[\eltsrule{MTr}] {
			(\Gamma_1; \Lambda_1; \Delta_1) \by{\ell} (\Gamma_2; \Lambda_2; \Delta_2)
			\and
			\ell \not= \bactinp{n}{V}
                      }{
			(\Gamma_1; \Lambda_1; \Delta_1) \mhby{~\ell~} (\Gamma_2; \Lambda_2; \Delta_2)
    }
  \end{mathpar}
  \begin{mathpar}
		\inferrule[\eltsrule{MRcv}]{
			(\Gamma_1; \Lambda_1; \Delta_1) \by{\bactinp{n}{V}} (\Gamma_2; \Lambda_2; \Delta_2)
                        \and      
		  \left(V \scong \omapcharm{L}\right)
			 \vee \left(V  \scong \abs{{\wtd x}}{\binp{\tni}{y} (\appl{y}{{\wtd x}})}\right)
                         \quad\textrm{ {\small with $\tni$ fresh}}
		}{
			(\Gamma_1; \Lambda_1; \Delta_1) \mhby{\bactinp{n}{V}} (\Gamma_2; \Lambda_2; \Delta_2)
		}
  \end{mathpar}
  \noindent where $\abs{{\wtd x}}{\binp{\tni}{y} (\appl{y}{{\wtd x}})}$ is 
  a minimal trigger value of type $\Gt{C}$ (\Cref{mst:d:mtv}).%
\end{definition}
Finally, the refined LTS for MST processes is just a combination of the untyped LTS with the refined environmental LTS:
\begin{definition}[Refined LTS]
  \label{mst:def:mlts}
  The environmental refined LTS extends to the typed refined LTS on processes.
  We write $\horelm{\Gamma_1;\Lambda_1;\Delta_1}{P_1}{\mhby{~\ell~}}{\Lambda'_1;\Delta'_1}{P_2}$
  when
  \begin{itemize}
  \item $P_1 \by{\ell} P_2$, and
  \item $(\Gamma_1; \Lambda_1; \Delta_1) \mhby{~\ell~} (\Gamma_2; \Lambda_2; \Delta_2)$.
  \end{itemize}
\end{definition}
We write $\Mhby{\ell}$ for the weak version of the transition $\mhby{\ell}$.
Notice that while the untyped LTS and the non-refined environmental LTS coincide
with that of \HO, the refinement that we impose on the environmental LTS is
different from its \HO counterpart. Specifically in Rule~\eltsrule{MRcv} we take
special care to use minimal
characteristic processes $\omapcharm{-}$, instead of general \HO
characteristic process $\omapchar{-}$ 
as defined in~\cite{KouzapasPY17}.

\paragraph{Relating trigger and characteristic values.}
As we mentioned earlier, the notion of bisimulation that we consider requires matching transitions of the source \HO term with the transitions of the target MST term.
However, the two transitions might differ on the inputs of characteristic values.
We accommodate for that difference by establishing a relation between the trigger and characteristic values of \HO and MST.
\begin{definition}
  \label{mst:def:valuesrelate}
  We define the relation $\valuesrelate$ between \HO processes and indexed processes inductively as:
$$
\frac{
\len{\tilde x} = \len{\Gt{C}}
}{
\abs{x:C}{\binp{t}{y}\appl{y}{x}}
\valuesrelate
\abs{\tilde x: \Gt{C}}\binp{\tni}{y}\appl{y}{\tilde x}
}
\qquad
\frac{}{
 \omapchar{\slhot{C}} \valuesrelate \omapcharm{\slhot{C}}{}}
$$
\noindent where $\abs{{\wtd x}:\Gt{C}}{\binp{\tni}{y} (\appl{y}{{\wtd x}})}$ is 
  a minimal trigger value of type $\slhot{\Gt{C}}$  (\Cref{mst:d:mtv}) 
  and  $\omapchar{-}$ denotes the characteristic values
  defined in~\cite{KouzapasPY17}.
  We write $\abs{x:C}{\binp{\tni}{y}\appl{y}{x}}$ to mean that value
 $\abs{x}{\binp{\tni}{y}\appl{y}{x}}$ is of type $ \slhot{C}$. 
\end{definition}

\paragraph{Trigger processes and MST bisimilarity.}
Before we give the definition of MST bisimilarity, we establish the following notations:
\begin{definition}[Indexed name]
  \label{mst:def:indexedname}
  Given a name $n$, we write $\iname{n}$ to either denote 
  $n$ or any indexed name $n_i$, with $i > 0$.
\end{definition}

\begin{definition}[Trigger process]
  \label{mst:def:trigger}
 Given a value $V$, a trigger process for a fresh (indexed) name 
 $\tni$ is defined as: 
 \[
   \htrigger{\tni}{V} \defeq \binp{\tni}{\wtd{x}}(V\ \wtd{x})
 \]
\noindent 
where $\len{\wtd x} = \len{\wtd C}$ for $V : \slhot{\wtd C}$.
\end{definition}
\begin{lemm}
  If $\Gamma; \Lambda; \Delta \proves V \hastype \slhot{\wtd C}$,
  then $\Gamma; \Lambda; \Delta, \tni : \btinpt{\wtd{C}} \proves \htrigger{\tni}{V} \hastype \Proc$.
\end{lemm}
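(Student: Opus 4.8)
The plan is to build a direct typing derivation for $\htrigger{\tni}{V} = \binp{\tni}{\wtd x}(\appl{V}{\wtd x})$, working bottom-up from the hypothesis $\Gamma; \Lambda; \Delta \proves V \hastype \slhot{\wtd C}$ and using the polyadic versions of the application and session-input rules. Since $\slhot{\wtd C}$ abbreviates $\wtd C \leadsto \diamond$ with $\leadsto \in \{\lollipop, \sharedop\}$, and the relevant rules are parametric in $\leadsto$, the whole argument will be uniform in the linear/shared distinction; no case split on the nature of $V$ is needed.

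First I would type the received names $\wtd x$ against the channel types $\wtd C$ using Rule \textsc{(PolySess)}, obtaining $\Gamma; \es; \wtd x : \wtd C \proves \wtd x \hastype \wtd C$. Combining this with the hypothesis via Rule \textsc{(PolyApp)} (instantiating $\Delta_1 = \Delta$ and $\Delta_2 = \wtd x : \wtd C$, and carrying $\Lambda$ through $V$) yields
\[
\Gamma; \Lambda; \Delta \cat \wtd x : \wtd C \proves \appl{V}{\wtd x} \hastype \Proc .
\]
Because $\tni$ is fresh, and hence $\tni \notin \dom{\Gamma,\Lambda,\Delta}$, Rule \textsc{(End)} then lets me weaken this judgment with the terminated session $\tni : \tinact$ without altering the process.

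Finally I would apply the polyadic extension of Rule \textsc{(Rcv)} to form the session input on $\tni$: taking the continuation typed as just established (so that the continuation session type on $\tni$ is $\tinact$, with $\Delta_1 = \Delta \cat \wtd x : \wtd C$) and reusing the derivation of $\wtd x : \wtd C$ as the received object, the rule folds $\wtd x : \wtd C$ into the prefix and records the input type $\tni : \btinp{\wtd C}\tinact = \btinpt{\wtd C}$. The side conditions $\Gamma \backslash \wtd x$, $\Lambda_1 \backslash \Lambda_2$, and $\Delta_1 \backslash \Delta_2$ in the conclusion of \textsc{(Rcv)} exactly cancel the bindings introduced for $\wtd x$, so the conclusion is precisely $\Gamma; \Lambda; \Delta \cat \tni : \btinpt{\wtd C} \proves \htrigger{\tni}{V} \hastype \Proc$, as claimed.

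The only mildly delicate point—and the closest thing to an obstacle—is the bookkeeping of the polyadic input: each component $x_i$ of $\wtd x$ must be routed to $\Gamma$ or to the session environment according to whether $C_i$ is a shared channel type $\chtype{U_i}$ or a session type $M_i$, and the removals in \textsc{(Rcv)} must then account for both $\Gamma \backslash \wtd x$ and $\Delta_1 \backslash \Delta_2$. This is routine once one commits to the polyadic extension of the type system assumed in the paper, and it does not interact with the $\leadsto$ distinction, so the derivation closes without further subtlety.
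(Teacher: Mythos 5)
The paper states this lemma without any proof, so there is no official derivation to compare against; your bottom-up derivation --- \textsc{(PolySess)}/\textsc{(PolyApp)} for $\appl{V}{\wtd x}$, then \textsc{(End)} to introduce $\tni:\tinact$, then an input rule to fold $\wtd x$ into the prefix --- is exactly the routine argument the authors leave implicit. Steps 1--3 are fine, including the uniformity in $\leadsto$ and the routing of shared-typed components of $\wtd x$ through $\Gamma$ that you flag at the end.

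The one step that does not go through \emph{literally} is the last one. The rule you call ``the polyadic extension of \textsc{(Rcv)}'' --- an input whose received objects carry \emph{channel} types $\wtd C$, typed partly in the session environment, with the conclusion recording $\tni:\btinp{\wtd C}\tinact$ --- is not a rule of the paper's system: the monadic \textsc{(Rcv)} of \Cref{fig:typerulesmys} and \textsc{(PolyRcv)} of \Cref{r:polyadic-rules} both require the received object to have a \emph{value} type $U$ (and \textsc{(PolyRcv)} moreover types it in an empty session environment), because \HO has no primitive name passing. Receiving a tuple of names, as $\htrigger{\tni}{V}=\binp{\tni}{\wtd x}(\appl{V}{\wtd x})$ does, is only meaningful through the name-passing encoding of \Cref{f:np}, and the derived rule that performs your step 4 is \textsc{(RcvN)}: the prefix is really $\binp{\tni}{\namepass{\wtd x}}$ and the recorded payload is $\namepass{\wtd C}$, so the lemma's $\btinpt{\wtd{C}}$ must itself be read as an abbreviation of $\btinp{\namepass{\wtd C}}\tinact$. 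The paper signals exactly this reading for the companion notion of minimal characteristic values (\Cref{d:mcv}: ``we use name-passing constructs, as outlined in \Cref{ex:np}''). So your proof is correct once step 4 is rerouted through \textsc{(RcvN)} --- the cancellation bookkeeping you describe ($\Gamma\backslash\wtd x$, $\Lambda_1\backslash\Lambda_2$, $\Delta_1\backslash\Delta_2$) is precisely that rule's --- but as written, its final inference appeals to a rule the system does not contain, and the repair should be stated rather than assumed.
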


Finally, we are ready to formally define \textsf{MST} bisimilarity. 
\begin{definition}[MST Bisimilarity]
  \label{mst:d:fwb}
	A typed relation $\Re$ is an {\em  MST bisimulation} if 
	for all $\horelm{\Gamma_1;\Lambda_1;\Delta_1}{P_1}{\ \Re \ }
  {\Gamma_2;\Lambda_2;\Delta_2}{Q_1}$, 
	\begin{enumerate}[1)]
		\item 
				Whenever 
				$\horelm{\Gamma_1;\Lambda_1;\Delta_1}{P_1}
        {\hby{\news{\widetilde{m_1}} \bactout{n}{V_1}}}
        {\Lambda'_1;\Delta'_1}{P_2}$ 
        then there exist
				$Q_2$, $\Delta'_2$, and $\Lambda'_2$ such that 
				$\horelm{\Gamma_2;\Lambda_2;\Delta_2}{Q_1}
        {\Mhby{\news{\widetilde{m_2}}\bactout{\iname{n}}{V_2}}}
        {\Lambda'_2;\Delta_2'}{Q_2}$
        where, for a fresh $t$,
				\[
					\Gamma_1; \Lambda_1; \Delta''_1 \proves 
           {\newsp{\widetilde{m_1}}{P_2 \Par \htrigger{t}{V_1}}}
	 				\ \Re\ 
					\Gamma_2; \Lambda_2; \Delta''_2\proves 
          {\newsp{\widetilde{m_2}}{Q_2 \Par \htrigger{\iname t}{V_2}}}
				\]
                                
       \item Whenever $\horelm{\Gamma_1; \Lambda_1; \Delta_1}{P_1}
       {\hby{\abinp{n}{V_1}}}{\Lambda_1';\Delta_1'}{P_2}$
       then there exist  $Q_2$, $\Lambda_2'$, and $\Delta_2'$ such that 
       $\horelm{\Gamma_2;\Lambda_2;\Delta_2}{Q_1}{\Mhby{\abinp{\iname{n}}{V_2}}}
       {\Lambda'_2, \Delta_2'}{Q_2}$ 
       where 
       $V_1 \valuesrelate V_2$
       and 
       $\horelm{\Gamma_1;\Lambda_1';\Delta_1'}{P_2}{\ \Re \ }
       {\Gamma_2;\Lambda_2';\Delta_2'}{Q_2}$, 
        
     \item	The symmetric cases of 1 and 2.
	\end{enumerate}
	The largest such bisimulation is called \emph{MST bisimilarity}, denoted by $\mstb$.
\end{definition}
In all clauses,
we use the refined LTS (\Cref{mst:def:mlts}) and rely on notation
$\iname{n}$ (\Cref{mst:def:indexedname}). In the output clause, we use the
triggers (\Cref{mst:def:trigger}). In the input clause, we use the
relation $\valuesrelate$ on values (\Cref{mst:def:valuesrelate}).

We discuss differences between MST bisimilarity and higher-order bisimilarity as defined in~\cite{KouzapasPY17}.
First, an action in $P_1$ must be matched by an action on an indexed name in $Q_1$, and refined LTS actions in $P_1$ are matched by minimal refined
LTS actions in $Q_1$  (\Cref{mst:def:valuesrelate}). As a consequence of the latter,
in the input case the observed values are not identical but related by $\valuesrelate$ (\Cref{mst:def:valuesrelate}). 
In other words, whenever $P_1$
receives a trigger or a characteristic value, then $Q_1$ should receive their minimal
counterparts (\Cref{mst:d:mtv,mst:d:mcv}). Further, as names could be
indexed on the right-hand side, the typing environments could differ for open processes, so the MST bisimilarity assumes different typing environments on both sides.

\subsection{The Bisimulation Relation} 
\label{mst:sec:relationS}
Our goal is to complement our static correctness result (\Cref{mst:t:decompcore}) by  
proving the following statement about the  decomposition of processes (\Cref{mst:def:decomp}):
\begin{theorem}
\label{mst:t:dyncorr}
  Let $P$ be an \HO process such that $\Gamma;\Delta;\Lambda \proves P \hastype \Proc$. 
  We have 
  $$\horelm{\Gamma;\Lambda;\Delta}{P}{\ \mstb \ }
  {\Gt{\Gamma};\Gt{\Lambda};\Gt{\Delta}}{\D{P}}$$ 
\end{theorem}
To show that $P$ and $\D{P}$ are MST-bisimilar, we provide a concrete bisimulation relation $\relS$  that contains $(P,\D{P})$.
Defining $\relS$ to be just the set of such pairs is, however, not going to work; instead, the relation $\relS$ should also contain pairs corresponding to ``intermediate'' states in which the process and its decomposition may get ``desynchronized''.
Before we give the concrete definition of $\relS$ we look at an example, illustrating the need for such intermediate pairs.

\subsubsection{A Motivating Example}
\label{mst:sec:motexample}

Consider the following process: 
\begin{align*}
  P_1 = \binp{u}{t}\binp{v}{x}
  \news{s : S}(\bout{u}{x}\inact \Par \appl{t}{s}  \Par \bout{\dual s}{x}\inact) 
	\Par \bout{\dual v}{V}\inact
\end{align*}
  \noindent where $u : \btinp{\chtype{U_t}}\btout{U_{V}}\tinact$ and $v:S$ with   
  $S = {\btinpt{U_V}}$, $U_t = \shot{S}$, and 
  $U_V$ is some shared value type, {i.e.} $U_V = \shot{S_V}$, for some 
  session type $S_V$. 
  Further, $V$ is some value, such that $V = \abs{y:S_V}R$. 
 
  Thus, $P_1$ is typed using the typing of its constituents:
  \begin{prooftree}
    \AxiomC{$\es; \es; \dual{v} : \dual{S}
      \fCenter\proves \bout{\dual v}{V}\inact \hastype \Proc$}     
    \noLine
    \UnaryInfC{$\es; \es; u : \btinp{\chtype{U_t}}\btout{U_{V}}\tinact, v:S \fCenter\proves 
     \binp{u}{t}\binp{v}{x}
      \news{s : S}(\bout{u}{x}\inact \Par \appl{t}{s}  \Par \bout{\dual s}{x}\inact)
     \hastype \Proc$}
    \UnaryInfC{$\es; \es; u: \btinp{\chtype{U_t}}\btout{U_V}\tinact, v:S, \dual{v} :
     \dual{S} 
     \fCenter\proves P_1 \hastype \Proc$}
  \end{prooftree}
  \noindent The decomposition of $P_1$ is as follows: 
  \begin{align*}
    \D{P_1} & = \news{\wtd{\prop}} \left( \apropout{1}{} \Par 
    \B{1}{\epsilon}{P_1}\right) \\
    & = \news{\wtd{\prop}} \big( \apropout{1}{} \Par
      \apropinp{1}{}. \apropout{2}{}.\apropout{11}{}  \\
    & \quad\qquad \Par \apropinp{2}{}.\binp{u_1}{t}\apropout{3}{t}
      \Par \apropinp{3}{t}.\binp{v_1}{x}\apropout{4}{t,x} \\
    & \quad\qquad\qquad \Par \news{s_1}
    (\apropinp{4}{t,x}.
    \apropout{5}{x}.\apropout{6}{t, x} \Par 
    \propinp{5}{x}\bout{u_2}{x}\apropout{6}{} \Par \apropinp{6}{} \Par 
    \\ 
    & \qquad\qquad\qquad 
    \Par \propinp{7}{t,x}\propout{8}{t}\apropout{9}{x} \Par 
    \apropinp{8}{t}.\appl{t}{s_1}
      \Par \apropinp{9}{x}.\bout{\dual{s_1}}{x}\apropout{10}{} \Par 
      \apropinp{10}{}) \\
    & \quad\qquad \Par \apropinp{11}{}.\bout{\dual{v_1}}{\V{12}{\epsilon}{V}}
      \apropout{12}{} \Par \apropinp{12}{}
 \big),
  \end{align*}
  where $\wtd \prop = c_1,\ldots,c_{12}$. 
  Let us write $Q_1$ for the decomposition $\D{P_1}$.





  \begin{figure}[!t]
    \begin{mdframed}
      \center 
      \vspace*{0.5cm} 
    \begin{tikzpicture}[shorten >=0.5pt,node distance=1.60cm,on grid,auto] 
      \node[state] (p_1)  {$P_1$};
      \node[state] (p_2) [right=2.5cm  of p_1] {$P_2$};
      \node[state] (p_3) [right=2.5cm of p_2] {$P_3$};
      \node[state, fill=blue!20](p_4) [above right=3.0 cm of p_3] {$P_4$}; 
      \node[state](p_5) [below right=3 cm of p_3] {$P_5$}; 
      \node[state, fill=blue!20](p_6) [right=3.0 cm of p_4] {$P_6$}; 
      \node[state](p_7) [right=3.0 cm of p_5] {$P_7$}; 
      \node[state, fill=blue!20] (p_8) [below right=3.0 cm of p_6]{$P_8$};

      \path[->] 
      (p_1) edge node {\scriptsize{$\abinp{u}{\textcolor{blue}{V_c}}$}} (p_2)
      (p_2) edge node {\scriptsize{$\tau$}} (p_3)

      (p_3) edge node {\scriptsize{$\about{u}{V}$}} (p_4)
      (p_3) edge node {\scriptsize{$\tau$}} (p_5)
      (p_4) edge node {\scriptsize{$\tau$}} (p_6)
      (p_5) edge node {\scriptsize{$\tau$}} (p_7)
      (p_5) edge node {\scriptsize{$\about{u}{V}$}} (p_6)
      (p_7) edge node {\scriptsize{$\about{u}{V}$}} (p_8)
      (p_6) edge node {\scriptsize{$\tau$}} (p_8)
      ;
    \pgfmathsetmacro{\x}{1.25} 
    \pgfmathsetmacro{\xs}{0.5} 
    \pgfmathsetmacro{\a}{0.2}

      \node[state] (q_1) [below=6cm of p_1, xshift=0cm] {$Q_1$};
      \node[state] (q1_1) [below=\x cm  of q_1, xshift=\xs cm] {$Q'_1$}; 
      \node[state] (q1_2) [below=\x cm  of q1_1, xshift=\xs cm] {$Q''_1$};
      
      \node[state] (q_2) [right=2.5cm  of q_1] {$Q_2$};
      \node[state] (q2_1) [below=\x cm  of q_2, xshift=\xs cm] {$Q'_2$}; 
      \node[state] (q2_2) [below=\x cm  of q2_1, xshift=\xs cm] {$Q''_2$};

      \node[state] (q_3) [right=2.5cm of q_2] {$Q_3$};
       \node[state] (q3_1) [below=\x cm  of q_3, xshift=\xs cm] {$Q'_3$}; 
      \node[state] (q3_2) [below=\x cm  of q3_1, xshift=\xs cm] {$Q''_3$};

      \node[state, fill=blue!20](q_4) [above right=5.5 cm of q3_2, xshift=-2cm] {$Q_4$}; 
        \node[state, fill=blue!20] (q4_1) [below=\x cm  of q_4, xshift=\xs cm] {$Q'_4$}; 
      \node[state, fill=blue!20] (q4_2) [below=\x cm  of q4_1, xshift=\xs cm] {$Q''_4$};

      \node[state](q_5) [below right=5.5 cm of q3_2, xshift=-2cm] {$Q_5$}; 
       \node[state] (q5_1) [above=\x cm  of q_5, xshift=\xs cm] {$Q'_5$}; 
      \node[state] (q5_2) [above=\x cm  of q5_1, xshift=\xs cm] {$Q''_5$};

      \node[state, fill=blue!20](q_6) [right=3 cm of q_4] {$Q_6$}; 
       \node[state, fill=blue!20] (q6_1) [below=\x cm  of q_6, xshift=\xs cm] {$Q'_6$}; 
      \node[state, fill=blue!20] (q6_2) [below=\x cm  of q6_1, xshift=\xs cm] {$Q''_6$};

      \node[state](q_7) [right=3 cm of q_5] {$Q_7$}; 
       \node[state] (q7_1) [above=\x cm  of q_7, xshift=\xs cm] {$Q'_7$}; 
      \node[state] (q7_2) [above=\x cm  of q7_1, xshift=\xs cm] {$Q''_7$};
      
      \node[state, fill=blue!20] (q_8) [right=7.5cm of q3_2]{$Q_8$};

      \path[->] 
      (q1_2) edge [pos=0.9, left] node {\scriptsize{$\abinp{u_1}{\textcolor{blue}{V^m_c}}$}} (q_2)
      (q_1) edge node {\scriptsize{$\tau$}} (q1_1)
      (q1_1) edge node {\scriptsize{$\tau$}} (q1_2)

      (q2_2) edge [pos=0.8, left] node {\scriptsize{$\tau$}} (q_3)
      (q_2) edge node {\scriptsize{$\tau$}} (q2_1)
      (q2_1) edge node {\scriptsize{$\tau$}} (q2_2)

      (q3_2) edge [pos=0.8, left] node {\scriptsize{$\about{u_2}{\V{}{}{V}}$}} (q_4)
      (q_3) edge node {\scriptsize{$\tau$}} (q3_1)
      (q3_1) edge node {\scriptsize{$\tau$}} (q3_2)
      
      (q3_2) edge node {\scriptsize{$\tau$}} (q_5)
      (q4_2) edge node {\scriptsize{$\tau$}} (q_6)
      (q_4) edge node {\scriptsize{$\tau$}} (q4_1)
      (q4_1) edge node {\scriptsize{$\tau$}} (q4_2)

      (q5_2) edge node {\scriptsize{$\tau$}} (q_7)
      (q_5) edge node {\scriptsize{$\tau$}} (q5_1)
      (q5_1) edge node {\scriptsize{$\tau$}} (q5_2)

      (q5_2) edge [pos=0.3, left] node {\scriptsize{$\about{u_2}{\V{}{}{V}}$}} (q_6)
      (q7_2) edge [pos=0.8, left] node {\scriptsize{$\about{u_2}{\V{}{}{V}}$}} (q_8)
      (q_7) edge node {\scriptsize{$\tau$}} (q7_1)
      (q7_1) edge node {\scriptsize{$\tau$}} (q7_2)
      
      (q6_2) edge node {\scriptsize{$\tau$}} (q_8)
      (q_6) edge node {\scriptsize{$\tau$}} (q6_1)
      (q6_1) edge node {\scriptsize{$\tau$}} (q6_2)
      
      (p_1) edge [line width=\a mm, -, dotted] node {} (q_1)
      (p_2) edge [line width=\a mm, -, dotted] node {} (q_2)
      (p_3) edge [line width=\a mm, -, dotted] node {} (q_3)
      (p_4) edge [line width=\a mm, -, dotted] node {} (q_4)
      (p_5) edge [line width=\a mm, -, dotted] node {} (q_5)
      (p_6) edge [line width=\a mm, -, dotted] node {} (q_6)
      (p_7) edge [line width=\a mm, -, dotted] node {} (q_7)
      (p_8) edge [line width=\a mm, -, dotted] node {} (q_8)

      ;
    \end{tikzpicture}
     \vspace*{0.5cm} 
  \end{mdframed}
  \caption[Transitions of $P_1$ and $Q_1 = \D{P_1}$ in \Cref{mst:sec:motexample}.]{Transitions of $P_1$ and $Q_1 = \D{P_1}$ in \Cref{mst:sec:motexample}. 
  The \textcolor{blue}{blue} nodes represent processes that contain characteristic values 
  and trigger processes induced by the bisimilarites defined in~\cite{KouzapasPY17}.}
\label{mst:fig:exampletransitions}  
  \end{figure}
	
  We wish to show $P_1 ~\mstb~ Q_1$. For this, we must exhibit a relation $\mathcal{S}$ included  in $\mstb$ such that $(P_1,  \D{P_1}) \in  \relS$.
  To illustrate the notions required to define the additional pairs, we consider possible transitions of $P_1$ and $Q_1$, denoted schematically in \Cref{mst:fig:exampletransitions}.
  First, let us consider a possible (refined) transition of $P_1$, an input on $u$ of a characteristic value:
\begin{align*}
	P_1 \by{\bactinp{u}{\textcolor{blue}{V_C}}} 
              \binp{v}{x}
  \news{s : S}(\bout{u}{x}\inact \Par \appl{\textcolor{blue}{V_C}}{s}  
  \Par \bout{\dual s}{x}\inact) 
	\Par \bout{\dual v}{V}\inact = P_2 
\end{align*}
	\noindent where $\textcolor{blue}{V_C} = \omapchar{U_t} = 
  \abs{y:S} \binp{y}{x'}(\bout{}{}\inact \Par \appl{x'}{s'})$ is the  \emph{characteristic value} of $U_t$.%
\footnote{We use \textcolor{blue}{blue} to denote characteristic values and trigger processes that do no occur in the original process, but  which are induced by the bisimilarities defined in~\cite{KouzapasPY17}.}
  Process $Q_1$ can weakly match this input action on the indexed name $u_1$.
  This input does not involve $\textcolor{blue}{V_C}$ but the 
  \emph{minimal} characteristic value of type $U_t$ (\Cref{mst:d:mcv}).
We have: 
  \begin{align*}
    Q_1 & \by{\tau} Q'_1 \by{\tau} Q''_1 \by{\bactinp{u_1}{\textcolor{blue}{V^m_C}}}  
      \news{\wtd \prop_{\bullet}}\apropout{3}{\textcolor{blue}{V^m_C}}
      \Par 
      \apropinp{3}{t}.\binp{v_1}{x}\apropout{4}{t,x} \Par \apropout{11}{}\\
      & \quad\qquad\qquad \Par \news{s_1}
      (\apropinp{4}{t,x}.
      \apropout{5}{x}.\apropout{7}{t, x} \Par 
      \propinp{5}{x}\bout{u_2}{x}\apropout{6}{} \Par \apropinp{6}{} \Par 
    \\ 
    & \qquad\qquad\qquad 
    \Par \propinp{7}{t,x}\propout{8}{t}\apropout{9}{x} \Par 
    \apropinp{8}{t}.\appl{t}{s_1}
      \Par \apropinp{9}{x}.\bout{\dual{s_1}}{x}\apropout{10}{} \Par 
      \apropinp{10}{}) \\
    & \quad\qquad\qquad \Par \apropinp{11}{}.\bout{\dual{v_1}}{\V{12}{\epsilon}{V}}
      \apropout{12}{} \Par \apropinp{12}{} = Q_2 
  \end{align*}

	\noindent where  $\textcolor{blue}{V^m_C} = \omapcharm{U_t} = 
	\abs{(y_1)}\binp{y_1}{x'}(\bout{\tni}{}\inact \Par \appl{x'}{\wtd s'})$,  
  with $y_1 : \btinpt{S}$, $\len{\wtd s'} = \len{\Gt{S_V}}$, and $\wtd \prop_{\bullet} = \prop_3,\ldots,\prop_{12}$. 

  Hence, we should have $P_2 ~\relS~ Q_2$. 
Observe that $Q_2$ is not exactly the decomposition of $P_2$.
First, $\textcolor{blue}{V^m_C}$ is not the breakdown of $\textcolor{blue}{V_C}$.
Second, $\textcolor{blue}{V^m_C}$ is not at the same position in $Q_2$ as $\textcolor{blue}{V_C}$; the later being in the application position and the former being pushed through several propagators.
  Therefore, the relation
  $\relSs$ needs to (1)  relate $\textcolor{blue}{V_C}$ and
  $\textcolor{blue}{V^m_C}$ and (2) account for the fact that a value related to
  $\textcolor{blue}{V_C}$ and thus it needs to be propagated (as in $Q_2$). 
  To address the first point, we establish a relation $\vrelate$ between characteristic values and their minimal counterparts.
  For the second point, we record this fact by ``decomposing'' the process as $P_2 = P'_2\subst{V_C}{t}$, and propagating the information  about this substitution when computing the set of processes that are related to $P_2$.

  The same considerations we mentioned also apply to the value $V$, which is 
  transmitted internally, via a synchronization:
  \begin{align*}
		P_2 \by{\tau}  
			\news{s}(\bout{u}{V}\inact \Par \appl{\textcolor{blue}{V_C}}{s}  
						\Par \bout{\dual s}{V}\inact) = P_3 
  \end{align*}

  \noindent Value $V$ transmitted in $P_2$ should be related to its corresponding 
  breakdown $\V{12}{\epsilon}{V}$, which should be propagated through 
  the decomposition: 
  \begin{align*}
    Q_2 & \by{\tau} Q'_2 \by{\tau} Q''_2 \By{\tau}  
    \news{\wtd \prop_{\bullet \bullet}}
    \apropout{4}{\textcolor{blue}{V^m_C},
    \V{11}{\epsilon}{V}}  \\
    & 
    \quad\qquad\qquad \Par \news{s_1}
      (\apropinp{4}{t,x}.
      \apropout{5}{x}.\apropout{7}{t, x} \Par 
      \propinp{5}{x}\bout{u_2}{x}\apropout{6}{} \Par \apropinp{6}{} \Par 
    \\ 
    & \qquad\qquad\qquad 
    \Par \propinp{7}{t,x}\propout{8}{t}\apropout{9}{x} \Par 
    \apropinp{8}{t}.\appl{t}{s_1}
      \Par \apropinp{9}{x}.\bout{\dual{s_1}}{x}\apropout{10}{} \Par 
      \apropinp{10}{}) \Par 
      \\
      & \qquad\qquad\qquad  
       \Par \apropout{12}{} \Par \apropinp{12}{} = Q_3
  \end{align*}
  \noindent where $\wtd \prop_{\bullet \bullet}=\prop_4,\ldots, \prop_{10}, \prop_{12}$. 

  Now, in $P_3$ we can observe the output of $V$ along  $u$: 
  \begin{align*}
		P_3 \by{\bactout{u}{V}}
			\news{s}(\inact \Par \appl{\textcolor{blue}{V_C}}{s}  
						\Par \bout{\dual s}{x}\inact) = P_4 
	\end{align*}
Process $Q_3$ mimics this action by sending the process
  $\V{12}{\epsilon}{V}$ 
  along name $u_2$: 
  \begin{align*}
    Q_3 &\By{\bactout{u_2}{\V{12}{\epsilon}{V}}}   
    \news{\wtd \prop_{*}}
    \apropout{7}{\textcolor{blue}{V^m_C},
      \V{12}{\epsilon}{V}} \Par 
     \apropout{6}{} \Par \apropinp{6}{} \Par 
    \\ 
    & \qquad\qquad\qquad 
    \Par \propinp{7}{t,x}\propout{8}{t}\apropout{9}{x} \Par 
    \propinp{8}{t}\appl{t}{s_1}
      \Par \propinp{9}{x}\bout{\dual{s_1}}{x}\apropout{10}{} \Par 
      \apropinp{10}{})  = Q_4 
  \end{align*}
  \noindent where $\wtd \prop_{*} = \prop_6, \ldots, \prop_{10}$. 
  Following the definition of higher-order bisimilarity, 
  we should have: 
	\begin{align*}
		P_4 \parallel \textcolor{blue}{\htrigger{t'}{V}}
		 ~\relS~ 
     Q_4 \parallel \textcolor{blue}{\htrigger{\tni'}{\V{11}{\epsilon}{V}}}
	\end{align*}
for a fresh $t'$, where we have used `$\parallel$' (rather than `$|$') to denote process composition: we find it convenient to highlight those sub-processes in parallel that originate from trigger and characteristic processes. 

  We can see that the trigger process for   $V$ on the left-hand side should be matched  with a trigger process for the \emph{breakdown} of $V$ on the 
  right-hand side. 
  Moreover, the definition of trigger processes should be generalized to 
  polyadic values, as $\V{12}{\epsilon}{V}$ could be polyadic
  (see \Cref{mst:def:trigger}). 

  Let us briefly consider how $	P_4 \parallel \textcolor{blue}{\htrigger{t'}{V}}$ 
  evolves after due to the synchronization in sub-process 
  $\appl{\textcolor{blue}{V_c}}{s}$ within $P_4$: 
  \begin{align*}
    P_4 \parallel \textcolor{blue}{\htrigger{t'}{V}} \by{\tau}  
    \news{s} 
    (
    \textcolor{blue}
	{\binp{s}{x'}(\about{t}{} \Par \appl{x'}{s'})} 
						\parallel  \bout{\dual s}{V}\inact)   
    \parallel \textcolor{blue}{\htrigger{t'}{V}}  
    =  P_6 \parallel \textcolor{blue}{\htrigger{t'}{V}} 
  \end{align*}
  
\noindent We can see that $Q_4$ can mimic this synchronization after a few administrative 
reductions on propagators: 
  \begin{align*} 
    Q_4 &\By{\tau}   
    \news{\prop_9 \prop_{10}}
   \apropout{9}{\V{12}{\epsilon}{V}}
    \Par 
    \textcolor{blue}{\binp{s_1}{x'}(\about{\tni}{} \Par \appl{x'}{\wtd s'})} 
      \Par \propinp{9}{x}\bout{\dual{s_1}}{x}\apropout{10}{} \Par 
      \apropinp{10}{}) 
      \parallel 
      \textcolor{blue}{\htrigger{\tni'}{\V{12}{\epsilon}{V}}} 
      \\ 
      & \qquad \qquad 
      = Q_6 \parallel  \textcolor{blue}{\htrigger{\tni'}{\V{12}{\epsilon}{V}}} 
  \end{align*}

\noindent Therefore, we need to have: 
\begin{align*}
 	P_6 \parallel \textcolor{blue}{\htrigger{t'}{V}} 
	~\relS~
	 Q_6 \parallel  \textcolor{blue}{\htrigger{\tni'}{\V{12}{\epsilon}{V}}} 
\end{align*}
To ensure that this pair is in $\relS$, we introduce an auxiliary  relation, denoted $\processrelate$ (\Cref{mst:def:valuesprelation}), which allows us to account for the sub-processes that originate from 
characteristic values or trigger processes (in $\textcolor{blue}{\text{blue}}$).
We need to account for them separately, because one of them is not the decomposition of the other.
We thus decree:
\begin{align*}
 	\textcolor{blue}{\binp{s}{x'}(\about{t}{} \Par \appl{x'}{s'})}
	& \processrelate 
	 \textcolor{blue}{\binp{s_1}{x'}(\about{\tni}{} \Par \appl{x'}{\wtd s'})}  
	 \\
	 \textcolor{blue}{\htrigger{t'}{V}}
	& \processrelate 
	 \textcolor{blue}{\htrigger{\tni'}{\V{12}{\epsilon}{V}}}
\end{align*}

\noindent Next, the synchronization on $s$ in  $P_6$ is mimicked by $Q_6$ with a 
synchronization on $s_1$: 
\begin{align*}
	P_6 \parallel \textcolor{blue}{\htrigger{t}{V}} 
   &\by{\tau}
   (\bout{t}{}\inact \Par \appl{V}{s'})   
     \parallel  \textcolor{blue}{\htrigger{t'}{V}}  = P_8 
    \parallel \textcolor{blue}{\htrigger{t'}{V}}  
    \\
     Q_6 \parallel \textcolor{blue}{\htrigger{\tni'}{\V{12}{\epsilon}{V}}}  &\By{\tau}   
    \news{\prop_{10}}
    (\about{t_1}{} \Par \appl{\V{12}{\epsilon}{V}}{\wtd s'}) 
      \Par \apropout{10}{} \Par 
      \apropinp{10}{}) = 
      Q_8 \parallel \textcolor{blue}{\htrigger{\tni'}{\V{12}{\epsilon}{V}}} 
\end{align*}

\noindent  Finally, we can see that after the output on the trigger name $t$ 
there is an application that activates $R$, the body of $V$:  
\begin{align*}
	P_8 &\by{\bactout{t}{}}  \appl{V}{s'} \by{\tau} R\subst{s'}{y} \\ 
	Q_8 & \by{\bactout{t}{}}  \appl{\V{12}{\epsilon}{V}}{\wtd s'} 
	\by{\tau} 
	\news{\wtd \prop_{**}} \apropout{12}{} \Par \B{12}{\epsilon}{R} \subst{\wtd s'}{\wtd y} 
	\equiv \D{R\subst{s'}{y}}
\end{align*}
We reached the point where we relate process $R\subst{s'}{y}$ with its 
decomposition $\D{R\subst{s'}{y}}$. Hence, the remaining pairs in $\relS$ are obtained 
in the same way. 



\paragraph{Key insights.}
We summarize some key insights from the example: 
  \begin{itemize} 
  \item A received value can either be a pure value or a characteristic value.
    In the former case, the pure value has to be related to its decomposition, but in the later case the value should be related to an MST characteristic value of the same type.
    We define the relation $\vrelate$ on values  to account for this (\Cref{mst:def:vrelate}). 
    
  \item Trigger processes mentioned in the output case of MST bisimilarity should be matched with their minimal counterparts, and the same applies to processes originating from such trigger processes.
    The relation $\processrelate$ accounts for this (see \Cref{mst:def:valuesprelation}).

    \item Any value in process $P$ could have been previously received. The definition
    of $\relS$ takes this into account by explicitly relating processes with
    substitutions (see \Cref{mst:d:relation-s}). That is, for $P$, it relates
    $P'\subst{\tilde W}{\tilde x}$ such that $P'\subst{\tilde W}{\tilde x}=P$.
    Here, the substitution $\subst{\tilde W}{\tilde x}$ records values that should be propagated. 
  \end{itemize} 

\subsubsection{The relation $\relS$}

In this section we give the definition of the relation $\relS$ (\Cref{mst:d:relation-s}), following the insights gathered from the example. 
More specifically, we define
\begin{itemize}
\item a relation $\vrelate$ on values, which includes the relation $\valuesrelate$ from \Cref{mst:def:valuesrelate}, (\Cref{mst:def:vrelate});
\item a relation $\processrelate$ on processes, for relating characteristic and trigger processes with their MST counterparts, (\Cref{mst:def:valuesprelation});
\item a set $\Cb{\tilde W}{\tilde x}{P}$ of processes \emph{correlated} to a process $P\subst{\tilde W}{\tilde x}$, (\Cref{mst:t:tablecd}).
\end{itemize}

Because  we will be working extensively with indexed processes, we will use   the following function, which  returns a set of all valid indexing substitutions for a list of names.
\begin{definition}[Indexed names substitutions]
  \label{mst:d:indexedsubstitution}
  Let 
  $\wtd u = (a,b, r, \dual r,  
  r', \dual r', s, \dual s, s',  \dual{s}', \ldots)$
  be a finite tuple of names, 
    where $a, b, \ldots$ denote shared names, $r, \dual r,  
    r', \dual r', \ldots$ denote tail-recursive names , and $s, \dual s,  s',
    \dual{s}', \ldots$ denote linear (non tail-recursive names). 
    We write $\indices{\wtd u}$ to denote
  \begin{align*}
    \indices{\wtd u} = \subst{
      a_1, b_1,  r_1, \dual r_1,  
      r'_1, \dual r'_1, s_i, \dual s_i, s'_j, \dual{s}'_j, \ldots}
      {a,b,  r, \dual r,  
      r', \dual r', s, \dual s, s', \dual{s}', \ldots : 
  i,j,\ldots > 0}
  \end{align*} 
\end{definition}
Any substitution  $\sigma \in \indices{\fn{P}}$ turns an \HO process $P$ into an indexed process $P\sigma$.

\paragraph{Correlated values.}
The main ingredient in defining the relation $\relS$ is the the set $\Cb{\tilde W}{\tilde x}{P}$, which contains processes \emph{correlated} to process $P$ with a substitution $\subst{\tilde W}{\tilde x}$.
The substitution, as discussed above, denotes previously received values, and we assume that $\fv{P}=\wtd x$.
Essentially,
$\Cb{-}{-}{-}$ computes a breakdown of $P\subst{\tilde W}{\tilde x}$ in parallel with an activating trio, that mimics the original actions of $P$ up to transitions on propagators.
The activating trio propagates not the original values ${\tilde W}$, but the values related to ${\tilde W}$.
To do that we introduce the set $\Cb{-}{-}{V}$ of correlated values and the relation $\vrelate$ on values, which are defined mutually recursively in the three following definitions.

\begin{definition}[Broken down values]
  \label{mst:def:valuesset}
  Given a value $V$, the set $\Cb{}{}{V}$ is defined as follows:
   $$\Cb{}{}{V} = \bigcup \big\{\Cb{\tilde W}{\tilde x}{V'} : V = V' \subst{\tilde W}{\tilde x} \mbox{ and $V'$ is not a variable}\big\}$$
We extend $\Cb{}{}{-}$ to work on a list of values $\wtd V$ component-wise,
that is: 
$$\Cb{}{}{V_1,\ldots,V_n} = \{B_1,\ldots,B_n : B_i \in \Cb{}{}{V_i} \ \text{for} \ i \in 1\ldots n\}.$$
\end{definition}
\noindent This way, the elements in $\Cb{}{}{V}$ differ in the propagated values $\wtd W$. Consider the following example:
\begin{example} Let $V = \abs{y}\bout{y}{V_1}\bout{y}{V_2}\inact$. 
  There are four possibilities of $V'$, $\wtd W$, and $\wtd x$ 
  such that $V = V'\subst{\tilde W}{\tilde x}$. 
  That is, 
  \begin{itemize}
      \item 
  $V = V^1\subst{V_1 V_2}{x_1 x_2}$ 
  where $V_1 = \abs{y}\bout{y}{x_1}\bout{y}{x_2}\inact$
  \item 
  $V = V^2\subst{V_1}{x_1}$ where $V^2 = \abs{y}\bout{y}{x_1}\bout{y}{V_2}\inact$
  \item 
 $V = V^3\subst{V_2}{x_2}$ where 
  $V^3 = \abs{y}\bout{y}{V_1}\bout{y}{x_2}\inact$ 
  
  \item Finally, we can take the identity substitution $\wtd W = \epsilon$ and 
  $\wtd x = \epsilon$.
  
\end{itemize}
  Thus, we have 
  $\Cb{}{}{V} = \big\{\Cb{V_1 V_2}{x_1 x_2}{V^1},\ \Cb{V_1}{x_1}{V^2},\
  \Cb{V_2}{x_2}{V^3},\ \Cb{\epsilon}{\epsilon}{V} \big\}$. 
\end{example}
\begin{definition}
  \label{mst:def:valuessetsubst}
  Given a value $V$, the set $\Cb{\tilde W}{\tilde x}{V}$, where  $\fn{V} = \wtd x$ is defined as follows:
  $$\Cb{\tilde W}{\tilde x}{V} = \big\{\V{k}{\tilde x}{V}\subst{\tilde B}{\tilde x}\mid \wtd W \vrelate \wtd B \big\}.$$
\end{definition}
\begin{definition}[Relating values]
  \label{mst:def:vrelate}
  The relation $\vrelate$ on  values (with indexed names) is defined as follows:
$$
V_1 \vrelate V_2 \iff
\thesisalt{
\begin{cases}
  \exists V'_1,\,\sigma \in \indices{\fn{V'_1}}.\ V_1 = V'_1\sigma \wedge V'_1 \valuesrelate V_2 & \mbox{ if $V_1$ is a characteristic or a trigger value}\\
  V_2 \in \Cb{}{}{V_1} & \mbox{ otherwise. }
\end{cases}
}
{
  \begin{cases}
    \exists V'_1,\,\sigma \in \indices{\fn{V'_1}}.\ V_1 = V'_1\sigma \wedge V'_1 \valuesrelate V_2 & 
    \ 
    \begin{tabular}{l}
    \text{if $V_1$ is a characteristic} 
    \\ 
    \text{or a trigger value}
    \end{tabular} 
    \\
    V_2 \in \Cb{}{}{V_1} & \ 
    \begin{tabular}{l}
      \text{otherwise. }
    \end{tabular} 
  \end{cases}
}
$$
where $\valuesrelate$ is the relation from \Cref{mst:def:valuesrelate}.
\end{definition}
Thus, in the definition of $\Cb{\tilde W}{\tilde x}{V}$, the value $V$ is related to the triggered break down values with $\wtd B$ substituted for $\wtd x$ such that $\wtd W \vrelate \wtd B$.

Additionally, 
to define $\Cb{\tilde W}{\tilde x}{-}$ for processes, 
we have to observe the behavior of processes enclosed in the received trigger and characteristic values. 
Further, we have to observe the behavior of 
trigger processes of shape $\htrigger{t}{V}$. 
For this we need to define a relation $\processrelate$ on processes that contains pairs
$$(\mapchar{C}{x},~\mapcharm{C}{x}_1), 
(\binp{t}{y}\appl{y}{x},~\binp{\tni}{y}\appl{y}{\wtd x}), 
(\htrigger{t}{V},~\htrigger{\tni}{W})$$ 
 where $x:C$ and $\len{\wtd x} = \len{\Gt{C}}$  and $V \vrelate W$. 

Before we define $\processrelate$ we need the following auxiliary definition:
\begin{definition}[Relating names]
  \label{mst:def:namesrelate}
  We define $\processrelate$ as the relation on names defined as
  $$
  \frac{}
  {\epsilon \processrelate \epsilon} 
  \qquad 
  \frac{\Gamma;\Lambda;\Delta \proves n_i \hastype C}
  {n_i \processrelate (n_i,\ldots,n_{i+\len{\Gt{C}}-1})} 
  \qquad 
   \frac{\tilde n \processrelate \tilde m_1 \ 
  \quad  n_i \processrelate \tilde m_2}
  {\tilde n,n_i \processrelate \tilde m_1, \tilde m_2} 
  $$
  \noindent where $\epsilon$ denotes the empty list. 
\end{definition}
\noindent Now, we are ready to relate processes, modulo indexed names
(cf. \Cref{mst:def:indexedname}), using the relation $\processrelate$ defined as follows: 
\begin{definition}[$\processrelate$ Indexed process relation]
We define the relation $\processrelate$ as
  \label{mst:def:valuesprelation}
  \begin{mathpar}
		\inferrule[\eltsrule{IPApp}] {
      V \vrelate W \and 
      x_i \processrelate \tilde x
		}{
			\appl{V}{x_i} \processrelate \appl{W}{\tilde x}
    } 
    \qquad 
    \inferrule[\eltsrule{IPPar}] {
      P \processrelate P' \and Q \processrelate Q'
		}{
			P \Par Q \processrelate P' \Par Q'
    }
    \\
    \inferrule[\eltsrule{IPInact}] {
    }{\inact \processrelate \inact} 
    \qquad
  \inferrule[\eltsrule{IPNews}] { P \processrelate P' \and \tilde m_1 \processrelate \tilde m_2 
    }{\news{\tilde m_1}P \processrelate \news{\tilde m_2}P' } 
    \\
    \inferrule[\eltsrule{IPSnd}]
    {P\sigma \processrelate P' \and V\sigma \vrelate W \and \sigma = \subsqn{n_i}}
    {\bout{n_i}{V}P \processrelate \bout{n_i}{W}P'}
    \qquad 
    \inferrule[\eltsrule{IPRcv}] {
      P\sigma \processrelate P' \and \sigma = \subsqn{n_i}
    }{\binp{n_i}{y}P \processrelate \binp{n_i}{y}P'}  
%
  \end{mathpar}
\end{definition}
We can now show the property that we wanted, namely that: the bodies of trigger values and minimal trigger values (\Cref{mst:d:mtv}) are related; the bodies of characteristic values and minimal characteristic values (\Cref{mst:d:mcv}) are related; and that the trigger processes and minimal trigger processes 
(\Cref{mst:def:trigger}) are related, with appropriate name substitutions.
\begin{lemm}
We have:
  $$\big\{(\mapchar{C}{x}\subst{x_i,\tni}{x,t},\ 
  \mapcharm{C}{x}_i),~ 
  (\binp{\tni}{y}\appl{y}{x} \subst{x_i}{x},\
  \binp{\tni}{y}\appl{y}{\wtd x}),~ (\htrigger{t}{V}\sigma,\ 
  \htrigger{\tni}{W}) \big\}
  \subset \processrelate$$
  where $i,j>0$, $x:C$, $\wtd x = (x_i, \ldots,x_{i + \len{\Gt{C}}-1})$, 
   $\sigma \in \indices{\wtd u}$,
   $\wtd u = \fn{V}$, and 
   $V\sigma \vrelate W$. 
\label{mst:lemm:processrelate-triggers}
\end{lemm}
\begin{proof}[Proof (Sketch)]
  We may notice that $\processrelate$ relates process
  up to incremented indexed names and values related by $V\sigma \vrelate W$
  for some $\sigma$. More precisely, free names as subject of actions are indexed and incremented 
  accordingly in 
  a related process, and names as objects of output actions 
  are broken down in a related process, by $V\sigma \valuesrelate W$  
  when $V\sigma = m_i$, that is $m_i \valuesrelate \tilde m$ 
  where $m_i:C$ and $\tilde m = (m_i, \ldots,m_{i + \len{\Gt{C}}-1})$. 

  For the first pair $(\mapchar{C}{x}\subst{x_i, t_1}{x, t},~\mapcharm{C}{x}_i)$ by 
  inspection of 
  \Cref{mst:def:indexedname} we can observe that $\mapcharm{C}{x}_i$
  is essentially $\mapchar{C}{x}$ with its subject names indexed and 
  incremented (starting with 
  index $i$) and objects names broken down. Thus, it is contained in 
  $\processrelate$. 
  Similarly, $(\binp{t}{y}\appl{y}{x}\subst{x_i}{x},~
  \binp{\tni}{y}\appl{y}{\wtd x})$ is contained by observing that 
  $x_i \processrelate \wtd x$. 
  Finally, for $(\htrigger{t}{V}\sigma,~ \htrigger{\tni}{W})$,
  by \Cref{mst:def:trigger}, $V \sigma \vrelate W$. 
\end{proof}

\paragraph{Correlated Processes.}
Finally, we can use the introduced notions to define the set $\Cb{-}{-}{-}$ of correlated processes.
As mentioned, the set $\Cb{\tilde W}{\tilde x}{P}$ contains processes correlated to process $P$ with a substitution $\subst{\tilde W}{\tilde x}$.
The definition of $\Cb{-}{-}{-}$ is given in \Cref{mst:t:tablecd}.
Before looking into the details, we first describe how the $\Cb{-}{-}{-}$ is used.

We introduce auxiliary notions for treating 
free (tail-recursive) names in processes.
\begin{definition}[Auxiliary Notions]
  \label{mst:def:rec-providers}
  
  Let $P$ be an \HO process. 
\begin{itemize}

\item      We write $\fpn{P}$ to denote the set of free propagator names in $P$.

\item 
  We define $\mathtt{rfv}(P)$ to denote free tail-recursive names in 
  values in $P$. 

\item
  We define $\fcr{P}$ to denote free names of form $\prop^r$ in $P$.
\item 
  \label{mst:d:rfni}

  We define $\rfni{P}$ such that $r \in \rfni{P}$ if and only if 
  $(r_i, \ldots, r_{j}) \subseteq \rfn{P}$ for some $i, j > 0$. 
  
  	\item 
	Given  $r:S$ and $\wtd r = (r_1,\ldots,r_{\len{\Gt{S}}})$, 
  we write $\Rb{\tilde {v}}$ to denote the process 
  $$\Rb{\tilde {v}} = \prod_{r \in \tilde{v} } 
  \recprov{r}{x}{\wtd r}
  $$

  \end{itemize}
\end{definition}
\begin{definition}[Relation $\relS$]
  \label{mst:d:relation-s}
Let $P\subst{\tilde W}{\tilde x}$ be a well-typed process 
such that $\fn{P} \cap \fn{\wtd W}=\emptyset$, and let the
$\mathcal{C}$-set be as in \Cref{mst:t:tablecd}. 
We define the relation $\relS$ as follows: 
\begin{align*}
  \relS &= \big\{\big(P\subst{\tilde W}{\tilde x}, 
  \news{\wtd \prop_r} \news{\wtd \prop} R\big) :\ 
  R \in \Cb{\tilde W\sigma}{\tilde x}{P\sigma} 
  \\  
  & \qquad \text{with }
    \wtd u = \fn{P\subst{\tilde W}{\tilde x}},\ \sigma \in \indices{\wtd u},\ 
   \wtd \prop_r = \fcr{R},\ \wtd \prop = \fpn{R} \big\} 
\end{align*}
\end{definition}

Now we describe the definition of $\Cb{-}{-}{-}$ in \Cref{mst:t:tablecd}.
Essentially, $\Cb{-}{-}{-}$ computes a breakdown of $P\subst{\tilde W}{\tilde x}$ in parallel with an activating trio, that mimics the original actions of $P$ up to transitions on propagators.
This is done with the help of $\Db{-}{-}{-}$ (also given in \Cref{mst:t:tablecd}), which computes a closure of a process with respect to $\tau$-transitions on propagators.

To define the $\mathcal{C}$-set we distinguish processes that do not appear in the given process, but that are composed in parallel by the clauses of MST bisimilarity (\Cref{mst:d:fwb}).
For this we use the following notions:
\begin{definition}[Trigger Collections]
 \label{mst:d:triggerscollection}
 We let $H, H'$ to range over \emph{trigger collections}: processes of the form $P_1 \Par \cdots \Par P_n$ (with $n \geq 1$), 
 where each $P_i$ is a trigger process or a process that originates from a trigger or from a characteristic value. 
\end{definition}

\begin{example}
 Let $H_1 = \htrigger{t}{V} \Par \mapchar{C}{u} \Par 
 \bout{t'}{n}\inact$  
 where $t,t',u,n$ are channel names, 
 $V$ is a value, and
 $C$ a channel type. Then,
 we could see that $\bout{t'}{n}\inact$ 
 originates from a characteristic value. Thus, $H_1$ is a trigger collection.
\end{example}

Notice that we write $P$ to denote a ``pure'' process that is not composed with a trigger collection.
For processes with trigger collections, the following notation is relevant:

\begin{definition}[Process in parallel with a trigger or a characteristic process]
   \label{mst:def:parallel}
We write 
   $P \parallel Q$ to stand for $P \Par Q$ 
   where either $P$ or $Q$ is a trigger collection.
\end{definition}

\thesisalt{
  \begin{table}[!t]
    \begin{tabular}{ |l|l|l|}
      \rowcolor{gray!25}
      \hline
      $P$ &
        \multicolumn{2}{l|}{
      \begin{tabular}{l}
        \noalign{\smallskip}
        $\Cb{\tilde W}{\tilde x}{P}$
        \smallskip
      \end{tabular} 
    } 
         \\
      \hline
    $Q_1 \parallel Q_2$ 
    &
      \begin{tabular}{l}
      \noalign{\smallskip}
         $\big\{ R_1 \parallel R_2: 
         R_1 \in \Cb{\tilde W_1}{\tilde y}{Q_1}, 
         R_2 \in \Cb{\tilde W_2}{\tilde w}{Q_2}
         \big\}$
         \smallskip 
      \end{tabular}
      &
      \begin{tabular}{l}
        \noalign{\smallskip}
        $\wtd y = \fv{Q_1}$, $\wtd w = \fv{Q_2}$ \\
        $\subst{\tilde W}{\tilde x} = 
        \subst{\tilde W_1}{\tilde y} \cdot \subst{\tilde W_2}{\tilde w}$ 
        \smallskip
      \end{tabular}
      \\
     \hline
    $\news{m:C}\,Q$~
    &
      \begin{tabular}{l}
      \noalign{\smallskip}
         $\left\{\news{\wtd m : \Gt{C}} \news{\tilde \prop^m}{\,R} : 
          R \in 
         \Cb{\tilde W}{\tilde x}{Q\sigma} \right\}$
         \smallskip 
      \end{tabular}
      &
      \begin{tabular}{l}
        \noalign{\smallskip}
        $\widetilde m = (m_1,\ldots,m_{\len{\Gt{C}}})$
        \\
        $\sigma = \subst{m_1 \dual{m_1}}{m \dual{m}}$
        \\
        $\tilde \prop^m = \linecondit{\tr(C)}{\prop^m \cdot \prop^{\dual m}}{\epsilon}$
        \smallskip
      \end{tabular}
      \\
     \hline
    $Q$
    &
      \begin{tabular}{l}
      \noalign{\smallskip}
      $\big\{ \Rb{\tilde v} \Par \apropout{k}{\wtd B}  \Par \B{k}{\tilde x}{P} \}$
      \\  
              $\cup$ 
            $\big\{ \Rb{\tilde v \setminus \tilde r} \Par R: 
            R \in \Db{\tilde W}{\tilde x}{P},~ 
            \wtd r = \rfni{R} \big\}$
         \smallskip 
      \end{tabular}
      &
      \begin{tabular}{l}
        \noalign{\smallskip}
        $\wtd W \vrelate \wtd B$ \\ 
        $\wtd v = \rfn{P\subst{\tilde W}{\tilde x}}$ 
        \smallskip
      \end{tabular}
      \\
     \hline
    $H$
    &
      \begin{tabular}{l}
      \noalign{\smallskip}
      $\big\{ \Rb{\tilde v} \parallel H' : H\subst{\tilde W}{\tilde x} 
      \processrelate H'  \big\}$
         \smallskip 
      \end{tabular}
      &
      \begin{tabular}{l}
        \noalign{\smallskip}
        $\tilde v= \rfn{H\subst{\tilde W}{\tilde x}}$
        \smallskip
      \end{tabular}
      \\
      \hline
      \hline
      \rowcolor{gray!25}
      $P$ &
        \multicolumn{2}{l|}{
      \begin{tabular}{l}
        \noalign{\smallskip}
        $\Db{\tilde W}{\tilde x}{P}$
        \smallskip
      \end{tabular}
    }

    \\
      \hline

    $\bout{u_i}{V_1}Q$
    &
      \begin{tabular}{ll}
      \noalign{\smallskip}
      $\bullet~\lnot$$\tr(C)$: &
      \\
            \quad  $\big\{\bout{u_i}{V_2}
             \apropout{k}{\widetilde B_2} \Par 
        \B{k}{\tilde z}{Q\sigma} \big\}$ 
        & 
        \smallskip
        \\
        \hdashline 
        \noalign{\smallskip}
        $\bullet~\tr(C)$: & 
        \\
              \quad $\big\{ \abbout{\prop^u}{M^{\tilde B_2}_{V_2}}   \Par$ 
              $\B{k}{\tilde w}{Q},\ 
               \appl{M^{\tilde B_2}_{V_2}}{\wtd u} \Par \B{k}{\tilde w}{Q},$ 
               & 
               \\ 
              \qquad $ {\bbout{u_{\indT{S}}}{V_2}}{} 
              (\apropout{k} {\wtd B_2} \Par 
              \recprovx{u}{x}{\wtd u}
          ) 
          \Par \B{k}{\tilde w}{Q} \big\}$ 
              \smallskip
              \\
              \quad where:
              \\
              \quad\quad$M^{\tilde B}_V = \abs{\wtd z}
                {\bbout{z_{\indT{S}}}{V}}{}$  
                $(\apropout{k}{\wtd B} \Par 
                \recprovx{u}{x}{\wtd z}
            )$ 
            \smallskip
      \end{tabular}
      &
      \begin{tabular}{l}
        \noalign{\smallskip}
        $\wtd y = \fv{V_1}$, $\wtd w = \fv{Q}$ \\
        $\subst{\tilde W}{\tilde x} = 
        \subst{\tilde W_1}{\tilde y} \cdot \subst{\tilde W_2}{\tilde w}$ \\
        $\sigma = \subsqn{u_i}$ \\
        $V_1 \sigma \subst{\tilde W_1}{\tilde y} \vrelate V_2$
        \thesisalt{,}{\\} 
        $\wtd W_2 \vrelate \wtd B_2$ \\
        $\wtd z = (z_1,\ldots, z_{\len{\Rts{}{s}{S}}})$ \\
        $\wtd u = (u_1,\ldots, u_{\len{\Rts{}{s}{S}}})$ 
        \smallskip
      \end{tabular}
        \\
      \hline

    $\binp{u_i}{y}Q$
    &
      \begin{tabular}{ll}
      \noalign{\smallskip}
      $\bullet~\neg$$\tr(C)$: & 
      \\
             \quad $\big\{\binp{u_i}{y}\apropout{k}{\wtd By} 
             \Par 
             \B{k}{\tilde xy}{Q\sigma} \big\}$
             & 
        \smallskip 
        \\
        \hdashline 
        \noalign{\smallskip}
         $\bullet~\tr(C)$: & \\
                \quad $\big\{  \abbout{\prop^u}
                {M^{\tilde B}_y} 
                \Par \B{k}{\tilde xy}{Q},\
                 \appl{M^{\tilde B}_y}{\wtd u} \Par \B{k}{\tilde xy}{Q},$ 
                 & 
                 \\
             \qquad  ${ \binp{u_{\indT{S}}}{y}
             (\apropout{k}{\wtd By} \Par 
             \recprovx{u}{x}{\wtd u}
              )
              \Par \B{k}{\tilde xy}{Q}} \big\}$ 
                \\
                \quad where:
                \\
                \qquad $M^{\tilde B}_y = 
                \abs{\widetilde z}{\binp{z_{\indT{S}}}{y}
                (\apropout{k}{\wtd By} \Par 
                \recprovx{u}{x}{\wtd z}
                )}$
                \smallskip 
      \end{tabular}
      &
      \begin{tabular}{l}
        \noalign{\smallskip}
        $\wtd W \vrelate \wtd B$ \\
        $\sigma = \subsqn{u_i}$ \\
        $\wtd z = (z_1,\ldots,z_{\len{\Rts{}{s}{S}}})$ \\
        $\wtd u = (u_1,\ldots,u_{\len{\Rts{}{s}{S}}})$ 
        \smallskip 
      \end{tabular}
      \\
      \hline
    $\appl{V_1}{(\wtd r, u_i)}$
    &
      \begin{tabular}{l}
        \noalign{\smallskip}
        $\big \{   \overbracket{\prop^{r_{l}}!\big\langle 
        \lambda \wtd z_{l}. \prop^{r_{{l}+1}}!\langle\lambda \wtd z_{{l}+1}.\cdots. 
     \prop^{r_n}!\langle \lambda \widetilde z_n.}^{|\tilde r| - {l} +1} 
     Q_l \rangle \,\rangle \big\rangle, $  \\
     \quad $  \lambda \wtd z_l.
     \overbracket{
         \prop^{r_{l+1}}!\langle\lambda \wtd z_{l+1}.\cdots. 
     \prop^{r_n}!\langle \lambda \wtd z_n.}^{|\tilde r| - l} 
     Q_l \rangle \, \big\rangle \ {\wtd r_l},$ \\
     \qquad $: 1 \leq l \leq n,\ V_1\subst{\tilde W}{\tilde x} \vrelate V_2 \big \} $ \\
    $\cup\, \{\appl{V_2}{\wtd r_1, \ldots, \wtd r_n, \wtd m}
    : V_1\subst{\tilde W}{\tilde x} \vrelate V_2 \}$ \\
    where:
    \\
     \quad $Q_l = \appl{V_2}{(\wtd r_1,\ldots,\wtd r_{l-1}, \wtd z_{l}, \ldots, 
     \wtd z_n, \wtd m)}$
    \smallskip
      \end{tabular}
      &
      \begin{tabular}{l}
        \noalign{\smallskip}
        $\forall r_i \in \widetilde r.(r_i: S_i \wedge \mathsf{tr}(S_i) \wedge$\\
          \thesisalt{\qquad}{\quad} $\wtd{z_i} = (z^i_1,\ldots,z^i_{\len{\Rts{}{s}{S_i}}}), $ \\
          \thesisalt{\qquad}{\quad} $\wtd{r_i} = (r^i_1,\ldots,r^i_{\len{\Rts{}{s}{S_i}}}) )$\\
          $u_i : C$ \\ 
          $\wtd m = (u_i, \ldots, u_{i+\len{\Gt{C}}-1})$
          \smallskip
      \end{tabular}
        \\
      \hline
    $Q_1 \Par Q_2$
    &
      \begin{tabular}{l}
      \noalign{\smallskip}
         $\big\{ \propout{k}{\wtd B_1}
         \apropout{k+l}{\wtd B_2} \Par \B{k}{\tilde y}{Q_1}
          \Par 
         \B{k+l}{\tilde z}{Q_2} \big\} $ \\
         $\cup$ \\
         $ \big\{(R_1 \Par R_2) : R_1 \in
         \Cb{\tilde W_1}{\tilde y}{Q_1}, R_2 \in 
         \Cb{\tilde W_2}{\tilde z}{Q_2} \big\}$ 
         \smallskip 
      \end{tabular}
      &
      \begin{tabular}{l}
        \noalign{\smallskip}
          $l = \plen{Q_1}$
          \thesisalt{,}{\\} 
          $\wtd W_1 \vrelate \wtd B_1$, $\wtd W_2 \vrelate \wtd B_2$ \\
        $\wtd y = \fv{Q_1}$, $\wtd z = \fv{Q_2}$  \\
        $\subst{\tilde W}{\tilde x} = \subst{\tilde W_1}{\tilde y} \cdot \subst{\tilde W_2}{\tilde z}$
        \smallskip 
      \end{tabular}
      \\ 
      \hline 
    $\inact$
    &
      \begin{tabular}{l}
      \noalign{\smallskip}
         $\inact$ 
         \smallskip 
      \end{tabular}
      &
      \begin{tabular}{l}
        \noalign{\smallskip}
        \smallskip 
      \end{tabular}
      \\
      \hline
    \end{tabular}
    
    \caption{The sets $\Cb{\tilde W}{\tilde x}{P}$ and $\Db{\tilde W}{\tilde x}{P}$. \label{mst:t:tablecd}}
  \end{table}}
{\begin{table}[!t]
  \resizebox{1.01\textwidth}{!}{
  }
    \end{table}}

Now we can describe all the cases in the definitions of 
the $\Dset$-set and the $\mathcal{C}$-set in
\Cref{mst:t:tablecd} (Page~\pageref{mst:t:tablecd}).
Observe that the second and third columns in \Cref{mst:t:tablecd}  are closely
related: the third column lists side conditions for the definitions in the
second column. 
Note that in each case we assume the substitution $\rho = \subst{\tilde W}{\tilde x}$. 
We start with the cases for $\Cb{\tilde W}{\tilde x}{P}$:
\begin{description}
\item[Parallel with a trigger collection:]
The $\mathcal{C}$-set of $Q_1 \parallel Q_2$ is defined as:
\begin{align*}
 \{ R_1 \parallel R_2: 
R_1 \in \Cb{\tilde W_1}{\tilde y}{Q_1},\  
R_2 \in \Cb{\tilde W_2}{\tilde w}{Q_2}
\}
\end{align*}
\noindent By \Cref{mst:def:parallel},  either $Q_1$ or $Q_2$ is a trigger collection. 
Notice that a composition $Q_1 \Par Q_2$ (where both $Q_1$ and $Q_2$ are ``pure'')
 is handled by $\Db{}{}{-}$, see below.
We treat $Q_1 \parallel Q_2$  compositionally: we split the
substitution into parts concerning $Q_1$ and $Q_2$, i.e.,  
$\subst{\tilde W}{\tilde x} =
\subst{\tilde W_1}{\tilde y} \cdot \subst{\tilde W_2}{\tilde w}$
 such that $\wtd y
= \fv{Q_1}$ and $\wtd w = \fv{Q_2}$, and relate it to a parallel composition whose
components come from a corresponding $\mathcal{C}$-set.
\item[Restriction:] The $\mathcal{C}$-set of $\news{m:C}Q$
is inductively defined as: 
\begin{align*}
  \left\{\news{\wtd m : \Gt{C}}{\,R} : 
     \news{\tilde \prop^m} R \in 
     \Cb{\tilde W}{\tilde x}{Q\sigma} \right\}
\end{align*}
  \noindent where $\sigma = \subst{m_1 \dual{m_1}}{m \dual{m}}$ and $\wtd m =
  (m_1,\ldots,m_{\len{\Gt{C}}})$ is the decomposition of $m$ under $C$. The
  elements are processes from the $\mathcal{C}$-set of $Q$ with names $\wtd m$
  restricted. In the case when restricted name $m$ is a tail-recursive 
  then we also restrict the special propagator names $\prop^m$ and $\prop^{\dual m}$
  which appear in $R$.
  Notice that the processes of the form $\news{m}(Q_1
  \parallel Q_2)$, which are induced by the output clause of MST bisimilarity, are treated in this case in the definition of $\Cb{}{}{-}$.

    \item[Pure process:] The $\mathcal{C}$-set of a pure process $Q$ is defined as
    follows: 
    \begin{align*}
      \begin{tabular}{l}
      $\big\{ \Rb{\tilde v} \Par \apropout{k}{\wtd B} 
       \Par \B{k}{\tilde x}{Q} : \wtd W \vrelate \wtd B \big\} \cup
            \big\{ \Rb{\tilde v \setminus \tilde r} \Par R: 
            R \in \Db{\tilde W}{\tilde x}{Q},~ 
            \wtd r = \rfni{R} \big\}$
      \end{tabular}
  \end{align*}
	\noindent where $\wtd v = \rfn{Q\subst{\tilde W}{\tilde x}}$. 
	The elements in the first set are
    essentially the decomposition of $Q$ (without restrictions of recursive
    propagators, which are handled in $\relS$) up to 
    different possibilities of values $\wtd B$ that are $\vrelate$-related 
    to $\wtd W$ (see \Cref{mst:def:vrelate}). Here, we remark that 
   $\Rb{\tilde v}$ is 
    recursive name providers for all tail-recursive names of $Q$
    and $\wtd W$ (by $\wtd v = \rfn{Q\subst{\tilde W}{\tilde x}}$). 
    The second set contains elements of the $\Dset$-set of $Q$ in
    parallel with $\Rb{\tilde v \setminus \tilde r}$ where $\wtd r =
    \rfni{R}$. 
    By \Cref{mst:d:rfni} we can see that $\rfni{R}$ denotes tail-recursive names
    already gathered in $R$ by communications that consumed
     $\Rb{\tilde r}$ : thus, we have $\Rb{\tilde v \setminus \tilde r}$ 
     as providers at top level.

     In this sense, the processes from the second set can be seen as reducts of the processes from the first set.
     For example, if we examine the $\mathcal{C}$-set corresponding to the process $P_2$ from \Cref{mst:fig:exampletransitions}, we note that the process $Q_2$ belongs to the first set, and the processes $Q'_2$ and $Q''_3$ belong to the second set.

   \item[Trigger collection:] The $\mathcal{C}$-set of a trigger collection $H$
     contains its minimal counterparts, defined using the relation  $\processrelate$ (\Cref{mst:def:valuesprelation}): 
    \begin{align*}
    \big\{ \Rb{\tilde v} \parallel H' : H\subst{\tilde W}{\tilde x} 
  \processrelate H'  \big\}. 
    \end{align*}
    \noindent where $\wtd v = \rfn{H\subst{\tilde W}{\tilde x}}$.
    In this case we do not use the information on the substitution $\subst{\tilde W}{\tilde x}$,
    because the substitution information is needed for values that are, or were, propagated.
    However, because $H$ is a trigger collection, it will only contain propagators as part of values.
    The substitutions related the propagators in values are already handled by the relation $\vrelate$, invoked by $\processrelate$.
    As in the case with pure processes, the process $\Rb{\tilde r}$ is the recursive names provider for the tail-recursive names of $H$.
  \end{description}
\bigskip
\noindent We now discuss the cases for $\Db{\tilde W}{\tilde x}{P}$:
\begin{description}
  \item[Output:] 
  The  $\Dset$-set of $\bout{u_i}{V_1}Q$ depends on whether 
  (i)~$u_i$ is linear or shared name (i.e., $\neg \tr(u_i)$) or 
  (ii)~$u_i$ is a tail-recursive name (i.e., $\tr(u_i)$). 
  In sub-case (i) $\Dset$-set is defined as follows: 
    \begin{align*}
      \begin{tabular}{l}
        \noalign{\smallskip}
               $\big\{\bout{u_i}{V_2}
               \apropout{k}{\widetilde B_2} \Par 
          \B{k}{\tilde z}{Q\sigma}:$ 
          $V_1\sigma \subst{\tilde W_1}{\tilde y} \vrelate V_2,\ 
          \wtd W_2 \vrelate \wtd B_2 \big\}$
          \smallskip
      \end{tabular}
    \end{align*}
    \noindent where $\sigma = \subsqn{u_i}$. By the definition,
     the substitution $\sigma$ depends on whether $u_i$ is linear or shared: in
    the former case, we use a substitution that increments $u_i$; in the latter
    case we use an identity substitution. 
    We split $\wtd W$ into $\wtd W_1$ and $\wtd W_2$, associated to the emitted value
    $V_1$ and the continuation $Q$, respectively.

    Instead of the emitted value $V_1$ we consider values $V_2$ that are $\vrelate$-related to $V_1\sigma\subst{\tilde W_1}{\tilde y}$.
    This way, we uniformly handle cases when (i)~$V_1$ is a pure value, (ii)~variable, and (iii)~a characteristic value.
    In particular, if $V_1$ is a pure value, the set $\Cb{\tilde W_1}{\tilde y}{V_1\sigma}$ is included in all the values $\vrelate$-related to $V_1\sigma\subst{\tilde W_1}{\tilde y}$.

    Further, the propagator $\prop_k$ actives the next trio with the values $\wtd B_2$
    such that $\wtd W_2 \vrelate \wtd B_2$: as $\wtd W_2$ denotes 
    previously received values, we take a context of $\vrelate$-related values.
    Again, received values could be either trigger and characteristic values
    (required to be observed by MST bisimilarity, cf. \Cref{mst:d:fwb}) or
    pure values originated from internal actions. Again, by $\vrelate$
    (\Cref{mst:def:vrelate}) we account for both cases.

    In sub-case (ii), when $u_i$ is a tail-recursive name, the elements are as follows: 
    \begin{align*}
      &\big\{ \abbout{\prop^u}{M^{\tilde B_2}_{V_2}}   \Par
          \B{k}{\tilde w}{Q},\ 
           \appl{M^{\tilde B_2}_{V_2}}{\wtd u} \Par \B{k}{\tilde w}{Q},\ 
           {\bbout{u_{\indT{S}}}{V_2}}{} 
          (\apropout{k} {\wtd B_2} \Par 
          \recprovx{u}{x}{\wtd u}
          ) 
      \Par \B{k}{\tilde w}{Q} \\
          &\qquad : V_1 \subst{\tilde W_1}{\tilde y} \vrelate V_2,\ 
          \wtd W_2 \vrelate \wtd B_2 \big\} \\
          &  \text{where }  
          M^{\tilde B}_{V} = \abs{\wtd z}
          {\bbout{z_{\indT{S}}}{V}}{}  
          (\apropout{k}{\wtd B} \Par 
          \recprovx{u}{x}{\wtd z}
          )
    \end{align*} 

    The first element is a process obtained by the activation from the preceding
    trio. The second element is a result of a communication of the first element
    with top-level provider $\Rb{u_i}$ (\Cref{mst:def:rec-providers}) on channel  $\prop^u$.
    By this synchronization, the decomposition of recursive name $u$, that is 
    $\wtd u$, is gathered in application $\appl{M^{\tilde B_2}_{V_2}}{\wtd u}$.  
%
%
    Finally, the third
    element represents the result of the application: it is a process ready to
    mimic the original output action on $u_{\indT{S}}$. Differently from sub-case
    (i), here we do not have to increment index of $u_i$ in
    $Q$ and $V_1$ as indices of recursive names are obtained based on the type
    $S$, that is $\indT{S}$. 

  \item[Input:] 
  The $\Dset$-set of $\binp{u_i}{y}Q$ depends on whether 
  (i)~$u_i$ is linear or shared name (i.e., $\neg \tr(u_i)$) or 
  (ii)~$u_i$ is a tail-recursive name (i.e., $\tr(u_i)$). 
  In both sub-cases $\Dset$-set is defined similarly to 
  the output case, with only one caveat: we need to expand the context for the continuation with a newly received value $y$. 
  The $\Dset$-set in sub-case (i) is defined as follows: 
   \begin{align*}
    \begin{tabular}{l}
      \noalign{\smallskip}
             $\big\{\binp{u_i}{y}\apropout{k}{\wtd By} 
             \Par 
             \B{k}{\tilde xy}{Q\sigma} :$ 
             $\wtd W \vrelate \wtd B \big\}$
        \smallskip 
      \end{tabular}
  \end{align*}
\noindent where $\sigma = \subsqn{u_i}$.
  The $\Dset$-set in sub-case (ii) is defined as follows: 
  \begin{align*}
    \begin{tabular}{l} 
      \noalign{\smallskip}
     $\big\{  \abbout{\prop^u}
            {M^{\tilde B}_y} 
            \Par \B{k}{\tilde xy}{Q},\
             \appl{M^{\tilde B}_y}{\wtd u} \Par \B{k}{\tilde xy}{Q},\ 
             { \binp{u_{\indT{S}}}{y}
             (\apropout{k}{\wtd By}
          \Par 
          \recprovx{u}{x}{\wtd z}
          )
          \Par \B{k}{\tilde xy}{Q}} : \wtd W \vrelate \wtd B \big\}$ 
            \\
            where:
            \\
            $M^{\tilde B}_y = 
            \abs{\widetilde z}{\binp{z_{\indT{S}}}{y}
            (\apropout{k}{\wtd By} \Par 
            \recprovx{u}{x}{\wtd z}
            )}.$
            \smallskip 
    \end{tabular} 
  \end{align*} 
  The elements of the set represent steps of obtaining 
  name $u_{\indT{S}}$, along which the original action is mimicked, 
   by synchronizing with the
  top-level provider $\Rb{u_i}$, obtained in the corresponding $\mathcal{C}$-set.

  \item[Application:] The $\Dset$-set of ${\appl{V_1}{(\wtd r, u_i)}}$
    where  $\wtd r$ are tail-recursive names, is a union of two sets as follows: 
    \begin{align*}
              & \big\{   \overbracket{\prop^{r_{l}}!\big\langle 
              \lambda \wtd z_{l}. \prop^{r_{{l}+1}}!\langle\lambda \wtd z_{{l}+1}.\cdots. 
          \prop^{r_n}!\langle \lambda \widetilde z_n.}^{|\tilde r| - {l} +1} 
          Q_l \rangle \,\rangle \big\rangle,\ 
           \\
             &
            \quad (\lambda \wtd z_l.
          \overbracket{
              \prop^{r_{l+1}}!\langle\lambda \wtd z_{l+1}.\cdots. 
          \prop^{r_n}!\langle \lambda \wtd z_n.}^{|\tilde r| - l} 
               Q_l \rangle \big\rangle) \ {\wtd r_l} 
          : 1 \leq l \leq n,
           ~~V_1\subst{\tilde W}{\tilde x} \vrelate V_2 \big\} 
          \\
          & \cup 
         \big\{\appl{V_2}{\wtd r_1, \ldots, \wtd r_n, \wtd m}
          : V_1\subst{\tilde W}{\tilde x} \vrelate V_2 \big\} \\
          & \text{where:} \\
          & \qquad Q_l = \appl{V_2}{(\wtd r_1,\ldots,\wtd r_{l-1}, \wtd z_{l}, \ldots, 
          \wtd z_n, \wtd m)}
    \end{align*}
    The first set contains intermediate processes emerging while collecting recursive
    names using synchronizations with  
    recursive name providers. We can see that the body of the inner-most abstraction,
    $Q_l$, is an application of $V_2$ (such that 
    $V_1\subst{\tilde W}{\tilde y} \vrelate V_2$) 
    to partially instantiated recursive names:  
    $l$ denotes that decompositions of first $l-1$ recursive names are
    retrieved. The final tuple in arguments of $Q_l$, $\wtd m = (u_i,\ldots,
    u_{i+\len{\Gt{C}} - 1})$, is a full decomposition of non-recursive   
    (linear or shared) name $u_i$.
    Just like in the previous cases, by taking   $V_2$ as  a 
    $\vrelate$-related value to $V_1\subst{\tilde W}{\tilde y}$, 
    we uniformly handle all the three possibilities for $V_1$ (pure value, variable,  
  and characteristic value).

    In the first set, the first element is a process is ready to send an
    abstraction to an appropriate name provider, in order to retrieve the decomposition of
    $l$-th recursive name.
    The second element is a process that results from a
    communication of the first element with a provider: an application which
    will instantiate $l$-th recursive name in $Q_l$. 
    Finally, the second set contains application processes in which
    the decompositions of all $n$ recursive names are gathered, 
    and it is ready to mimic the silent action (application reduction) of the original
    process.

  \item[Parallel composition:] The $\Dset$-set of $Q_1 \Par Q_2$ is defined using two sets:
  \begin{align*}
  \begin{tabular}{l}
    \noalign{\smallskip}
    $\big\{ \propout{k}{\wtd B_1}
    \apropout{k+l}{\wtd B_2} \Par \B{k}{\tilde y}{Q_1} \Par \B{k+l}{\tilde z}{Q_2} : $ 
    $\wtd W_1 \vrelate \wtd B_1$, $\wtd W_2 \vrelate \wtd B_2  \big\}$ \\ 
    $\cup$ \\
    $ \big\{(R_1 \Par R_2) : R_1 \in
    \Cb{\tilde W_1}{\tilde y}{Q_1}, R_2 \in 
    \Cb{\tilde W_2}{\tilde z}{Q_2} \big\}$ 
       \smallskip 
    \end{tabular}
  \end{align*}

  \noindent The first set contains a control trio that is ready to activate the
  decomposition of the two components in parallel.
  Just like in the other cases, the control trio propagates values that are $\vrelate$-related to ${\tilde W_1}$ and ${\tilde W_2}$.
  In order to close the set with
  respect to the $\tau$-actions on propagators, the second set  contains the
  composition of processes drawn from the $\mathcal{C}$-sets of $Q_1$ and
  $Q_2$, with appropriate substitutions. 
\end{description}

\subsection{Proving Operational Correspondence}
\label{mst:sec:opcorrproof}
Recall that we aim to establish \Cref{mst:t:dyncorr}. 
To that end, we prove that $\relS$ (\Cref{mst:d:relation-s}) is an MST bisimulation, by establishing two results:
\begin{itemize}
\item \Cref{mst:lemma:mstbs2} 
covers the case in which the given process performs an action,
which is matched by an action of the decomposed process.
In terms of operational correspondence (see, e.g.,~\cite{DBLP:journals/iandc/Gorla10}), this establishes \emph{completeness} of the decomposition.
\item \Cref{mst:lemma:mstbs3} 
covers the converse direction, in which the decomposed process performs an action, which is matched by the initial process.
  This established the \emph{soundness} of the decomposition.
\end{itemize}
  
For proving both operational completeness and soundness, we will need the following result.
Following Parrow~\cite{DBLP:conf/birthday/Parrow00},  we refer to prefixes 
that do not correspond to prefixes of the original process, i.e. prefixes on propagators $\prop_i$,
as \emph{non-essential prefixes}. 
Then the relation $\relS$ is closed under reductions that involve non-essential prefixes.
\begin{restatable}[]{lemm}{tauclosed}
  \label{l:c-prop-closed}
  Given an indexed process $P_1\subst{\tilde W}{\tilde x}$, 
  the set $\Cb{\tilde W}{\tilde x}{P_1}$ is closed under $\tau$-transitions on
  non-essential prefixes. That is, 
   if $R_1 \in \Cb{\tilde W}{\tilde x}{P_1}$ 
  and $R_1 \by{\tau} R_2$ is inferred from the actions on non-essential prefixes, 
  then $R_2 \in \Cb{\tilde W}{\tilde x}{P_1}$.
\end{restatable}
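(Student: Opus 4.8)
The plan is to prove \Cref{l:c-prop-closed} by induction on the structure of the process $P_1$, following the case analysis that defines $\Cb{-}{-}{-}$ and $\Db{-}{-}{-}$ in \Cref{t:tablecd}. The key observation is that a non-essential prefix is, by definition, a prefix on some propagator name $\prop_i$ (or special propagator $\prop^r$), so a $\tau$-transition $R_1 \by{\tau} R_2$ involving only non-essential prefixes must arise from a synchronization between an output $\apropout{k}{\wtd B}$ (or $\abbout{\prop^u}{N}$) and a matching input inside one of the trios, or from a $\beta$-reduction that consumes such a propagator communication. In each case, I must check that the resulting process $R_2$ still lies in $\Cb{\tilde W}{\tilde x}{P_1}$. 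The heart of the argument is that $\Db{-}{-}{-}$ is engineered precisely to be the closure of the breakdown under these propagator steps: the second set in the pure-process clause, $\{\Rb{\tilde v \setminus \tilde r} \Par R : R \in \Db{\tilde W}{\tilde x}{P},\ \wtd r = \rfni{R}\}$, collects exactly the reducts reachable by firing propagators.

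First I would handle the base case $P_1 = \inact$, where $\Db{}{}{\inact}=\inact$ admits no non-essential $\tau$-step, so closure is vacuous. For the inductive cases I would proceed clause by clause. The central cases are the output and input on a non-recursive name: here $\Cb{\tilde W}{\tilde x}{\bout{u_i}{V_1}Q}$ contains $\Rb{\tilde v} \Par \apropout{k}{\wtd B} \Par \B{k}{\tilde x}{\bout{u_i}{V_1}Q}$; firing the leading propagator $\prop_k$ yields a process of the form $\bout{u_i}{V_2}\apropout{k+1}{\wtd B_2} \Par \B{k+1}{\tilde z}{Q\sigma}$, which is exactly an element of $\Db{\tilde W}{\tilde x}{\bout{u_i}{V_1}Q}$ composed with the appropriate recursive providers, and hence lies in the second set of the $\mathcal{C}$-set. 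For the recursive ($\tr(u_i)$) sub-cases, the three listed elements of $\Db{-}{-}{-}$ form a linear chain under propagator/$\prop^u$ synchronization and the subsequent $\beta$-reduction, so each non-essential step maps one element of the chain to the next; I would verify that $\rfni{R}$ is tracked correctly so that $\Rb{\tilde v \setminus \tilde r}$ remains the right top-level provider after a name sequence $\wtd u$ has been consumed. The parallel, restriction, and trigger-collection cases follow by appealing to the induction hypothesis on the components, using the compositional structure of the set definitions; in particular, for $Q_1 \Par Q_2$ the first set's control trio reduces into elements of the second set, which are built from the $\mathcal{C}$-sets of the components.

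The application case requires the most care: $\Db{\tilde W}{\tilde x}{\appl{V_1}{(\wtd r, u_i)}}$ is a union of a family of partially-instantiated nested-output processes indexed by $1 \le l \le n$ plus the fully-applied form $\appl{V_2}{\wtd r_1,\ldots,\wtd r_n,\wtd m}$. A non-essential $\tau$-step here is a synchronization on some $\prop^{r_l}$ with the corresponding recursive provider $\recprovx{r_l}{x}{\wtd r_l}$ drawn from $\Rb{\tilde v}$, followed by the $\beta$-reduction that instantiates $\wtd z_l$ with $\wtd r_l$; this advances the index from $l$ to $l+1$, mapping one element of the union to the next, and the final propagator step produces the fully-applied form. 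I expect this case to be the main obstacle, because I must argue that the consumed provider is precisely accounted for by the bookkeeping $\wtd r = \rfni{R}$ and $\Rb{\tilde v \setminus \tilde r}$, so that no provider is spuriously lost or duplicated as the instantiation proceeds. Throughout, I would rely on the fact that the definitions in \tabref{t:tablecd} are closed under the value relation $\vrelate$ (\Cref{def:vrelate}), so that the $\vrelate$-related values carried by propagators are preserved across each step, and on \Cref{t:typecore} to ensure the intermediate processes remain well-typed with minimal session types.
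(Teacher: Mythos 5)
You follow the same route as the paper's own proof: structural induction on $P_1$ over the clauses of \tabref{t:tablecd}, showing that each non-essential $\tau$-step sends a first-set element into the $\Dset$-derived part of the $\mathcal{C}$-set, that the recursive-name and application cases form linear chains of provider synchronizations and $\beta$-reductions traversed one link per step, and that the compositional cases go by the induction hypothesis. In outline this is exactly the paper's argument, and it is the right strategy.

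Two steps would fail as written, though both are repairable. First, the base case: closure for $\inact$ is \emph{not} vacuous. You consider only $\Db{\tilde W}{\tilde x}{\inact} = \inact$, but by the pure-process clause $\Cb{\tilde W}{\tilde x}{\inact}$ also contains the first-set element $\Rb{\tilde v} \Par \apropout{k}{\wtd B} \Par \B{k}{\tilde x}{\inact}$, i.e.\ $\Rb{\tilde v} \Par \apropout{k}{\wtd B} \Par \propinp{k}{\wtd x}\inact$, which does perform a non-essential $\tau$-step on $\prop_k$; the paper's treatment of this case consists precisely of exhibiting that transition and checking that its target $\inact$ lies in the $\Dset$-part of the set. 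Second, the parallel case needs more than componentwise IH plus the observation that the control trio fires into the $\Dset$-set: after the $\Dset$-element $\propout{k}{\wtd B_1}\apropout{k+l}{\wtd B_2} \Par \B{k}{\tilde y}{Q_1} \Par \B{k+l}{\tilde w}{Q_2}$ fires $\prop_k$ alone, the reduct contains the derivative of $\B{k}{\tilde y}{Q_1}$ under substitution of $\wtd B_1$ and is not literally of the compositional shape $R_1 \Par R_2$ with $R_i$ drawn from the component $\mathcal{C}$-sets. The paper closes this by a detour: it exhibits a \emph{different} element of the compositional set, namely $\apropout{k}{\wtd B_1} \Par \B{k}{\tilde y}{Q_1} \Par \apropout{k+l}{\wtd B_2} \Par \B{k+l}{\tilde w}{Q_2}$, which reduces on $\prop_k$ to the very same process, and then invokes closure of the compositional set --- which does follow from the IH, since propagators are restricted componentwise and so the two components cannot cross-communicate. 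Your write-up needs this (or an equivalent) argument for the induction to go through. Finally, a cosmetic point: the appeal to \Cref{t:typecore} is unnecessary; the paper's argument is purely syntactic and never invokes typing.
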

\begin{proof} 
  By the induction on the structure of $P_1$. 
  See~\Cref{mst:app:tauclosed} for more details. 
\end{proof}

\paragraph{Operational Completeness.}

 We first consider transitions using the unrestricted and untyped LTS; in \Cref{mst:lemma:mstbs2} we will consider transitions with the refined LTS.

 \begin{restatable}[]{lemm}{mstbs}
   \label{lemma:mstbs1}
  Assume  $P_1\subst{\tilde W}{\tilde x}$ is a process such that   
  $\Gamma_1; \Lambda_1; \Delta_1 \proves 
  P_1\subst{\tilde W}{\tilde x} \hastype \Proc$ 
   with $\balan{\Delta_1}$
   and $P_1\subst{\tilde W}{\tilde x} \relS Q_1$. 
   \begin{enumerate}
     \item 
 				Whenever 
         ${P_1\subst{\tilde W}{\tilde x}}
         {\by{\news{\wtd m_1} \bactout{n}{V_1}}}{P_2}$ 
         , such that $\dual n \not\in \fn{P_1\subst{\tilde W}{\tilde x}}$, 
 				then there exist
 				$Q_2$ and $V_2$ such that 
         ${Q_1}{\By{\news{\wtd m_2}\bactout{\iname n}{V_2}}}{Q_2}$
 				and, for a fresh $t$,
 				\[
          {\newsp{\wtd m_1}{P_2 \parallel \htrigger{t}{V_1}}}
          \subst{\tilde W}{\tilde x}
 	 				 \relS
            {\newsp{\wtd m_2}{Q_2 \parallel \htrigger{t_1}{V_2}}}
         \]
     \item	
       Whenever ${P_1\subst{\tilde W}{\tilde x}}
       {\by{\abinp{n}{V_1}}}{P_2}$
       , such that $\dual n \not\in \fn{P_1\subst{\tilde W}{\tilde x}}$,
       then there exist $Q_2$,  $V_2$, and $\sigma$ such that 
       ${Q_1}{\By{\abinp{\iname n}{V_2}}}{Q_2}$
 		where $V_1 \sigma \vrelate V_2$
       and
       ${P_2}{ \relS }{Q_2}$, 
       \item	
        Whenever ${P_1}{\by{\tau}}{P_2}$ 
        then there exists $Q_2$ such that 
 				${Q_1}{\By{\tau}}{Q_2}$
 				and
         ${P_2}{\relS }{Q_2}$. 
   \end{enumerate} 
 \end{restatable}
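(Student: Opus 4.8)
The plan is to prove \Cref{lemma:mstbs1} by induction on the structure of the source process $P_1$, proceeding by case analysis on the last rule used to derive the transition of $P_1\subst{\tilde W}{\tilde x}$. The three clauses (output, input, $\tau$) will be handled simultaneously within each case, since a single structural case of $P_1$ may give rise to transitions of several kinds. For each transition of $P_1\subst{\tilde W}{\tilde x}$, I would start from the assumption $P_1\subst{\tilde W}{\tilde x} \relS Q_1$; by \Cref{d:relation-s} this means $Q_1 = \news{\wtd \prop_r}\news{\wtd \prop} R$ with $R \in \Cb{\tilde W\sigma}{\tilde x}{P_1\sigma}$ for some $\sigma \in \indices{\fn{P_1\subst{\tilde W}{\tilde x}}}$. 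The heart of each case is then to unfold the definition of the $\mathcal{C}$-set (\tabref{t:tablecd}) for the relevant top-level construct of $P_1$, identify the activating trio that is poised to mimic the observed action, and exhibit the matching weak transition of $Q_1$.

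The key mechanical step in each case is to show that the decomposition can \emph{reach} the state from which it mimics the essential action by firing a sequence of non-essential prefixes (synchronizations on propagators $\prop_k$). Here \Cref{l:c-prop-closed} is essential: it guarantees that these administrative $\tau$-reductions keep us inside the same $\mathcal{C}$-set, so that the intermediate states remain $\relS$-related and we may absorb them into the weak transition $\By{}$. Concretely, for the output case $P_1 = \bout{u_i}{V_1}Q$ with $\neg\tr(u_i)$, I would use the $\Db{}{}{-}$ clause to find the element $\bout{u_i}{V_2}\apropout{k}{\wtd B_2} \Par \B{k}{\tilde z}{Q\sigma}$ (with $V_1\sigma\subst{\tilde W_1}{\tilde y} \vrelate V_2$), perform the matching output on the indexed name $u_i$ emitting $V_2$, and then verify that placing $V_1$ in $\htrigger{t}{V_1}$ on the left and $V_2$ in $\htrigger{t_1}{V_2}$ on the right yields an $\relS$-related pair—this is where the relation $\processrelate$ (via \Cref{lemm:processrelate-triggers}) on trigger processes does the work. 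The input and $\tau$ cases follow the same pattern: match on the indexed subject name, use $\valuesrelate$/$\vrelate$ to relate the received characteristic or trigger value to its minimal counterpart (\Cref{def:vrelate}, \Cref{d:mcv}), and confirm the residual is in the appropriate $\mathcal{C}$-set. The recursive-name subcases ($\tr(u_i)$) are analogous but require interposing the synchronization with the recursive propagator $\recprov{u}{x}{\wtd u}$ on $\prop^u$, using the three-element structure of the corresponding $\Dset$-set entry to walk through request, application, and mimicry.

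The main obstacle, I expect, will be the \emph{bookkeeping of the context and substitutions} across the inductive steps, rather than any single conceptual difficulty. The statement threads the substitution $\subst{\tilde W}{\tilde x}$ through every clause, and the $\mathcal{C}$-set splits this substitution compositionally (e.g.\ $\subst{\tilde W}{\tilde x} = \subst{\tilde W_1}{\tilde y}\cdot\subst{\tilde W_2}{\tilde w}$ in the parallel and output cases). Verifying that the split is consistent with the free-variable partition $\wtd y = \fv{Q_1}$, $\wtd w = \fv{Q_2}$, and that the freshly introduced trigger $\htrigger{t}{V_1}$ is correctly recorded so that the resulting pair still satisfies the side conditions of \Cref{d:relation-s}, demands careful tracking. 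A second delicate point is the $\tau$-case when the silent action of $P_1$ is an internal communication between two parallel components: the decomposition must mimic this not by a single reduction but by a burst of propagator synchronizations interleaved with the essential communication on the decomposed (indexed) name, and I would lean on \Cref{l:c-prop-closed} repeatedly to collapse these into a weak transition landing in the correct $\mathcal{C}$-set. The application case, where a value is applied to a tuple containing tail-recursive names, is the most intricate: the nested abstraction/output structure (following the partial-instantiation mechanism of \Cref{e:partial}) means the matching weak transition must sequence $|\tilde r|$ synchronizations with distinct recursive propagators before the application reduces, and checking that each intermediate process is exactly one of the elements enumerated in the application clause of $\Db{}{}{-}$ is the most calculation-heavy part of the argument.
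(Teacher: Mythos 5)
Your plan follows the paper's proof essentially step for step: the paper likewise argues by transition induction (case analysis on Rules Snd, Rv, App and the inductive rules Par, Tau, New), unfolds the $\mathcal{C}$- and $\Dset$-sets of \tabref{t:tablecd} to locate the element poised to mimic the essential action, uses \Cref{l:c-prop-closed} to absorb the non-essential propagator synchronizations into the weak transition, and discharges the trigger/value obligations via $\vrelate$, $\processrelate$ and \Cref{lemm:processrelate-triggers}, including the three-stage walk through the recursive-propagator elements in the $\tr$ sub-cases. The difficulties you single out (splitting $\subst{\tilde W}{\tilde x}$ across components, the internal-communication $\tau$ case --- where the paper additionally invokes well-typedness and duality to show that the indices of the two complementary decomposed names coincide --- and the nested-abstraction application case) are precisely the points the paper's proof spends its effort on.
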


 \begin{proof} 
  By transition induction. See~\Cref{mst:app:mstbs1} for more details. 
 \end{proof}

The following statement builds upon the previous one to address the case of the
typed LTS (\Cref{mst:def:mlts}):
\begin{restatable}[]{lemm}{mstbstyped}
 \label{mst:lemma:mstbstyped}
 Assume $P_1 \subst{\tilde W}{\tilde x}$ is a 
 process and $P_1\subst{\tilde W}{\tilde x} \relS Q_1$.
 \begin{enumerate}
   \item 
				Whenever 
       $\horelm{\Gamma_1;\Lambda_1;\Delta_1}
       {P_1\subst{\tilde W}{\tilde x}}
       {\by{\news{\widetilde{m_1}} \bactout{n}{V_1}}}
       {\Lambda'_1;\Delta'_1}{P_2}$ 
				then there exist
				$Q_2$, $V_2$, $\Delta_2'$, and $\Lambda_2'$ such that 
       $\horelm{\Gamma_2;\Lambda_2;\Delta_2}{Q_1}
       {\By{\news{\widetilde{m_2}}\bactout{\iname{n}}{V_2}}}
       {\Lambda'_2;\Delta_2'}{Q_2}$
				and, for a fresh $t$,
				\[
        {\newsp{\widetilde{m_1}}{P_2 \parallel \htrigger{t}{V_1}}}
        \subst{\tilde W}{\tilde x}
	 				 \relS
          {\newsp{\widetilde{m_2}}{Q_2 \parallel \htrigger{\tni}{V_2}}}
       \]
   \item	
     Whenever 
     $\horelm{\Gamma_1; \Lambda_1; \Delta_1}{P_1\subst{\tilde W}{\tilde x}}
      {\by{\abinp{n}{V_1}}}{\Lambda_1';\Delta_1'}{P_2}$
     then there exist $Q_2$,  $V_2$, $\sigma$, $\Lambda'_2$, and $\Delta'_2$ such that 
$\horelm{\Gamma_2;\Lambda_2;\Delta_2}{Q_1}{\By{\abinp{\iname{n}}{V_2}}}
      {\Lambda'_2, \Delta_2'}{Q_2}$ 
		where $V_1 \sigma \vrelate V_2$
     and
     ${P_2}{ \relS }{Q_2}$, 

     \item	
      Whenever 
       $\horelm{\Gamma_1;\Lambda_1;\Delta_1}{P_1\subst{\tilde W}{\tilde x}}{\by{\tau}}
       {\Lambda_1'; \Delta_1'}{P_2}$
      then there exist $Q_2$, $\Lambda_2'$, and $\Delta_2'$ such that 
				$\horelm{\Gamma_2;\Lambda_2;\Delta_2}{Q_1}
       {\By{\tau}}{\Lambda_2';\Delta_2'}{Q_2}$
				and
       ${P_2}{\relS }{Q_2}$. 
 \end{enumerate} 
\end{restatable}
\begin{proof} 
  The proof uses results of \Cref{lemma:mstbs1}.
  We consider the first case, the other two being similar.

  By the definition of the typed LTS we have: 
    \begin{align}
      &\Gamma_1; \Lambda_1; \Delta_1 \proves P_1\subst{\tilde W}{\tilde x} 
      \label{mst:pt:cr-typed-1}
      \\
      &(\Gamma_1; \es; \Delta_1)  \by{\news{\widetilde{m}} \bactout{n}{V}}
      (\Gamma_1; \es; \Delta_2)
      \label{mst:pt:cr-typed-2}
    \end{align}
    By \eqref{mst:pt:cr-typed-2} we further have 
    \begin{align*}
      \AxiomC{
        \begin{tabular}{c}
          $\Gamma \cat \Gamma'; \Lambda'; \Delta' \proves V \hastype U$ \\
      $\Delta'\backslash (\cup_j \Delta_j) \subseteq (\Delta \cat n: S)$
        \end{tabular}} 
      \AxiomC{\begin{tabular}{c}
        $\Gamma'; \es; \Delta_j \proves m_j  \hastype U_j$ \\ 
        $\Gamma'; \es; \Delta_j' \proves \dual{m}_j  \hastype U_j'$
      \end{tabular}} 
      \AxiomC{
        \begin{tabular}{c}
          $\dual{n} \notin \dom{\Delta}$ \\
          $\Lambda' \subseteq \Lambda$
        \end{tabular}
      } 
      \LeftLabel{\scriptsize \trule{SSnd}}
      \TrinaryInfC{
        $(\Gamma; \Lambda; \Delta \cat s: \btout{U} S)
        \by{\news{\widetilde{m}} \bactout{n}{V}}
        (\Gamma \cat \Gamma'; \Lambda\backslash\Lambda';
         (\Delta \cat n: S \cat \cup_j \Delta_j') \backslash \Delta')$
      }
      \DisplayProof 
    \end{align*}
    \noindent By 
    \eqref{mst:pt:cr-typed-1} 
    and the condition $\dual{n} \notin \dom{\Delta}$ 
    we have $\dual n \not\in \fn{P_1\subst{\tilde W}{\tilde x}}$. 
    Therefore, we can apply Item 1 of \Cref{lemma:mstbs1}. 
\end{proof} 

Finally, we are in a position to address the case of the
refined typed LTS (\Cref{mst:def:mlts}):
\begin{restatable}[]{lemm}{mstbs2}
  \label{mst:lemma:mstbs2}
     Assume $P_1\subst{\tilde W}{\tilde x}$ is a 
    process 
      and $P_1\subst{\tilde W}{\tilde x} \relS Q_1$.
      \begin{enumerate}
        \item 
             Whenever 
            $\horelm{\Gamma_1;\Lambda_1;\Delta_1}
            {P_1\subst{\tilde W}{\tilde x}}
            {\hby{\news{\widetilde{m_1}} \bactout{n}{V_1}}}
            {\Lambda'_1;\Delta'_1}{P_2}$ 
             then there exist
             $Q_2$, $V_2$, $\Delta_2'$, and $\Lambda_2'$ such that 
            $\horelm{\Gamma_2;\Lambda_2;\Delta_2}{Q_1}
            {\Mhby{\news{\widetilde{m_2}}\bactout{\iname{n}}{V_2}}}
            {\Lambda'_2;\Delta_2'}{Q_2}$
             and, for a fresh $t$,
             \[
             {\newsp{\widetilde{m_1}}{P_2 \parallel \htrigger{t}{V_1}}}
             \subst{\tilde W}{\tilde x}
                 \relS
               {\newsp{\widetilde{m_2}}{Q_2 \parallel \htrigger{\tni}{V_2}}}
            \]
        \item	
          Whenever 
          $\horelm{\Gamma_1; \Lambda_1; \Delta_1}{P_1\subst{\tilde W}{\tilde x}}
          {\hby{\abinp{n}{V_1}}}{\Lambda_1';\Delta_1'}{P_2}$ then there exist
          $Q_2$,  $V_2$, $\Lambda'_2$, and $\Delta'_2$ such that
          $\horelm{\Gamma_2;\Lambda_2;\Delta_2}{Q_1}{\Mhby{\abinp{\iname{n}}{V_2}}}
          {\Lambda'_2, \Delta_2'}{Q_2}$ where $V_1 \valuesrelate V_2$
          and
          ${P_2}{ \relS }{Q_2}$, 
     
          \item	
           Whenever 
            $\horelm{\Gamma_1;\Lambda_1;\Delta_1}
            {P_1\subst{\tilde W}{\tilde x}}{\hby{\tau}}
            {\Lambda_1'; \Delta_1'}{P_2}$
           then there exist $Q_2$, $\Lambda_2'$, and $\Delta_2'$ such that 
             $\horelm{\Gamma_2;\Lambda_2;\Delta_2}{Q_1}
            {\Mhby{\tau}}{\Lambda_2';\Delta_2'}{Q_2}$
             and
            ${P_2}{\relS }{Q_2}$. 
      \end{enumerate}

  \end{restatable}
  \begin{proof}
    By case analysis of the transition label $\ell$. 
    It uses results of \Cref{mst:lemma:mstbstyped}. 
    We consider two cases: 
    \rom{1} $\ell \equiv \abinp{n}{V_1}$ and 
    \rom{2} $\ell \not\equiv \abinp{n}{V_1}$.  
    \begin{itemize}
      \item[\rom{1}] Case $\ell \equiv \abinp{n}{V_1}$. This case concerns Part (2) of the lemma. 
      In this case we know $P_1 = \binp{n}{y}Q$.  
      We have the following 
transition inference tree: 
\begin{align}
	\AxiomC{} 
	\LeftLabel{\scriptsize $\rulename{Rv}{}$}
  \UnaryInfC{$(\binp{n}{y}P_2) \subst{\tilde W}{\tilde x} \by{\abinp{n}{V_1}} 
  P_2$}
  \label{mst:pt:reflts-inp1}
  \DisplayProof  	
\end{align}
\begin{align}
	\AxiomC{\eqref{mst:pt:reflts-inp1}} 
  \AxiomC{$V_1 \scong \omapchar{U}
  \vee V_1  \scong \abs{{x}}{\binp{t}{y} (\appl{y}{{x}})}
  \textrm{ {\small $t$ fresh}} $} 
  \LeftLabel{\scriptsize $\rulename{RRcv}{}$}
  \BinaryInfC{$(\binp{n}{y}P_2) \subst{\tilde W}{\tilde x} \hby{\abinp{n}{V_1}} 
  P_2$}
  \DisplayProof  	
  \label{mst:pt:reflts-inp2}
\end{align}

From \eqref{mst:pt:reflts-inp1} and \Cref{mst:lemma:mstbstyped} we know that there exist $Q_2$, and $V_2$ such that 
$Q_1 \By{\abinp{\iname n}{V_2}}Q_2$ and $P_2 \relS Q_2$ where $V_1\sigma \vrelate V_2$.
Since $V_1$ is a characteristic or a trigger value, we have $V_1 \valuesrelate V_2$ and that $V_2$ is  a minimal characteristic or a trigger value.
Hence, $Q_1 \Mhby{\abinp{\iname n}{V_2}}Q_2$ using the Rule~\textsc{MTr} (\Cref{mst:def:mlts}).

\item Case $\ell \not\equiv \abinp{n}{V}$. This case concerns Parts (1) and (3) of the lemma. 
We only consider the first part, when $\ell \equiv \news{\wtd m_1}\about{n}{V_1}$, since the other part is similar.

We apply \Cref{lemma:mstbs1} to obtain $Q_2$ such that 
      ${Q_1}{\Hby{\news{\widetilde{m_2}}\bactout{\iname n}{V_2}}}{Q_2}$, and, for a fresh $t$,
%
      \[
       {\newsp{\widetilde{m_1}}{P_2 \parallel \htrigger{t}{V_1}}}
       \subst{\tilde W}{\tilde x}
         \ \relS\
         {\newsp{\widetilde{m_2}}{Q_2 \parallel \htrigger{\tni}{V_2}}}.
      \]
Since we are dealing with an output action, we can immediately conclude that
      ${Q_1}{\Mhby{\news{\widetilde{m_2}}\bactout{\iname n}{V_2}}}{Q_2}$. 
    \end{itemize}
\end{proof}

\paragraph{Operational Soundness.}
For the proof of operational soundness we follow the same strategy of stratifying it into three lemmas.

  \begin{restatable}[]{lemm}{mstbsrev}
    \label{mst:lemma:mstbs3}
  Assume $P_1\subst{\tilde W}{\tilde x}$ is a 
  process and $P_1\subst{\tilde W}{\tilde x} \relS Q_1$.
  \begin{enumerate}
    \item 
        Whenever 
        ${Q_1}{\by{\news{\wtd{m_2}}\bactout{n_i}{V_2}}}{Q_2}$ 
        , such that $\dual n_i \not\in \fn{Q_1}$, 
				then there exist 
				$P_2$ and $V_2$  such that  
        ${P_1\subst{\tilde W}{\tilde x}}{\by{\news{\wtd{m_2}} 
        \bactout{n}{V_2}}}{P_2}$ 
				and, for a fresh $t$,
				\[
         {\newsp{\wtd{m_1}}{P_2 \parallel \htrigger{t}{V_1}}}
         \subst{\tilde W}{\tilde x}
	 				 \relS 
           {\newsp{\wtd{m_2}}{Q_2 \parallel \htrigger{\tni}{V_2}}}.
        \]
        \item	
        Whenever 
        ${Q_1}{\by{\abinp{n_i}{V_2}}}{Q_2}$
        , such that $\dual n_i \not\in \fn{Q_1}$, 
        there exist $P_2$, $V_2$, and $\sigma$ such that 
        ${P_1\subst{\tilde W}{\tilde x}}
        {\by{\abinp{n}{V_1}}}{P_2}$ 
        where $V_1\sigma \vrelate V_2$
        and
        ${P_2}{\relS }{Q_2}$. 

    \item	
      Whenever 
      $Q_1{\by{\tau}}{~Q_2}$
      either (i) $P_1\subst{\tilde W}{\tilde x} \relS Q_2$ or (ii)
       there exists $P_2$ such that 
      ${P_1}{\by{\tau}}{P_2}$
      and
      ${P_2}{\relS}{Q_2}$. 
  \end{enumerate} 
\end{restatable}
\begin{proof}[Proof (Sketch)] 
  By transition induction.
  See~\Cref{mst:app:mstbs3} for more details. 
\end{proof} 
\begin{lemm}
\label{mst:lemma:mstbs5}
  Assume $P_1\subst{\tilde W}{\tilde x}$ is a 
  process and $P_1\subst{\tilde W}{\tilde x} \relS Q_1$.
  \begin{enumerate}
   \item 
				Whenever 
       $\horelm{\Gamma_2;\Lambda_2;\Delta_2}
       {Q_1}
       {\by{\news{\widetilde{m_2}} \bactout{\iname{n}}{V_2}}}
       {\Lambda'_2;\Delta'_2}{Q_2}$ 
				then there exist
				$P_2$, $V_1$, $\Delta_1'$, and $\Lambda_1'$ such that 
       $\horelm{\Gamma_1;\Lambda_1;\Delta_1}{P_1\subst{\tilde W}{\tilde x}}
       {\By{\news{\widetilde{m_1}}\bactout{n}{V_1}}}
       {\Lambda'_1;\Delta_1'}{P_2}$
				and, for a fresh $t$,
				\[
        {\newsp{\widetilde{m_1}}{P_2 \parallel \htrigger{t}{V_1}}}
        \subst{\tilde W}{\tilde x}
	 				 \relS
          {\newsp{\widetilde{m_2}}{Q_2 \parallel \htrigger{t_1}{V_2}}}
       \]
   \item	
     Whenever 
     $\horelm{\Gamma_2; \Lambda_2; \Delta_2}{Q_1}
      {\by{\abinp{\iname{n}}{V_2}}}{\Lambda_2';\Delta_2'}{Q_2}$
     then there exist $P_2$,  $V_1$, $\sigma$, $\Lambda'_1$, and $\Delta'_1$ such that 
$\horelm{\Gamma_1;\Lambda_1;\Delta_1}{P_1\subst{\tilde W}{\tilde x}}{\By{\abinp{{n}}{V_1}}}
      {\Lambda'_1, \Delta'_1}{P_2}$
		where $V_1 \sigma \vrelate V_2$
     and
     ${P_2}{ \relS }{Q_2}$, 

     \item	
      Whenever 
       $\horelm{\Gamma_2;\Lambda_2;\Delta_2}{Q_1}{\by{\tau}}
       {\Lambda_2'; \Delta_2'}{Q_2}$
      then 
      either (i) $P_1\subst{\tilde W}{\tilde x} \relS Q_2$ 
      or (ii) 
      there exist $P_2$, $\Lambda_1'$, and $\Delta_1'$ such that 
				$\horelm{\Gamma_1;\Lambda_1;\Delta_1}{P_1\subst{\tilde W}{\tilde x}}
       {\by{\tau}}{\Lambda_1';\Delta_1'}{P_2}$
				and
       ${P_2}{\relS }{Q_2}$. 
  \end{enumerate} 
\end{lemm}
\begin{lemm}
\label{mst:lemma:mstbs4}
  Assume $P_1\subst{\tilde W}{\tilde x}$ is a 
  process and $P_1\subst{\tilde W}{\tilde x} \relS Q_1$.
  \begin{enumerate}
   \item 
				Whenever 
       $\horelm{\Gamma_2;\Lambda_2;\Delta_2}
       {Q_1}
       {\hby{\news{\widetilde{m_2}} \bactout{\iname{n}}{V_2}}}
       {\Lambda'_2;\Delta'_2}{Q_2}$ 
				then there exist
				$P_2$, $V_1$, $\Delta_1'$, and $\Lambda_1'$ such that 
       $\horelm{\Gamma_1;\Lambda_1;\Delta_1}{P_1\subst{\tilde W}{\tilde x}}
       {\Hby{\news{\widetilde{m_1}}\bactout{n}{V_1}}}
       {\Lambda'_1;\Delta_1'}{P_2}$
				and, for a fresh $t$,
				\[
        {\newsp{\widetilde{m_1}}{P_2 \parallel \htrigger{t}{V_1}}}
        \subst{\tilde W}{\tilde x}
	 				 \relS
          {\newsp{\widetilde{m_2}}{Q_2 \parallel \htrigger{t_1}{V_2}}}
       \]
   \item	
     Whenever 
     $\horelm{\Gamma_2; \Lambda_2; \Delta_2}{Q_1}
      {\hby{\abinp{\iname{n}}{V_2}}}{\Lambda_2';\Delta_2'}{Q_2}$
     then there exist $P_2$,  $V_1$,$\Lambda'_1$, and $\Delta'_1$ such that 
$\horelm{\Gamma_1;\Lambda_1;\Delta_1}{P_1\subst{\tilde W}{\tilde x}}{\Hby{\abinp{{n}}{V_1}}}
      {\Lambda'_1, \Delta'_1}{P_2}$
		where $V_1 \valuesrelate V_2$
     and
     ${P_2}{ \relS }{Q_2}$, 

     \item	
      Whenever 
       $\horelm{\Gamma_2;\Lambda_2;\Delta_2}{Q_1}{\hby{\tau}}
       {\Lambda_2'; \Delta_2'}{Q_2}$
      then 
      either (i) $P_1\subst{\tilde W}{\tilde x} \relS Q_2$ 
      or (ii) 
      there exist $P_2$, $\Lambda_1'$, and $\Delta_1'$ such that 
				$\horelm{\Gamma_1;\Lambda_1;\Delta_1}{P_1\subst{\tilde W}{\tilde x}}
       {\hby{\tau}}{\Lambda_1';\Delta_1'}{P_2}$
				and
       ${P_2}{\relS }{Q_2}$. 
  \end{enumerate} 
\end{lemm}

\paragraph{Summary.}
Together, 
\Cref{mst:lemma:mstbs2} and \Cref{mst:lemma:mstbs3} 
imply that $\relS$ is an MST-bisimilarity.
In summary, we have shown \Cref{mst:t:dyncorr}, i.e., that for any typed process $P$, we have that
  $$\horelm{\Gamma;\Lambda;\Delta}{P}{\ \mstb \ }
  {\Gt{\Gamma};\Gt{\Lambda};\Gt{\Delta}}{\D{P}}.$$

In this section we have defined a notion of MST bisimilarity, following the notion \HO bisimilarity for non-minimal processes.
Following the strategy of Parrow in the untyped setting, we defined a relation $\relS$ containing all pairs $(P, \D{P})$, which we proved to be an MST bisimulation.


\section{Optimizations of the Decomposition}
\label{s:opt}
In this section we discuss two optimizations that can be applied to the decomposition process.
These optimizations simplify the structure of the trios and the nature of the underlying communication discipline.

The first optimization replaces trios in the decomposition with \emph{duos} (i.e., processes with two sequential prefixes).
The decomposition in \Cref{s:decomp} follows Parrow's approach in that it converts a process into a parallel composition of trios.
The use of trios seems to be necessary in (plain) $\pi$-calculus; in our first optimization we show that, by exploiting the higher-order nature of communications in \HO, the trios can be replaced by duos. 

The second optimization replaces polyadic communications (sending and receiving several values at once) with monadic communications (sending and receiving only a single value per prefix).
In the decomposition, we use polyadic communications in order to propagate dependencies through sub-processes.
We show that the use of monadic communication prefixes is sufficient for that task.

\paragraph{From Trios to Duos.}
In the first optimization we replace trios with \emph{duos}, i.e., processes with two sequential prefixes.
This optimization is enabled   by the higher-order nature of  $\HO$.
In the translation we make of \emph{thunk processes}, i.e., inactive processes that can be activated upon reception.
We write $\thunk{P}$ to stand for the thunk process $\abs{x:\chtype{\shot{\tinact}}}P$, for a fresh $x \not\in \fn{P}$.
We write $\appthunk{\thunk{P}}$ to denote the application of a thunk to a (dummy) name of type $\shot{\tinact}$.
This way, we have a reduction $\appthunk{\thunk{P}} \red P$.

The key idea behind replacing trios with duos is to transform a trio like $$\binp{c_k}{\widetilde x}\bout{u}{V}\about{c_{k+1}}{\widetilde z}{}$$
into the composition of two duos, the second one being a ``control'' duo:
\begin{equation}
  \binp{c_k}{\widetilde x}\abbout{c_{0}}{\thunk{\bout{u}{V}\about{c_{k+1}}{\widetilde z}}} \Par \binp{c_{0}}{y}{(\appthunk{y})}
  \label{mst:eq:duo}
\end{equation}
The first action (on $c_k$) is as before; the two remaining prefixes (on $u$ and $c_{k+1}$) are encapsulated into a thunk.
This thunk is sent via an additional propagator  (denoted $c_0$) to the control duo that activates it upon reception.
Because of this additional propagator, this transformation involves minor modifications in the definition of the degree function $\plen{-}$ (cf.
\Cref{mst:def:sizeproc}).

In some cases, the breakdown function in \Cref{mst:ss:core} already produces duos.
Breaking down input and output prefixes and parallel composition involves proper trios; following the scheme illustrated by \eqref{mst:eq:duo}, we can define a map $\dmap{-}$ to transform these trios into duos:
	\begin{align*} 
			\dmap{\propinp{k}{\widetilde x}\bout{u_i}{V}
			\apropout{k+1}{\widetilde z}{}} & =
			\propinp{k}{\widetilde x}\apropbout{k+1}{\thunk{\bout{u_i}{V}\apropout{k+2}{\widetilde z}}} \Par \propinp{k+1}{y}{(\appthunk{y})} 
			\\
			\dmap{\propinp{k}{\widetilde x}\binp{u_i}{y}
			\apropout{k+1}{\widetilde x'}{}} & =  
			\propinp{k}{\widetilde x}\apropbout{k+1}{\thunk{\binp{u_i}{y}\apropout{k+2}{\widetilde x'}}} \Par \propinp{k+1}{y}{(\appthunk{y})} 
			\\
			\dmap{\propinp{k}{\widetilde x}\propout{k+1}{\widetilde y}
			\apropout{k+\degree+1}{\widetilde z}} &= 
			\propinp{k}{\widetilde x}\apropbout{k+1}{
			\thunk{
			\propout{k+2}{\widetilde y}  
			\apropout{k+\degree+2}{\widetilde z}}}  \Par 
			\propinp{k+1}{y}{(\appthunk{y})} 
	\end{align*}

	In breaking down prefixes involving tail-recursive names  
  (\Cref{mst:t:bdowncore})
  we encounter trios of the following form: 
  \begin{align*}
    \B{}{}{\binp{u_i}{y}Q} = 
    \propinp{k}{\wtd x}\abbout{\prop^u}{N_y} \Par \B{k+1}{\tilde w}{Q} 
    \quad \text{where} 
    \quad N_y = \abs{\wtd z}{\binp{z_{\indT{S}}}{y}
    \big(\apropout{k+1}{\wtd w} \Par 
    \recprovx{u}{x}{\wtd z}
     \big))
    }
  \end{align*}
  \noindent Here we can see that the top-level process is a duo and that only
  $N_y$ packs a proper trio. By applying the same idea we can translate
  this trio into the following composition of duos: 
  \begin{align*}
    \dmap{\binp{z_{\indT{S}}}{y}
    \big(\apropout{k+1}{\wtd w} \Par 
    \recprovx{u}{x}{\wtd z}
     \big)} = 
     \binp{z_{\indT{S}}}{y}\apropout{k+1}
     {\thunk{\apropout{k+2}{\wtd w} \Par 
     \recprovx{u}{x}{\wtd z}
     }
     }  
     \Par  
     \propinp{k+1}{y}{\appthunk{y}}
  \end{align*}
  \noindent This is the idea behind the breakdown of a process starting  with an input prefix; the breakdown of a process with an output prefix follows the same lines.

\newcommand{\dummyv}{}
\newcommand{\dummyvar}{}

\paragraph{From Polyadic to Monadic Communication.}
\begin{figure}[!t]
  \begin{mdframed} 
\begin{center} 
  \hspace*{-30.0pt}
\begin{tikzpicture}
  \tikzstyle{ann} = [draw=none,fill=none,right]
  \pgfmathsetmacro{\h}{2.0}
  \pgfmathsetmacro{\x}{3.0}
  \pgfmathsetmacro{\d}{1.75}

  \node[] (L1) at (-\x*1.5-0.1, 1.5) {Source process $P_1$:}; 

  {\node[draw, \processshape] (P1) at (-\x*1.5, 0.5) {$P_1$};
  \node[draw, \processshape] (P2) at (-\x/2.0, 0.5) {$P_2$};
  \node[draw, \processshape] (P3) at (\x/2.0, 0.5) {$P_3$};
  \node[draw, circle] (P4) at (\x*1.5, 0.5) {$\inact$};} 

  {\draw[->, color=black](P1) edge[] node [above] 
    {\footnotesize $u:\abinp{}{\mathsf{str}}$} (P2);}
  {\draw[->, color=black](P2) edge[] node [above] 
    {\footnotesize $u:\abinp{}{\mathsf{int}}$} (P3);}
  {\draw[->, color=black](P3) edge[] node [above] 
    {\footnotesize $u:\about{}{\mathsf{{{bool}}}}$} (P4);}

    \tikzstyle{ann} = [draw=none,fill=none,right]
  \node[align=left] (L2) at (-\x*1.2, -0.5) {Monadic decomposition $\mD{P_1}$:};

    \node[draw, \processshape] (T1) at (-\x*1.5, -\d) {$Q_1$};
    \node[draw, \processshape] (T11) at (-\x*1.5, -\d-\h) {$Q'_1$};

    {\draw[->, color=black](T1) edge[] node [below left] 
        {\footnotesize $u_1:\abinp{}{\mathsf{str}}$} (T11);}

    \node at (-\x, -\d) {$\parallel$};
    \node[draw, \processshape] (T2) at (-\x/2.0, -\d) {$Q_2$};
    \node[draw, \processshape] (T22) at (-\x/2.0, -\d-\h) {$Q'_2$};

    {\draw[->, color=black](T2) edge[] node [below left] 
        {\footnotesize $u_2:\abinp{}{\mathsf{int}}$} (T22);}

    \node at (0, -\d) {$\parallel$};
    \node[draw, \processshape] (T3) at (\x/2.0, -\d) {$Q_3$};
    \node[draw, \processshape] (T33) at (\x/2.0, -\d-\h) {$Q'_3$};
    {\draw[->, color=black](T3) edge[] node [below left] 
        {\footnotesize $u_3:\about{}{\mathsf{{{bool}}}}$} (T33);}

        \node at (\x, -\d) {$\parallel$};

    \node[draw, \processshape] (T4) at (\x*1.5, -\d) {$Q_4$};

    {\draw[->, dashed, color=red, align=left](T11) edge[bend left=25] 
    node [pos=0.05, right] 
      {} (T2);}
      {\draw[->, dashed, color=blue, align=left](T11) 
      edge[bend left=11] node [pos=0.05, right] 
      {\footnotesize{$x:\mathsf{str}$}} (T3);}
    {\draw[->, dashed, color=red, align=left](T22) edge[bend left=25] 
    node [pos=0.05, right] {\footnotesize{}} (T3);}
     {\draw[->, dashed, color=blue, align=left](T22) 
     edge[bend left=5] node [pos=0.05, right] 
     {\footnotesize{$y:\mathsf{int}$}} (T3);}
    {\draw[->, dashed, color=red](T33) edge[bend left=20] node [left] {} (T4);}

    \node[draw=none] {};
  \end{tikzpicture}
\end{center} 
\end{mdframed} 
\caption[Our monadic decomposition function 
$\mD{-}$, illustrated]{Our monadic decomposition function 
$\mD{-}$, illustrated. 
As in \Cref{mst:fig:decomp}, 
nodes 
represent process states, `$\parallel$' represents 
parallel composition of processes, black arrows stand for actions, and red arrows 
indicate synchronizations that preserve the sequentiality of the source process;  
also,  blue arrows indicates synchronizations that propagate (bound) values.}
\label{mst:fig:decomp-monadic}
\end{figure} 
Our second optimization replaces polyadic communications, used for the propagators, with monadic communications.
Recall that propagators in $\B{}{}{-}$ serve two purposes: they \rom{1} encode sequentiality by properly activating trios and \rom{2} propagate bound values.
By separating propagators along those two roles, we can we can dispense with polyadic communication in the breakdown function.

We define \emph{monadic breakdown}, $\mB{}{}{-}$, and \emph{monadic decomposition}, $\mD{-}$, which use two \emph{kinds} of propagators: \rom{1} propagators for only activating trios of form $\prop_k$ (where $k > 0$ is an index) and \rom{2} for propagating bound values of form $\prop_x$ (where $x$ is some variable).
We depict the mechanism of the monadic breakdown in~\Cref{mst:fig:decomp-monadic}.
The main idea is to establish a direct link between trio that binds the variable $x$ and trios that make use of $x$ on propagator channel $\prop_x$.
Thus, propagators on $\prop_k$ only serve to activate next trios: they do so by \emph{receiving} an abstraction that contains the next trio.

Formally, we define a \emph{monadic decomposition}, $\mD{P}$, that simplifies \Cref{mst:def:decomp} as follows:
  \begin{align*}
     \mD{P} =  \news{\widetilde c}\big(
      \apropout{k}{\dummyv} \Par \mB{k}{}{P\sigma}\big)
  \end{align*}
  \noindent where $k > 0$, $\widetilde \prop = (\prop_k,\dots,\prop_{k+\plen{P}-1})$, 
  and the initializing substitution $\sigma  = \subst{\indices{\wtd u}}{\wtd u}$ is the same as in \Cref{mst:def:decomp}.

      The monadic break down function $\mB{k}{}{-}$, given in~\Cref{mst:fig:monadic-breakdown}, simplifies the one in~\Cref{mst:t:bdowncore} by using only one parameter, namely $k$.
      In \Cref{mst:fig:monadic-breakdown} we use $\sigma$ to denote the subsequent substitution $\subsqn{u_i}$, the same as in \Cref{mst:t:bdowncore},
      and use $\wtd m$ to denote the breakdown $(u_i,\ldots,u_{i+\len{\Gt{C}}-1})$ of the name $u_i$.

\begin{figure}[!t]
\begin{mdframed}
\begin{align*}
V_{x} &= \abs{y}\binp{y}{z}\bout{\prop_x}{x}
  \news{s}(\appl{z}{s} \Par \about{\dual s}{z})
  \\
W^{\leadsto}_x & = \begin{cases}
        \apropbout{x}{x} & \text{if} \ \leadsto = \multimap  \\ 
                     \news{s}(\appl{V_{x}}{s} \Par \about{\dual s}{V_{x}})  & \text{if} \ \leadsto = \rightarrow 
     \end{cases}  
     \vspace{2mm}
     \\
\mB{k}{}{\binp{u_i}{x:\slhot{C}}Q} & = 
\news{\prop_x}\big( 
\propinp{k}{\dummyvar}\binp{u_i}{x}(\apropout{k+1}{\dummyvar}
\Par W^{\leadsto}_x)  
\big) 
    \Par 
    \mB{k+1}{}{Q\sigma}\big) 
\\
\mB{k}{}{\bout{u_i}{x}Q} & = 
\propinp{k}{\dummyvar}\propinp{x}{x}\bout{u_i}{x}
\apropout{k+1}{\dummyvar}
    \Par 		\mB{k+1}{}{Q\sigma} 
\\
\mB{k}{}{\bout{u_i}{V}Q} & = 
\propinp{k}{\dummyvar}\bout{u_i}{\mV{}{}{V\sigma}}
\apropout{k+1}{\dummyvar}
\Par 
\mB{k+1}{}{Q\sigma} 
\\
\mB{k}{}{\appl{x}{u_i}} & = 
\propinp{k}{\dummyvar}\propinp{x}{x}\appl{x}{\wtd m}
\\ 
\mB{k}{}{\appl{V}{u_i}} & = 
                          \propinp{k}{}{{\appl{\mV{}{}{V}}{\wtd m}}}
\\
\mB{k}{}{\news{s}{P'}} & = \news{\widetilde s}\mB{k}{}{P'\sigma}	
\\
\mB{k}{}{Q\Par R} & = 
\propinp{k}{\dummyvar}\propout{k+1}{\dummyvar}\apropout{k+\plen{Q}+1}{\dummyvar}
\Par \mB{k+1}{}{Q} \Par \mB{k+\plen{Q}+1}{}{R}  	\\
\mV{}{}{y} &= y \\
\mV{}{}{\abs{y:\slhotup{C}}P} &= 
\abs{(y_1,\ldots,y_{\len{\Gt{C}}}):{\slhotup{\Gt{C}}}}
{\news{\prop_1,\ldots,\prop_{\plen{P}}}
\big(\apropout{1}{\dummyv}
\Par
\mB{1}{}{P \subst{y_1}{y}}}\big)
\end{align*}
\end{mdframed}	
\caption{Monadic breakdown of processes and values  
\label{mst:fig:monadic-breakdown}}
\end{figure}

The breakdown function $\mB{k}{}{-}$ uses 
propagators $\prop_k$ ($k > 0$) for encoding sequentiality and 
dedicated propagators $\prop_x$ for each variable $x$.
As propagators $\prop_k$ now only serve to encode sequentiality,
only dummy values are being communicated along these channels (see~\Cref{mst:r:prefix}).

Let us describe the breakdown of a process with an input prefix, as it illustrates the key points common to all the other cases. 
The breakdown $\mB{k}{}{\binp{u_i}{x}Q}$ consists of a trio in parallel  
with the breakdown of the continuation $\mB{k+1}{}{Q\sigma}$ 
with name $\prop_x$ restricted. The trio is first activated 
on $\prop_k$. This is followed by 
the prefix that mimics original input action on indexed name $u_i$. 
Upon receiving value $x$, two things will happen in parallel.
First, the next trio will be activated on name $\prop_{k+1}$. Second,
the value $x$ received on $u_i$ is propagated further by the dedicated process $W^{\leadsto}_x$.

The specific mechanism of propagation depends on whether a received value is linear ($\leadsto = \multimap$) or shared ($\leadsto = \rightarrow$).
In the former case, we simply propagate a value along the \emph{linear} name $\prop_x$ once.
In the later case, we cannot propagate the value only once, because a shared variable can be used in multiple trios.
Thus, $W^{\rightarrow}_x$ implements a recursive mechanism that repeatedly sends a value on the \emph{shared} name $\prop_x$.
The recursion is encoded in the same way as in \Cref{mst:e:en-rec}: action $\about{\prop_x}{x}$ is enclosed in value $V$ that gets appropriately duplicated upon a synchronization.

The breakdown function for values, $\mV{}{}{-}$, is 
accordingly changed to invoke $\mB{1}{}{-}$
for breaking down a function body. 

For simplicity, we defined the decomposition of the output process using a subprocess with four prefixes.
Alternatively, we could have used a decomposition that relies on two trios, by introducing abstraction passing as in the previous section.

Let us illustrate the monadic breakdown by the means of an example: 
\begin{example}[Monadic Decomposition]
We again consider process $P = \news{u} (Q \Par R)$ as 
in~\Cref{mst:e:decomp-proc-types} 
where: 
\begin{align*}
  Q &= \binp{u}{x}
  \overbrace{\binp{u}{y}
  \news{s}\big( \appl{x}{\dual s} \Par \about{s}{y} \big)}^{Q'} 
  \\
  R &= \bout{\dual u}{V}\bout{\dual u}{\textsf{true}}\inact 
  \\ 
  V &= \abs{z}\binp{z}{w}\inact 
\end{align*}
Let us recall the reductions of $P$: 
\begin{align*}
P &\red 
\binp{u}{y}
  \news{s}\big( \appl{V}{\dual s} \Par \about{s}{y} \big)  
  \Par 
 \bout{\dual u}{\textsf{true}}\inact 
\red 
  \news{s}\big( \appl{V}{\dual s} \Par \about{s}{\textsf{true}} \big)  
  \\
  &
  \red 
  \news{s}\big( \binp{\dual s}{w}\inact  \Par \about{s}{\textsf{true}} \big)  = P' 
\end{align*}
The monadic decomposition of $P$ is as follows: 
\begin{align*}
\mD{P} =  \news{\prop_1,\ldots,\prop_{10}}
\news{u_1,u_2}
\big(\apropout{1}{\dummyv}
\Par \mB{1}{}{P\sigma}\big)
\end{align*}
\noindent where $\sigma=\subst{u_1 \dual u_1}{u \dual u}$.
We have: 
\begin{align*}
  \mB{1}{}{P\sigma} &= 
  \propinp{1}{\dummyvar}\propout{2}{\dummyvar}\apropout{8}{\dummyvar}
  \Par \mB{2}{}{Q\sigma} \Par \mB{8}{}{R\sigma} 
\end{align*}
\noindent where: 
\begin{align*}
  \mB{2}{}{Q\sigma} &= 
  \news{\prop_x}\big( 
    \propinp{2}{\dummyvar}\binp{u_1}{x}(\apropout{3}{\dummyvar}
    \Par 
    \apropout{x}{x}
    \big) \Par 
        \mB{3}{}{Q'\sigma'}\big) 
  \\
  \mB{3}{}{Q'\sigma'} &= 
  \news{\prop_y}\big( 
    \propinp{3}{\dummyvar}\binp{u_2}{y}
    (\apropout{4}{\dummyvar}
    \Par 
    W_y 
    \big) 
    \Par  
\mB{4}{}{
  \news{s}
  \big( 
    \appl{x}{\dual s} \Par \about{s}{y} \big)}
    \big) 
  \\
  \mB{4}{}{
  \news{s}\big( \appl{x}{\dual s} \Par \about{s}{y} \big)} &= 
  \news{s_1} 
  \propinp{4}{\dummyvar}\propout{5}{\dummyvar}\apropout{6}{\dummyvar}
  \Par 
  \propinp{5}{\dummyvar}\propinp{x}{x}\appl{x}{\dual s_1}
  \Par 
  \\
  & \qquad 
  \propinp{6}{\dummyvar}\propinp{y}{y}\bout{s_1}{y}
  \apropout{7}{\dummyvar}
  \Par \propinp{7}{\dummyvar}\inact 
  \\
  \mB{8}{}{R\sigma} &= 
  \propinp{8}{\dummyvar}\bout{\dual u_1}{\mV{}{}{V}}\apropout{9}{\dummyvar}
  \Par  \mB{9}{}{\bout{\dual u_2}{\textsf{true}}\inact} 
\\
      \mB{9}{}{\bout{\dual u_2}{\textsf{true}}\inact} 
      &= 
      \propinp{9}{\dummyvar}\bout{\dual u_2}{\textsf{true}}\apropout{10}{\dummyvar}
      \Par \propinp{10}{\dummyvar}\inact 
\\
\mV{}{}{V} &=
\abs{z_1}
\news{\prop^V_1, \prop^V_2}
\apropoutv{^V_1}{} 
  \Par 
  \\
  & \qquad 
   \news{\prop_w}
   \propinpv{^V_1}{z}
   \binp{z_1}{w}
   (
    \apropoutv{^V_2}{}
    \Par
  W_w 
   ) 
   \Par \binp{\prop^V_2}{}\inact 
\end{align*}
\noindent 
where $W_x = \news{s}(\appl{V_{x}}{s} \Par \about{\dual s}{V_{x}})$ 
with $V_{x} = \abs{y}\binp{y}{z}\bout{\prop_x}{x}
\news{s}(\appl{z}{\dual s} \Par \about{s}{z})$. 
We may observe that $\mD{P}$ correctly implements 
$u_1$ and $u_2$ typed with \msts $M_1$ and $M_2$ (resp.) 
as given in~\Cref{mst:e:decomp-proc-types}. 

\noindent 
Now, we inspect the reductions of $\mD{P}$. First
we have three reductions on propagators:
\begin{align*}
  \mD{P} &\red 
  \news{\prop_2,\ldots,\prop_{10}}
  \news{u_1,u_2} 
  \propout{2}{}\apropout{8}{}
  \Par \mB{2}{\epsilon}{Q\sigma} \Par \mB{8}{\epsilon}{R\sigma}
  \\ 
  &\red^2  
   \news{\prop_3,\ldots, \prop_7,\prop_9, \prop_{10}}
   \news{\prop_x}\big( 
   \highlighta{\binp{u_1}{x}}(\apropout{3}{\dummyvar}
    \Par 
    \apropout{x}{x}
    \big) \Par 
        \mB{3}{}{Q'\sigma'}\big) 
\\
& 
\qquad 
\Par 
     \highlighta{\bout{\dual u_1}{\mV{}{}{V}}}\apropout{9}{} 
\Par 
\mB{9}{\epsilon}{\bout{u_2}{\textsf{true}}\inact}
= D^1 
\end{align*}

\noindent Now, the synchronization on $u_1$ can take a place in $D^1$ (on the prefixes highlighted above). 
We can see that value $\mV{}{}{V}$ received on $u_1$ can be 
propagated along $\prop_x$ to a trio using it. 
Following up on that, propagators $\prop_3$ and $\prop_9$ 
are synchronized.
\begin{align*}
D^1 &\red 
\news{\prop_3, \ldots, \prop_7,\prop_9, \prop_{10}}
\news{\prop_x}
\big( 
\apropout{3}{} \Par \about{\prop_{x}}{\mV{}{}{V}} \Par 
\\
& \qquad 
  \Par 
  \news{\prop_y}\big( 
    \propinp{3}{\dummyvar}\binp{u_2}{y}(\apropout{4}{\dummyvar}
    \Par 
    W_y
    \big) 
    \Par  
\mB{4}{}{
  \news{s}
  \big( 
    \appl{x}{\dual s} \Par \about{s}{y} \big)}
    \big)\big) 
\Par \apropout{9}{} \Par 
\mB{9}{}{\bout{\dual u_2}{\textsf{true}}\inact} 
\\
&\red^2 
\news{\prop_4, \ldots, \prop_7,\prop_{10}}
\news{\prop_x}
\big( \about{\prop_{x}}{\mV{}{}{V}} \Par 
\\
& \qquad 
  \Par 
  \news{\prop_y}\big( 
    \highlighta{\binp{u_2}{y}}(\apropout{4}{\dummyvar}
    \Par 
    W_y
    \big) 
    \Par  
\mB{4}{}{
  \news{s}
  \big( 
    \appl{x}{\dual s} \Par \about{s}{y} \big)}
    \big)\big) 
\Par 
\highlighta{\bout{\dual u_2}{\textsf{true}}}\apropout{10}{\dummyvar}
\Par \propinp{10}{\dummyvar}\inact 
  = D^2
\end{align*} 

\noindent Similarly, $D^2$ can mimic the synchronization on 
name $u_2$. Again, this is followed by synchronizations on propagators. 
\begin{align*}
D^2 &\red 
\news{\prop_4, \ldots, \prop_7,\prop_{10}}
\news{\prop_x}
\big( \about{\prop_{x}}{\mV{}{}{V}} 
  \Par 
  \news{\prop_y}
  \big( 
   \apropout{4}{\dummyvar}
   \Par 
   W_y\subst{\textsf{true}}{y}
    \Par  
\mB{4}{}{
  \news{s}
  \big( 
    \appl{x}{\dual s} \Par \about{s}{y} \big)}
    \big)\big) 
\\
& \qquad 
\Par 
\apropout{10}{\dummyvar}
\Par \propinp{10}{\dummyvar}\inact 
\\ 
  & \red^4  
  \news{\prop_7}
\news{\prop_x}
\big( \about{\prop_{x}}{\mV{}{}{V}} 
  \Par 
  \news{\prop_y}\big(  
    W_y\subst{\textsf{true}}{y}
    \Par  
    \news{s_1} 
   \propinp{x}{x}\appl{x}{\dual s_1}
    \\
    & \qquad    \Par  
    \propinp{y}{y}\bout{s_1}{y}
    \apropout{7}{\dummyvar}
    \Par \propinp{7}{\dummyvar}\inact 
    \big)
    \big) 
    = D^3 
\end{align*}

\noindent The subprocess $W_y\subst{\textsf{true}}{y}$ is dedicated to providing the value $\textsf{true}$ on a shared name $\prop_y$.
Specifically, it reduces as follow
Its reductions are as follows:
\begin{align*}
  W_y\subst{\textsf{true}}{y} \red^2 \bout{\prop_y}{\textsf{true}} W_y\subst{\textsf{true}}{y}
\end{align*}
\noindent 
In this example, the shared value received on $y$ is used only once; 
in the general case, 
a process could use a shared value multiple times: thus there could   
be multiple 
trios requesting the shared value on $\prop_y$. 

With this information, we have the following reductions of the decomposed process:
\begin{align*}
  D^3 &\red^2  
  \news{\prop_7}
\news{\prop_x}
\big( \about{\prop_{x}}{\mV{}{}{V}} 
  \Par 
  \news{\prop_y}\big(  
    \bout{\prop_y}{\textsf{true}} W_y\subst{\textsf{true}}{y}
    \Par  
    \news{s_1} 
   \propinp{x}{x}\appl{x}{\dual s_1}
    \\
    & \qquad    \Par  
    \propinp{y}{y}\bout{s_1}{y}
    \apropout{7}{\dummyvar}
    \Par \propinp{7}{\dummyvar}\inact 
    \big)
    \big) = D^4 
\end{align*}

\noindent In $D^4$  a value
for $x$ 
is requested on name $\prop_x$ before it is applied 
to name $\dual s_1$. Similarly, a value for $y$ is 
gathered by the communication on $\prop_y$. 
These values are retrieved in two reductions steps as follows: 
\begin{align*}
  D^4 &\red^2 
  \news{\prop_7}
  \news{s_1} 
  \appl{\mV{}{}{V}}{\dual s_1} 
  \Par 
  \bout{s_1}{\textsf{true}}\apropout{7}{}
  \Par 
  \propinp{7}{}\inact 
  \Par 
  \news{\prop_y}W_y\subst{\textsf{true}}{y}
  = D^5 
\end{align*}
\noindent 
We remark that 
$\news{\prop_y}W_y\subst{\textsf{true}}{y}$  
reduces  to $\news{\prop_y}\propout{y}{\textsf{true}}W_y\subst{\textsf{true}}{y}$ which is behaviorally 
equivalent to the inactive process.


Next, the application of the value is followed by the 
synchronization on propagator $\prop^V_1$: 
\begin{align*}
D^5 
&\red 
\news{\prop_7}\news{s_1}
\news{\prop^V_1,\prop^V_2}
\about{\prop^V_1}{\dummyv} 
  \Par 
   \news{\prop_w}
   \propinpv{^V_1}{} 
   \binp{\dual s_1}{w}
   (
    \apropoutv{^V_2}{}
   \Par
   W_w 
   )
   \Par \apropinpv{^V_2}{}\inact 
   \\
   & \qquad \Par 
   \bout{s_1}{\textsf{true}}\apropout{7}{}
\Par 
\propinp{7}{}\inact 
\Par 
\news{\prop_y}W_y\subst{\textsf{true}}{y}
\\
  &\red
  \news{\prop_7}\news{s_1}
  \news{\prop^V_2}
      \news{\prop_w}
      \binp{\dual s_1}{w}(
        \apropoutv{^V_2}{}
         \Par
      W_w) \Par 
      \apropinpv{^V_2}{}
      \inact 
      \\
      & \qquad \Par 
      \bout{s_1}{\textsf{true}}\apropout{7}{}
  \Par 
  \propinp{7}{}\inact 
  \Par 
  \news{\prop_y}W_y\subst{\textsf{true}}{y}
= D^6 
\end{align*}
Here, we can see that $D^6$ can simulate $P'$, and its internal communication on the channel~$s$.
\end{example}

\section{Extension with Labeled Choice} 
\label{ss:exti}

In this section we discuss how to extend our approach to include sessions with \emph{selection} and \emph{branching}---the constructs used to express deterministic choices.
Forgoing formal proofs, we illustrate by examples how to harness the expressive power of abstraction-passing to decompose these constructs at the process level.
First, we demonstrate how to break down selection and branching constructs in absence of recursion in \Cref{mst:ss:selbranch}.
Then, in \Cref{mst:ss:selbranchrec} we explore the interplay of recursion and labeled choice, as it requires special attention.
Finally, in \Cref{mst:ss:selbranchproof} we sketch how the operational correspondence proof can be adapted to account for branching and selection.

Let us briefly recall the labeled choice constructs in \HO, following~\cite{DBLP:conf/esop/KouzapasPY16}.
On the level of processes, selection and branching are modeled using labeled choice:
\begin{align*}
  P,Q
   & \bnfis
   \ldots \sbnfbar 
  \bsel{u}{l} P \sbnfbar \bbra{u}{l_i:P_i}_{i \in I} 
\end{align*}
The process $\bsel{u}{l}P$ selects the label $l$ on channel $u$ and then proceeds as $P$.
The process $\bbra{u}{l_i: P_i}_{i \in I}$ receives a label on the channel $u$ and proceeds with the continuation branch $P_i$ based on the received label.
Selection and branching constructs can synchronize with each other, as represented in the operational semantics by the following reduction rule:
\begin{align*}
  \bsel{u}{l_j} Q \Par \bbra{\dual{u}}{l_i : P_i}_{i \in I} & \red 
    Q \Par P_j
\qquad~~(j \in I)~~\orule{Sel}
\end{align*}

At the level of types, selection and branching are represented with the following types:
\begin{align*}
  S & \bnfis		
       \ldots 
       \bnfbar \btsel{l_i:S_i}_{i \in I} \bnfbar \btbra{l_i:S_i}_{i \in I}
\end{align*}
The {\em selection type} $\btsel{l_i:S_i}_{i \in I}$ and the {\em branching type} $\btbra{l_i:S_i}_{i \in I}$ are used to type, respectively,
the selection and branching process constructs.
Note the implicit sequencing in the sessions involving selection and branching: the exchange of a label $l_i$ precedes the execution of one of the stipulated protocol $S_i$.
The typing rules for type-checking branching and selection processes are given in~\Cref{mst:fig:tr-selbra}.
\begin{figure}[t!]
\begin{mdframed}
\vspace{-4mm}
	\[
                        \inferrule[(Sel)]{
				\Gamma; \Lambda; \Delta \cat u: S_j  \proves P \hastype \Proc \quad j \in I
			}{
				\Gamma; \Lambda; \Delta \cat u:\btsel{l_i:S_i}_{i \in I} \proves \bsel{u}{l_j} P \hastype \Proc
			}
                        \qquad
                        \inferrule[(Bra)]{
				 \forall i \in I \quad \Gamma; \Lambda; \Delta \cat u:S_i \proves P_i \hastype \Proc
			}{
				\Gamma; \Lambda; \Delta \cat u: \btbra{l_i:S_i}_{i \in I} \proves \bbra{u}{l_i:P_i}_{i \in I}\hastype \Proc
			}		
	\]
	\end{mdframed}
  \caption[Typing rules for selection and branching]{Typing rules for selection and branching. \label{mst:fig:tr-selbra}}
\end{figure} 

Given these process constructs and types, what are the minimal versions of the session types with labeled choice?
We do not consider branching and selection as atomic actions as their purpose is to make a choice of a stipulated protocol.
In other words, it is not meaningful to type a channel with branching type in which all protocols are $\tinact$.
Thus, we extend the  syntax of minimal session types (\Cref{mst:d:mtypesi}) with 
  branching and selection constructs, as follows: 
  \begin{align*}
		M & \bnfis 	
    \ldots \bnfbar \btsel{l_i : M_i}_{i \in I} \bnfbar \btbra{l_i : M_i}_{i \in I}
	\end{align*}
  \noindent  That is, \msts also include branching and selection 
  types with \msts nested in branches. 

  Next we explain our strategy for extending the breakdown function to  
  account for selection and branching.

\subsection{Breaking Down Selection and Branching}
\label{mst:ss:selbranch}

Notice that in a branching process $\bbra{u}{l_i:P_i}_{i \in I}$ each subprocess
$P_i$ can have a different session with a different degree.
Abstraction-passing allows to uniformly handle these kinds of processes.
We extend the breakdown function in \Cref{mst:def:typesdecomp} to selection and branching as follows:
\begin{align*} 
  \Gt{\btbra{l_i : S_i}_{i \in I}} &= \btbra{l_i:
  \btoutt{\lhot{\Gt{S_i}}}}_{i \in I}
  \\
  \Gt{\btsel{l_i : S_i}_{i \in I}} &=
  \btsel{l_i:\btinpt{\lhot{\Gt{\dual {S_i}}}}}_{i \in I} 
\end{align*} 
This decomposition follows the intuition that branching and selection correspond to the input and output of labels, respectively. 
For example, in the case of branching, once a particular branch $l_i$ has been selected, we would like to input names on which to provide sessions from the branch $\Gt{S_i}$.
In our higher-order setting, we do not input or output names directly.
Instead, we send out an abstraction of the continuation process, which binds those names.
It is then the job of the (complementary) selecting process to activate that abstraction with the names we want to select.

To make this more concrete, let us consider the decomposition of branching and selection at the level of processes through the following extended example.
\begin{example}
  \label{mst:ex:selbra}
Consider a mathematical server $Q$ that offers clients two operations: addition and negation of integers. 
The server uses name $u$ to implement the following session type: 
  \begin{align*}
    S = \btbra{\textsf{add}: 
    \underbrace{\btinp{\textsf{int}} \btinp{\textsf{int}} 
    \btout{\textsf{int}} \tinact}_{S_{\textsf{add}}}
    ~,\ 
    \textsf{neg} : 
    \underbrace{\btinp{\textsf{int}} \btout{\textsf{int}} \tinact}_{S_{\textsf{neg}}} 
    }
  \end{align*}
  The branches have session types with different lengths: one receives two integers and sends
	over their sum, the other has a single input of an integer followed by an
	output of its negation. 
	Let us consider a possible implementation for the server $Q$ and for a client $R$ that {selects} the first branch to add integers 16 and 26:
  \begin{align*}
     Q & \defas \bbra{u}{\textsf{add} : 
    {Q_{\textsf{add}}},\ 
    \textsf{neg} : 
    {Q_{\textsf{neg}}}} 
    & 
    R \defas \bsel{\dual{u}}{\textsf{add}} 
    \bout{\dual {u}}{\mathbf{16}} \bout{\dual u}{\mathbf{26}} \binpt{\dual u}{r} 
    \\ 
        Q_{\textsf{add}} & \defas  \binp{u}{a}\binp{u}{b}\boutt{u}{a+b}   
        \\  
        Q_{\textsf{neg}} & \defas \binp{u}{a}\boutt{u}{-a} 
    \end{align*}
The composed process $P \defas \news{u}(Q \Par R)$ can reduce as follows: 
  \begin{align*}
    P & ~\red~ 
    \news{u} 
    (\binp{u}{a}\binp{u}{b}\about{u}{a+b} \Par 
    \bout{\dual {u}}{\mathbf{16}} \bout{\dual u}{\mathbf{26}} \binpt{\dual u}{r} ) \\
    & ~\red^2~ 
    \news{u} (\about{u}{\mathbf{16+26}}  \Par \binpt{\dual u}{r} )
    = P'
  \end{align*}
  Let us discuss the decomposition of $P$. First, the decomposition  of $S$ is the minimal session type $M$, defined as follows:
  \begin{align*}
    M = \Gt{S} =
        \btbra{
          &\textsf{add}:
        \btoutt{\lhot{\big(\btinpt{\textsf{int}}, \btinpt{\textsf{int}}, \btoutt{\textsf{int}}  \big)}},
        \\ 
        &\textsf{neg}:
        \btoutt{\lhot{\big(\btinpt{\textsf{int}}, \btoutt{\textsf{int}} \big)}}} 
  \end{align*}
  Following \Cref{mst:def:decomp}, we decompose $P$ as follows: 
	\begin{align*}
	\D{P} = 
  \news{\prop_1 \ldots \prop_7} \big( \apropout{1}{}  \Par \news{u_1} (\propinp{1}{} \propout{2}{} \propoutt{3}{}   \Par \B{2}{\epsilon}{Q\sigma_2} \Par \B{3}{\epsilon}{R\sigma_2}) \big)
	\end{align*}
	where $\sigma_2=\subst{u_1\dual{u_1}}{u\dual u}$. 
  The breakdown of the server process $Q$, which implements the branching, is as follows: 
	\begin{align*}
	\B{2}{\epsilon}{Q\sigma_2} = 
    \propinp{2}{} u_1 \brases \{  \textsf{add}: & 
    ~u_1 
    \outses 
    \big\langle 
      \underbrace{
       \abs{(y_1, y_2, y_3)}{
		\news{\prop^V_{1} \ldots \prop^V_4} \apropoutv{^V_1}{} 
		\Par \B{1}{\epsilon}{ Q_{\textsf{add}} \subst{y_1}{u}}}
    \sigma_V 
      }_{V}
 \big\rangle,  \\
	\textsf{neg}: & ~u_1 \outses \big\langle 
    \underbrace{
        \abs{(y_1, y_2)}{
            \news{\prop^W_1 \ldots \prop^W_{3}} \apropoutv{^W_1}{} 
		\Par \B{1}{\epsilon}{ Q_{\textsf{neg}} \subst{y_1}{u}}}
    \sigma_W 
    }_{W}
  \big\rangle  \}
	\end{align*}
  \noindent where: 
  \begin{align*}
    \B{1}{\epsilon}{ Q_{\textsf{add}} \subst{y_1}{u}} &= 
    \propinp{1}{}\binp{y_1}{a}\apropout{2}{a} 
    \Par 
    \propinp{2}{a}\binp{y_2}{b}\apropout{3}{a,b} 
    \Par 
    \propinp{3}{a,b}\bout{y_3}{a+b}\apropout{4}{} \Par 
    \apropinp{4}{} 
    \\
    \B{1}{\epsilon}{ Q_{\textsf{neg}} \subst{y_1}{u}} &= 
    \propinp{1}{}\binp{y_1}{a}\apropout{2}{a} \Par 
    \propinp{2}{a}\bout{y_2}{-a}\apropout{3}{} \Par \apropinp{3}{} 
  \end{align*}
  \noindent with $\sigma_V=\subst{\prop^V_{1}, \ldots, \prop^V_4}{\prop_1,\ldots,\prop_4}$ 
  and $\sigma_W = \subst{\prop^W_{1}, \prop^W_{2}, \prop^W_3}{\prop_1,\prop_2,\prop_3}$.
  In process $\B{2}{\epsilon}{Q\sigma_2}$, name $u_1$ implements the minimal session type $M$. 
  Following the common trio structure, the first prefix awaits activation on 
  $\prop_2$. The next prefix mimics the
  branching action of $Q$ on $u_1$. 
  Then, each branch consists of the output of an abstraction along $u_1$. 
  This output does not have a counterpart in $Q$; it is meant to synchronize with process $ \B{3}{\epsilon}{R\sigma_2}$, the breakdown of the corresponding selection process (see below). 
  
  The abstractions sent along $u_1$ encapsulate the breakdown of subprocesses in
  the two branches ($Q_{\textsf{add}}$ and $Q_{\textsf{neg}}$). An abstraction
  in the branch has the same structure as the breakdown of a value
  $\abs{y:\shotup{C}}{P}$ in \Cref{mst:t:bdowncore}: it is a composition of a
  control trio and the breakdown of a subprocess; the generated propagators are
  restricted.
  In the first branch the server needs three actions to perform the
  session, and in the second branch the server needs to perform two actions.
  Because of that the first abstraction binds three names $y_1,y_2,y_3$, and the
  second abstraction binds two names $y_1,y_2$.

  In the bodies of the abstractions we break down $Q_{\textsf{add}}$ and $Q_{\textsf{neg}}$, but not before adjusting the names on which the broken down processes provide the sessions.
  For this,  we substitute $u$ with $y_1$ in both processes, ensuring that the broken down names are bound by the abstractions.
  By binding decomposed names in abstractions we account for different 
  session types of the original name in branches, while preserving typability: 
  this way the decomposition of different branches can use (i)~the same names
  but typed with different minimal types and (ii)~a different number of names, as it is the
  case in this example. 
  
  The decomposition of the client process $R$, which implements the selection, is as follows: 
  \begin{align*}
    \B{3}{\epsilon}{R\sigma_2} = ~& 
    \news{u_2, u_3, u_4}
    \propinp{3}{}
    {\bsel{\dual{u_1}}
    {\textsf{add}}{\binp{\dual{u_1}}{z}\propout{4}{}\appl{z}{(u_2, u_3, u_4)}}} 
    \Par
    \B{4}{\epsilon}{\bout{\dual u_2}{\mathbf{16}}\bout{\dual u_2}{\mathbf{26}} \binpt{\dual u_2}{r}}
    \end{align*}
    \noindent where: 
    \begin{align*}
      \B{4}{\epsilon}{\bout{\dual u_2}{\mathbf{16}}\bout{\dual u_2}{\mathbf{26}} \binpt{\dual u_2}{r}} = 
      \propinp{4}{}\bout{\dual u_2}{\mathbf{16}}\apropout{5}{} \Par 
    \propinp{5}{}\bout{\dual u_3}{\mathbf{26}}\apropout{6}{} \Par 
    \propinp{6}{}\binp{\dual u_4}{r}\apropout{7}{} \Par 
    \apropinp{7}{}
    \end{align*}
    After receiving the context on $\prop_3$ (empty in this case), the selection action on $u_1$ is mimicked; then, an abstraction (an encapsulation of the selected branch) is received and applied to $(u_2, u_3, u_4)$, which are locally bound. 
    The intention is to use these names to connect the received abstraction and the continuation of a selection process: the subprocess encapsulated within the abstraction  will use
    $(u_2, u_3, u_4)$, while the dual names $(\dual u_2, \dual u_3, \dual
    u_4)$ are present in the breakdown of the continuation. 
    
    For simplicity, we defined $\B{3}{\epsilon}{R\sigma_2}$ using a subprocess with four prefixes. 
    Alternatively, we could have used a decomposition that relies on two trios, by introducing abstraction passing as in \Cref{s:opt}.
   
    We will now examine the reductions of the decomposed process $\D{P}$.
    First, $\prop_1$, $\prop_2$, and $\prop_3$ will synchronize. 
    We have $\D{P} \red^4 D_1$, where
    \begin{align*}
      D_1 &= \news{\prop_4 \ldots \prop_7} \news{u_1} \big(
     u_1 \brases \{  \textsf{add}: 
        ~u_1 
        \outses \big\langle V
       \big\rangle,~ 
      \textsf{neg}:  
      ~u_1 \outses \big\langle 
      W
      \big\rangle  \} \\ 
        & \qquad \qquad \qquad \qquad  \Par
        \news{u_2, u_3, u_4}
        {\appl{(\abs{(y_1, y_2,
        y_3)}{\bsel{\dual u_1}{\textsf{add}}{\binp{\dual u_1}{z}\propout{4}{}
        \appl{z}{(y_1,y_2,y_3)}}})}{{(u_2, u_3, u_4)}}}  \Par 
        \\
        &  \qquad \qquad \qquad \qquad \qquad \qquad  
        \B{4}{\epsilon}{  \bout{\dual u_2}{\mathbf{16}}\bout{\dual u_2}{\mathbf{26}} \binpt{\dual u_2}{r}}  \big )
        \end{align*} 
        \noindent In $D_1$, $(u_2,u_3,u_4)$ will be applied to the abstraction; after that,
        the process chooses the label $\textsf{add}$ on $u_1$.
        Process $D_1$ will reduce further as
      $D_1 \red^2 D_2 \red^2 D_3$, where: 
      \begin{align*}
        D_2 &= \news{\prop_4 \ldots \prop_7} \news{u_1} \big(
          ~u_1 
          \outses \big\langle 
          V
        \big\rangle
           \Par
          \news{u_2, u_3, u_4}
          ({{{{\binp{\dual u_1}{z}\propout{4}{}
          \appl{z}{(u_2, u_3, u_4)}}}}
          }  \Par 
          \B{4}{\epsilon}{\bout{\dual u}{\mathbf{26}}
          \binpt{\dual u}{r}}) \big)
          \\
            D_3 &= \news{\prop_4 \ldots \prop_7}  \news{u_1, u_2, u_3, u_4} \big(
              {{{{\propout{4}{}
              \appl{
              V
              }{(u_2, u_3, u_4)}}
              }}
              }  \Par 
              \\
              &  \qquad \qquad \qquad \qquad \qquad  
              \propinp{4}{}\bout{\dual u_2}{\mathbf{16}}\apropout{5}{} \Par 
              \propinp{5}{}\bout{\dual u_3}{\mathbf{26}}\apropout{6}{} \Par 
              \propinp{6}{}\binp{\dual u_4}{r}\apropout{7}{} \Par 
              \apropinp{7}{} \big )
              \end{align*} 
              \noindent Then $D_3$ reduces as $D_3 \red D_4 \red D_5$, where: 
      \begin{align*}
        D_4 &= 
        \news{\prop_5 \ldots \prop_7} 
          \news{u_2, u_3, u_4}
          \big(
          \news{\prop^V_{1} \ldots \prop^V_{4}}
          (
          \apropoutv{^V_1}{} 
          \Par 
          \propinpv{^V_1}{}\binp{u_2}{a}\apropoutv{^V_2}{a} 
          \Par 
          \\ 
          & \qquad \qquad \qquad \qquad \qquad \qquad \qquad \qquad \qquad 
          \propinpv{^V_2}{a}\binp{u_3}{b}\apropoutv{^V_3}{a,b} 
          \Par 
          \\ 
          & \qquad \qquad \qquad \qquad \qquad \qquad \qquad \qquad \qquad 
          \propinpv{^V_3}{a,b}\bout{u_4}{a+b}\apropoutv{^V_4}{} \Par 
          \apropinpv{^V_4}{}
          ) 
          \Par 
          \\
          &  \qquad \qquad \qquad \qquad \qquad \qquad  
         \bout{\dual u_2}{\mathbf{16}}\apropout{5}{} \Par 
          \propinp{5}{}\bout{\dual u_3}{\mathbf{26}}\apropout{6}{} \Par 
          \propinp{6}{}\binp{\dual u_4}{r}\apropout{7}{} \Par 
          \apropinp{7}{} \big) 
        	\\
        	 D_5 &= 
        	  \news{\prop_5 \ldots \prop_7} 
          \news{u_2, u_3, u_4}
          \big(
          \news{\prop^V_{2} \ldots \prop^V_{4}} 
          (
          \binp{u_2}{a}\apropoutv{^V_2}{a} 
          \Par 
          \propinpv{^V_2}{a}\binp{u_3}{b}\apropoutv{^V_3}{a,b} 
          \Par 
          \\ 
          & \qquad \qquad \qquad \qquad \qquad \qquad \qquad \qquad \qquad 
          \propinpv{^V_3}{a,b}\bout{u_4}{a+b}\apropoutv{^V_4}{} \Par 
          \apropinpv{^V_4}{} 
          )
          \Par 
          \\
          &  \qquad \qquad \qquad \qquad \qquad \qquad  
         \bout{\dual u_2}{\mathbf{16}}\apropout{5}{} \Par 
          \propinp{5}{}\bout{\dual u_3}{\mathbf{26}}\apropout{6}{} \Par 
          \propinp{6}{}\binp{\dual u_4}{r}\apropout{7}{} \Par 
          \apropinp{7}{} \big) 
      \end{align*}
      Now, process $D_5$ can mimic the original 
      transmission of the integer $16$ on channel $u_2$ as follows: 
      \begin{align*}
      	D_5 &\red 
      	   \news{\prop_5 \ldots \prop_7}
          \news{u_2, u_3, u_4}
           \big(
          \news{\prop^V_{2} \ldots \prop^V_{4}} 
          (
          \apropoutv{^V_2}{\mathbf{16}} 
          \Par 
          \propinpv{^V_2}{a}\binp{u_3}{b}\apropoutv{^V_3}{a,b} 
          \Par 
          \\ 
          & \qquad \qquad \qquad \qquad \qquad \qquad \qquad \qquad \qquad 
          \propinpv{^V_3}{a,b}\bout{u_4}{a+b}\apropoutv{^V_4}{} \Par 
          \apropinpv{^V_4}{} 
          )
          \Par 
          \\ 
          &  \qquad \qquad \qquad \qquad \qquad \qquad  
       \apropout{5}{} \Par 
          \propinp{5}{}\bout{\dual u_3}{\mathbf{26}}\apropout{6}{} \Par 
          \propinp{6}{}\binp{\dual u_4}{r}\apropout{7}{}  \Par 
          \apropinp{7}{} \big) = D_6
      \end{align*}
      Finally, process $D_6$ reduces to $D_7$ in three steps, as follows: 
      \begin{align*}
      	D_6 &\red^3 
      	   \news{\prop_5 \ldots \prop_7} 
          \news{u_4}
          \big(
          \news{\prop^V_4}
         (\bout{u_4}{\mathbf{16}+\mathbf{26}}\apropoutv{^V_4}{} \Par 
          \apropinpv{^V_4}{}) 
          \Par 
      \binp{\dual u_4}{r}\apropout{7}{} \Par 
          \apropinp{7}{} \big) = D_7
      \end{align*}
      Clearly, process  $D_7$ correctly simulates the synchronizations of the process $P'$. 
 \hspace*{\fill}
 	$\lhd$
\end{example}

\subsection{The Interplay of Selection/Branching and Recursion}
\label{mst:ss:selbranchrec}
Now, we discuss by example how recursive session types involving branching/selection
are broken down. For simplicity, we consider recursive types without nested recursion and in which the recursive step is followed immediately by branching or selection, without any intermediate actions, i.e. types of the following form: 
$$\trec{t}\btbra{l_i:S_i}_{i\in I}
\qquad \trec{t}\btsel{l_i:S_i}_{i\in I}
$$
\noindent where none of $S_i$ contain branching/selection or recursion.

In this case, the decomposition of branching recursive types should be defined differently than for tail-recursive types: a  type such as $\trec{t}\btbra{l_i:S_i}_{i\in I}$ does not
  necessarily describe a channel with an infinite behavior, because some of the branches  $S_i$ can result in termination.
In such case, decomposing all actions in the type $\btbra{l_i:S_i}_{i\in I}$ as their own recursive types using the $\Rt{-}$ function would be incorrect.

Instead, we decompose the body of the recursive type with $\Gt{-}$ itself:
  \begin{align*} 
    \Gt{\trec{t}\btbra{l_i : S_i}_{i \in I}} &= 
    \trec{t} {\btbra{l_i:
    \btoutt{\lhot{\Gt{S_i}}}}_{i \in I}} 
    \\
      \Gt{\trec{t}\btsel{l_i : S_i}_{i \in I}} &=
      \trec{t}\btsel{l_i:\btinpt{\lhot{\Gt{\dual {S_i}}}}}_{i \in I} 
  \end{align*} 
\noindent 
If some branch $S_i$ contains the recursion variable $\tvar{t}$, then it will appear in $\Gt{S_i}$, because $\Gt{\tvar{t}}=\tvar{t}$.
That is, recursion variables will appear as part of the abstraction $\lhot{\Gt{\dual {S_i}}}$.
That  means that the decomposition of a tail-recursive type form can  produce a minimal \emph{non}-tail-recursive types.

  Now, we illustrate this decomposition on the level of processes.
  \begin{example}  
  We consider  a process $P$ with a name $r$ that is typed as follows: 
$$S = \trec{t}\btbra{l_1:\btinp{\mathsf{str}}\btout{\mathsf{int}}\vart{t},\  
l_2:\tinact}.$$
For simplicity, we give $P$ in \HOp (which includes \HO with recursion as sub-calculus): 
	\begin{align*}
	    P &= R \Par Q \\
		R &= \recp{X}\bbra{r}{l_1:\binp{r}{t}\bout{r}{\mathtt{len}(t)}X,\ l_2:\inact} \\
		Q &= \bsel{\dual r}{l_1}
    \bout{\dual r}{{{\text{``Hello''}}}}
    \binp{\dual r}{a_1}
    \bsel{\dual r}{l_1}
    \bout{\dual r}{{\text{``World''}}}
    \binp{\dual r}{a_2}
    \bsel{\dual r}{l_2}\inact
	\end{align*}
That is, $P$ contains a server $R$ which either accepts a new request to calculate a length of a string, or to terminate.
Dually, $P$ contains a client $Q$, which uses the server twice before terminating.

We can give an equivalent process in \HO by encoding the recursion (as done in \cite{DBLP:conf/esop/KouzapasPY16}):
  \begin{align*}
    \map{P} &= \map{R} \Par Q \\ 
    \map{R} &= \news{s}(\appl{V}{(r,s)} \Par \about{\bar{s}}{V}) \\
		V &= \abs{(x_r,x_s)}\binp{x_s}{y}\bbra{x_r}
    {
      l_1:\binp{x_r}{t}\bout{x_r}{\mathtt{len}(t)}
		\news{s}{(\appl{y}{(x_r,s)} \Par \about{\dual s}{y})},\ 
		l_2: \inact
    } 
  \end{align*}
  The decomposition of $S$, denoted $M^*$, is the following minimal session type:
  \begin{align*}
    M^* = \Gt{S} = 
    \trec{t}\btbra{l_1:
    \btoutt{\lhot{(\btinpt{\mathsf{str}},\ 
    \btoutt{\mathsf{int}},\ \vart{t})}},\ 
    l_2:\tinact}	
  \end{align*}

  \noindent
As in the previous example (\Cref{mst:ex:selbra}), the continuation of a selected branch will be packed in an abstraction and sent over.
This abstraction binds names on which the session actions should be performed.
In addition, if a branch contains a recursive call, then the last argument of the abstraction will be a name on which the next instance of the recursion will be mimicked.
  We illustrate this mechanism by  giving the decomposition of $\map{P}$ and inspecting its reductions. 
  \begin{align*}
    \D{\map{P}} &= 
    \news{\prop_1, \ldots, \prop_{12}} \apropout{1}{} \Par 
    \propinp{1}{}\propout{2}{}\apropout{5}{} \Par 
    \B{2}{\epsilon}{\map{R}} \Par  \B{5}{\epsilon}{Q}\\
      \B{2}{\epsilon}{\map{R}} &= 
      \news{s_1}
      (\propinp{2}{}\propout{3}{}\apropout{4}{} 
      \Par \propinp{3}{}\appl{\V{2}{\epsilon}{V}}{(r_1, s_1)} 
      \Par \propinp{4}{}\about{\dual s_1}{\V{4}{\epsilon}{V}})
      \\
      \V{2}{\epsilon}{V} &= 
        \abs{(x_{r_1}, x_{s_1})}
        \news{\prop^V_{1}\prop^V_{2}}
        \apropoutv{^V_1}{} 
        \Par \propinpv{^V_1}{}
        \binp{x_{s_1}}{y}\apropoutv{^V_2}{y}
        \Par 
        \propinpv{^V_2}{y}\bbra{x_{r_1}}
          {l_1: \about{x_{r_1}}{W},\ l_2: \inact} 
          \\
      W &= 
            \abs{(z_1, z_2, z_3)}
              \news{\prop^W_{1} \ldots \prop^W_{5}}
              \apropoutv{^W_1}{} \Par 
              \propinpv{^W_1}{}\binp{z_1}{t}\apropoutv{^W_2}{t} \Par 
              \propinpv{^W_2}{t}\bout{z_2}{\mathtt{len}(t)}\apropoutv{^W_3}{} \\ 
          & \qquad  \Par  
          \news{s_1}(\propinpv{^W_3}{}\propoutv{^W_4}{}\apropoutv{^W_5}{} 
          \Par \propinpv{^W_4}{}\appl{y}{(z_3, s_1)} \Par 
            \propinpv{^W_5}{}\about{\dual s_1}{y}) 
            \\ 
      \B{5}{}{Q} &= 
        \news{r_2 : \btinpt{\mathsf{str}},\ 
        r_3 : \btoutt{\mathsf{int}},\ 
      r_4 : {M^*}} \\
      & \qquad 
         \propinp{5}{}\bsel{\dual r_1}
          {l_1}\propout{6}{}\binp{\dual r_1}{y}\appl{y}{(r_2, r_3, r_4)} \Par 
          \\ & \qquad 
          \propinp{6}{}\bout{\dual r_2}{{\text{``Hello''}}}\apropout{7}{} \Par 
          \propinp{7}{}\binp{\dual r_3}{t}\apropout{11}{} 
           \Par \\
          & \qquad \quad 
          \news{r_5 : \btinpt{\mathsf{str}},\ 
          r_6 : \btoutt{\mathsf{int}},\ 
        r_7 : {M^*}} \\
        & \qquad \quad \quad 
          \propinp{8}{}\bsel{\dual r_4} 
          {l_1}\propout{12}{}
           \binp{\dual r_4}{y}\appl{y}{(r_5, r_6, r_7)} \Par 
            \propinp{9}{}\bout{\dual r_5}{\text{``World''}}\apropout{10}{} \Par
            \\  
            & \qquad \quad \quad 
          \propinp{10}{}\binp{\dual r_6}{t}\apropout{11}{} \Par 
          \propinp{11}{}\bsel{\dual r_7}{l_2}\apropout{12}{} \Par 
          \apropinp{12}{}  
    \end{align*}
    In the process $\B{5}{}{Q}$, the restricted names $(r_2, r_3, r_4)$ are the decomposition of the name $r$ for the branch $l_1$.
  To calculate their types, we unfold $S$:  
    \begin{align*}
      S &= \trec{t}\btbra{l_1:\btinp{\mathsf{str}}\btout{\mathsf{int}}\vart{t},\  
l_2:\tinact} 
\equiv 
\btbra{l_1:\btinp{\mathsf{str}}\btout{\mathsf{int}}\vart{t},\  
l_2:\tinact} \subst{S}{\vart{t}} \\ 
&= 
\btbra{l_1:\btinp{\mathsf{str}}\btout{\mathsf{int}}S,\  
l_2:\tinact},
    \end{align*}
and we look at the decomposition of the type corresponding to the branch $l_1$: 
  \begin{align*}
    \Gt{\btinp{\mathsf{str}}\btout{\mathsf{int}}S} = 
    (\btinpt{\mathsf{str}},\ \btoutt{\mathsf{int}},\ M^*)
  \end{align*}

Now we inspect a few reductions of $\D{P}$. 
First, we have synchronizations on $\prop_1,\ldots,\prop_4$. 
This is followed by the application of the exchanged value 
$\V{}{\epsilon}{V}$ to names $r_1, s_1$: 
\begin{align*}
  \D{\map{P}} \red^{*}&
        \news{\prop^V_{1}\prop^V_{2}}
        \apropoutv{^V_1}{} 
        \Par \propinpv{^V_1}{}
        \binp{{s_1}}{y}\apropoutv{^V_2}{y}
        \Par 
        \propinpv{^V_2}{y}\bbra{{r_1}}
          {l_1: \about{x_{r_1}}{W},\ l_2: \inact} 
          \\
          & \Par \about{\dual s_1}{\V{}{\epsilon}{V}}
          \Par \apropout{5}{} \Par \B{5}{\epsilon}{Q} = D_1 
\end{align*}
Then, after synchronizations on $\prop^V_1,s_1$, and $\prop^V_2$ in $D_1$ 
we have the following: 
\begin{align*}
  D_1 \red^{*}&
  \news{\prop_6,\ldots,\prop_{12}}
   \bbra{{r_1}}{l_1: \about{{r_1}}{W\subst{\V{}{\epsilon}{V}}{y}},\ 
   l_2: \inact} \Par 
   \\ 
   & 
   \news{r_2 : \btinpt{\mathsf{str}},\ 
   r_3 : \btoutt{\mathsf{int}},\ 
 r_4 : {M^*}} \\
 & \qquad 
    \bsel{\dual r_1}{l_1}
     \propout{6}{}\binp{\dual r_1}{y}\appl{y}{(r_2, r_3, r_4)} \Par 
     \\ & \qquad 
     \propinp{6}{}\bout{\dual r_2}{\text{``Hello''}}\apropout{7}{} \Par 
     \propinp{7}{}\binp{\dual r_3}{t}\apropout{11}{} 
      \Par \\
     & \qquad \quad 
     \news{r_5 : \btinpt{\mathsf{str}},\ 
     r_6 : \btoutt{\mathsf{int}},\ 
   r_7 : {M^*}} \\
   & \qquad \quad \quad 
     \propinp{8}{}\bsel{\dual r_4} 
     {l_1}\propout{12}{}
      \binp{\dual r_4}{y}\appl{y}{(r_5, r_6, r_7)} \Par 
       \propinp{9}{}\bout{\dual r_5}{\text{``World''}}\apropout{10}{} \Par
       \\  
       & \qquad \quad \quad 
     \propinp{10}{}\binp{\dual r_6}{t}\apropout{11}{} \Par 
     \propinp{11}{}\bsel{\dual r_7}{l_2}\apropout{12}{} \Par 
     \apropinp{12}{}  = D_2
\end{align*}
\noindent $D_2$ can mimic a silent select action on $r_1$; this is followed 
by a reception of value $W\subst{\V{}{\epsilon}{V}}{y}$ on name $\dual r_1$,
which is then applied to names $(r_2,r_3,r_4)$. The resulting 
process is as follows: 
\begin{align*}
  D_2  \red^{*}&
\news{\prop_8,\ldots,\prop_{12}}
    \news{r_2 : \btinpt{\mathsf{str}},\ 
    r_3 : \btoutt{\mathsf{int}},\ 
  r_4 : {M^*}} 
  \\
   &\qquad 
              \news{\prop^W_{1} \ldots \prop^W_{5}}
              \apropoutv{^W_1}{} \Par 
              \propinpv{^W_1}{}\binp{r_2}{t}\apropoutv{^W_2}{t} \Par 
              \propinpv{^W_2}{t}\bout{r_3}{\mathtt{len}(t)}\apropoutv{^W_3}{} \\ 
          & \qquad  \Par  
          \news{s_1}(\propinpv{^W_3}{}\propoutv{^W_4}{}\apropoutv{^W_5}{} 
          \Par \propinpv{^W_4}{}\appl{\V{}{\epsilon}{V}}{(r_4, \dual s_1)} \Par 
            \propinpv{^W_5}{}\about{s_1}{\V{}{\epsilon}{V}})
      \Par 
      \\ & \qquad 
      \bout{\dual r_2}{\text{``Hello''}}\apropout{7}{} \Par 
      \propinp{7}{}\binp{\dual r_3}{t}\apropout{11}{} 
       \Par \\
      & \qquad \quad 
      \news{r_5 : \btinpt{\mathsf{str}},\ 
      r_6 : \btoutt{\mathsf{int}},\ 
    r_7 : {M^*}} \\
    & \qquad \quad \quad 
      \propinp{8}{}\bsel{\dual r_4} 
      {l_1}\propout{12}{}
       \binp{\dual r_4}{y}\appl{y}{(r_5, r_6, r_7)} \Par 
        \propinp{9}{}\bout{\dual r_5}{\text{``World''}}\apropout{10}{} \Par
        \\  
        & \qquad \quad \quad 
      \propinp{10}{}\binp{\dual r_6}{t}\apropout{11}{} \Par 
      \propinp{11}{}\bsel{\dual r_7}{l_2}\apropout{12}{} \Par 
      \apropinp{12}{}  = D_3 
\end{align*}

\noindent 
The next interesting process emerges once silent actions on $r$ are mimicked by 
$r_2$ and $r_3$: 
\begin{align*}
  D_3  \red^{*}&
  \news{\prop_8, \ldots, \prop_{12}}
    \news{r_4 : {M^*}} 
	\news{s_1} \appl{\V{}{\epsilon}{V}}{(r_4, s_1)} \Par 
           \about{\dual s_1}{\V{}{\epsilon}{V}})
      \Par 
      \\ & \qquad 
      \news{r_5 : \btinpt{\mathsf{str}},\ 
      r_6 : \btoutt{\mathsf{int}},\ 
    r_7 : {M^*}} \\
    & \qquad \quad \quad 
    \propinp{8}{}\bsel{\dual r_4} 
    {l_1}\propout{12}{}
     \binp{\dual r_4}{y}\appl{y}{(r_5, r_6, r_7)} \Par 
      \propinp{9}{}\bout{\dual r_5}{\text{``World''}}\apropout{10}{} \Par
      \\  
      & \qquad \quad \quad 
    \propinp{10}{}\binp{\dual r_6}{t}\apropout{11}{} \Par 
    \propinp{11}{}\bsel{\dual r_7}{l_2}\apropout{12}{} \Par 
    \apropinp{12}{} = D_4 
\end{align*}

\noindent In $D_4$, name $r_4$ with type $M^*$, is applied to the abstraction $\V{}{\epsilon}{V}$, which encapsulates a ``new instance'' of the  recursive branch process. After application, we obtain the following process: 
\begin{align*}
  D_4  \red^{*}&
  \news{\prop_8,\ldots,\prop_{12}}
    \news{r_4 : {M^*}} 
	\news{s_1} 
  \news{\prop^V_{1}\prop^V_{2}}
  \apropoutv{^V_1}{} 
  \Par \propinpv{^V_1}{}
  \binp{s_1}{y}\apropoutv{^V_2}{y}
  \Par 
  \\
  &
  \qquad \qquad \qquad 
  \propinpv{^V_2}{y}
  \bbra{r_4}
    {l_1: \about{r_4}{V_1},\ l_2: \inact} 
  \Par 
           \about{s_1}{\V{}{\epsilon}{V}})
      \Par 
      \\ & \qquad 
      \news{r_5 : \btinpt{\mathsf{str}},\ 
      r_6 : \btoutt{\mathsf{int}},\ 
    r_7 : {M^*}} \\
    & \qquad \quad \quad 
    \propinp{8}{}\bsel{\dual r_4} 
    {l_1}\propout{12}{}
     \binp{\dual r_4}{y}\appl{y}{(r_5, r_6, r_7)} \Par 
      \propinp{9}{}\bout{\dual r_5}{\text{``World''}}\apropout{10}{} \Par
      \\  
      & \qquad \quad \quad 
    \propinp{10}{}\binp{\dual r_6}{t}\apropout{11}{} \Par 
    \propinp{11}{}\bsel{\dual r_7}{l_2}\apropout{12}{} \Par 
    \apropinp{12}{}  = D_5 
\end{align*}

\noindent Thus, we can see that after 
few administrative reductions (on $\prop^V_1$, $s_1$, and $\prop^V_2$) 
the process is able to mimic the a next selection on $r$ 
on name $r_4$. 
As the process again selects $l_1$, we can see that the
next selection will occur on name $r_7$, again 
typed with $M^*$. 
\hspace*{\fill}
	$\lhd$
\end{example}

We would like to finish this subsection with the following remark.
So far we have only considered recursive types which did not contain any actions between recursion and branching/selection.
However, types with prefixed branching

$$\trec{t}\alpha_1.\ldots.\alpha_n.\btbra{l_i:S_i}_{i\in I},$$
\noindent where $\alpha_1,\ldots,\alpha_n$ are some session prefixes, can also be accommodated in
the same framework, as these types can be written equivalently without prefixed branching: 
$$\alpha_1.\ldots.\alpha_n.\trec{t}\btbra{l_i:S_i\subst{\alpha_1....\alpha_n.\vart{t}}{\vart{t}}}_{i\in I}.$$

\subsection{Adapting Operational Correspondence}
\label{mst:ss:selbranchproof}
We briefly remark on how to adapt the operational correspondence result from
\Cref{ss:dynamic}. 
For the operational correspondence result, and the related lemmas, we must enforce additional constraints on the processes that we break down.
These concerns arise from the following fact.
When a type $\btbra{l_i : S_i}_{i \in I}$ is broken down as 
$$\Gt{\btbra{l_i : S_i}_{i \in I}} = \btbra{l_i: \btoutt{\lhot{\Gt{S_i}}}}_{i \in I},$$
an additional action gets introduced on the level of MST processes.
After performing the branching, an abstraction needs to be sent out.
This additional action will be matched by a corresponding abstraction-input action on the side of selection, if present.
However, this abstraction-sending action does not correspond to any action of the source process.

Therefore, to show the operational correspondence between the source term and its decomposition, we need to restrict our attention to processes in which branching and selection types are both present in (matching) pairs.
Specifically, we assume the following conditions on the source process $P$: 
\begin{itemize}
  \item $P$ is a well-typed, that is $\Gamma; \Delta;\Lambda \proves P \hastype \Proc$ with $\balan{\Delta}$;
  \item for any name $u$, $u \in \fn{P}$ with $u:S$ such that $S$ involves 
  selection or branching constructs if and only if 
  $\dual u \in \fn{P}$.
\end{itemize} 
\noindent Intuitively, these two conditions ensure that 
every branching action in $P$ has its complement (and vice-versa).
Note that for closed typeable processes both the balancedness condition and the second condition on names are vacuously true.

With this condition in place, we need to extend the relation $\relS$ in order to account for silent actions that are introduced by the breakdown of selection and branching constructs.
That is, when matching the original silent action involving selection/branching, the corresponding broken down process need to perform several silent actions, in order to be able to mimic the process continuation.

\section{Related Work}
\label{s:rw}


Here we discuss the positioning of our fragment of minimal session types with respect to (i)~other type systems for the $\pi$-calculus; (ii)~prior comparisons between session types and other type systems; and (iii)~Parrow's seminal work on trio decompositions of untyped processes.

\paragraph{Other type systems for the $\pi$-calculus}
\plscheck{The syntax of minimal session types (\Cref{mst:d:mtypesi}) contain constructs already known in the literature: while our types 
$\btout{U} \tinact$ and $\btinp{U} \tinact$ correspond, respectively, to the linear types $!^1[U]$ and $?^1[U]$ in the linearly-typed $\pi$-calculus by Kobayashi et al.~\cite{LinearPi} (where `$1$' indicates a linear multiplicity), our recursive (non-linear) types 
$ \trec{t}{\btout{U} \vart{t}}$ and $ \trec{t}{\btinp{U} \vart{t}}$ are reminiscent of the unlimited types in~\cite{LinearPi} (i.e.,  types with multiplicity $\omega$) and correspond to the recursive sortings in the simply-typed $\pi$-calculus of Pierce and Sangiorgi~\cite{PierceSangiorgi95}. We find it insightful that our identified fragment of session types for \HO contains known forms of types, with a decomposition function that satisfies static and dynamic correctness properties; this means that minimal session types stand between simple and linear types and standard session types. } 

\paragraph{Comparisons between session types and other type systems}
\plscheck{Our approach is broadly related to works that relate session types with other type systems 
for the $\pi$-calculus (cf.~\cite{DBLP:conf/unu/Kobayashi02,DBLP:conf/ppdp/DardhaGS12,DBLP:journals/iandc/DardhaGS17,DBLP:conf/concur/DemangeonH11,DBLP:journals/corr/GayGR14}).
As such, these works 
establish formal relationships between two \emph{different type systems}; instead, here we relate the session type system for \HO with its \emph{own fragment} based on minimal session types.
}

\plscheck{Most related to our developments are formal connections between session types and linear types developed by Kobayashi~\cite{DBLP:conf/unu/Kobayashi02} and Dardha et al.~\cite{DBLP:conf/ppdp/DardhaGS12,DBLP:journals/iandc/DardhaGS17}. 
Kobayashi~\cite{DBLP:conf/unu/Kobayashi02} shows how to encode a finite session $\pi$-calculus into 
a $\pi$-calculus with linear types with usages (but without sequencing);
Dardha et al.~\cite{DBLP:conf/ppdp/DardhaGS12,DBLP:journals/iandc/DardhaGS17} establish the properties of Kobayashi's encoding.
This approach relies on two encodings, one  for processes and one for types: the former uses a freshly generated linear name to mimic each session action (thereby codifying a session name using multiple linear channels); the latter codifies sequencing in sessions by nesting payload types, relying on linear types extended with variant types.}

\plscheck{The difference between the Kobayashi-Dardha et al.'s encoding and our trio-based decomposition is that  we ``slice'' the $n$ actions occurring in a session $s : S$ along indexed names $s_1, \ldots, s_n$---$n$ slices of $S$---with each indexed name characterized by a minimal session type. 
Hence, while their encoding  could be characterized as codifying sequencing in a ``dynamic style'', via the freshly generated names, we follow a ``static style'' using names that are indexed according to the corresponding session type.}

\paragraph{MSTs and Parrow's trios}
We draw inspiration from prior work by Parrow~\cite{DBLP:conf/birthday/Parrow00}, who showed that every process 
in the untyped, summation-free $\pi$-calculus with replication is weakly bisimilar to its decomposition into trios (i.e., $P \approx \D{P}$). 
As already mentioned, our technical setting is different: our decomposition treats processes from a calculus without name-passing but with higher-order concurrency (abstraction-passing), supports recursive types, and can accommodate labeled choices.
Our goals are also different:
for us, trios are a relevant instrument for defining processes typable with  minimal session types, but they are not an end in themselves.
Still, we retain the definitional style and terminology for trios introduced by Parrow~\cite{DBLP:conf/birthday/Parrow00}, which are elegant and clear.

Our main results connect the typability and the behavior of a process with its decomposition, as witnessed by the static and dynamic correctness theorems.
Static correctness was not considered by Parrow, as he worked in an untyped setting.
As for dynamic correctness, a similar result was established in~\cite{DBLP:conf/birthday/Parrow00}, linking the process and its decomposition through weak bisimilarity.
In our setting we had to use a different, typed notion of bisimilarity.
A challenge here is that known notions of typed bisimilarity for session-typed processes, such as those given by Kouzapas et al.~\cite{KouzapasPY17}, only relate processes typed under the \emph{same} typing environments. 
To that extent, our notion of equivalence (MST bisimulation) is more flexible than prior related notions as it (i)~relates processes typable under different environments (e.g., $\Delta$ and $\Gt{\Delta}$) and (ii)~admits that actions along a name $s$ from $P$ can be  matched by $\D{P}$ using actions along indexed names $s_k$, for some $k$  (and viceversa).

\paragraph{Other related works}
 Jacobs~\cite{DBLP:conf/ecoop/Jacobs22} developed a small programming calculus with a single fork-like construct and a linear type system, which can be used to encode session-typed communications. 
His system can be seen as a distillation of Wadler's GV \cite{DBLP:conf/icfp/Wadler12} which is, in essence, a $\lambda$-calculus with session-based concurrency; in contrast, \HO can be seen as a $\pi$-calculus in which abstractions can be exchanged. 
While similar in spirit, our work and the developments by Jacobs are technically distant; we observe that the operational correspondences  in~\cite{DBLP:conf/ecoop/Jacobs22} are strictly simpler than our dynamic correspondence result (\Cref{mst:t:dyncorr}) although they are mechanized in the Rocq proof assistant.

Finally, we elaborate further on our choice of \HO as source language for minimal session types.
\HO is one of the sub-calculi of \HOp, a higher-order process calculus with recursion and both name- and abstraction-passing.
The basic theory of \HOp was studied by Kouzapas et al.~\cite{DBLP:conf/esop/KouzapasPY16,KouzapasPY17} as a hierarchy of session-typed calculi based on relative expressiveness. 
Our results enable us to place \HO with minimal session types firmly within this hierarchy.
Still, the definition of minimal session types does not rely on having \HO as source language, as they can be defined on top of other process languages. 
In fact, in separate work we have 
defined minimal session types on top of the first-order sub-calculus of \HOp~\cite{DBLP:conf/ppdp/ArslanagicP021}. 
This development attests that minimal session types admit meaningful formulations independently from the kind of communicated objects 
(abstractions or names).

\section{Concluding Remarks}
\label{s:concl}

We have studied \emph{minimal session types}, a fragment of session types, one of the most studied classes of {behavioral types} for message-passing  programs.
\plscheck{This fragment makes a very limited use of sequencing at the level of types; as such, it stands between linear types and standard session types}.
We relate standard and minimal session types through a \emph{decomposition} of session-typed processes, adopting the higher-order process calculus \HO as target language. 
Following Parrow~\cite{DBLP:conf/birthday/Parrow00}, we defined the decomposition of a process $P$, denoted  $\D{P}$, as a collection of \emph{trios} (processes with three sequential actions) that trigger each other mimicking the sequencing in the original process.
We proved that typability of $P$ using standard session types implies the typability of $\D{P}$ in the minimal fragment; we also established that  ${P}$ and $\D{P}$ are behaviorally equivalent through an {MST bisimulation}. 
Our results hold for all session types constructs, including labeled choices and recursive types.

From a foundational standpoint, our study of minimal session types is a conceptual contribution to the theory of behavioral types, in that we study the status of sequencing in theories of session types.
There are many session types variants, and their expressivity often comes at the price of an involved underlying theory.
Our work contributes in the opposite direction, as we identified a simple fragment of an existing session-typed framework~\cite{DBLP:conf/esop/KouzapasPY16,KouzapasPY17} that is quite expressive. 
Understanding further the underlying theory of minimal session types (e.g., notions such as type-based compatibility) is an exciting direction for future work.

As mentioned above,  one insight derived from our results is that sequencing in session types is convenient but not indispensable.
Convenience is an important factor in the design of type systems for message-passing programs, because types are abstract specifications of communication structures.  
By identifying sequencing as a source of redundancy, our minimal formulation of session types does not contradict or invalidate the prior work on standard session types and their extensions; rather, it contributes to our understanding of the sources of convenience of those advanced type systems.

In formulating minimal session types we have committed to a specific notion of minimality, tied to sequencing constructs in types---arguably the most distinctive feature in session types.
There could be other notions of minimality, unrelated to sequencing but worth exploring nevertheless. 
Consider, for instance,  \emph{context-free} session types~\cite{DBLP:conf/icfp/ThiemannV16}, which extend standard session types by allowing sequencing of the form $S; T$.
This form of sequential composition is quite powerful, and yet it could be seen as achieving a form of minimality different from the one we studied here: as illustrated in~\cite[Section 5]{DBLP:conf/icfp/ThiemannV16}, context-free session types allow to describe the communication of tree-structured data while minimizing the need for channel creation and avoiding channel passing.

Our work can also be seen as a new twist on Parrow's decomposition results in the \emph{untyped} setting~\cite{DBLP:conf/birthday/Parrow00}.
While Parrow's work indeed does not consider types, in fairness we must observe that when Parrow's work appeared (1996) the study of types (and typed behavioral equivalences) for the $\pi$-calculus was rather incipient (for instance, the widely known formulation of binary session types, given in~\cite{honda.vasconcelos.kubo:language-primitives}, appeared in 1998).
That said, we would like to stress that our results are not merely an extension of Parrow's work with session types, for types in our setting drastically narrow down the range of conceivable decompositions.
Additionally, in this work we exploit features not supported in~\cite{DBLP:conf/birthday/Parrow00}, most notably higher-order concurrency (cf. \Cref{s:opt}).

Finally, from a practical standpoint, we believe that our
approach paves a new avenue to the integration of session types in programming
languages whose type systems lack sequencing, such as Go. It is natural to
envision program analysis tools which, given a message-passing program that
should conform to protocols specified as session types, exploit our
decomposition as an intermediate step in the verification of communication
correctness. Remarkably, our decomposition lends itself naturally to an
implementation---in fact, we generated our examples automatically using \misty,
an associated artifact written in Haskell~\cite{DBLP:journals/darts/ArslanagicPV19}.

\paragraph{Acknowledgments}
We are grateful to Erik Voogd, who as a BSc student was one of the authors in the conference version of this paper~\cite{APV19}. 
We also thank the anonymous reviewers of previous versions of this paper for their comments and suggestions.



\bibliographystyle{abbrv}

\bibliography{session}

\newpage
\tableofcontents
\newpage

\appendix


\section{Appendix to \secref{mst:ss:core}}
\label{app:core}

\subsection{Auxiliary Results}
\begin{remark}
	\label{r:polyadic-rules}
	We derive polyadic rules for typing \HOp as the expected extension of the typing rules for \HO:

  \begin{prooftree}
    \AxiomC{}
    \LeftLabel{\scriptsize PolyVar}
    \UnaryInfC{$\Gamma, \wtd x : \wtd U_x;\wtd y :
                  \wtd U_y; \es \proves
                  \wtd x \wtd y :\wtd U_x \wtd U_y$}
  \end{prooftree}

  \begin{prooftree}
  	\AxiomC{}
  	\LeftLabel{\scriptsize PolySess}
  	\UnaryInfC{$\Gamma;\es;\wtd u:\wtd S
  				\proves \wtd u \hastype \wtd S$}
  \end{prooftree}

	\begin{prooftree}
		\AxiomC{$\Gamma;\Lambda_1;\Delta \cat u:S \proves P \hastype \Proc$}
		\AxiomC{$\Gamma;\Lambda_2;\es \proves \wtd x \hastype \wtd U$}
		\LeftLabel{\scriptsize PolyRcv}		\BinaryInfC{$\Gamma \setminus \wtd x;\Lambda_1 \setminus \Lambda_2; \Delta \cat
		u:\btinp{\wtd U}S
		\proves
		\binp{u}{\wtd x}P \hastype \Proc
		$}
	\end{prooftree}

	\begin{prooftree}
		\AxiomC{$u:S \in \Delta$} 
		\AxiomC{$\Gamma;\Lambda_1;\Delta \proves P$}
		\AxiomC{$\Gamma;\Lambda_2;\es \proves
						\wtd x \hastype \wtd U$}
		\LeftLabel{\scriptsize PolySend}
		\TrinaryInfC{$\Gamma;\Lambda_1 \cat \Lambda_2;
							(\Delta \setminus u:S) \cat
							u:\btout{\wtd U}S \proves
							\bout{u}{\wtd x}P$}
	\end{prooftree}
	\begin{prooftree}
		\AxiomC{$\leadsto \in \{\multimap,\rightarrow \}$}
		\AxiomC{$\Gamma;\Lambda;\Delta_1 \proves V \hastype \wtd C \leadsto \diamond$}
		\AxiomC{$\Gamma;\es;\Delta_2 \proves \wtd u \hastype \wtd C$}
		\LeftLabel{\scriptsize PolyApp}
		\TrinaryInfC{$\Gamma;\Lambda;\Delta_1 \cat \Delta_2 \proves
						\appl{V}{\wtd u}$}
	\end{prooftree}

	\begin{prooftree}
		\AxiomC{$\Gamma;\Lambda;\Delta_1 \proves P \hastype \Proc$}
		\AxiomC{$\Gamma;\es;\Delta_2 \proves \wtd x \hastype \wtd C$}
		\LeftLabel{\scriptsize PolyAbs}
		\BinaryInfC{$\Gamma \setminus \wtd x;\Lambda;\Delta_1 \setminus \Delta_2 \proves \abs{\wtd x}{P} \hastype \lhot{\wtd C}$}
	\end{prooftree}

	\begin{prooftree}
		\AxiomC{$\Gamma \cat \wtd a : \wtd {\chtype{U}};\Lambda;\Delta \proves
					P$}
		\LeftLabel{\scriptsize PolyRes}
		\UnaryInfC{$\Gamma;\Lambda;\Delta \proves
						\news{\wtd a}{P}$}
	\end{prooftree}

	\begin{prooftree}
		\AxiomC{$\Gamma;\Lambda;\Delta \cat \wtd s:\wtd S_1
					\cat \dual{\wtd s}:\wtd S_2 \proves
					P$}
		\AxiomC{$\wtd S_1 \ \dualof \ \wtd S_2$}
		\LeftLabel{\scriptsize PolyResS}
		\BinaryInfC{$\Gamma;\Lambda;\Delta \proves
						\news{\wtd s}{P}$}
	\end{prooftree}

\end{remark}

\begin{lemm}[Substitution Lemma~\cite{DBLP:conf/esop/KouzapasPY16}]
  \label{lem:subst}
  $\Gamma;\Lambda;\Delta \cat x : S \proves P \hastype \Proc$
  and $u \notin \text{dom}(\Gamma,\Lambda,\Delta)$ imply
  $\Gamma;\Lambda;\Delta \cat u : S \proves P \subst{u}{x} \hastype \Proc$.
\end{lemm}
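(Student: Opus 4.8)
The plan is to prove the Substitution Lemma by induction on the typing derivation of $P$ (equivalently, on the structure of $P$). The hypothesis places the assignment $x : S$ in the session environment, and the freshness condition $u \notin \dom{\Gamma, \Lambda, \Delta}$ guarantees that replacing $x$ by $u$ neither captures a bound name nor collides with an existing assignment. The goal is then to show that every rule used to derive $\Gamma;\Lambda;\Delta \cat x : S \proves P \hastype \Proc$ can be replayed after applying $\subst{u}{x}$, yielding a derivation of $\Gamma;\Lambda;\Delta \cat u : S \proves P\subst{u}{x} \hastype \Proc$.

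First I would settle the base cases. The only genuinely informative one is Rule \textsc{(Sess)}: from $\Gamma;\emptyset;\{x:S\} \proves x \hastype S$ we obtain $\Gamma;\emptyset;\{u:S\} \proves u \hastype S$ immediately, since the subject of \textsc{(Sess)} may be any name or variable and $(x)\subst{u}{x} = u$. The remaining axioms — \textsc{(Sh)}, \textsc{(LVar)}, \textsc{(RVar)}, \textsc{(Nil)} — either do not carry $x : S$ in the session environment (so the claim is vacuous or trivial by the freshness of $u$) or are preserved verbatim.

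For the inductive step I would proceed rule by rule, in each case applying the induction hypothesis to the subderivations and reassembling the rule instance. The communication rules \textsc{(Send)} and \textsc{(Rcv)} are representative: there $x$ may occur as the communication subject, or inside the transmitted value, or in the continuation. Because session and linear assignments are subject only to exchange (no contraction), the assignment $x : S$ resides in exactly one component of any environment split, so the substitution is routed to precisely the corresponding premise; if the subject was $x$, it becomes $u$ after substitution, and the reconstructed instance type-checks since $u$ inherits the type $S$. The rules \textsc{(App)}, \textsc{(Req)}, \textsc{(Acc)} are handled analogously, matching the subject's type $C$ (resp.\ $\chtype{U}$) across the substitution. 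Rules \textsc{(Par)}, \textsc{(ResS)}, \textsc{(Res)}, \textsc{(Rec)} merely propagate the induction hypothesis to the subprocesses after the appropriate environment splitting; binder rules such as \textsc{(Abs)} pose no difficulty because, by Barendregt's variable convention, the bound variable is distinct from both $x$ and $u$, so substitution commutes with the binder.

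The main obstacle — a mild one — is the bookkeeping around environment splitting and the structural principles. I would need to confirm that in every rule partitioning $\Lambda$ or $\Delta$ (notably \textsc{(Par)}, \textsc{(Send)}, \textsc{(App)}, \textsc{(Req)}) the unique occurrence of $x : S$ determines which premise receives the substitution, and that each environment's weakening/contraction/exchange discipline is respected after the swap. Since $x : S$ is a session (hence exchange-only) assignment, there is no duplication to track, and the freshness of $u$ discharges the side conditions of the form ``$u \notin \dom{\cdots}$'' that reappear in the reconstructed derivations; the argument therefore reduces to routine case analysis.
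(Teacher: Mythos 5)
Your proof is correct: induction on the typing derivation is the right strategy, and the two points you isolate — that the exchange-only (no contraction) discipline of $\Lambda$ and $\Delta$ routes the single assignment $x : S$ to exactly one premise of each splitting rule, and that freshness of $u$ together with the variable convention discharges the side conditions and binder cases — are exactly what makes the rule-by-rule reconstruction go through. Note, however, that this paper does not prove the lemma at all: it is imported by citation from Kouzapas et al.~\cite{DBLP:conf/esop/KouzapasPY16}, so your argument is a reconstruction of the standard proof in that prior work rather than something to compare against a proof in this paper.
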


\begin{lemm}[Weakening, shared environment]
	\label{lem:weaken}
	If $\Gamma;\Lambda;\Delta \proves P \hastype \Proc$
	then
	$\Gamma \cat x:\shot{C};\Lambda;\Delta \proves P \hastype \Proc$ and
	$\Gamma \cat u:\chtype{U};\Lambda;\Delta \proves P \hastype \Proc$.
\end{lemm}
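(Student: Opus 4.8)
The plan is to prove this by a routine induction on the typing derivation. Since the process judgment $\Gamma;\Lambda;\Delta \proves P \hastype \Proc$ depends on value judgments $\Gamma;\Lambda;\Delta \proves V \hastype U$ in the premises of rules such as \textsc{(Send)}, \textsc{(Req)}, and \textsc{(Acc)}, I would state and prove the lemma simultaneously for values and processes, by mutual induction on the structure of the derivation. The two assertions proved together are: (i) if $\Gamma;\Lambda;\Delta \proves P \hastype \Proc$ then $\Gamma \cat \alpha;\Lambda;\Delta \proves P \hastype \Proc$, and (ii) if $\Gamma;\Lambda;\Delta \proves V \hastype U$ then $\Gamma \cat \alpha;\Lambda;\Delta \proves V \hastype U$, where $\alpha$ denotes either a fresh $x:\shot{C}$ or a fresh $u:\chtype{U}$. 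The two forms of $\alpha$ are handled uniformly, since both are exactly the additions permitted by the grammar of $\Gamma$; the shape of the added assignment plays no role. Throughout I would assume the subject of $\alpha$ is \emph{fresh}, i.e.\ it lies outside $\dom{\Gamma,\Lambda,\Delta}$ and is not among the free or bound names of $P$ (resp.\ $V$); this is justified both by the standing assumption that the domains of $\Gamma,\Lambda,\Delta$ are pairwise distinct (so the extended environment is well-formed) and by Barendregt's variable convention, which lets us rename bound names apart from $\alpha$'s subject.

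The bulk of the cases are immediate. For the axioms \textsc{(Sess)}, \textsc{(Sh)}, \textsc{(LVar)}, \textsc{(RVar)}, \textsc{(Nil)}, and \textsc{(End)}, adding $\alpha$ to $\Gamma$ preserves all side conditions (e.g.\ the freshness requirement $u \notin \dom{\Gamma,\Lambda,\Delta}$ of \textsc{(End)} still holds for a fresh subject), so the conclusion follows directly. For the inductive rules in which $\Gamma$ is threaded unchanged into every premise---namely \textsc{(Par)}, \textsc{(App)}, \textsc{(Req)}, \textsc{(Send)}, \textsc{(Rcv)}, \textsc{(ResS)}, \textsc{(Rec)}, and \textsc{(Prom)}---I apply the induction hypothesis to each premise with the same $\alpha$ and re-apply the rule; since $\alpha$ touches neither $\Lambda$, $\Delta$, nor any linear/session splitting, the side conditions are unaffected.

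The cases requiring mild care are those in which the shared environment is itself manipulated: \textsc{(Abs)}, \textsc{(Acc)}, and \textsc{(Rcv)}, whose conclusions contain $\Gamma \backslash x$, and \textsc{(EProm)}, which promotes a binding from $\Lambda$ into $\Gamma$. For \textsc{(Abs)} (and likewise \textsc{(Acc)}, \textsc{(Rcv)}), the freshness of $\alpha$'s subject guarantees it is distinct from the abstracted/received variable $x$, so that $(\Gamma \cat \alpha)\backslash x = (\Gamma \backslash x) \cat \alpha$; applying the IH to the premise and then the rule yields the goal. For \textsc{(EProm)}, the premise is $\Gamma;\Lambda \cat x:\lhot{C};\Delta \proves P \hastype \Proc$ and the conclusion $\Gamma \cat x:\shot{C};\Lambda;\Delta \proves P \hastype \Proc$; since $\alpha$'s subject is fresh (in particular different from the promoted $x$), the IH gives $\Gamma \cat \alpha;\Lambda \cat x:\lhot{C};\Delta \proves P \hastype \Proc$, and re-applying \textsc{(EProm)} produces $\Gamma \cat \alpha \cat x:\shot{C};\Lambda;\Delta \proves P \hastype \Proc$, which is the goal up to exchange in $\Gamma$. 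The \textsc{(Res)} case is analogous: by the variable convention the bound shared name $a$ differs from $\alpha$'s subject, so the weakening commutes past the binder $\news{a}$.

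I expect the only obstacle to be the bookkeeping of freshness and environment manipulation in the \textsc{(Abs)}/\textsc{(Acc)}/\textsc{(Rcv)}/\textsc{(EProm)}/\textsc{(Res)} cases---verifying that the added binding commutes with the $\backslash$ operation, with promotion from $\Lambda$ to $\Gamma$, and with binders. Each of these reduces to the standard fact, already built into the design of the type system (``$\Gamma$ admits weakening, contraction, and exchange''), that assignments in $\Gamma$ may be freely reordered and introduced. Everything else is routine, and the companion value statement is discharged by the same induction, using \textsc{(Prom)} and \textsc{(Abs)} for the inductive value cases.
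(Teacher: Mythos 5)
Your proof is correct: the paper states this weakening lemma \emph{without proof} (it is treated as a standard property inherited from the type system of Kouzapas et al., cf.\ the remark in Section~2.2 that $\Gamma$ admits weakening, contraction, and exchange, while $\Lambda$ and $\Delta$ do not), and your induction on the typing derivation---carried out mutually with the value judgment, with a fresh subject for the added assignment, and with the commutation checks for $\Gamma\backslash x$ in \textsc{(Abs)}/\textsc{(Acc)}/\textsc{(Rcv)}, for the promotion in \textsc{(EProm)}, and for the binders in \textsc{(Res)}/\textsc{(ResS)}---is exactly the standard argument the paper implicitly relies on. The only cosmetic slip is that \textsc{(Rcv)} (and \textsc{(End)}, which has a premise and is not an axiom) appears in your list of rules threading $\Gamma$ unchanged as well as in the list needing care; the careful treatment is the applicable one, and you handle it correctly.
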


\begin{lemm}[Strengthening, shared environment]
	\label{lem:strength}
	The following hold:
	\begin{itemize}
	\item	If $\Gamma;\Lambda;\Delta \proves P \hastype \Proc$ and
	$x \notin \fv{P}$ then $\Gamma \setminus x;\Lambda;\Delta \proves P \hastype \Proc$.
	\item If $\Gamma;\Lambda;\Delta \proves P \hastype \Proc$ and
	$u \notin \fn{P}$ then $\Gamma \setminus u;\Lambda;\Delta \proves P \hastype \Proc$.
	\end{itemize}
\end{lemm}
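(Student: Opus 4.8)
The plan is to prove both items simultaneously, together with their value-level analogues, by induction on the typing derivation. The two process statements cannot be established in isolation, because rules such as \textsc{(Send)}, \textsc{(Req)}, \textsc{(Acc)}, \textsc{(App)}, and \textsc{(Abs)} carry value premises of the form $\Gamma; \Lambda; \Delta \proves V \hastype U$. I would therefore strengthen the induction hypothesis to cover values as well: if $\Gamma; \Lambda; \Delta \proves V \hastype U$ and $x \notin \fv{V}$ (resp. $u \notin \fn{V}$), then $\Gamma \setminus x; \Lambda; \Delta \proves V \hastype U$ (resp. $\Gamma \setminus u; \Lambda; \Delta \proves V \hastype U$). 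I choose induction on the derivation rather than on the structure of $P$ because the rules \textsc{(Prom)}, \textsc{(EProm)}, and \textsc{(End)} leave the subject term unchanged and so are invisible to plain structural induction.

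The crux is that an entry of $\Gamma$ is only ever consulted at a leaf: a shared-value assignment $x : \shot{C}$ (or a name assignment $u : \chtype{U}$, $a : \chtype{S}$) is read by \textsc{(Sh)}, by \textsc{(LVar)} following a use of \textsc{(EProm)}, by the shared premises of \textsc{(Req)}/\textsc{(Acc)}, or by \textsc{(RVar)}. In every such leaf the typed term explicitly mentions the variable or name being read. Hence if the removed entity is not free in the term, no leaf reads it, and --- since $\Gamma$ admits weakening, contraction, and exchange (the same structural laws underlying the dual result \lemref{lem:weaken}) --- the leaf remains derivable after deleting the entry. For the compound rules I would exploit the fact that $\fv{-}$ and $\fn{-}$ distribute over the process and value constructors: from $x \notin \fv{P}$ (resp. $u \notin \fn{P}$) I can read off the corresponding freeness condition for the subterm of each premise, apply the induction hypothesis to that premise, and reassemble the derivation with $\Gamma \setminus x$ (resp. $\Gamma \setminus u$) in place of $\Gamma$. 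Environment splittings, as in \textsc{(Par)} or \textsc{(Send)}, pose no difficulty, because $\Gamma$ is shared across all premises, so the deletion is performed uniformly in every branch.

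The delicate cases are the binders and the promotion rules. In \textsc{(Res)} and \textsc{(ResS)} the bound name differs from the removed free name by Barendregt's convention, so deletion commutes with extending $\Gamma$ by $a : \chtype{S}$ in the premise, and with the session bindings $s, \dual{s}$ which live in $\Delta$ rather than $\Gamma$. The genuinely subtle rule is \textsc{(EProm)}: its premise types $P$ with $x : \lhot{C}$ in the \emph{linear} environment, while its conclusion records $x : \shot{C}$ in $\Gamma$. When the entry I wish to delete is exactly this promoted $x$, I must rule out the possibility that $x$ is promoted yet unused. Here the proof leans on the linear-usage invariant of the system, namely $\dom{\Lambda} \subseteq \fv{P}$ for every derivable $\Gamma; \Lambda; \Delta \proves P \hastype \Proc$ (justified because $\Lambda$ admits no weakening); this forces $x \in \fv{P}$ whenever $x : \lhot{C} \in \Lambda$, contradicting $x \notin \fv{P}$. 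Thus the last rule cannot be \textsc{(EProm)} promoting $x$, and the remaining sub-case $y \neq x$ goes through by the induction hypothesis as above. An analogous check handles \textsc{(Prom)}.

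I expect the main obstacle to be precisely this promotion analysis: making the appeal to the linear-usage invariant rigorous may require first stating and proving (or citing) the auxiliary fact that $\dom{\Lambda} \subseteq \fv{P}$ for every derivable process judgement. Everything else is bookkeeping over the distribution of $\fv{-}$ and $\fn{-}$ and the structural flexibility of $\Gamma$, and it mirrors closely the induction already implicit in the dual weakening lemma \lemref{lem:weaken}.
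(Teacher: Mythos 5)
Your proposal is correct, but there is nothing in the paper to compare it against: the paper states \Cref{lem:strength} in the appendix without proof, as a standard auxiliary fact about the \HO type system inherited from Kouzapas et al.~\cite{KouzapasPY17} (just like \Cref{lem:weaken} and the substitution lemma). What you give is essentially the canonical argument one would supply. Your two structural decisions are the right ones: mutual induction on the derivation together with a value-level statement is genuinely forced, since \textsc{(Send)}, \textsc{(Req)}, \textsc{(Acc)}, \textsc{(Abs)} and \textsc{(App)} have value premises; and induction on the derivation rather than on the term is needed because \textsc{(Prom)}, \textsc{(EProm)} and \textsc{(End)} do not change the subject. You also correctly isolate the only delicate case, \textsc{(EProm)}: when the deleted assignment $x:\shot{C}$ is the promoted one, the premise carries $x:\lhot{C}$ in $\Lambda$, and the linear-usage invariant $\dom{\Lambda} \subseteq \fv{P}$ (which holds because no rule weakens $\Lambda$; it is preserved rule by rule, using the pairwise-disjointness of the domains of $\Gamma$, $\Lambda$, $\Delta$ in the \textsc{(Acc)}/\textsc{(Rcv)} cases) contradicts $x \notin \fv{P}$, so the case is vacuous. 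As you note, that invariant is an auxiliary lemma in its own right and should be stated and proved (or cited) explicitly. Two minor remarks: \textsc{(Prom)} needs no such analysis at all, since both its premise and conclusion have $\Lambda = \Delta = \emptyset$ and the deleted entry stays in $\Gamma$, so the plain induction hypothesis suffices; and since the paper's notation $\Gamma\backslash x$ removes only assignments of the form $x:\shot{C}$ (resp.\ $u:\chtype{U}$), the \textsc{(RVar)}/\textsc{(Rec)} entries $\rvar{X}:\Delta$ are outside the scope of the deletion, which disposes of the one leaf your case analysis mentions but does not work out.
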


\newpage
\section{Appendix to \secref{mst:ss:staticcorr}}
\label{app:extii}

We use the following auxiliary lemma:

\begin{restatable}[]{lemm}{lindexcor}
	\label{lem:indexcor}
	Let $\wtd z$ be tuple of channel names, $U$ a higher-order type, and $S$ a recursive 	session type.
	If $\wtd z : \Rts{}{s}{\btout{U}S}$ and $k = \indT{\btout{U}S}$
	then $z_k:\trec{t}\btout{\Gt{U}}\vart{t}$.
\end{restatable}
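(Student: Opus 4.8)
The plan is to prove the statement by unwinding the definitions of $\Rts{}{s}{-}$, $\indT{-}$, and $\Rt{-}$, together with a short induction on the leading prefixes of $\btout{U}S$. The governing observation is that, since $S$ is a recursive session type, $\btout{U}S$ is a derived unfolding of some tail-recursive type $\trec{t}{T}$ (as illustrated in \exref{ex:fs}). Writing the body as $T = \gamma_1.\cdots.\gamma_n.\vart{t}$ with $n = \len{\Rt{T}}$, the unfolding has the shape $\gamma_j.\cdots.\gamma_n.\trec{t}{T}$ for some $1 \le j \le n$, where the leading prefix is $\gamma_j = \btout{U}$ and $S = \gamma_{j+1}.\cdots.\gamma_n.\trec{t}{T}$.

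First I would record the shape of $\Rt{T}$: by induction on the prefix structure of $T$, using the clauses $\Rt{\btout{U'}T'} = \trec{t}\btout{\Gt{U'}}\vart{t}, \Rt{T'}$, $\Rt{\btinp{U'}T'} = \trec{t}\btinp{\Gt{U'}}\vart{t}, \Rt{T'}$, and $\Rt{\vart{t}} = \epsilon$, the list $\Rt{T}$ has exactly $n$ entries $M_1,\ldots,M_n$, with $M_i = \trec{t}\btout{\Gt{U_i}}\vart{t}$ when $\gamma_i = \btout{U_i}$ (and dually for input prefixes). Second I would compute $\Rts{}{s}{\btout{U}S}$: because $\Rts{}{s}{-}$ discards each leading input or output prefix until it meets the recursion, a straightforward induction on the number of leading prefixes gives $\Rts{}{s}{\btout{U}S} = \Rts{}{s}{S} = \cdots = \Rt{T}$. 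Hence $\wtd z : \Rt{T}$, and so $z_k$ carries the $k$-th entry $M_k$ of this list.

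It then remains to identify the index $k = \indT{\btout{U}S}$. Since $\btout{U}S$ is an output prefix rather than a bare $\trec{}$, the second clause of $\indT{-}$ (\Cref{def:indexfunction}) applies, so $\indT{\btout{U}S} = \indTaux{\btout{U}S}{0}$. Each leading prefix increments the accumulator by one, so peeling the $n-j+1$ prefixes $\gamma_j,\ldots,\gamma_n$ yields $\indTaux{\trec{t}{T}}{n-j+1} = \len{\Rt{T}} - (n-j+1) + 1 = j$. Thus $k = j$ is exactly the position of the leading output $\gamma_j = \btout{U}$ within the body, and combining with the first step gives $M_k = M_j = \trec{t}\btout{\Gt{U}}\vart{t}$, so $z_k : \trec{t}\btout{\Gt{U}}\vart{t}$ as required.

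I expect the only delicate point to be the index bookkeeping: one must verify that the off-by-one accounting in $\indTaux{\trec{t}{T}}{l} = \len{\Rt{T}} - l + 1$ is aligned with the head-to-tail order in which $\Rt{-}$ emits its entries, so that $\indT{\btout{U}S}$ selects precisely the slot generated for $\btout{U}$ and not an adjacent one. Every other step is a routine unfolding of the definitions, requiring no typing or behavioral machinery beyond the decomposition functions themselves.
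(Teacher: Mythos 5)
Your proof is correct. In fact, the paper states \Cref{lem:indexcor} without any proof, using it as a black box in the recursive-output and recursive-input cases of \Cref{t:typecore}; your definition-unfolding argument is exactly the routine verification the paper leaves implicit, and your index bookkeeping is right: peeling the $n-j+1$ prefixes $\gamma_j,\ldots,\gamma_n$ gives $\indTaux{\trec{t}{T}}{n-j+1} = \len{\Rt{T}}-(n-j+1)+1 = j$, which agrees with the paper's own sanity check in \Cref{ex:fs} (there $n=3$, $j=2$, $\indT{S'}=2$).

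One aspect of your write-up deserves emphasis rather than correction: you explicitly assume that $\btout{U}S$ is a \emph{derived unfolding} of a tail-recursive type $\trec{t}{T}$, so that the leading prefixes form a suffix $\gamma_j.\cdots.\gamma_n$ of the body $T$. This assumption is genuinely necessary, not just convenient: for an arbitrary output prefix attached to a recursive type the statement fails. For instance, with $S = \trec{t}{\btinp{\mathsf{int}}\vart{t}}$ and $U=\mathsf{bool}$, one computes $\Rts{}{s}{\btout{U}S} = \trec{t}{\btinp{\mathsf{int}}\vart{t}}$ (a singleton list) and $\indT{\btout{U}S} = 1$, yet the first entry is an input type, not $\trec{t}\btout{\Gt{U}}\vart{t}$. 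So the lemma is sound only under the derived-unfolding reading, which is how the decomposition actually produces such types; making that hypothesis explicit, as you do, is the one place where your proof adds precision beyond what the paper records.
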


\subsection{Proof of \lemref{mst:t:typecore}}
\label{mst:app:typecore}

\typerec*

\begin{proof}
  By mutual induction on the structure of $P$ and $V$. 
 \begin{enumerate}
 \item By assumption, $\Gamma;\Lambda;\Delta \cat \envR \proves P \hastype \Proc$. 
 We consider four representative cases; the rest are similar. 
\begin{enumerate}
	\item Case $P=\inact$. The only rule that can be applied here is \textsc{Nil}.
	By inversion of this rule, we have:
	$\Gamma;\es;\es \proves \inact$.
	We shall then prove the following judgment:
	\begin{align}
		\Gt{\Gamma};\es;\Theta \proves \B{k}{\tilde x}{\inact} \hastype \Proc
	\end{align}

	\noindent where $\wtd x = \fv{\inact}=\es$ and $\Theta =
	\{\prop_k:\btinpt{\lhot{\tinact}}\}$. By \tabref{mst:t:bdowncore}, 
	$\B{k}{\epsilon}{\inact} =
	\propinp{k}{} \inact$. By convention  	$\propinp{k}{} \inact$ stands for $\propinp{k}{y} \inact$, with
	$\prop_k : \btinpt{\shot{\tinact}}$.
	The following tree proves this case:
	\def\proofSkipAmount{\vskip 1.2ex plus.8ex minus.4ex}
	\begin{prooftree}
	\AxiomC{}
	\LeftLabel{\scriptsize Nil}
		\UnaryInfC{$\Gamma';\es;\es \proves \inact \hastype \Proc$}
		\AxiomC{$\prop_k \notin \dom{\Gamma}$}
		\LeftLabel{\scriptsize End}
		\BinaryInfC{$\Gamma';\es;\prop_k : \tinact \proves \inact \hastype \Proc$}
		\AxiomC{}
		\LeftLabel{\scriptsize LVar}
		\UnaryInfC{$\Gt{\Gamma};y \hastype \lhot{\tinact};\es \proves y \hastype \lhot{\tinact}$}
		\LeftLabel{\scriptsize EProm}
		\UnaryInfC{$\Gamma';\es;\es \proves y \hastype \lhot{\tinact}$}
		\LeftLabel{\scriptsize Prom}
		\UnaryInfC{$\Gamma';\es;\es \proves y \hastype \shot{\tinact}$}
		\LeftLabel{\scriptsize Rcv}
		\BinaryInfC{$\Gt{\Gamma};\es;\Theta \proves \propinp{k}{} \inact \hastype \Proc$}
	\end{prooftree}
	\noindent where $\Gamma' = \Gt{\Gamma} \cat y : \shot{\tinact}$. 
	We know $\prop_k \notin \dom{\Gamma}$ since we use reserved names for 	propagators. 

	\item Case $P = \bout{u_i}{V}P'$. 
	We distinguish three
	sub-cases: \rom{1} $u_i \in \dom{\Delta}$, 
		\rom{2} $u_i \in \dom{\Gamma}$, and \rom{3} 
		$u_i \in \dom{\envR}$. 
	
	\noindent We consider sub-case \rom{1} first.
		In this case Rule~\textsc{Send} can be applied:
		\begin{align}
			\label{pt:outputitr}
			\AxiomC{$\Gamma;\Lambda_1;\Delta_1 \cat {\envR}_1
			\proves P' \hastype \Proc$}
			\AxiomC{$\Gamma;\Lambda_2;\Delta_2 \cat {\envR}_2
			\proves V \hastype U$}
			\AxiomC{$u_i:S \in \Delta_1 \cat \Delta_2$}
			\LeftLabel{\scriptsize Send}
			\TrinaryInfC{$\Gamma;\Lambda_1 \cat \Lambda_2; 
			((\Delta_1 \cat \Delta_2)
						\setminus \{ u_i:S \})
						\cat u_i:\btout{U}S 
						\cat {\envR}_1 \cat {\envR}_2
						\proves \bout{u_i}{V}P' \hastype \Proc$}
			\DisplayProof
		\end{align}
			  
Let $\wtd w = \fv{P'}$. Also, let $\Gamma'_1 =
\Gamma \setminus \wtd w$ and $\Theta_1$ be a balanced environment such
that
\begin{align*} 
	\dom{\Theta_1}=\{\prop_{k+1},\ldots,\prop_{k+\plen{P'}}\}\cup
	\{\dual{\prop_{k+2}},\ldots,\dual{\prop_{k+\plen{P'}}}\}
\end{align*} 
and $\Theta_1(\prop_{k+1})=\btinpt{\wtd M_1}$ where
$\wtd M_1 = (\Gt{\Gamma},\Gt{\Lambda_1})(\wtd w)$.
We define:
\begin{align}
	\label{eq:prod_env}
	{\envPropR}_i = \prod_{r \in {\dom{{\envR}_i}}} \prop^r :
	\chtype{\lhot{\Rts{}{s}{{\envR}_i(r)}}} \ \text{for} \ i \in \{1,2\}
\end{align}

Then, by IH on the first assumption of \eqref{pt:outputitr} we have:
\begin{align}
	\label{eq:output-ih-1}
	\Gt{\Gamma'_1} \cat {\envPropR}_1;\es;
	\Gt{\Delta_1} \cat \Theta_1 \proves
	\B{k+1}{\tilde z}{P'} \hastype \Proc
\end{align}

Let $\wtd y = \fv{V}$ and $\Gamma'_2 = \Gamma \setminus \wtd y$.
Then, by IH (Part 2)
on the second assumption of \eqref{pt:outputitr} we have:
\begin{align}
	\label{eq:output-ih-2}
	\Gt{\Gamma} \cat {\envPropR}_2;\Gt{\Lambda_2};\Gt{\Delta_2} 
	 \proves
	\V{k+1}{\tilde y}{V} \hastype \Gt{U}
\end{align}

We may notice that if $U = \shot{C}$  then $\Lambda_2 = \es$ and $\Delta_2 = \es$.
Let $\wtd x = \fv{P}$.
We define $\Theta = \Theta_1 \cat \Theta'$, where:
\begin{align*}
	\Theta' = \prop_{k}: \btinpt{\wtd M}  \cat
	\dual{\prop_{k+1}}:\btoutt{\wtd M_2} 
\end{align*}

\noindent with $\wtd M = (\Gt{\Gamma}\cat \Gt{\Lambda_1 \cat
\Lambda_2})(\wtd x)$. By \defref{mst:def:sizeproc}, we know $\plen{P} =
\plen{P'} + 1$, so
$$
\dom{\Theta}=\{\prop_k,\ldots,\prop_{k+\plen{P}-1}\} \cup
\{\dual{\prop_{k+1}},\ldots,\dual{\prop_{k+\plen{P}-1}}\}
$$
	By construction $\Theta$ is balanced since 
	$\Theta(\prop_{k+1}) \dualof \Theta(\dual{\prop_{k+1}})$ 
	and
		$\Theta_1$ is balanced.
By \tabref{mst:t:bdowncore}, we have:
\begin{align*}
	\B{k}{\tilde x}{\bout{u_i}{V}P'} =
	\propinp{k}{\wtd x}
	\bbout{u_i}{\V{k+1}{\tilde y}
	{V\incrname{u}{i}}} \propout{k+1}{\wtd w}
	\inact \Par
	\B{k+1}{\tilde w}{P'\incrname{u}{i}}
\end{align*}
		
\noindent We know $\fv{P'} \subseteq \fv{P}$ and $\fv{V} \subseteq
\fv{P}$, i.e., $\wtd w \subseteq \wtd x$ and $\wtd y \subseteq \wtd x$.
Let $\Gamma_1 = \Gamma \setminus \wtd x$ and $\envPropR =
{\envPropR}_1 \cat {\envPropR}_2$. We shall prove the following
judgment:
\begin{align}
	\AxiomC{$\Gt{\Gamma_1} \cat {\envPropR}; \es; \Gt{((\Delta_1\cat\Delta_2)\setminus \{ u_i:S \})
	\cat u_i:\btout{U}S} \cat \Theta \proves
	\B{k}{\tilde x}{\bout{u_i}{V}P'} \hastype \Proc$}
	\DisplayProof
\end{align}
		
Let $\sigma = \incrname{u}{i}$. To type the left-hand side component of
$\B{k}{\tilde x}{\bout{u_i}{V}P'}$
we use some auxiliary derivations:
\begin{align}
	\label{pt:zsend}
	\AxiomC{}
	\LeftLabel{\scriptsize Nil}
	\UnaryInfC{$\Gt{\Gamma}\cat {\envPropR};\es;\es \proves \inact \hastype \Proc$}
	\LeftLabel{\scriptsize End}
	\UnaryInfC{$\Gt{\Gamma}\cat {\envPropR};\es; 
	\dual{\prop_{k+1}} : \tinact \proves \inact \hastype \Proc$}
	\AxiomC{}
	\LeftLabel{\scriptsize PolyVar}
	\UnaryInfC{$\Gt{\Gamma}\cat {\envPropR};\Gt{\Lambda_1};\es
				\proves \wtd w \hastype \wtd M_2$}
	\LeftLabel{\scriptsize PolySend}
	\BinaryInfC{$\Gt{\Gamma}\cat {\envPropR};\Gt{\Lambda_1};
				\dual{\prop_{k+1}}:\btout{\wtd M_2} \tinact
				\proves \bout{\dual {\prop_{k+1}}}{\wtd w} \inact
				\hastype \Proc$}
	\LeftLabel{\scriptsize End}
	\UnaryInfC{$\Gt{\Gamma}\cat {\envPropR};\Gt{\Lambda_1};
				\dual{\prop_{k+1}}:\btout{\wtd M_2} \tinact
				\cat u_i:\tinact
				\proves \bout{\dual {\prop_{k+1}}}{\wtd w} \inact
				\hastype \Proc$}
	\DisplayProof
\end{align}
		
\begin{align}
	\label{pt:output-weaken}
\AxiomC{\eqref{eq:output-ih-2}}
\LeftLabel{\scriptsize (\lemref{lem:subst}) with 
	$\subst{\tilde n}{\tilde u}$}
\UnaryInfC{$\Gt{\Gamma}\cat {\envPropR}_2;\Gt{\Lambda_2};
\Gt{\Delta_2\sigma}  \proves
			\V{k+1}{\tilde y}{V\sigma} \hastype \Gt{U}$}
\LeftLabel{\scriptsize (\lemref{lem:strength}) with 
	${\envPropR}_1$}
\UnaryInfC{$\Gt{\Gamma}\cat {\envPropR};\Gt{\Lambda_2};
\Gt{\Delta_2\sigma} 
				 \proves
			\V{k+1}{\tilde y}{V\sigma} \hastype \Gt{U}$}
\DisplayProof
\end{align}

\begin{align}
	\label{pt:uisend}
	\AxiomC{$u_i:\btout{\Gt{U}}\tinact  \in \Gt{\Delta_2\sigma}
	\cat u_i:\btout{\Gt{U}}\tinact \cat
	\dual{\prop_{k+1}}:\btout{\wtd M_2}\tinact$} 
	\AxiomC{\eqref{pt:zsend}}
	\AxiomC{\eqref{pt:output-weaken}}
	\LeftLabel{\scriptsize Send}
	\TrinaryInfC{\begin{tabular}{c}
				$\Gt{\Gamma} \cat {\envPropR};\Gt{\Lambda_1\cat\Lambda_2};
				\Gt{\Delta_2\sigma}
						\cat u_i:\btout{\Gt{U}}\tinact \cat
						\dual{\prop_{k+1}}:\btout{\wtd M_2}\tinact
						\proves$ \\ 
						$\bbout{u_i}{\V{k+1}{\tilde y}{V\sigma}}
					\propout{k+1}{\wtd w} \inact \hastype \Proc$
					\end{tabular}}
		\LeftLabel{\scriptsize End}
	\UnaryInfC{\begin{tabular}{c}
				$\Gt{\Gamma} \cat {\envPropR};\Gt{\Lambda_1\cat\Lambda_2};
				\Gt{\Delta_2\sigma}
						\cat u_i:\btout{\Gt{U}}\tinact \cat
						\dual{\prop_{k+1}}:\btout{\wtd M_2}\tinact \cat \prop_k:\tinact
						\proves$ \\
						$\bbout{u_i}{\V{k+1}{\tilde y}{V\sigma}}
					\propout{k+1}{\wtd w} \inact \hastype \Proc$
					\end{tabular}}
	\DisplayProof
		\end{align}
		
\begin{align}
\label{pt:stvalue}
\AxiomC{\eqref{pt:uisend}}
\AxiomC{}
\LeftLabel{\scriptsize PolyVar}
\UnaryInfC{$\Gt{\Gamma} \cat {\envPropR};\Gt{\Lambda_2};\es \proves
				\wtd x : \wtd M$}
\LeftLabel{\scriptsize PolyRcv}
\BinaryInfC{
				$\Gt{\Gamma_1} \cat {\envPropR};\es;\Gt{\Delta_2\sigma} \cat
				u_i:\btout{\Gt{U}} \tinact 
				\cat \Theta'
				\proves \propinp{k}{\wtd x}
				\bbout{u_i}{\V{k+1}{\tilde y}{V\sigma}}
				\propout{k+1}{\wtd w} \inact
				\hastype \Proc$
				}
\DisplayProof
\end{align}
The following tree proves this case:
\begin{align}
	\label{pt:output1}
	\AxiomC{\eqref{pt:stvalue}}
	\AxiomC{\eqref{eq:output-ih-1}}
	\LeftLabel{\scriptsize (\lemref{lem:subst} with $\subst{\tilde n}
	{\tilde u}$)}
	\UnaryInfC{$\Gt{\Gamma'_1} \cat {\envPropR}_1;\es;\Gt{\Delta_1\sigma} 
	\cat \Theta_1
				\proves \B{k+r+1}{\tilde w}{P'\sigma} \hastype \Proc$}
	\LeftLabel{\scriptsize (\lemref{lem:strength} with
	$\tilde x \setminus \tilde w$ and ${\envPropR}_2$)}
	\UnaryInfC{$\Gt{\Gamma_1} \cat {\envPropR};\es;\Gt{\Delta_1\sigma} 
	\cat
				\Theta_1 \proves
				\B{k+r+1}{\tilde w}{P'\sigma} \hastype \Proc$}
	\LeftLabel{\scriptsize Par}
	\BinaryInfC{$\Gt{\Gamma_1} \cat {\envPropR};\es; 
	\Gt{((\Delta_1\cat\Delta_2)
				\setminus \{ u_i:S \})
				\cat u_i:\btout{U}S} \cat \Theta \proves
				\B{k}{\tilde x}{\bout{u_i}{V}P'} \hastype \Proc$}
	\DisplayProof
\end{align}
\noindent where $\wtd n =
(u_{i+1},\ldots,u_{i+\len{\Gt{S}}})$ and $\wtd u =
(u_i,\ldots,u_{i+\len{\Gt{S}}-1})$. 
This concludes sub-case~\rom{1}. 

\smallskip

\noindent Now, we consider sub-case \rom{2}. 
In this sub-case Rule~\textsc{Req} can be applied:
	\begin{align}
		\label{pt:outputitr-2}
		\AxiomC{$\Gamma;\es;\es \proves u \hastype \chtype{U}$}
		\AxiomC{$\Gamma;\Lambda;\Delta_1 \cat {\envR}_1 
		\hastype P' \hastype \Proc$}
	\AxiomC{$\Gamma;\es;\Delta_2 \cat {\envR}_2
		\proves V \hastype U$}
	\LeftLabel{\scriptsize Req}
		\TrinaryInfC{$\Gamma;\Lambda;\Delta_1 \cat \Delta_2 
		\cat {\envR}_1 \cat {\envR}_2
		\proves \bout{u}{V}P' \hastype \Proc$}
		\DisplayProof
	\end{align}
		
Let $\wtd w = \fv{P'}$. Further, let
$\Gamma'_1 = \Gamma \setminus \wtd w$ and let $\Theta_1$,
 ${\envPropR}_1$, and ${\envPropR}_2$ be environments
defined as in sub-case \rom{1}.
By IH on the second assumption of \eqref{pt:outputitr-2} we have:
\begin{align}
	\label{eq:output-ih-1-2}
	\Gt{\Gamma_1'} \cat {\envPropR}_1;\es;\Gt{\Delta_1} 
	\cat \Theta_1 \proves
	\B{k+1}{\tilde w}{P'} \hastype \Proc
\end{align}

	Let $\wtd y = \fv{V}$. 
By IH on the second assumption of \eqref{pt:outputitr} we have:
\begin{align}
	\label{eq:output-ih-2-2}
	\Gt{\Gamma} \cat {\envPropR}_2;\es;\Gt{\Delta_2}  \proves
	\V{k+1}{\tilde y}{V} \hastype \Gt{U}
\end{align}

Let $\wtd x = \fv{P}$ and $\Gamma_1 = \Gamma \setminus \wtd x$.
We define $\Theta = \Theta_1  \cat \Theta'$, where:
\begin{align*}
	\Theta' = \prop_{k}: \btinpt{\wtd M}  \cat
	\dual{\prop_{k+1}}:\btoutt{\wtd M_2}  
\end{align*}
	\noindent with
$\wtd M = (\Gt{\Gamma}\cat \Gt{\Lambda})(\wtd x)$.
By \defref{mst:def:sizeproc}, we know $\plen{P} = \plen{P'} + 1$, so
$$\dom{\Theta}=(\prop_k,\ldots,\prop_{k+\plen{P}-1}) \cup
(\dual{\prop_{k+1}},\ldots,\dual{\prop_{k+\plen{P}-1}})$$
By construction $\Theta$ is balanced since $\Theta(\prop_{k+1}) \dualof 
\Theta(\dual{\prop_{k+1}})$ and
		$\Theta_1$ is balanced.
By \tabref{mst:t:bdowncore}, we have:
\begin{align*}
\B{k}{\tilde x}{\bout{u_i}{V}P'} =
\propinp{k}{\wtd x}
\bbout{u_i}{\V{k+1}{\tilde y}{V}} \propout{k+1}{\wtd w} \inact \Par
\B{k+1}{\tilde w}{P'}
\end{align*}
\noindent We know
$\fv{P'} \subseteq \fv{P}$ and $\fv{V} \subseteq \fv{P}$, i.e., 
$\wtd w \subseteq \wtd x$ and $\wtd y \subseteq \wtd x$. 
		
To prove 
$$\Gt{\Gamma_1} \cat {\envPropR};\es; \Gt{\Delta_1\cat\Delta_2} \cat \Theta
				\proves \B{k}{\tilde x}{\bout{u_i}{V}P'}$$
we use some auxiliary derivations:
\begin{align}
	\label{pt:zsend-2}
	\AxiomC{}
	\LeftLabel{\scriptsize Nil}
	\UnaryInfC{$\Gt{\Gamma} \cat {\envPropR};\es;\es \proves \inact \hastype \Proc$}
	\LeftLabel{\scriptsize End}
	\UnaryInfC{$\Gt{\Gamma} \cat {\envPropR};\es;\dual{\prop_{k+1}}:\tinact \proves \inact \hastype \Proc$}
	\AxiomC{}
	\LeftLabel{\scriptsize PolyVar}
	\UnaryInfC{$\Gt{\Gamma} \cat {\envPropR};\Gt{\Lambda};\es
				\proves \wtd w : \wtd M_2$}
	\LeftLabel{\scriptsize PolySend}
	\BinaryInfC{$\Gt{\Gamma} \cat {\envPropR};\Gt{\Lambda};
		\dual{\prop_{k+1}}:\btout{\wtd M_2} \tinact
		\proves \bout{\dual {\prop_{k+1}}}{\wtd w} 
		\inact \hastype \Proc$}
	\DisplayProof
\end{align}

\begin{align}
	\label{pt:uisend-2}
	\AxiomC{\eqref{pt:zsend-2}}
	\AxiomC{\eqref{eq:output-ih-2-2}}
	\LeftLabel{\scriptsize (\lemref{lem:strength} with ${\envPropR}_1$)}
	\UnaryInfC{$\Gt{\Gamma} \cat {\envPropR};\es;\Gt{\Delta_2} 
	 \proves
	\V{k+1}{\tilde y}{V} \hastype \Gt{U}$}
	\LeftLabel{\scriptsize Req}
	\BinaryInfC{$\Gt{\Gamma} \cat {\envPropR};\Gt{\Lambda};\Gt{\Delta_2}
			 \cat
			\dual{\prop_{k+1}}:\btout{\wtd M_2} \tinact
			\proves \bbout{u_i}{\V{k+1}{\tilde y}{V}}
			\propout{k+1}{\wtd w} \inact \hastype \Proc$}
	\DisplayProof
\end{align}

\begin{align}
\label{pt:stvalue-2}
\AxiomC{\eqref{pt:uisend-2}}
\AxiomC{}
\LeftLabel{\scriptsize PolyVar}
\UnaryInfC{$\Gt{\Gamma} \cat {\envPropR};\Gt{\Lambda};\es \proves
			\wtd x \hastype \wtd M$}
\LeftLabel{\scriptsize PolyRcv}
\BinaryInfC{$\Gt{\Gamma_1} \cat {\envPropR};\es;\Gt{\Delta_2} 
		\cat \Theta'
		\proves \propinp{k}{\wtd x}
		\bbout{u_i}{\V{k+1}{\tilde y}{V}}
			\propout{k+1}{\wtd w} \inact \hastype \Proc$}
\DisplayProof
\end{align}
		
The following tree proves this case:
\begin{align}
	\label{pt:output1-2}
	\AxiomC{\eqref{pt:stvalue-2}}
	\AxiomC{\eqref{eq:output-ih-1-2}}
	\LeftLabel{\scriptsize (\lemref{lem:strength} with $\tilde x 
	\setminus \tilde w$ and ${\envPropR}_2$)}
	\UnaryInfC{$\Gt{\Gamma_1} \cat {\envPropR};\es;\Gt{\Delta_1} \cat 
	\Theta_1 \proves
	\B{k+1}{\tilde w}{P'} \hastype \Proc$}
	\LeftLabel{\scriptsize Par}
	\BinaryInfC{$\Gt{\Gamma_1} \cat \envPropR;\es; 
	\Gt{\Delta_1\cat\Delta_2} 
	\cat \Theta
			\proves \B{k}{\tilde x}{\bout{u_i}{V}P'} \hastype \Proc$}
	\DisplayProof
\end{align}
This concludes sub-case~\rom{2}. 

\smallskip

\noindent Now, we consider sub-case~\rom{3}.  
Here we know $P = \bout{u_i}{V}P'$ and $u_i:S \in \envR$. In this
case Rule~\textsc{Send} can be applied:
\begin{align}
\label{pt:r-outputitr}
\AxiomC{$\Gamma;\Lambda_1;\Delta_1 \cat {\envR}_1 \proves 
P' \hastype \Proc$}
\AxiomC{$\Gamma;\Lambda_2;\Delta_2 \cat {\envR}_2 \proves 
V \hastype U$}
\AxiomC{$u_i:S' \in {\envR}_1 \cat {\envR}_2$}
\LeftLabel{\scriptsize Send}
\TrinaryInfC{$\Gamma;\Lambda_1 \cat \Lambda_2;
				\Delta_1 \cat \Delta_2 \cat
				(({\envR}_1\cat{\envR}_2)
				\setminus \{ u_i:S' \})
				\cat u_i:\btout{U}S'
				\proves \bout{u_i}{V}P' \hastype \Proc$}
\DisplayProof
\end{align}
		Let $\wtd w = \fv{P'}$. Let $\Theta_1$,
		 $\Theta'$, $\Theta$, ${\envPropR}_1$, and ${\envPropR}_2$ be defined as in the 
		previous sub-case. Also, let
		$\Gamma'_1 = \Gamma \setminus \wtd w$.
		Then, by IH on the
		first assumption of \eqref{pt:r-outputitr} we have:
			\begin{align}
			\label{eq:r-case1-ih}
				\Gt{\Gamma'_1}\cat {\envPropR}_1;\es;\Gt{\Delta_1}\cat\Theta_1
				\proves \B{k+1}{\tilde w}{P'} \hastype \Proc
			\end{align}
			Let $\Gamma'_2 = \Gamma \setminus \wtd y$.
			Then, by IH (Part 2) on the
		second assumption of \eqref{pt:r-outputitr} we have:
		\begin{align}
			\Gt{\Gamma'_2}\cat{\envPropR}_2;\Gt{\Lambda_2};\Gt{\Delta_2}
			\proves \V{k+1}{\tilde y}{V} \hastype \Gt{U}
		\end{align}
				
		By \tabref{mst:t:bdowncore} we have:
		\begin{align*}
			\B{k}{\tilde x}{P} = \propinp{k}{\wtd x}
					& \bbout{\prop^u}{N_V}  \inact \Par
			\B{k+1}{\tilde w}{P'} \\
			& \text{where} \ N_V = \abs{\wtd z}
					{\bout{z_{\indT{S}}}{\V{k+1}{\tilde y}{V}}}
					\big(\apropout{k+1}{\wtd w} \Par 
					\recprovx{u}{x}{\wtd z}
				\big) 
		\end{align*}
	
We notice that $u_i \in \rfn{V}\cat\rfn{P}$ since $u_i$ has tail-recursive type $S$.
	Hence, by \eqref{eq:prod_env} we know 
	$(\envPropR_1,\envPropR_2)(\prop^u)=\chtype{\lhot{\Rts{}{s}{S}}}$.
	Further, we know that $S = \btout{U}S'$ and by 
	\defref{mst:def:typesdecomp},
	$\Rts{}{s}{S}=\Rts{}{s}{S'}$.
	So we define
	$\envPropR = {\envPropR}_1 \cat {\envPropR}_2$. Let $\Gamma_1 = \Gamma \setminus \wtd x$ where $\wtd x = \fv{P}$. 
We shall prove the following judgment:
\begin{align*}
	\Gt{\Gamma_1} \cat \envPropR;\es; \Gt{\Delta_1 \cat \Delta_2} 
	\cat \Theta
				\proves \B{k}{\tilde x}{\bout{u_i}{V}P'} \hastype \Proc
\end{align*}

		We use some auxiliary derivations:
		\begin{align}
			\label{pt:r-send3}
			\AxiomC{}
			\LeftLabel{\scriptsize LVar}
			\UnaryInfC{\begin{tabular}{c}$\Gt{\Gamma_1} \cat \envPropR;
				\recpvar{x}:\lhot{\Rts{}{s}{S}};\es \proves$ \\
							$\recpvar{x} \hastype \lhot{\Rts{}{s}{S}}$
							\end{tabular}}
			\AxiomC{}
			\LeftLabel{\scriptsize PolySess}
			\UnaryInfC{
			\begin{tabular}{c}
			$\Gt{\Gamma_1} \cat \envPropR;\es;\wtd z:\Rts{}{s}{S}
						\proves$ \\ $\wtd z \hastype \Rts{}{s}{S}$
						\end{tabular}}
			\LeftLabel{\scriptsize PolyApp}
			\BinaryInfC{$\Gt{\Gamma_1} \cat \envPropR;
			\recpvar{x}:\lhot{\Rts{}{s}{S}};
						\wtd z: \Rts{}{s}{S} \proves
							\appl{\recpvar{x}}{\wtd z} \hastype \Proc$}
			\DisplayProof
		\end{align}
		
		\begin{align}
			\label{pt:r-send2}
		\AxiomC{\eqref{pt:r-send3}}
			\AxiomC{}
		\LeftLabel{\scriptsize Sh}
			\UnaryInfC{\begin{tabular}{c}$\Gt{\Gamma} \cat \envPropR;\es;\es \proves$  \\
						$\prop^u \hastype \chtype{\lhot{\Rts{}{s}{S}}}$
						\end{tabular}}
			\AxiomC{}
			\LeftLabel{\scriptsize LVar}
			\UnaryInfC{\begin{tabular}{c}$\Gt{\Gamma} \cat 
				\envPropR;\recpvar{x}:\lhot{\Rts{}{s}{S}};\es \proves$ \\
						$\recpvar{x} \hastype \lhot{\Rts{}{s}{S}}$
						\end{tabular}
						}
			\LeftLabel{\scriptsize Acc}
			\TrinaryInfC{$\Gt{\Gamma} \cat \envPropR;\es;\Theta' \cat
						\wtd z: \Rts{}{s}{S} \proves
						\recprovx{u}{x}{\wtd z}
						\hastype \Proc$}
			\DisplayProof
		\end{align}
	
		\begin{align}
			\label{pt:r-send5-0}
			\AxiomC{} 
			\LeftLabel{\scriptsize Nil}
			\UnaryInfC{$\Gt{\Gamma} \cat \envPropR; \es; \es \proves 
			\inact \hastype \Proc$}
			\AxiomC{$\prop_{k+1} \not\in \dom{\Gamma, \envPropR}$}
			\LeftLabel{\scriptsize End}
			\BinaryInfC{$\Gt{\Gamma} \cat \envPropR; \es; \prop_{k+1}:\tinact \proves 
			\inact \hastype \Proc$}
			\DisplayProof 
		\end{align}
	
		\begin{align}
			\label{pt:r-send4}
			\AxiomC{$\prop_{k+1} :\btout{\wtd M_2}\tinact \in \Theta'$} 
			\AxiomC{\eqref{pt:r-send5-0}} 
			\AxiomC{}
			\LeftLabel{\scriptsize PolyVar}
			\UnaryInfC{$\Gt{\Gamma} \cat \envPropR;\Gt{\Lambda_2};\es
						\proves \wtd w \hastype \wtd M_2$}
			\LeftLabel{\scriptsize PolySend}
			\TrinaryInfC{$\Gt{\Gamma} \cat \envPropR;\Gt{\Lambda_1};
			\dual{\prop_{k+1}}:\btoutt{\wtd M_2}  
						\proves \apropout{k+1}{\wtd w} 
						 \hastype \Proc$}
			\DisplayProof
		\end{align}

		\begin{align}
			\label{pt:r-send-5}
			\AxiomC{\eqref{pt:r-send4}} 
			\AxiomC{\eqref{pt:r-send2}} 
			\LeftLabel{\scriptsize Par}
			\BinaryInfC{$\Gt{\Gamma} \cat \envPropR;\Gt{\Lambda_1};
			\dual{\prop_{k+1}}:\btoutt{\wtd M_2}  \cat
						\wtd z : \Rts{}{s}{S}
						\proves \apropout{k+1}{\wtd w} 
						\Par 
						\recprovx{u}{x}{\wtd z}
						\hastype \Proc$}
			\DisplayProof 
		\end{align}

		\begin{align}
			\label{pt:r-rcv1}
			\AxiomC{\eqref{pt:r-send-5}}
			\AxiomC{\eqref{eq:r-case1-ih}} 
			\LeftLabel{\scriptsize (\lemref{lem:weaken}) with ${\envPropR}_1$}
			\UnaryInfC{\begin{tabular}{c}$\Gt{\Gamma'_2} \cat \envPropR;\Gt{\Lambda_2};
				\es 
						\proves$$\V{k+1}{\tilde y}{V} \hastype \Gt{U}$
						\end{tabular}}
		\LeftLabel{\scriptsize (\lemref{lem:weaken}) with $\tilde z$}
			\UnaryInfC{$\Gt{\Gamma} \cat \envPropR;\Gt{\Lambda_2};\es 
						\proves \V{k+1}{\tilde y}{V} \hastype \Gt{U}$}
			\LeftLabel{\scriptsize Send}
				\BinaryInfC{
				\begin{tabular}{c}
	$\Gt{\Gamma} \cat \envPropR;\Gt{\Lambda_1 \cat \Lambda_2};
						\Gt{\Delta_2} \cat \wtd z:\Rts{}{s}{S} \cat
						\dual{\prop_{k+1}}:\btoutt{\wtd M_2}   \proves$ \\
						${\bout{z_{\indT{S}}}{\V{k+1}{\tilde y}{V}}}
						\big(\apropout{k+1}{\wtd w} \Par 
						\recprovx{u}{x}{\wtd z}
						\big)$
						\end{tabular}}
			\DisplayProof
		\end{align}
		By \lemref{lem:indexcor} we know
			that if $\wtd z:\Rts{}{s}{S}$ then
			$z_{\indT{S}}:\trec{t}\btout{\Gt{U}}\vart{t}$.
	
		\begin{align}
			\label{pt:r-abs}
			\AxiomC{\eqref{pt:r-rcv1}}
			\AxiomC{}
			\LeftLabel{\scriptsize PolySess}
			\UnaryInfC{$\Gt{\Gamma} \cat \envPropR;\es;\wtd z : \Rts{}{s}{S}
						\proves \wtd z \hastype \Rts{}{s}{S}$}
			\LeftLabel{\scriptsize PolyAbs}
			\BinaryInfC{$\Gt{\Gamma} \cat \envPropR; \Gt{\Lambda_1 \cat \Lambda_2};
						\Gt{\Delta_2} \cat \dual{\prop_{k+1}}:\btoutt{\wtd M_2}  
				\proves N_V \hastype \lhot{\Rts{}{s}{S}}$}
			\DisplayProof
		\end{align}

		\begin{align}
			\label{pt:r-rec}
			\AxiomC{}
			\LeftLabel{\scriptsize LVar}
			\UnaryInfC{$\Gt{\Gamma} \cat \envPropR;\es;\es
						\proves \prop^u \hastype \chtype{\lhot{\Rts{}{s}{S}}})$}
		\AxiomC{}
		\LeftLabel{\scriptsize Nil}
			\UnaryInfC{$\Gt{\Gamma} \cat \envPropR;\es;\es
						\proves \inact \hastype \Proc$}
			\AxiomC{\eqref{pt:r-abs}}
			\LeftLabel{\scriptsize Req}
			\TrinaryInfC{$\Gt{\Gamma} \cat \envPropR;\Gt{\Lambda_1 \cat \Lambda_2};
						\Gt{\Delta_2} \cat \dual{\prop_{k+1}}:\btoutt{\wtd M_2}  
						\proves \bbout{\prop^u}{N_V} \inact \hastype \Proc$}
			\DisplayProof
		\end{align}

		\begin{align}
			\label{pt:r-rcv}
			\AxiomC{\eqref{pt:r-rec}}
			\AxiomC{}
			\LeftLabel{\scriptsize PolyVar}
			\UnaryInfC{$\Gt{\Gamma} \cat \envPropR; \Gt{\Lambda_1 \cat \Lambda_2};\es
						\proves \wtd x \hastype \wtd M$}
			\LeftLabel{\scriptsize PolyRcv}
			\BinaryInfC{$\Gt{\Gamma_1} \cat \envPropR;\es;
						\Gt{\Delta_2}  \cat \Theta'
						\proves \propinp{k}{\wtd x}
							\bbout{\prop^u}{N_V} \inact \hastype \Proc$}
				\DisplayProof
		\end{align}

		The following tree proves this case:
		\begin{align}
			\AxiomC{\eqref{pt:r-rcv}}
				\AxiomC{\eqref{eq:r-case1-ih}}
				\LeftLabel{\scriptsize (\lemref{lem:weaken}) with ${\envPropR}_2$}
				\UnaryInfC{$\Gt{\Gamma'_1}\cat {\envPropR};\es;
					\Gt{\Delta_1}\cat\Theta_1
						\proves \B{k+1}{\tilde w}{P'} \hastype \Proc$}
				\LeftLabel{\scriptsize (\lemref{lem:strength}) with $\tilde y$}
			\UnaryInfC{$\Gt{\Gamma_1}\cat {\envPropR};\es;
					\Gt{\Delta_1}\cat\Theta_1
						\proves \B{k+1}{\tilde w}{P'} \hastype \Proc$}
			\LeftLabel{\scriptsize Par}
			\BinaryInfC{$\Gt{\Gamma_1} \cat \envPropR;\es; \Gt{\Delta_1 \cat \Delta_2}
							\cat \Theta
						\proves \B{k}{\tilde x}{\bout{r}{V}P'} \hastype \Proc$}
			\DisplayProof
		\end{align}
		This concludes the analysis for the output case $P = \bout{u_i}{V}P'$. 
		We remark that the proof for the case when $V = y$ is a special case of the above proof, where $\tilde y = \fv{y} = y$, 
		$\V{k+1}{\tilde y}{y} = y$ and 
		it holds that $y \sigma = y$. 

\item Case $P = \binp{u_i}{y}P'$. We distinguish two sub-cases: 
\rom{1} $u_i \in \dom{\Delta}$, \rom{2} $u_i \in \dom{\Gamma}$, 
and~\rom{3} $u_i \in \dom{\envR}$.
We consider sub-cases
\rom{1} and \rom{2}; we omit sub-case~\rom{3} as it follows the same 
reasoning as the corresponding sub-case of the previous (Send) case.

\noindent 
We consider sub-case \rom{1} first. For this case Rule \textsc{Rcv}
can be applied:
	\begin{align}
	\label{pt:inputInv}
	\AxiomC{$\Gamma; \Lambda_1; \Delta' \cat u_i : S
	\cat {\envR} \proves P' \hastype \Proc$}
	\AxiomC{$\Gamma; \Lambda_2; \es \proves y \hastype U$}
	\LeftLabel{\scriptsize Rcv}
	\BinaryInfC{$
			\Gamma \setminus y;\Lambda_1 \setminus \Lambda_2;
					\Delta' \cat u_i: \btinp{U}S 
					\cat {\envR} 
					\proves \binp{u_i}{y}P' \hastype \Proc
			$}
	\DisplayProof
	\end{align}

	Let $\wtd x = \fv{P}$ and $\wtd w = \fv{P'}$.  Also, let
		$\Gamma_1'=\Gamma \setminus \wtd w$ and $\Theta_1$ be a
		balanced environment such that
		$$
		\dom{\Theta_1}=\{\prop_{k+1},\ldots,\prop_{k+\plen{P'}}\}
		\cup \{\dual{\prop_{k+2}},\ldots,\dual{\prop_{k+\plen{P'}}}\}
		$$
		and ${\Theta_1(\prop_{k+1})}=\btinpt{\wtd M'}$
		where $\wtd M' = (\Gt{\Gamma},\Gt{\Lambda_1})(\wtd w)$.
		We define:
		\begin{align}
			\label{eq:prod_env2}
			{\envPropR} = \prod_{r \in {\dom{{\envR}}}} \prop^r :
			\chtype{\lhot{\Rts{}{s}{{\envR}_i(r)}}} 
		\end{align}

	Then, by IH on the first assumption of \eqref{pt:inputInv} we know:
	\begin{align}
		\label{eq:input-ih}
		\Gt{\Gamma'_1} \cat {\envPropR};\es;\Gt{\Delta' \cat u_i:S} \cat \Theta_1 \proves
		\B{k+1}{\tilde w}{P'} \hastype \Proc
	\end{align}
	
	By \Cref{mst:def:typesdecomp,def:typesdecompenv} and the second
	assumption of \eqref{pt:inputInv} we have:
	\begin{align}
		\label{eq:input-ih2}
		\Gt{\Gamma};\Gt{\Lambda_2};\es \proves y \hastype \Gt{U}
	\end{align}
	
	We define $\Theta = \Theta_1 \cat \Theta'$, where
	\begin{align*}
	\Theta' = \prop_k:\btinpt{\wtd M}   \cat \dual {\prop_{k+1}}:\btoutt{\wtd M'} 
	\end{align*}
	with $\wtd M = (\Gt{\Gamma},\Gt{\Lambda_1 \setminus \Lambda_2})(\wtd x)$.
	By \defref{mst:def:sizeproc},
	$\plen{P} = \plen{P'} + 1$ so
	$$\dom{\Theta} = \{\prop_k,\ldots,\prop_{k+\plen{P}-1}\}
		\cup \{\dual{\prop_{k+1}},\ldots,\dual{\prop_{k+\plen{P}-1}}\}$$
		and $\Theta$ is balanced since $\Theta(\prop_{k+1}) \dualof \Theta(\dual{\prop_{k+1}})$ and
			$\Theta_1$ is balanced.
	By \tabref{mst:t:bdowncore}:
		\begin{align*}
		\B{k}{\tilde x}{\binp{u_i}{y}P'} = \propinp{k}{\wtd
		x}\binp{u_i}{y}\propout{k+1}{\wtd w} \inact \Par \B{k+1}{\tilde
		w}{P'\incrname{u}{i}}
	\end{align*}
	
	Let $\Gamma_1 = \Gamma \setminus \wtd x$.
	We shall prove the following judgment:
	\begin{align*}
	\Gt{\Gamma_1 \setminus y} \cat {\envPropR};\es;\Gt{\Delta' \cat u_i:\btinp{U}S}
		\cat \Theta
		\proves
			\B{k}{\tilde x}{\binp{u_i}{y}P'}
	\end{align*}
	
	
	The left-hand side component of $\B{k}{\tilde x}{\binp{u_i}{y}P'}$ is typed using
		some auxiliary derivations:
	\begin{align}
		\label{st:input1}
		\AxiomC{}
		\LeftLabel{\scriptsize Nil}
		\UnaryInfC{$\Gt{\Gamma} \cat {\envPropR};\es;\es \proves \inact \hastype \Proc$}
		\LeftLabel{\scriptsize End}
		\UnaryInfC{$\Gt{\Gamma} \cat {\envPropR};\es;\dual{\prop_{k+1}}:\tinact \proves \inact \hastype \Proc$}
		\AxiomC{}
		\LeftLabel{\scriptsize PolyVar}
		\UnaryInfC{$\Gt{\Gamma} \cat {\envPropR};\Gt{\Lambda_1};\es
					\proves \wtd w \hastype \wtd M'$}
		\LeftLabel{\scriptsize PolySend}
		\BinaryInfC{$\Gt{\Gamma} \cat {\envPropR};\Gt{\Lambda_1}\cat\Gt{\Lambda_2};
					\dual {\prop_{k+1}}:\btout{\wtd M'} \tinact
					\proves \propout{k+1}{\wtd w} \inact \hastype \Proc$}
		\LeftLabel{\scriptsize End}
		\UnaryInfC{$\Gt{\Gamma} \cat {\envPropR};\Gt{\Lambda_1}\cat\Gt{\Lambda_2};
					\dual {\prop_{k+1}}:\btout{\wtd M'} \tinact
					\cat u_i:\tinact
					\proves \propout{k+1}{\wtd w} \inact \hastype \Proc$}
		\DisplayProof
	\end{align}
	\begin{align}
		\label{st:input-rcv}
		\AxiomC{\eqref{st:input1}}
		\AxiomC{\eqref{eq:input-ih2}}
		\LeftLabel{\scriptsize Rcv}
		\BinaryInfC{
		\begin{tabular}{c}
			$\Gt{\Gamma \setminus y} \cat {\envPropR};\Gt{\Lambda_1 \setminus \Lambda_2}; u_i:\btinp{\Gt{U}} \tinact \cat
					\dual {\prop_{k+1}}:\btout{\wtd M'} \tinact \proves$ \\
					$\binp{u_i}{y}\propout{k+1}{\wtd w} \inact \hastype \Proc$
		\end{tabular}
	}
		\LeftLabel{\scriptsize End}
		\UnaryInfC{ 
		\begin{tabular}{c}
		$\Gt{\Gamma \setminus y} \cat {\envPropR};\Gt{\Lambda_1 \setminus \Lambda_2}; u_i:\btinp{\Gt{U}} \tinact \cat
					\dual {\prop_{k+1}}:\btout{\wtd M'} \tinact
					\cat \prop_k: \tinact \proves$ \\ 
					$\binp{u_i}{y}\propout{k+1}{\wtd w} \inact \hastype \Proc$
		\end{tabular}}
		\DisplayProof
	\end{align}
	
	\begin{align}
		\label{st:input3}
		\AxiomC{\eqref{st:input-rcv}}
		\AxiomC{}
		\LeftLabel{\scriptsize PolyVar}
		\UnaryInfC{$\Gt{\Gamma \setminus y} \cat {\envPropR};\Gt{\Lambda_1};\es
		\proves \wtd x \hastype \wtd M$}
		\LeftLabel{\scriptsize PolyRcv}
		\BinaryInfC{
			$
		\Gt{\Gamma_1 \setminus y} \cat {\envPropR};\es;u_i:\btinp{\Gt{U}} \tinact
		\cat \Theta' \proves \propinp{k}{\wtd
		x}\binp{u_i}{y}\propout{k+1}{\wtd w} \inact \hastype \Proc
		$
		}
	\DisplayProof
	\end{align}
	
	The following tree
	proves this case:
	\begin{align}
		\label{pt:input}
		\AxiomC{\eqref{st:input3}}
		\AxiomC{\eqref{eq:input-ih}}
		\LeftLabel{\scriptsize (\lemref{lem:subst}) with $\subst{\tilde n}{\tilde u}$}
		\UnaryInfC{
		\begin{tabular}{c}
		$\Gt{\Gamma_1 \setminus y} \cat {\envPropR};\es; \Gt{\Delta' \cat u_{i+1}:S} \cat \Theta_1 \proves$ \\
		$\B{k+1}{\tilde w}{P'\incrname{u}{i}} \hastype \Proc$
		\end{tabular}}
		\LeftLabel{\scriptsize Par}
		\BinaryInfC{\begin{tabular}{c}
					$\Gt{\Gamma_1 \backslash y} \cat {\envPropR};\es;
					\Gt{\Delta' \cat u_i:\btinp{U}S}
					\cat \Theta \proves$ \\
					$\propinp{k}{\wtd x}
					\binp{u_i}{y}\propout{k+1}{\wtd w} \inact \Par \B{k+1}
					{\tilde w}{P'\incrname{u}{i}} \hastype \Proc$
					\end{tabular}}
		\DisplayProof
	\end{align}
		\noindent
		where $\wtd n =
		(u_{i+1},\ldots,u_{i+\len{\Gt{S}}})$ and $\wtd u =
		(u_i,\ldots,u_{i+\len{\Gt{S}}-1})$.
		We may notice that
		if $y \in \fv{P'}$ then $\Gamma'_1 = \Gamma_1 \setminus y$. On the other hand,
		when $y \notin \fv{P'}$ then $\Gamma'_1 = \Gamma_1$ so
		we need to apply \lemref{lem:strength} with $y$
			after \lemref{lem:subst} to \eqref{eq:input-ih}
			in \eqref{pt:input}.
		Note that we have used the following for the right assumption of \eqref{pt:input}:
		\begin{align*}
			\Gt{\Delta' \cat u_i:S}\subst{\tilde n}{\tilde u} &=
			\Gt{\Delta' \cat u_{i+1}:S} \\
			\B{k+1}{\tilde w}{P'}\subst{\tilde n}{\tilde u} &=
			\B{k+1}{\tilde w}{P'\incrname{u}{i}}
		\end{align*}

		This concludes sub-case \rom{1}. We now consider sub-case \rom{2}, i.e.,
		$u_i \in \dom{\Gamma}$.
		Here Rule~\textsc{Acc} can be applied:
		\begin{align}
			\label{pt:inputptr-subcase2}
			\AxiomC{$\Gamma;\es;\es \proves u_i \hastype \chtype{U}$}
			\AxiomC{$\Gamma;\Lambda_1;\Delta \cat {\envR} \proves P' \hastype \Proc$}
		\AxiomC{$\Gamma;\Lambda_2;\es \proves y \hastype U$}
		\LeftLabel{\scriptsize Acc}
			\TrinaryInfC{$
			\Gamma \setminus y;\Lambda_1 \setminus \Lambda_2;\Delta 
			\cat {\envR}
							\proves \binp{u_i}{y}P' \hastype \Proc
							$}
			\DisplayProof
		\end{align}
		  
		Let $\wtd x = \fv{P}$ and $\wtd w =\fv{P'}$. Furthermore, let
		$\Theta_1$, $\Theta$, $\Gamma_1$, $\Gamma_1'$, and ${\envPropR}$ 
		be defined as in sub-case
		\rom{1}. By IH on the second assumption of \eqref{pt:inputptr-subcase2} we have:
		\begin{align}
		\label{eq:input-ih-2}
		\Gt{\Gamma'_1} \cat {\envPropR};\es;\Gt{\Delta} \cat 
		\Theta_1 \proves \B{k+1}{\tilde w}{P'} \hastype \Proc
		\end{align}
	
		By \Cref{mst:def:typesdecomp,def:typesdecompenv} and the first
			assumption of \eqref{pt:inputptr-subcase2} we have:
		\begin{align}
			\label{eq:input-ih2-2}
			\Gt{\Gamma} \cat {\envPropR};\es;\es \proves u_i \hastype \chtype{\Gt{U}}
		\end{align}
	
		By \Cref{mst:def:typesdecomp,def:typesdecompenv} and the third
			assumption of \eqref{pt:inputptr-subcase2} we have:
		\begin{align}
			\label{eq:input-ih3-2}
			\Gt{\Gamma} \cat {\envPropR};\Gt{\Lambda_2};\es \proves y \hastype \Gt{U}
		\end{align}
		  
			   By \tabref{mst:t:bdowncore}, we have:
			   \begin{align}
					\B{k}{\tilde x}{\binp{u_i}{y}P'} = \propinp{k}{\wtd x}
					\binp{u_i}{y}\propout{k+1}{\wtd w} \inact \Par 
					\B{k+1}{\tilde w}
					{P'\incrname{u}{i}}
			   \end{align}
		  
			   We shall prove the following judgment:
			   \begin{align}
				   \Gt{\Gamma_1 \setminus y} \cat {\envPropR};\es;\Gt{\Delta} \cat \Theta \proves
				   \B{k}{\tilde x}{\binp{u_i}{y}P'} \hastype \Proc
			   \end{align}
		  
			  To this end, we use some auxiliary derivations:
			\begin{align}
			  \label{st:input1-2}
			  \AxiomC{}
			  \LeftLabel{\scriptsize Nil}
			  \UnaryInfC{$\Gt{\Gamma} \cat {\envPropR};\es;\es \proves \inact \hastype \Proc$}
			  \LeftLabel{\scriptsize End}
			  \UnaryInfC{$\Gt{\Gamma} \cat {\envPropR};\es;\dual{\prop_{k+1}}:\tinact \proves \inact \hastype \Proc$}
			  \AxiomC{}
			  \LeftLabel{\scriptsize PolyVar}
			  \UnaryInfC{$\Gt{\Gamma} \cat {\envPropR};\Gt{\Lambda_1};\es
						  \proves \wtd w \hastype \wtd M'$}
			  \LeftLabel{\scriptsize PolySend}
			  \BinaryInfC{$\Gt{\Gamma} \cat {\envPropR};\Gt{\Lambda_1};
							\dual {\prop_{k+1}}:\btout{\wtd M'} \tinact
							\proves \propout{k+1}{\wtd w} \inact \hastype \Proc$}
			  \DisplayProof
			\end{align}
		  
	\begin{align}
		\label{st:input2-2}
		\AxiomC{\eqref{eq:input-ih2-2}}
		\AxiomC{\eqref{st:input1-2}}
		\AxiomC{\eqref{eq:input-ih3-2}}
		\LeftLabel{\scriptsize Acc}
		\TrinaryInfC{$\Gt{\Gamma \setminus y} \cat {\envPropR};\Gt{\Lambda_1};
					\dual{\prop_{k+1}}:\btout{\wtd M'}
					\tinact \proves 
					\binp{u_i}{y}\propout{k+1}{\wtd w} \inact \hastype \Proc$}
		\LeftLabel{\scriptsize End}
		\UnaryInfC{$\Gt{\Gamma \setminus y} \cat {\envPropR};
		\Gt{\Lambda_1 \setminus \Lambda_2};
					\dual{\prop_{k+1}}:\btout{\wtd M'}
					\tinact \cat \prop_k:\tinact 
					\proves \binp{u_i}{y}\propout{k+1}{\wtd w} 
					\inact \hastype \Proc$}
		\DisplayProof
	\end{align}
		  
	\begin{align}
		\label{st:input3-2}
		\AxiomC{\eqref{st:input2-2}}
		\AxiomC{}
		\LeftLabel{\scriptsize PolyVar}
		\UnaryInfC{$\Gt{\Gamma \setminus y} \cat {\envPropR};\Gt{\Lambda_1 \setminus \Lambda_2};\es \proves
					\wtd x \hastype \wtd M$}
		\LeftLabel{\scriptsize PolyRcv}
		\BinaryInfC{$\Gt{\Gamma_1 \setminus y} \cat {\envPropR};\es;\Theta' \proves
						\propinp{k}{\wtd x}
						\binp{u_i}{y}\propout{k+1}{\wtd w} \inact
						\hastype \Proc$}
		\DisplayProof
	\end{align}
	
	The following tree
	proves this sub-case:
	\begin{align}
		\label{pt:input-2}
		\AxiomC{\eqref{st:input3-2}}
		\AxiomC{\eqref{eq:input-ih-2}}
		\LeftLabel{\scriptsize Par}
		\BinaryInfC{$\Gt{\Gamma_1 \setminus y} \cat {\envPropR};\es;\Gt{\Delta} \cat \Theta \proves
				\propinp{k}{\wtd x}\binp{u_i}{y}\propout{k+1}{\wtd w}
						\inact \Par \B{k+1}{\tilde w}{P'} \hastype \Proc$}
		\DisplayProof
	\end{align}
	As in sub-case \rom{1}, we may notice that if $y \in \fv{P'}$ then $\Gamma'_1 = \Gamma_1 \setminus y$.
	On the other hand, if $y \notin \fv{P'}$ then $\Gamma'_1 = \Gamma_1$ so we need to apply
	\lemref{lem:strength} with $y$ to \eqref{eq:input-ih-2} in \eqref{pt:input-2}.
	This concludes the analysis for the input case $P = \binp{u_i}{y}P'$.
	This concludes sub-case \rom{2}. 
	
	Now, we consider sub-case \rom{3}. 
	Here we know $P = \binp{u_i}{V}P'$ and $u_i:S \in \envR$.
	\begin{align}
		\label{pt:inputInv-tr}
		\AxiomC{$\Gamma; \Lambda_1; \Delta' \cat 
		 {\envR} 
		 \cat u_i : S'
		 \proves P' \hastype \Proc$}
		\AxiomC{$\Gamma; \Lambda_2; \es \proves y \hastype U$}
		\LeftLabel{\scriptsize Rcv}
		\BinaryInfC{$
				\Gamma \setminus y;\Lambda_1 \setminus \Lambda_2;
						\Delta' \cat 
						 {\envR} \cat u_i: \btinp{U}S' 
						\proves \binp{u_i}{y}P' \hastype \Proc
				$}
		\DisplayProof
		\end{align}
	\noindent Let $\wtd w = \fv{P'}$. Let $\Theta_1$,$\Theta_2$, $\Theta'$, and 
	$\envPropR$ be defined as in the sub-case~\rom{1}. 
	Also, let
	$\Gamma'_1 = \Gamma \setminus \wtd w$.
	Then, by IH on the
		first assumption of~\eqref{pt:inputInv-tr} we have:
	\begin{align}
		\label{eq:input-ih-tr}
		\Gt{\Gamma'_1} \cat {\envPropR};\es;\Gt{\Delta'} \cat \Theta_1 
		\proves
		\B{k+1}{\tilde w}{P'} \hastype \Proc
	\end{align}

	Further, by IH on the second assumption of~\eqref{pt:inputInv-tr} we have: 
	\begin{align}
		\label{eq:input-ih-tr-2}
		\Gt{\Gamma} \cat \envPropR; \Gt{\Lambda_2};\es 
					\proves y \hastype \Gt{U}
	\end{align}

	By~\tabref{mst:t:bdowncore} we have:
		\begin{align*}
			\B{k}{\tilde x}{P} &= \propinp{k}{\wtd x}\abbout{\prop^u}
			{N_y}
			\Par \B{k+1}{\tilde w}{P'} \\ 
			& \qquad \qquad \qquad \text{where} \ 
			N_y = \abs{\wtd z}{\binp{z_{\indT{S}}}{y}
			\big(\apropout{k+1}{\wtd w} \Par 
			\recprov{u}{x}{\wtd z} \big)}
		\end{align*}
	\noindent Notice that $u_i \in \rfn{P}$ as $\tr(u_i)$. Hence, by~\eqref{eq:prod_env2} 
    we know $\envPropR(\prop^u)=\chtype{\lhot{\Rts{}{s}{S}}}$. 
	Further, we know that $S = \btinp{U}S'$ and by~\defref{mst:def:typesdecomp},
	$\Rts{}{s}{S}=\Rts{}{s}{S'}$.
	Let $\Gamma_1 = \Gamma \setminus \wtd x$ where $\wtd x = \fv{P}$. 
	Thus, we shall prove the following judgment:
	\begin{align*}
		\Gt{\Gamma_1 \setminus y} \cat \envPropR;\es; \Gt{\Delta'} 
		\cat \Theta
					\proves \B{k}{\tilde x}{\binpt{u_i}{y}P'} \hastype \Proc
	\end{align*}

	We use auxiliary derivations:

	\begin{align}
		\label{pt:rcv-tr-9}
		\AxiomC{}
		\LeftLabel{\scriptsize LVar}
		\UnaryInfC{\begin{tabular}{c}$\Gt{\Gamma_1} \cat \envPropR;
			\recpvar{x}:\lhot{\Rts{}{s}{S}};\es \proves$ \\
						$\recpvar{x} \hastype \lhot{\Rts{}{s}{S}}$
						\end{tabular}}
		\AxiomC{}
		\LeftLabel{\scriptsize PolySess}
		\UnaryInfC{
		\begin{tabular}{c}
		$\Gt{\Gamma_1} \cat \envPropR;\es;\wtd z:\Rts{}{s}{S}
					\proves$ \\ $\wtd z \hastype \Rts{}{s}{S}$
					\end{tabular}}
		\LeftLabel{\scriptsize PolyApp}
		\BinaryInfC{$\Gt{\Gamma_1} \cat \envPropR;
		\recpvar{x}:\lhot{\Rts{}{s}{S}};
					\wtd z: \Rts{}{s}{S} \proves
						\appl{\recpvar{x}}{\wtd z} \hastype \Proc$}
		\DisplayProof
	\end{align}

	\begin{align}
		\label{pt:rcv-tr-8}
	\AxiomC{\eqref{pt:rcv-tr-9}}
		\AxiomC{}
	\LeftLabel{\scriptsize Sh}
		\UnaryInfC{\begin{tabular}{c}$\Gt{\Gamma} \cat \envPropR;\es;\es \proves$  \\
					$\prop^u \hastype \chtype{\lhot{\Rts{}{s}{S}}}$
					\end{tabular}}
		\AxiomC{}
		\LeftLabel{\scriptsize LVar}
		\UnaryInfC{\begin{tabular}{c}$\Gt{\Gamma} \cat 
			\envPropR;\recpvar{x}:\lhot{\Rts{}{s}{S}};\es \proves$ \\
					$\recpvar{x} \hastype \lhot{\Rts{}{s}{S}}$
					\end{tabular}
					}
		\LeftLabel{\scriptsize Acc}
		\TrinaryInfC{$\Gt{\Gamma} \cat \envPropR;\es; 
					\wtd z: \Rts{}{s}{S} \proves
					\recprovx{u}{x}{\wtd z}
					\hastype \Proc$}
		\DisplayProof
	\end{align}

	\begin{align}
		\label{pt:rcv-tr-7}
		\AxiomC{} 
		\LeftLabel{\scriptsize Nil}
		\UnaryInfC{$\Gt{\Gamma} \cat \envPropR; \es; \es \proves 
		\inact \hastype \Proc$}
		\AxiomC{$\prop_{k+1} \not\in \dom{\Gamma, \envPropR}$}
		\LeftLabel{\scriptsize End}
		\BinaryInfC{$\Gt{\Gamma} \cat \envPropR; \es; \prop_{k+1}:\tinact \proves 
		\inact \hastype \Proc$}
		\DisplayProof 
	\end{align}

	\begin{align}
		\label{pt:rcv-tr-6}
		\AxiomC{$\prop_{k+1} :\btout{\wtd M'}\tinact \in \Theta'$} 
		\AxiomC{\eqref{pt:rcv-tr-7}} 
		\AxiomC{}
		\LeftLabel{\scriptsize PolyVar}
		\UnaryInfC{$\Gt{\Gamma} \cat \envPropR;\Gt{\Lambda_1};\es
					\proves \wtd w \hastype \wtd M'$}
		\LeftLabel{\scriptsize PolySend}
		\TrinaryInfC{$\Gt{\Gamma} \cat \envPropR;\Gt{\Lambda_1};
		\dual {\prop_{k+1}}:\btoutt{\wtd M'} 
					\proves \apropout{k+1}{\wtd w} 
					 \hastype \Proc$}
		\DisplayProof
	\end{align}

	\begin{align}
		\label{pt:rcv-tr-5}
		\AxiomC{\eqref{pt:rcv-tr-6}} 
		\AxiomC{\eqref{pt:rcv-tr-8}} 
		\LeftLabel{\scriptsize Par}
		\BinaryInfC{$\Gt{\Gamma} \cat \envPropR;\Gt{\Lambda_1};\Theta' \cat
					\wtd z : \Rts{}{s}{S}
					\proves \apropout{k+1}{\wtd w} 
					\Par 
					\recprovx{u}{x}{\wtd z}
					\hastype \Proc$}
		\DisplayProof 
	\end{align}

	\begin{align}
		\label{pt:rcv-tr-4}
		\AxiomC{\eqref{pt:rcv-tr-5}}
		\AxiomC{\eqref{eq:input-ih-tr-2}} 
		\LeftLabel{\scriptsize Rcv}
			\BinaryInfC{
			\begin{tabular}{c}
$\Gt{\Gamma \setminus y} \cat \envPropR; 
\Gt{\Lambda_1 \setminus \Lambda_2}; 
\dual {\prop_{k+1}}:\btoutt{\wtd M'} 
\cat \wtd z:\Rts{}{s}{S}  \proves$ 
					$\binp{z_{\indT{S}}}{y}
					\big(\apropout{k+1}{\wtd w} \Par 
					\recprovx{u}{x}{\wtd z}
					\big)$
					\end{tabular}}
		\DisplayProof
	\end{align}

	By \lemref{lem:indexcor} we know
	that if $\wtd z:\Rts{}{s}{S}$ then
	$z_{\indT{S}}:\trec{t}\btinp{\Gt{U}}\vart{t}$.

	\begin{align}
		\label{pt:rcv-tr-3}
		\AxiomC{\eqref{pt:r-rcv1}}
		\AxiomC{}
		\LeftLabel{\scriptsize PolySess}
		\UnaryInfC{$\Gt{\Gamma \setminus y} \cat \envPropR;\es;\wtd z : \Rts{}{s}{S}
					\proves \wtd z \hastype \Rts{}{s}{S}$}
		\LeftLabel{\scriptsize PolyAbs}
		\BinaryInfC{$
		\Gt{\Gamma \setminus y} \cat \envPropR; 
		\Gt{\Lambda_1 \setminus \Lambda_2}; 
		\dual {\prop_{k+1}}:\btoutt{\wtd M'} 
			\proves N_y \hastype \lhot{\Rts{}{s}{S}}
			$}
		\DisplayProof
	\end{align}

	\begin{align}
		\label{pt:r-rcv-inp2}
		\AxiomC{}
		\LeftLabel{\scriptsize LVar}
		\UnaryInfC{$\Gt{\Gamma \setminus y} \cat \envPropR;\es;\es
					\proves \prop^u \hastype \chtype{\lhot{\Rts{}{s}{S}}}$}
	\AxiomC{}
	\LeftLabel{\scriptsize Nil}
		\UnaryInfC{$\Gt{\Gamma \setminus y} \cat \envPropR;\es;\es
					\proves \inact \hastype \Proc$}
		\AxiomC{\eqref{pt:rcv-tr-3}}
		\LeftLabel{\scriptsize Req}
		\TrinaryInfC{$\Gt{\Gamma \setminus y} \cat \envPropR;
		\Gt{\Lambda_1 \setminus \Lambda_2}; 
		\dual {\prop_{k+1}}:\btoutt{\wtd M'} 
					\proves \bbout{\prop^u}{N_y} \inact \hastype \Proc$}
		\DisplayProof
	\end{align}

	\begin{align}
		\label{pt:r-rcv-inp}
		\AxiomC{\eqref{pt:r-rcv-inp2}}
		\AxiomC{}
		\LeftLabel{\scriptsize PolyVar}
		\UnaryInfC{$\Gt{\Gamma \setminus y} \cat \envPropR; 
		\Gt{\Lambda};\es
					\proves \wtd x \hastype \wtd M$}
		\LeftLabel{\scriptsize PolyRcv}
		\BinaryInfC{$\Gt{\Gamma_1 \setminus y} \cat \envPropR;\es;
					 \Theta'
					\proves \propinp{k}{\wtd x}
						\bbout{\prop^u}{N_y} \inact \hastype \Proc$}
			\DisplayProof
	\end{align}

	The following tree proves this case:
		\begin{align}
			\AxiomC{\eqref{pt:r-rcv-inp}}
				\AxiomC{\eqref{eq:input-ih-tr}}
				\LeftLabel{\scriptsize (\lemref{lem:strength}) with $\tilde y$}
			\UnaryInfC{$\Gt{\Gamma_1 \setminus y}\cat {\envPropR};\es;
					\Gt{\Delta'}\cat\Theta_1
						\proves \B{k+1}{\tilde w}{P'} \hastype \Proc$}
			\LeftLabel{\scriptsize Par}
			\BinaryInfC{$
			\Gt{\Gamma_1 \setminus y} \cat \envPropR;\es; \Gt{\Delta'} 
		\cat \Theta
					\proves \B{k}{\tilde x}{\binpt{u_i}{y}P'} \hastype \Proc
						$}
			\DisplayProof
		\end{align}
		This concludes sub-case~\rom{3}. 
	
	\item Case $P = \appl{V}{(\wtd r, u_i)}$. We assume a fixed order in the tuple $(\wtd r, u_i)$: names in $\wtd r$ have 
	recursive session types $\wtd r = (r_1,\ldots,r_n) : (S_1,\ldots,S_n)$, and $u_i$ has non-recursive session type $u_i : C$. 
	 We distinguish two sub-cases: \rom{1} $V : \lhot{\wtd S C}$ 
	 and \rom{2} $V : \shot{\wtd S C}$. We will consider only sub-case 
	 \rom{1} since the other is similar. 
	 In this case, Rule~\textsc{PolyApp} can be applied: 
	 \begin{align}
	 	\label{pt:r-app}
	 	\AxiomC{$\Gamma;\Lambda;\Delta_1 \cat {\envR}_1 \proves 
	 				V \hastype \lhot{\wtd S C}$}
	 	\AxiomC{$\Gamma;\es;\Delta_2 \cat {\envR}_2 \proves 
	 				(\wtd r, u_i) \hastype \wtd S C$}
	 	\LeftLabel{\scriptsize PolyApp}
	 	\BinaryInfC{$\Gamma;\Lambda;\Delta_1 \cat \Delta_2 \cat {\envR}_1 \cat 
	 	{\envR}_2 \proves \appl{V}{(\wtd r, u_i)}$}
	 	\DisplayProof
	 \end{align}
	 
	 Let $\wtd x = \fv{V}$ and $\Gamma_1 \setminus \wtd x$. 
	 Let $\wtd x =
\fv{V}$ and let $\Theta_1$ be a balanced environment such that
$$
\dom{\Theta_1}=\{\prop_{k+1},\ldots,\prop_{k+\plen{V}}\}
\cup \{\dual{\prop_{k+1}},\ldots,\dual{\prop_{k+\plen{V}}}\}
$$
and $\Theta_1(\prop_{k+1}) = \btinpt{\wtd M}$
and
$\Theta_1(\dual{\prop_{k+1}}) = \btoutt{\wtd M} $ where $\wtd
M = (\Gt{\Gamma}\cat\Gt{\Lambda})(\wtd x)$.

	We define: 
	\begin{align}
		{\envPropR}_1 = \prod_{r \in {\dom{{\envR}_1}}} c^r:\chtype{\lhot{\Rts{}{s}{{\envR}_1(r)}}}
	\end{align}
	Then, by IH (Part 2) on the first assumption of \eqref{pt:r-app}
	we have: 
	\begin{align}
		\label{eq:r-app-ih}
		\Gt{\Gamma_1}\cat {\envPropR}_1;\es; \Gt{\Delta_1}  \cat \Theta_1
		\proves \V{k+1}{\tilde x}{V}  \hastype \lhot{\Gt{\wtd S C}}
	\end{align}
	By \Cref{def:typesdecompenv,mst:def:typesdecomp} and 
	the second assumption of \eqref{pt:r-app} we have: 
	\begin{align}
		\label{eq:r-app-ih2}
		\Gt{\Gamma};\es;\Gt{\Delta_2} \cat \Gt{{\envR}_2} \proves 
		(\wtd r_1,\ldots,\wtd r_n, \wtd m):\Gt{\wtd S C}
	\end{align}
	\noindent where 
	$\wtd r_i = (r^i_{i},\ldots,r^i_{i+\len{\Gt{S_i}}-1})$
	for $i \in \{1,\ldots,n\}$ and $\wtd m = (u_i,\ldots,u_{i+\len{G(C)}-1})$. 
	
	We define $\envPropR = {\envPropR}_1 \cat {\envPropR}_2$ where:
	\begin{align*}
		 {\envPropR}_2 = \prod_{r \in {\dom{{\envR}_2}}} c^r:\chtype{\lhot{\Rts{}{s}{{\envR}_2(r)}}}
	\end{align*}
	
	We define $\Theta = \Theta_1 \cat \prop_k : \btinpt{\wtd M}$.
	
	We will first consider the case where $n=3$; the proof is then generalized for any $n \geq 1$: 

\begin{itemize} 
    \item 
	If $n=3$ then $P = \appl{V}{(r_1,r_2,r_3,u_i)}$. 
	By \Cref{mst:t:bdowncore}, we have:
	\begin{align*}
		\B{k}{\tilde x}{\appl{V}{(r_1,r_2,r_3,u_i)}} = 
		\propinp{k}{\tilde x}\bout{\prop^{r_1}}{\abs{\wtd z_1}
		{\bout{\prop^{r_2}}{\abs{\wtd z_2}{
		\bout{\prop^{r_3}}{\abs{\wtd z_3}{Q}}\inact
		}}}\inact}\inact
	\end{align*}
	\noindent where 
	$Q = \appl{\V{k+1}{\tilde x}{V}}{(\wtd z_1,\ldots,\wtd z_n, \wtd m)}$;
	$\wtd{z_i} = (z^i_1,\ldots,z^i_{\len{\Gt{S_i}}})$
	for $i = \{1,2,3\}$; $\wtd m =(u_{i},\ldots,u_{i+\len{\Gt{C}}-1})$.

	We shall prove the following judgment: 
	\begin{align}
			\Gt{\Gamma}\cat \envPropR;\es;\Gt{\Delta_1 \Delta_2} \cat \Theta 
			\proves \B{k}{\tilde x}{\appl{V}{(\wtd r, u_i)}} \hastype \Proc
	\end{align}
	
	We use some auxiliary derivations: 
	\begin{align}
		\label{pt:r-app-ih1}
		\AxiomC{\eqref{eq:r-app-ih}}
		\LeftLabel{\scriptsize (\lemref{lem:weaken} with $\envPropR_2$)}
		\UnaryInfC{\begin{tabular}{c}
			$\Gt{\Gamma}\cat \envPropR;\Gt{\Lambda};
						\Gt{\Delta_1} 
						\cat \Theta_1  
						\proves$  $\V{k+1}{\tilde x}{V} 
						\hastype \lhot{\Gt{\wtd S C}}$
						\end{tabular}
						}
		\DisplayProof
	\end{align}

	\begin{align}
	\label{pt:r-app-ih2}
		\AxiomC{\eqref{eq:r-app-ih2}} 
	\LeftLabel{\scriptsize (\lemref{lem:subst} with $\sigma$)}
		\UnaryInfC{\begin{tabular}{c}$\Gt{\Gamma}\cat \envPropR; \es; 
					\Gt{\Delta_2} \cat \Gt{{\envR}_2} 
					\proves$  $(\wtd z_1,\wtd z_2, \wtd z_3, \wtd m) 
					\hastype \Gt{\wtd S C}$
					\end{tabular}}
		\DisplayProof
	\end{align}
	
	\begin{align}
	\label{pt:r-app-3-6}
		\AxiomC{\eqref{pt:r-app-ih1}}
	\AxiomC{\eqref{pt:r-app-ih2}} 
		\LeftLabel{\scriptsize PolyApp}
		\BinaryInfC{$\Gt{\Gamma}\cat \envPropR;\Gt{\Lambda};
						\Gt{\Delta_1 \cat \Delta_2} 
						\cat \Theta_1 \cat \Gt{{\envR}_2} 
						\proves 
						\appl{\V{k+1}{\tilde x}{V}}
						{(\wtd z_1,\wtd z_2, \wtd z_3, \wtd m)}
						$}	
		\DisplayProof
	\end{align}
	\noindent where $\sigma = \subst{\wtd n_1}{\wtd z_1} 
	 \cdot \subst{\wtd n_2}{\wtd z_2} \cdot 
	 \subst{\wtd n_3}{\wtd z_3}$
	with $\wtd n_i = (r^i_{i},\ldots,r^i_{i+\len{\Gt{S_i}}-1})$
	for $i = \{1,2,3\}$. 
	
	\begin{align}
		\label{pt:r-app-3-5}
		\AxiomC{\eqref{pt:r-app-3-6}} 
		\AxiomC{}
		\LeftLabel{\scriptsize PolySess}
		\UnaryInfC{$\Gt{\Gamma}\cat \envPropR;\es;
						\wtd z_3 : \Gt{S_3} \proves 
		 				\wtd z_3 \hastype \Gt{S_3}$}
		\LeftLabel{\scriptsize PolyAbs}
		\BinaryInfC{$\Gt{\Gamma}\cat \envPropR;\Gt{\Lambda};
						\Gt{\Delta_1 \Delta_2} 
						\cat \Theta_1 \proves
						\abs{\wtd z_3}{Q} \hastype 
						\lhot{\Gt{S_3}}$}
		\DisplayProof
	\end{align}
	
	\begin{align}
		\label{pt:r-app-3-4}
		\AxiomC{\eqref{pt:r-app-3-5}}
		\AxiomC{} 
		\LeftLabel{\scriptsize Nil}
		\UnaryInfC{$\Gt{\Gamma}\cat \envPropR;\es;\es \proves \inact$}
		\AxiomC{}
		\LeftLabel{\scriptsize LVar}
		\UnaryInfC{$\Gt{\Gamma}\cat \envPropR;\es;\es \proves
					\prop^{r_3} \hastype \chtype{\lhot{\Gt{S_3}}}$} 
		\LeftLabel{\scriptsize Req}
		\TrinaryInfC{$\Gt{\Gamma}\cat \envPropR;\Gt{\Lambda};
						\Gt{\Delta_1 \Delta_2} \cat 
						\Theta_1 \cat 
						\wtd z_1 : \Gt{S_1}
						\cat 
						\wtd z_2 : \Gt{S_2} 
						 \proves 
						\bout{\prop^{r_3}}{\abs{\wtd z_3}{Q}}\inact
						 \hastype 
						\Proc$}
		\DisplayProof
	\end{align}

	
	\begin{align}
		\label{pt:r-app-3-3}
		\AxiomC{\eqref{pt:r-app-3-4}} 
		\AxiomC{}
		\LeftLabel{\scriptsize PolySess}
		\UnaryInfC{$\Gt{\Gamma}\cat \envPropR;\es;
						\wtd z_2 : \Gt{S_2} \proves 
		 				\wtd z_2 \hastype \Gt{S_2}$}
		\LeftLabel{\scriptsize PolyAbs}
		\BinaryInfC{$\Gt{\Gamma}\cat \envPropR;\Gt{\Lambda};
						\Gt{\Delta_1 \Delta_2} 
						\cat \Theta_1 \cat 
						\wtd z_1 : \Gt{S_1}\proves
						\abs{\wtd z_2}{
						\bout{\prop^{r_3}}{\abs{\wtd z_3}{Q}}\inact
						} \hastype 
						\lhot{\Gt{S_2}}$}
		\DisplayProof
	\end{align}
	
	\begin{align}
		\label{pt:r-app-3-2}
		\AxiomC{\eqref{pt:r-app-3-3}} 
		\AxiomC{}
		\LeftLabel{\scriptsize Nil}
		\UnaryInfC{$\Gt{\Gamma}\cat \envPropR;\es;\es \proves \inact$}
		\AxiomC{}
		\LeftLabel{\scriptsize LVar}
		\UnaryInfC{$\Gt{\Gamma}\cat \envPropR;\es;\es \proves
					\prop^{r_2} \hastype \chtype{\lhot{\Gt{S_2}}}$} 
		\LeftLabel{\scriptsize Req}
		\TrinaryInfC{$\Gt{\Gamma}\cat \envPropR;\Gt{\Lambda};
						\Gt{\Delta_1 \Delta_2} \cat \Theta_1 \cat 
						\wtd z_1 : \Gt{S_1}
						  \proves 
						{\bout{\prop^{r_2}}{\abs{\wtd z_2}{
						\bout{\prop^{r_3}}{\abs{\wtd z_3}{Q}}\inact
						}}}\inact \hastype 
						\Proc$}
		\DisplayProof
	\end{align}	

	\begin{align}
		\label{pt:r-app-3-1}
		\AxiomC{\eqref{pt:r-app-3-2}} 
		\AxiomC{}
		\LeftLabel{\scriptsize PolySess}
		\UnaryInfC{$\Gt{\Gamma}\cat \envPropR;\es;
						\wtd z_1 : \Gt{S_1} \proves 
		 				\wtd z_1 \hastype \Gt{S_1}$}
		\LeftLabel{\scriptsize PolyAbs}
		\BinaryInfC{$\Gt{\Gamma}\cat \envPropR;\Gt{\Lambda};
						\Gt{\Delta_1 \Delta_2} 
						\cat \Theta_1 \proves
						\abs{\wtd z_1}
						{\bout{\prop^{r_2}}{\abs{\wtd z_2}{
						\bout{\prop^{r_3}}{\abs{\wtd z_3}{Q}}\inact
						}}}\inact \hastype 
						\lhot{\Gt{S_1}}$}
		\DisplayProof
	\end{align}
	
	\begin{align}
		\label{pt:r-app-3}
		\AxiomC{\eqref{pt:r-app-3-1}} 
		\AxiomC{}
		\LeftLabel{\scriptsize Nil}
		\UnaryInfC{$\Gt{\Gamma}\cat \envPropR;\es;\es \proves \inact$}
		\AxiomC{} 
		\LeftLabel{\scriptsize LVar}
		\UnaryInfC{$\Gt{\Gamma}\cat \envPropR;\es;\es \proves
					\prop^{r_1} \hastype \chtype{\lhot{\Gt{S_1}}}$} 
		\LeftLabel{\scriptsize Req}
		\TrinaryInfC{$\Gt{\Gamma}\cat \envPropR;\Gt{\Lambda};
						\Gt{\Delta_1 \Delta_2} 
						\cat \Theta_1 \proves 
						\bout{\prop^{r_1}}{\abs{\wtd z_1}
						{\bout{\prop^{r_2}}{\abs{\wtd z_2}{
						\bout{\prop^{r_3}}{\abs{\wtd z_3}{Q}}\inact
						}}}\inact}\inact$}
		\DisplayProof
	\end{align}
	The following tree proves this case: 
	\begin{align*}
		\AxiomC{\eqref{pt:r-app-3}} 
		\AxiomC{} 
		\LeftLabel{\scriptsize PolyVar}
		\UnaryInfC{$\Gt{\Gamma_1}\cat \envPropR;\Gt{\Lambda};\es \proves 
					\wtd x \hastype \wtd M$}
		\LeftLabel{\scriptsize PolyRcv}
		\BinaryInfC{$\Gt{\Gamma_1}\cat \envPropR;\es;\Gt{\Delta_1 \cat \Delta_2} 
						\cat \Theta \proves 
						\B{k}{\tilde x}{\appl{V}{(\wtd r, u_i)}} 
						\hastype \Proc$}
		\DisplayProof	
	\end{align*}
\item 
	
	Now we consider the general case for any $n \geq 1$.
	By \tabref{mst:t:bdowncore}, we have: 
	\begin{align*}
	 	\B{k}{\tilde x}{\appl{V}{\wtd r}} = \propinp{k}{\wtd x}\overbracket{\prop^{r_1}!\langle \lambda \wtd z_1. \ldots 
	 	\prop^{r_n}!\langle \lambda \wtd z_n.}^{n = \len{\wtd r}}
	 	Q
	 \rangle \ldots \rangle
	\end{align*}
	\noindent where 
	$Q = \appl{\V{k+1}{\tilde x}{V}}{(\wtd r_1,\ldots,\wtd r_n, \wtd m)}$ with:
	$\wtd{z_i} = (z^i_1,\ldots,z^i_{\len{\Gt{S_i}}})$
	for $i = \{1,\ldots,n\}$; and 
	$\wtd m =(u_{i},\ldots,u_{i+\len{\Gt{C}}-1})$.

	We shall prove the following judgment: 
	\begin{align}
			\Gt{\Gamma}\cat \envPropR;\es;\Gt{\Delta_1 \Delta_2} \cat \Theta 
			\proves \B{k}{\tilde x}{\appl{V}{(\wtd r, u_i)}} \hastype \Proc
	\end{align}

	We construct auxiliary derivations parametrized by $k$ and denoted by $d(k)$. 
	 If $k=n$, derivation $d(n)$ is defined as: 		\begin{align} 
		\label{pt:r-app-b2}
		\AxiomC{\eqref{eq:r-app-ih}}
			\LeftLabel{\scriptsize (\lemref{lem:weaken} with ${\envPropR}_2$)} 
			\UnaryInfC{$\Gt{\Gamma}\cat \envPropR;\Gt{\Lambda};
						\Gt{\Delta_1} \cat \Theta_1 \proves \V{k+1}{\tilde x}{V}
						\hastype \shot{\Gt{\wtd S C}} 
						$}
			\DisplayProof
	\end{align}

	\begin{align}
			\label{pt:r-app-b}
			\AxiomC{\eqref{pt:r-app-b2}}
			\AxiomC{\eqref{eq:r-app-ih2}}
			\LeftLabel{\scriptsize (\lemref{lem:subst} with $\sigma$)}
			\UnaryInfC{$\Gt{\Gamma}\cat \envPropR;\es;\Gt{{\envR}_2} 
							\proves (\wtd r_1,\ldots,\wtd r_n, \wtd m)
							\hastype \Gt{\wtd S C}$} 
			\LeftLabel{\scriptsize PolyAbs}
			\BinaryInfC{$\Gt{\Gamma}\cat \envPropR;\Gt{\Lambda};
						\Gt{\Delta_1 \cat \Delta_2} 
						\cat$ 
						$\Theta_1 \cat \Gt{{\envR}_2} \proves
						\abs{\wtd z_n}{Q} \hastype 
						\lhot{\Gt{S_n}}$}
			\DisplayProof
	\end{align}
	
	\noindent where $\sigma = \prod_{i \in \{1,\ldots,n\}}\subst{\wtd n_i}{\wtd z_i}$
	with $\wtd n_i = (r^i_{i},\ldots,r^i_{i+\len{\Gt{S_i}}-1})$
	for $i = \{1,\ldots,n\}$. 
	
	\begin{align}
	\AxiomC{\eqref{pt:r-app-b}} 
	\AxiomC{}
		\LeftLabel{\scriptsize Nil}
		\UnaryInfC{$\Gt{\Gamma}\cat \envPropR;\es;\es \proves \inact$}
	\AxiomC{}
	\LeftLabel{\scriptsize LVar}
	\UnaryInfC{$\Gt{\Gamma}\cat \envPropR;\es;\es \proves 
				\prop^{r_n} \hastype \chtype{\lhot{\Gt{S_n}}}$} 
		\LeftLabel{\scriptsize Req}
			\TrinaryInfC{$\Gt{\Gamma}\cat \envPropR;\Gt{\Lambda};
						\Gt{\Delta_1 \cat \Delta_2} 
						\cat$ 
						$\Theta_1 \cat \Gt{{\envR}_2} \proves \bout{\prop^{r_n}}{\abs{\wtd z_n}{Q}}\inact$}
			\DisplayProof
	\end{align}

	Otherwise, if $k \in \{1,\ldots,n-1\}$, the derivation $d(k)$ is as follows: 
	\begin{align}
		\label{pt:r-abs-k}
		\AxiomC{$d(k+1)$} 
	\AxiomC{}
	\LeftLabel{\scriptsize PolyVar}
	\UnaryInfC{$\Gt{\Gamma}\cat \envPropR;\es;\wtd z_k : \Gt{S_k} 
				\proves \wtd z_k \hastype \Gt{S_k}$} 
	\LeftLabel{\scriptsize PolyAbs}
	\BinaryInfC{\begin{tabular}{c}$\Gt{\Gamma}\cat \envPropR;\Gt{\Lambda};
						\Gt{\Delta_1 \cat \Delta_2} 
						\cat \Theta_1 
						\cat 
						(\wtd z_1,\ldots,\wtd z_{k-1}) : 
						(\Gt{S_1},\ldots,\Gt{S_{k-1}})
						\proves$ \\
						$\abs{\wtd z_k}\overbracket{\prop^{r_{k+1}}!\langle \lambda 
						\wtd z_{k+1}. \ldots 
	 	\prop^{r_n}!\langle \lambda \wtd z_n.}^{n-k} 
	 	Q
	 	\overbracket{\rangle.\inact \ldots \rangle.\inact}^{n-k}
						\hastype \lhot{\Gt{S_1}}$
			\end{tabular}} 
			\DisplayProof 
	\end{align}
	
	\begin{align}
	\label{pt:r-acc-k}
	\AxiomC{\eqref{pt:r-abs-k}}
		\AxiomC{$\Gt{\Gamma}\cat \envPropR;\es; \es \proves \prop^{r_1} \hastype 
						\chtype{\lhot{\Gt{S_1}}}$}
	\LeftLabel{\scriptsize Acc}
		\BinaryInfC{
		\begin{tabular}{c}
		$\Gt{\Gamma}\cat \envPropR;\Gt{\Lambda};
						\Gt{\Delta_1 \cat \Delta_2} 
						\cat \Theta_1 
						\cat (\wtd z_1,\ldots,\wtd z_{k-1}) : 
						(\Gt{S_1},\ldots,\Gt{S_{k-1}})
						\proves$  \\
						$\overbracket{\prop^{r_k}!\langle \lambda 
						\wtd z_k. \ldots 
	 	\prop^{r_n}!\langle \lambda \wtd z_r.}^{n-k+1} 
	 	Q
	 	\overbracket{\rangle.\inact \ldots \rangle.\inact}^{n-k+1}
						$
				\end{tabular}}
		\DisplayProof	
	\end{align}


	The following tree proves this case: 
	\begin{align*}
		\AxiomC{$d(1)$} 
		\AxiomC{} 
		\LeftLabel{\scriptsize PolyVar}
		\UnaryInfC{$\Gt{\Gamma_1}\cat \envPropR;\Gt{\Lambda};\es \proves 
					\wtd x \hastype \wtd M$}
		\LeftLabel{\scriptsize PolyRcv}
		\BinaryInfC{$\Gt{\Gamma_1}\cat \envPropR;\es;\Gt{\Delta_1 \cat \Delta_2} 
						\cat \Theta \proves 
						\B{k}{\tilde x}{\appl{V}{(\wtd r, u_i)}} 
						\hastype \Proc$}
		\DisplayProof	
	\end{align*}
\end{itemize}
\end{enumerate}

This concludes the analysis for \Cref{mst:t:typecore}(1).

\item This part of \Cref{mst:t:typecore} concerns values. 
Without a loss of generality we assume $\wtd T = \wtd S, C$ with 
 $\wtd S = (S_1, \ldots, S_n)$ such that $\tr(S_i)$ for $i \in \{1,\ldots,n\}$.  
We distinguish two sub-cases: 
\rom{1} $V=y$ and \rom{2} $V=\abs{\wtd y, z}{P}$. 

We first consider sub-case~\rom{1}.  
By assumption $\Gamma;\Lambda;\Delta \proves y \hastype \slhot{C}$.

	Further, we can distinguish two sub-sub-cases~(a) $\leadsto = \multimap$ and
	(b) $\leadsto = \rightarrow$. In sub-sub-case~(a), when $\leadsto =
	\multimap$, only Rule~\textsc{LVar} can be applied; by inversion  
	$\Lambda = \{y : \lhot{\wtd T}\}$ and $\Delta = \es$.  By
	\tabref{mst:t:bdowncore}, we have $\V{k}{\tilde x}{y} =y$ and by
	\Cref{mst:def:typesdecomp,def:typesdecompenv} we have
	$\Gt{\Delta} = \{ \Gt{\lhot{\wtd T}}\}$. Hence, we prove the following judgment
	by applying Rule~\textsc{LVar}:
	\begin{align*} 
	 \Gt{\Gamma};\Gt{\Delta};\es \proves \V{k}{\tilde x}{y} \hastype 
	 \Gt{\lhot{\wtd T}}
	 \LeftLabel{\scriptsize LVar}
	\end{align*} 
	\noindent In sub-sub-case~(b) only Rule~\textsc{Sh} can be applied; 
	by inversion we have $\Gamma = \{y:\shot{\wtd T}\}$, $\Lambda = \es$, and 
	$\Delta=\es$. Similarly to~(a),  by \Cref{mst:def:typesdecomp,def:typesdecompenv} we have $\Gt{\Gamma} = \{ \Gt{\shot{\wtd T}}\}$.
	Hence, we prove the following judgment by applying Rule~\textsc{SH}:
	\begin{align*} 
	 \Gt{\Gamma};\es;\es \proves \V{k}{\tilde x}{y} \hastype 
	 \Gt{\shot{\wtd T}}
	 \LeftLabel{\scriptsize LVar}
	\end{align*}  
	This concludes sub-case~\rom{1}. 
	
	Now, we consider sub-case~\rom{2}. 
This is the second sub-case concerning values when 
$V = \abs{\wtd y, z}P$ where $\wtd y = y_1,\ldots,y_n$. 
By assumption we have  
$\Gamma;\Lambda;\Delta \cat \envR \proves V \hastype \slhot{\wtd S, C}$. Also here 
we distinguish two sub-sub-cases (a) $\leadsto = \multimap$ and (b) 
$\leadsto = \rightarrow$: 
\begin{itemize} 
    \item $\leadsto = \multimap$. 
By assumption, $\Gamma;\Lambda;\Delta \cat \envR \proves V \hastype 
\lhot{\wtd S, C}$.
In this case Rule~\textsc{Abs}
can be applied. Firstly, we $\alpha$-convert value $V$ as follows:
\begin{align}
	V \equiv_{\alpha} \abs{\wtd y, z_1}P\subst{z_1}{z}
\end{align}
	For this case only Rule~\textsc{Abs} can be applied:
	\begin{align}
		\label{pt:absitr}
		\AxiomC{$\Gamma;\Lambda;\Delta_1 \cat {\envR}_1 
		\proves P\subst{z_1}{z}
		         \hastype \Proc$}
		\AxiomC{$
			\Gamma;\es;\Delta_2 \cat {\envR}_2 \proves 
			\wtd y, z_1 \hastype \wtd S, C$}
		\LeftLabel{\scriptsize Abs}
		\BinaryInfC{$
		\Gamma \setminus z_1;\Lambda;
		\Delta_1 \setminus \Delta_2 \cat 
		{\envR}_1 \setminus {\envR}_2
		\proves
			\abs{\wtd y, z_1}{P\subst{z_1}{z}} \hastype \lhot{\wtd S, C}$}
		\DisplayProof
	\end{align}
Let $\wtd x = \fv{P}$ and $\Gamma_1 = \Gamma \setminus \wtd x$. Also, let $\Theta_1$ be a balanced environment such that
$$\dom{\Theta_1}=\{\prop_1,\ldots,\prop_{\plen{P}}\} \cup
\{\dual{\prop_{2}},\ldots,\dual{\prop_{\plen{P}}}\}$$
and $\Theta_1(\prop_1) = \btinp{\wtd M}\tinact$ with
$\wtd M = (\Gt{\Gamma \setminus y_1}, \Gt{\Lambda})(\wtd x)$.
We define: 
\begin{align*}
	{\envPropR}_i = \prod_{r \in {\dom{{\envR}_i}}} 
	c^r:\chtype{\lhot{\Rts{}{s}{{\envR}_i(r)}}} \text{ for } 
	i \in \{1,2 \}
\end{align*}
 Then, by IH (Part 1) on the first assumption of \eqref{pt:absitr} we have:
 \begin{align}
 		\label{eq:abs-ih-1}
 		\Gt{\Gamma_1} \cat {\envPropR}_1;\es;\Gt{\Delta_1} \cat \Theta_1
 		\proves \B{1}{\tilde x}{P\subst{y_1}{y}} \hastype \Proc
 \end{align}


Let $\wtd T = \Gt{S_1}, \ldots, \Gt{S_n}, \Gt{C}$. 
By \Cref{mst:def:typesdecomp,def:typesdecompenv} and the
second assumption of
\eqref{pt:absitr} we have:
\begin{align}
	\label{eq:abs-ih-2}
	\Gt{\Gamma};\es;\Gt{\Delta_2}\proves \wtd y^1,\ldots,\wtd y^n, 
	\wtd z
	 \hastype \wtd T 
\end{align}
\noindent where $\wtd z = (z_1,\ldots,z_{\len{\Gt{C}}})$ 
and $\wtd{y^i} = (y^i_1,\ldots,y^i_{\len{\Gt{S_i}}}))$ for 
$i \in \{1,\ldots,n\}$.

We define $\Theta = \Theta_1 \cat \dual {\prop_k}:
	\btoutt{\wtd M}$. By \tabref{mst:t:bdowncore}, we have:
\begin{align*}
	\V{k}{\tilde x}{\abs{y_1, \ldots, y_n, z}{P}} = 
	\abs{(\wtd{y^1},\ldots,\wtd{y^n}, \wtd z):
		\slhotup{(\wtd{T})}}{N}
\end{align*}
\noindent where 
\begin{align*}
	N = \news{\wtd \prop} \news{\wtd \prop_r}
     \prod_{i \in \len{\wtd y}}
	 (\recprovx{y_i}{x}{\wtd y^i})
	 \Par 
	 \apropout{1}{\wtd x} \Par \B{1}{\tilde x}{P \subst{z_1}{z}}
\end{align*}
\noindent with $\wtd \prop = (\prop_1,\ldots,\prop_{\plen{P}})$
and $\widetilde{\prop_r} = \bigcup_{r\in \tilde{y}}\prop^r$. 

 We use an auxiliary derivation:
 \begin{align}
 	\label{pt:abs-send}
 	\AxiomC{}
	\LeftLabel{\scriptsize Nil}
	\UnaryInfC{$\Gt{\Gamma} \cat {\envPropR}_1;\es;\es
				       \proves \inact \hastype \Proc$}
	\LeftLabel{\scriptsize End}
	\UnaryInfC{$\Gt{\Gamma} \cat {\envPropR}_1;\es;\prop_k:\tinact
				       \proves \inact \hastype \Proc$}
	\AxiomC{}
	\LeftLabel{\scriptsize PolyVar}
	\UnaryInfC{$\Gt{\Gamma} \cat {\envPropR}_1;\Gt{\Lambda};\es
				        \proves \wtd x \hastype \wtd M$}
	\LeftLabel{\scriptsize Send}
	\BinaryInfC{$\Gt{\Gamma} \cat {\envPropR}_1;\Gt{\Lambda}; \dual {\prop_1}:
				        \btout{\wtd M} \tinact
				        \proves \propout{1}{\wtd x} \inact \hastype \Proc$}
	\DisplayProof
 \end{align}

 \begin{align} 
	\label{pt:r-d1-value1}
	\AxiomC{}
	\LeftLabel{\scriptsize LVar}
	\UnaryInfC{ 
		$\Gt{\Gamma}\cat {\envPropR}_1;\recpvar{x}:\lhot{\Rts{}{s}{{\envR}_2(y^i)}};\es 
		\proves \recpvar{x} \hastype \lhot{\Rts{}{s}{{\envR}_2(y^i)}}$}
	\DisplayProof 
 \end{align} 

 \begin{align}
	\label{pt:r-d1-value}
	\AxiomC{\eqref{pt:r-d1-value1}}
		\AxiomC{}
		\LeftLabel{\scriptsize PolySess}
	\UnaryInfC{\begin{tabular}{c}$\Gt{\Gamma} \cat {\envPropR}_1;\es;
		\wtd y^i:\Rts{}{s}{{\envR}_2(y^i)} \proves$ 
		\\ $\wtd y^i \hastype \Rts{}{s}{{\envR}_2(y^i)}$
	\end{tabular}}
	\LeftLabel{\scriptsize PolyApp}
		\BinaryInfC{$\Gt{\Gamma}\cat {\envPropR}_1;
		\recpvar{x} : \lhot{\Rts{}{s}{{\envR}_2(y^i)}};
		\wtd y^i:\Rts{}{s}{{\envR}_2(y^i)} \proves
					(\appl{b}{\wtd y^i})$}
				\DisplayProof
	\end{align}

	\begin{align}
		\label{pt:r-d-value}
		\AxiomC{\eqref{pt:r-d1-value}} 
		\AxiomC{}
		\LeftLabel{\scriptsize Sh} 
		\UnaryInfC{\begin{tabular}{c}
			$\Gt{\Gamma}\cat {\envPropR}_1;\es;\es
			 \proves \prop^{y^i} \hastype$
   \\  				$\chtype{\lhot{\Rts{}{s}{\envR(y^i)}}}$\end{tabular}}
		\AxiomC{}
		\LeftLabel{\scriptsize LVar}
		\UnaryInfC{\begin{tabular}{c}$\Gt{\Gamma}\cat \
			{\envPropR}_1; \recpvar{x}:\lhot{\Rts{}{s}{\envR(y^i)}};\es$ \\ 
					$\proves \recpvar{x} \hastype \lhot{\Rts{}{s}{\envR(y^i)}}$
					\end{tabular}} 
		\LeftLabel{\scriptsize Acc}
		\TrinaryInfC{$\Gt{\Gamma \sigma}\cat {\envPropR}_1;\es;
		\wtd y^i:\Rts{}{s}{{\envR}_1(y^i)} \proves
		\recprovx{y_i}{x}{\wtd y^i}
		$}
		\DisplayProof
	\end{align}

 \begin{align}
	\label{pt:abspar1}
	\AxiomC{for $y^i \in \wtd y$} 
	\AxiomC{\eqref{pt:r-d-value}} 
	\LeftLabel{\scriptsize Par ($\len{\wtd y}-1$ times)}
	 \BinaryInfC{
		 $
		 \Gt{\Gamma} \cat {\envPropR}_1;
	 \Gt{\Lambda};\Gt{\Delta_1} \cat \Gt{{\envR}_2}
						  \proves
	  \prod_{i \in \len{\wtd y}}
	  (\recprovx{y_i}{x}{\wtd y^i})
				\hastype \Proc
				$
				} 
	 \DisplayProof
 \end{align}

\begin{align}
	\label{pt:abspar}
	\AxiomC{\eqref{pt:abspar1}}
	\AxiomC{\eqref{pt:abs-send}} 
	\AxiomC{\eqref{eq:abs-ih-1}}
	\LeftLabel{\scriptsize (\lemref{lem:weaken} with $\tilde x$)}
	\UnaryInfC{
		$
			\Gt{\Gamma} \cat {\envPropR}_1;\es;\Gt{\Delta_1} 
			\cat \Theta_1
				\proves \B{1}{\tilde x}{P\subst{z_1}{z}} \hastype \Proc
		 $}
	\LeftLabel{\scriptsize Par}
	\BinaryInfC{
		$
			\Gt{\Gamma} \cat {\envPropR}_1;
			\Gt{\Lambda};\Gt{\Delta_1} 
			\cat \Theta
								\proves
			\apropout{1}{\wtd x} \Par \B{1}{\tilde x}{P \subst{z_1}{z}}
					\hastype \Proc
			   $}
	\BinaryInfC{
		\begin{tabular}{l} 
		$\Gt{\Gamma} \cat {\envPropR}_1;
	\Gt{\Lambda};\Gt{\Delta_1} \cat  \Gt{{\envR}_2} \cat \Theta
					     \proves$
						 \\
     $\prod_{i \in \len{\wtd y}}
	 (\recprovx{y_i}{x}{\wtd y^i})
	 \Par 
	 \apropout{1}{\wtd x} \Par \B{1}{\tilde x}{P \subst{z_1}{z}}
			   \hastype \Proc$
		\end{tabular}}
	\LeftLabel{\scriptsize PolyRes}
			   \UnaryInfC{
				   \begin{tabular}{l}
				   $\Gt{\Gamma} \cat 
			   {\envPropR}_1 \setminus {\envPropR}_2;
			   \Gt{\Lambda};\Gt{\Delta_1} \cat  \Gt{{\envR}_2} \cat \Theta
									\proves$ \\
				$\news{\wtd \prop_r} 
				\prod_{i \in \len{\wtd y}}
				(\recprovx{y_i}{x}{\wtd y^i})
				\Par 
				\apropout{1}{\wtd x} \Par \B{1}{\tilde x}{P \subst{z_1}{z}}
						  \hastype \Proc$
				   \end{tabular} }
			   \LeftLabel{\scriptsize PolyResS}
	\LeftLabel{\scriptsize PolyResS}
	\UnaryInfC{
		$
		\Gt{\Gamma}
		\cat {\envPropR}_1 \setminus {\envPropR}_2
		;\Gt{\Lambda};\Gt{\Delta_1}
		\cat \Gt{{\envR}_2}
					     \proves
			 N 
			   \hastype \Proc
		$
	}
  \DisplayProof
\end{align}

The following tree proves this part:
\begin{align}
	\label{pt:value1}
	\AxiomC{\eqref{pt:abspar}}
	\AxiomC{\eqref{eq:abs-ih-2}}
	\LeftLabel{\scriptsize Abs}
	\BinaryInfC{$\Gt{\Gamma \setminus z_1}
	\cat {\envPropR}_1 \setminus {\envPropR}_2
	;\Gt{\Lambda};
	\Gt{\Delta_1 \setminus \Delta_2} 
	              \proves
				 \abs{(\wtd{y^1},\ldots,\wtd{y^n}, \wtd z):
				 \lhotup{(\wtd{T})}}{N} \hastype \Proc$}
	\DisplayProof
\end{align}
\item $\leadsto = \rightarrow$.
By assumption, $\Gamma;\es;\es \proves V \hastype \shot{C}$.
In this case Rule~\textsc{Prom} can be applied:
\begin{align}
	\label{eq:abs-sub2-ih-1}
	\AxiomC{$\Gamma;\es;\Delta \proves P\subst{y_1}{y} \hastype \Proc$}
		\AxiomC{$\Gamma;\es;\Delta \proves y_1 \hastype C$}
		\LeftLabel{\scriptsize Abs}
	\BinaryInfC{$\Gamma \setminus z_1;\es;\es
				\proves\abs{\wtd y, z_1}P\subst{z_1}{z} \hastype \lhot{C}$}
	\LeftLabel{\scriptsize Prom}
	\UnaryInfC{$\Gamma \setminus z_1;\es;\es
					     \proves \abs{\wtd y, z_1}P\subst{z_1}{z}
					     \hastype \shot{C}$}
	\DisplayProof
\end{align}
\noindent Now, we can see that we can specialize 
the previous sub-case by taking 
$\Delta_1 \setminus \Delta_2 = \emptyset$ 
(resp. ${\envR}_1 \setminus {\envR}_2 = \emptyset$), 
that is $\Delta_1 = \Delta_2$ (resp. ${\envR}_1 = {\envR}_2$) and 
$\Lambda = \emptyset$.  
Subsequently, we have ${\envPropR}_1 \setminus {\envPropR}_2 = \emptyset$, 
$\Gt{\Lambda} = \emptyset$, and $\Gt{\Delta_1 \setminus \Delta_2} = \emptyset$. 
Thus, we can apply 
Rule~\textsc{Prom} to \eqref{pt:value1} to prove this sub-case, as follows:  
\begin{align*}
	\AxiomC{\eqref{pt:value1}} 
	\LeftLabel{\scriptsize Prom}
	\UnaryInfC{$\Gt{\Gamma \setminus z_1}; 
	\es; \es 
	              \proves
				 \abs{(\wtd{y^1},\ldots,\wtd{y^n}, \wtd z):
				 \shotup{(\wtd{T})}}{N} \hastype \Proc$}
	\DisplayProof
\end{align*}
\end{itemize}
This concludes this part (and the proof of \Cref{mst:t:typecore}). 
 \end{enumerate}
\end{proof}

\subsection{Proof of \thmref{mst:t:decompcore}}
\label{mst:app:decompcore}

\decompcore* 

\begin{proof} 
By assumption $\Gamma;\es;\Delta \circ \envR \proves P \hastype \Proc$.  Then,
	by applying \lemref{lem:subst} we have:
	\begin{align}
		\label{eq:typerec-subst-ih}
		\Gamma\sigma;\es;\Delta \sigma \circ \envR \sigma \proves P \sigma \hastype \Proc
	\end{align}
	
  \noindent 
  By \lemref{mst:t:typecore} on \eqref{eq:typerec-subst-ih} we have: 
  \begin{align}
  \label{eq:typerec-ih}
  	\Gt{\Gamma_1 \sigma}\cat \envPropR;\es;\Gt{\Delta \sigma} \cat \Theta	\proves 
  	\B{k}{\epsilon}{P\sigma} \hastype \Proc 
  \end{align}
	\noindent where $\Theta$ is balanced with
	$\dom{\Theta} = \{\prop_k,\prop_{k+1},\ldots,\prop_{k+\plen{P}-1} \}
	\cup \{\dual{\prop_{k+1}},\ldots,\dual{\prop_{k+\plen{P}-1}} \}$, and
	$\Theta(\prop_k)=\btinpt{\cdot}$, and 
	  $\envPropR = \prod_{r \in \dom{\envR}} c^r:\chtype{\lhot{\Rts{}{s}{\envR(r)}}}$.
	 By assumption,  $\fv{P} = \es$.
  
  By \defref{mst:def:decomp}, we
	shall prove the following judgment:
	\begin{align*}
		\Gt{\Gamma \sigma};\es;\Gt{\Delta \sigma} \circ \Gt{\envR \sigma} \proves \news{\wtd \prop}
		\news{\wtd c_r}\Big(
  		\prod_{r \in \tilde v} 
		\recprovx{r}{x}{\wtd r}
		\Par
  		\propout{k}{} \inact \Par \B{k}{\epsilon}{P\sigma}\Big)
	\end{align*}
  \noindent where: $k >0$;  
  $\wtd v = \rfn{P}$; $\wtd r = 
  (r_1,\ldots,r_{\len{\Gt{S}}})$ for each $r \in \wtd v$.
  
We know $\dom{\envR} = \wtd v$.  We assume that
recursive session types are unfolded. 
  By \Cref{def:typesdecompenv,mst:def:typesdecomp}, for
  $r \in \dom{\envR}$ we have:
   $$\Gt{\envR}(r) = \Rt{\envR(r)} = \Rts{}{s}{\envR(r)}$$
 We use a family of auxiliary derivations parametrized by $r \in \wtd v$. 
  
  \begin{align}
  \label{pt:r-d1}
  \AxiomC{}
  \LeftLabel{\scriptsize LVar}
  \UnaryInfC{\begin{tabular}{c}$\Gt{\Gamma \sigma}\cat \envPropR;
	\recpvar{x}:\lhot{\Rts{}{s}{\envR(r)}};\es \proves$\\
  		$\recpvar{x} \hastype \lhot{\Rts{}{s}{\envR(r)}}$\end{tabular}}
  	\AxiomC{}
  	\LeftLabel{\scriptsize PolySess}
  \UnaryInfC{\begin{tabular}{c}$\Gt{\Gamma \sigma}\cat \envPropR;\es;\wtd r:\Rts{}{s}{\envR(r)} \proves$ \\ $\wtd r \hastype \Rts{}{s}{\envR(r)}$
  \end{tabular}}
  \LeftLabel{\scriptsize PolyApp}
  	\BinaryInfC{$\Gt{\Gamma \sigma}\cat \envPropR;
	  \recpvar{x}:\lhot{\Rts{}{s}{\envR(r)}};
  	\wtd r:\Rts{}{s}{\envR(r)} \proves
  				(\appl{\recpvar{x}}{\wtd r})$}
  			\DisplayProof
  \end{align}

  \begin{align}
  	\label{pt:r-d}
  	\AxiomC{\eqref{pt:r-d1}} 
  	\AxiomC{}
  	\LeftLabel{\scriptsize Sh} 
  	\UnaryInfC{\begin{tabular}{c}$\Gt{\Gamma \sigma}\cat \envPropR;\es;\es \proves \prop^r \hastype$
 \\  				$\chtype{\lhot{\Rts{}{s}{\envR(r)}}}$\end{tabular}}
  	\AxiomC{}
  	\LeftLabel{\scriptsize LVar}
  	\UnaryInfC{\begin{tabular}{c}$\Gt{\Gamma \sigma}\cat \envPropR;
		\recpvar{x}:\lhot{\Rts{}{s}{\envR(r)}};\es$ \\ 
  				$\proves \recpvar{x} \hastype \lhot{\Rts{}{s}{\envR(r)}}$
  				\end{tabular}} 
  	\LeftLabel{\scriptsize Acc}
  	\TrinaryInfC{$\Gt{\Gamma \sigma}\cat \envPropR;\es;\wtd r:\Rts{}{s}{\envR(r)} \proves 
	  \recprovx{r}{x}{\wtd r}
	$}
  	\DisplayProof
  \end{align}
  
  We will then use: 
  \begin{align}
  \label{pt:prod}
  \AxiomC{for $r \in \wtd v$} 
  \AxiomC{\eqref{pt:r-d}}
  \LeftLabel{\scriptsize Par ($\len{\wtd v}-1$ times)}
  \BinaryInfC{$\Gt{\Gamma \sigma}\cat \envPropR;\es;\Gt{\envR \sigma} \proves \prod_{r \in \tilde v} 
  \recprovx{r}{x}{\wtd r}
  $}
  \DisplayProof
  \end{align}
  
  where we apply Rule~\textsc{Par} $\len{\wtd v}-1$ times and for every $r
  \in \wtd v$ we apply derivation \eqref{pt:r-d}. Notice that by
  \Cref{def:typesdecompenv,mst:def:typesdecomp} we have $\Gt{\envR
  \sigma} = \prod_{r \in \tilde v}\wtd r : \Rts{}{}{\envR(r)}$.

  \begin{align}
  \label{pt:r-par}
  \AxiomC{}
  \LeftLabel{\scriptsize Sess}
      \UnaryInfC{$\Gt{\Gamma \sigma}\cat \envPropR;\es;\dual{\prop_k}:\btout{\cdot}\tinact
  			\proves \propout{k}{} \inact$}
  \AxiomC{\eqref{eq:typerec-ih}}
  \LeftLabel{\scriptsize Par}
  \BinaryInfC{$\Gt{\Gamma \sigma}\cat \envPropR;\es;\Gt{\Delta \sigma} \cat 
  				\Theta \cat \dual{\prop_k}:\btout{\cdot}\tinact \proves \propout{k}{} \inact \Par \B{k}{\epsilon}{P\sigma}$}
  				\DisplayProof 
  \end{align}
  
  The following tree proves this case: 
  \begin{align}
  \AxiomC{\eqref{pt:prod}} 
  \AxiomC{\eqref{pt:r-par}}
  \LeftLabel{\scriptsize Par}
  	\BinaryInfC{
  	\begin{tabular}{c}
  	$\Gt{\Gamma \sigma}\cat \envPropR;\es;\Gt{\Delta \sigma}  \cat
  				\Theta \cat \dual{\prop_k}:\btout{\cdot}\tinact 
				  \circ \Gt{\envR \sigma}
				\proves$  \\
  		$\prod_{r \in \tilde v} 
		  (\recprovx{r}{x}{\wtd r})
		\Par
  		\propout{k}{} \inact \Par \B{k}{\epsilon}{P\sigma}$
  		\end{tabular}} 
  	\LeftLabel{\scriptsize PolyRes}
  	\UnaryInfC{
  	\begin{tabular}{c}
  	$\Gt{\Gamma \sigma};\es;\Gt{\Delta \sigma}  \cat
  				\Theta \cat \dual{\prop_k}:\btout{\cdot}\tinact 
				  \circ \Gt{\envR \sigma}
				\proves$ \\ 
  				$\news{\wtd c_r}(
  		\prod_{r \in \tilde v} 
		  (\recprovx{r}{x}{\wtd z})
		\Par
  		\propout{k}{} \inact \Par \B{k}{\epsilon}{P\sigma})$
  		\end{tabular}}
  	\LeftLabel{\scriptsize PolyResS}
  	\UnaryInfC{$\Gt{\Gamma \sigma};\es;\Gt{\Delta \sigma} \circ 
  	\Gt{\envR \sigma}
  	\proves \news{\wtd \prop}\news{\wtd c_r}(
  		\prod_{r \in \tilde v} 
		 (\recprovx{r}{x}{\wtd r})
		\Par
  		\propout{k}{} \inact \Par \B{k}{\epsilon}{P\sigma})$}	
  	\DisplayProof
  \end{align}

\end{proof}

\newpage 
\section{Appendix to \secref{ss:dynamic}}
\label{app:dynamic}

\subsection{Proof of \lemref{l:c-prop-closed}}
\label{mst:app:tauclosed}

\tauclosed* 

\begin{proof}
    By the induction on the structure of $P_1$. We consider two base cases: 
    \begin{itemize}
      \item Case $P_1=\inact$.
      Let $\wtd B$ such that $\wtd W \vrelate \wtd B$. 
      Then, the elements of $\Cb{\tilde W}{\tilde x}{\inact}$ are 
      $R_1 = \news{\prop_k}\apropout{k}{\wtd B} \Par \propinp{k}{\wtd x}\inact$
      and $R_2 = \inact$. Clearly, $R_1 \by{\tau} R_2$.

      \item Case $P_1 = \appl{V_1}{(\wtd r, u)}$. 
      Let $n = \len{\wtd r}$. 
      Then, we have 
      \begin{align*}
        \Cb{\tilde W}{\tilde x}{P_1} =  N_1 \cup N_4 
        \cup \big(
        \cup_{1 \leq l \leq n} N^l_2  \big) 
        \cup \big(
        \cup_{1 \leq l \leq n - 1} N^l_3 \big) 
      \end{align*}
       where 
      \begin{align*} 
        N_1 &= \{ \Rb{\tilde v, \tilde r} \Par \apropout{k}{\wtd B}
        \Par \B{k}{\tilde x}{P_1} : \wtd W \vrelate \wtd B \} 
        \\ 
        N^l_2 &= \{
        \Rb{\tilde v, r_l, \ldots, r_n} \Par
                     \overbracket{\prop^{r_l}!\big\langle 
                    \lambda \wtd z_l. \prop^{r_{l+1}}!\langle\lambda \wtd z_{l+1}.\cdots. 
                 \prop^{r_n}!\langle \lambda \widetilde z_n.}^{|\tilde r| - l +1} 
                 Q^{V_2}_l \rangle \,\rangle \big\rangle 
                 : V_1 \subst{\tilde W}{\tilde x} \vrelate V_2 \}
                 \\
        N^l_3 &= \{
        \Rb{\tilde v, r_{l+1}, \ldots, r_n} \Par \lambda \wtd z_l.
                 \overbracket{
                     \prop^{r_{l+1}}!\langle\lambda \wtd z_{l+1}.\cdots. 
                 \prop^{r_n}!\langle \lambda \wtd z_n.}^{|\tilde r| - l} 
                 Q^{V_2}_l \rangle \, \big\rangle \ {\wtd r_l} 
                 : V_1 \subst{\tilde W}{\tilde x} \vrelate V_2  \}
                 \\
        N_4 &= \{
        \Rb{\tilde v} \Par \appl{V_2}{(\wtd r_1,\ldots,\wtd r_{n}, \wtd m)}
        :  V_1 \subst{\tilde W}{\tilde x} \vrelate V_2 \}
        \end{align*}
    
      \noindent with 
      \begin{align*} 
      Q^{V_2}_l = \appl{V_2}{(\wtd r_1,\ldots,\wtd r_{l-1}, \wtd z_{l}, \ldots, 
               \wtd z_n, \wtd m)}
      \end{align*} 
      We can see that for $R_1 \in N_1$ there exist $R^0_2 \in N^0_2$ such that
      $R_1 \by{\tau} R^{0}_2$. Further, we observe that for $R^l_2 \in N^l_2$
      there is  $R^l_3 \in N^l_3$ such that $R^l_2 \by{\tau} R^l_3$. Now, we can
      see that for $R^l_3 \in N^l_3$ there is $R^{l+1}_2 \in N^{l+1}_2$ such that
      $R^l_3 \by{\tau} R^{l+1}_2$. Finally, we have for 
      $$R^{n}_3 \in N^n_3 = \{ \Rb{\tilde v}  \Par 
      \appl{\abs{\wtd z_n}{Q^{V_2}_n}}{\wtd r_n} :
       V_1 \subst{\tilde W}{\tilde x} \vrelate V_2 \}$$ 
      
      there is $R_4 \in N_4$ such that $R^{n}_3 \by{\tau} R_4$. 
    \end{itemize}
    
      \noindent We consider two inductive cases as remaining cases are similar: 
      \begin{itemize}
        \item Case $P_1 = \bout{u_i}{V_1}P_2$. 
        We distinguish two sub-case: \rom{1} $\neg \tr(u_i)$ 
        and \rom{2} $\tr(u_i)$. 
        In both sub-cases, we  distinguish two kinds of an object value $V_1$: 
        (a)~ $V_1 \equiv x$, such that  
        $\subst{V_x}{x}\in \subst{\tilde W}{\tilde  x}$ 
        and (b)~ $V_1 = \abs{y:C}P'$, that is $V_1$ is a pure abstraction. 
    
        First, we consider sub-case \rom{1}. 
        Let $\wtd y = \fv{V_1}$, $\wtd w = \fv{P_2}$, $\wtd W_1$, and $\wtd W_2$ such that 
        $\subst{\wtd W}{\wtd x} = \subst{\wtd W_1}{\wtd y}\cdot
         \subst{\wtd W_2}{\wtd w}$. 
        Further, let $\wtd v = \rfn{P_1\subst{\tilde W}{\tilde x}}$ and 
        $\sigma = \subst{u_{i+1}}{u_i}$. 
        Then,  by the definition of $\Cb{-}{-}{-}$ (\tabref{mst:t:tablecd}),  
        we have that
        $\Cb{\tilde W}{\tilde x}{\bout{u_i}{V_1}P_2} = N_1 \cup N_2$ where: 
        \begin{align*}
          N_1 &= \{   
          \Rb{\tilde v} \Par \apropout{k}{\wtd B}
          \Par \B{k}{\tilde x}{P_1} : \wtd W \vrelate \wtd B \} 
          \\
             N_2 &= \{
               \Rb{\tilde v} \Par 
             \bout{u_i}{V_2} \apropout{k}{\wtd B_2} \Par 
             \B{k_2}{\tilde w}{P_2\sigma} 
             :  V_1\sigma\subst{\wtd W_1}{\wtd y} \vrelate V_2,\ \wtd W_2 \vrelate \wtd B_2 \}  
          \end{align*}
          
          \noindent 
          We can see that in both cases of $V_1$, a variable or a pure abstraction,
          we have that for $R_1 \in N_1$ there is $R_2 \in N_2$ such that $R_1
          \by{\tau} R_2$. 
        In the sub-case
          (a)~by $\subst{V_x}{x}\in \subst{\tilde W}{\tilde x}$ we have  $V_2 \in
          \wtd B$, such that $V_x \vrelate V_2$, 
        that by a $\tau$-move substitutes $x$ in $\B{k}{\tilde x}{P_1}$. 
        Thus, by  
         $x \sigma \subst{\tilde W_1}{\tilde y}=V_x$ we have 
         $V_1 \sigma \subst{\tilde W_1}{\tilde y} \vrelate V_2$. 
        In the sub-case (b), by definition of $\B{-}{-}{-}$
        we have that 
        by a $\tau$-move $\wtd B_1 \subseteq \wtd B$
        substitute $\wtd y$ in $\V{k+1}{\tilde y}{V_1\sigma}$ so we have  
        $R_1 \by{\tau} R_2$ where 
         $$V_2 = \V{k+1}{\tilde y}{V_1\sigma}\subst{\tilde B_1}{\tilde y}$$ 
         
         Now, we may notice that by \tabref{mst:t:tablecd} we have 
         \begin{align*}
          V_2 \in \Vb{\tilde W_1}{\tilde y}{V_1\sigma}
         \end{align*}
    
         Hence, by \Cref{mst:def:valuesset,mst:def:vrelate} we have 
         $V_1\sigma \subst{\tilde W_1}{\tilde y} \vrelate V_2$. 
         Further, we may notice that there is no  
         $\tau$-transition involving non-essential prefixes in $R_2$. 
        This concludes this sub-case. 
    
        Now, we consider sub-case \rom{2}. 
        Let $\wtd y = \fv{V_1}$, $\wtd w = \fv{P_2}$, $\wtd W_1$, and $\wtd W_2$ such that 
        $\subst{\wtd W}{\wtd x} = \subst{\wtd W_1}{\wtd y}\cdot
         \subst{\wtd W_2}{\wtd w}$. 
        Then, we have 
        $\Cb{\tilde W}{\tilde x}{P_1} = N_1 \cup N_2 \cup N_3 \cup N_4$ 
        where 
        \begin{align*}
          N_1 &= \{ \Rb{\tilde v} \Par \apropout{k}{\wtd B}
          \Par \B{k}{\tilde x}{P_1} : \wtd W \vrelate \wtd B \} 
          \\
          N_2 &= 
          \{ \Rb{\tilde v} \Par \abbout{\prop^u}{M^{\tilde B_2}_{V_2}} \Par 
          \B{k}{\tilde w}{P_2} 
          : V_1 \subst{\tilde W_1}{\tilde y} \vrelate V_2,\ \wtd W_2 \vrelate \wtd B_2 \} 
          \\
          N_3 &= \{ \Rb{\tilde v \setminus u} \Par 
          \appl{M^{\tilde B_2}_{V_2}}{\wtd u} \Par \B{k}{\tilde w}{Q}
          : V_1 \subst{\tilde W_1}{\tilde y} \vrelate V_2,\ \wtd W_2 \vrelate \wtd B_2 \} \\ 
          N_4 &= \{ \Rb{\tilde v \setminus u} \Par {\bbout{u_{\indT{S}}}{V_2}}{} 
          \propout{k} {\wtd B_2} \binp{\prop^u}{b}
      (\appl{b}{\wtd u}) \Par \B{k}{\tilde w}{P_2} 
      : V_1 \subst{\tilde W_1}{\tilde y} \vrelate V_2,\ \wtd W_2 \vrelate \wtd B_2 \} 
        \end{align*}
    where 
        \begin{align*}
          M^{\tilde B_2}_{V_2} = \abs{\wtd z}
                {\bbout{z_{\indT{S}}}{V_2}}{}  \apropout{k}{\wtd B_2} 
                \Par \binp{\prop^u}{b}(\appl{b}{\wtd z})
        \end{align*}
        \noindent 
        As in the previous sub-case, we can see that in both cases of $V_1$, a
        variable or a pure abstraction, we have that for $R_1 \in N_1$ there is 
        $R_2 \in N_2$ such that $R_1 \by{\tau} R_2$, 
       for appropriate choice 
        of $\wtd B$, $\wtd B_2$, and $k$.
        Similarly, by the communication 
          on shared name $\prop^u$ for $R_2 \in N_2$ 
          there is $R_3 \in N_3$ such that $R_2 \by{\tau} R_3$.
        Finally, for $R_3
        \in N_3$ there is $R_4 \in N_4$ such that $R_3 \by{\tau} R_4$ by the
        application. This concludes the output case.

        \item 
        Case $P_1 = Q_1 \Par Q_2$. 
        Let $\wtd y = \fv{Q_1}$, $\wtd w = \fv{Q_2}$, $\wtd W_1$, and $\wtd W_2$ 
        such that 
        $\subst{\wtd W}{\wtd x} = 
        \subst{\tilde W_1}{\tilde y} \cdot \subst{\tilde W_2}{\tilde w}$. 
        Then,  by definition of $\Cb{-}{-}{-}$ (\tabref{mst:t:tablecd}),  we have that
        $\Cb{\tilde W}{\tilde x}{Q_1 \Par Q_2} = N_1 \cup N_2 \cup N_3$ where: 
        \begin{align*}
          N_1 &= 
          \{ 
            \apropout{k}{\wtd B} \Par \B{k}{\tilde x}{Q_1 \Par Q_2}
          : \wtd W \vrelate \wtd B \}
          \\
        N_2 &= \{ \propout{k}{\wtd B_1}
        \apropout{k+l}{\wtd B_2} \Par \B{k}{\tilde y}{Q_1} \Par \B{k+l}{\tilde w}{Q_2} 
        : \wtd W_1 \vrelate \wtd B_1,\ \wtd W_2 \vrelate \wtd B_2 \} \\
        N_3 &= \{ R_1 \Par R_2 : R_1 \in \Cb{\tilde W_1}{\tilde y}{Q_1},\ 
        R_2 \in \Cb{\tilde W_2}{\tilde w}{Q_2} \}
        \end{align*}
        \noindent with $l = \plen{Q_1}$. 
        
        We show that the thesis holds for processes in each of these three sets:
        \begin{enumerate}
            \item 
        
        Clearly, by picking 
        appropriate $\wtd B$, $\wtd B_1$, and $\wtd B_2$, 
        for any $R^1_1 \in N_1$ there is $R^1_2$ such that 
        $R^1_1 \by{\tau} R^1_2$ and $R^1_2 \in N_2$.
        
        \item 
        Now, we consider set $N_3$. 
        Let us pick $R^3 = R_1 \Par R_2 \in N_3$, for some  
        $R_1 \in \Cb{\tilde W_1}{\tilde y}{Q_1}$ 
        and $R_2 \in \Cb{\tilde W_2}{\tilde w}{Q_2}$. 
        By the definition of $\Cb{-}{-}{-}$ (\tabref{mst:t:tablecd}) we know
        all propagator names are restricted element-wise in $R^3$, and so there
        is no communication between $R_1$ and $R_2$ on propagator prefixes. 
        This ensures that any $\tau$-actions emanating from $R^3$ arise from $R_1$ or $R_2$ separately, not from their interaction. 
        The thesis then follows by IH, for we know that if $R_1 \by{\tau} R'_1$ then $R'_1 \in \Cb{\tilde
        W_1}{\tilde y}{Q_1}$; similarly,  
         if $R_2 \by{\tau} R'_2$ then $R'_2 \in \Cb{\tilde W_2}{\tilde w}{Q_2}$.
        Thus, by the definition $R'_1 \Par R'_2 \in N_3$. 
    
        \item  Finally, we  show that for any $R_2 \in N_2$ if $R_2 \by{\tau} R'_2$
        then $R'_2 \in \Cb{\tilde W}{\tilde x}{Q_1 \Par Q_2}$. We know that $R_2 =
        \propout{k}{\wtd B_1} \apropout{k+l}{\wtd B_2} \Par \B{k}{\tilde y}{Q_1}
        \Par \B{k+l}{\tilde w}{Q_2}$, where $\wtd B_i$ are such that $\wtd W_i
        \vrelate \wtd B_i$ for $i \in \{1,2 \}$. Clearly, by a synchronization on
        $c_k$, we have 
        $$R_2 \by{\tau} \apropout{k+l}{\wtd B_2} \Par B_{Q_{1}} \Par \B{k+l}{\tilde
        w}{Q_2} = R'_2$$ where $B_{Q_{1}}$ stands for the derivative of
        $\B{k}{\tilde y}{Q_1}$ after the synchronization (and substitution of $\wtd
        B_1$).
        
        To show that $R'_2$ is already in $\Cb{\tilde W}{\tilde x}{Q_1 \Par Q_2}$,
        we consider an  $R^3 \in N_3$ such that  
        \begin{align*}
          R^3 = \apropout{k}{\wtd B_1}
          \Par  \B{k}{\tilde y}{Q_1} \Par 
          \apropout{k+l}{\wtd B_2} \Par \B{k+l}{\tilde w}{Q_2}
        \end{align*}
       Note that there is a $\tau$-transition on $\prop_k$ such that 
        $R^3 \by{\tau} R'_2$. 
        Because processes in $N_3$ satisfy the thesis (cf. the previous
        sub-sub-case), we have that 
    $R'_2 \in N_3$. 
    Therefore, $R'_2 \in \Cb{\tilde W}{\tilde x}{Q_1 \Par Q_2}$, as desired.
        This concludes parallel composition case.  
        \end{enumerate}
          \end{itemize}
      This concludes the proof of \lemref{l:c-prop-closed}. 
    \end{proof}

\subsection{Proof of \lemref{lemma:mstbs1}}
\label{mst:app:mstbs1}

For the proof we will use the following syntactic sugar.
\begin{definition}[Function  $\Cbs{-}{-}{-}$ ]
  \label{d:cbs-ho}
  Let $P$ be an \HO process, $\rho$ be a value substitution, and $\sigma$ be an indexed 
  name substitution. We define $\Cbs{\rho}{\sigma}{P}$ as follows: 
  \begin{align*}
    \Cbs{\rho}{\sigma}{P_1}  &= \Cb{\tilde W \sigma}{\tilde x}{P_1\sigma} 
    \qquad  \quad \text{with } \rho = \subst{\tilde W}{\tilde x}
  \end{align*}
\end{definition}

\mstbs* 

\begin{proof} 
    By transition induction. Let $\rho_1 = \subst{\tilde W}{\tilde x}$. 
    By inversion of $P_1 \rho_1 \relS Q_1$
    we know there is $\sigma_1 \in \indices{\fn{P\rho_1}}$
    such that $Q_1 \in \Cbs{\rho_1}{\sigma_1}{P_1}$.
    Then,  we need the following assertion on the index substitution. 
     If 
    $P_1\rho_1 \by{\ell} P_2 \rho_2$  
    and $\subj{\ell}=n$ such that $\neg\tr(n)$ then 
   there exists 
    $Q_2$ such that $Q_1 \By{\iname \ell} Q_2$ 
    with $\subj{\iname \ell} = n_i$ and 
    $Q_2 \in \Cbs{\rho_2}{\sigma_2}{P_2}$ 
    such that $\sigma_2 \in \indices{P_2\rho_2}$ and $\subsqn{n_i} \in \sigma_2$. 
     
    First, we consider three base cases: Rules~\texttt{Snd}, \texttt{Rv}, and
    \texttt{App}. Then, we distinguish five inductive cases and analyze three cases
    (as cases $\texttt{Par}_R$ and $\texttt{Par}_L$, and \texttt{New} and
    \texttt{Res} are similar).  
    Thus, in total we consider six cases: 
   
     \begin{enumerate}
       \item Case $\rulename{Snd}{}$. 
       Then $P_1 = \bout{n}{V_1}P_2$. We first consider the case when $P_1$ is not
       a trigger collection, and then briefly discuss the case when it is a trigger
       collection. 
       
       We distinguish two sub-cases: \rom{1} $\neg \tr(n)$ 
       and \rom{2} $\tr(n)$. 
       In both sub-cases, we  distinguish two kinds of an object value $V_1$: 
       (a)~ $V_1 \equiv x$, such that  
       $\subst{V_x}{x}\in \subst{\tilde W}{\tilde  x}$ 
       and (b)~ $V_1 = \abs{y:C}P'$, that is $V_1$ is a pure abstraction. 
       Next, we consider two sub-cases: 
       \begin{enumerate}[i)]
         \item Sub-case $\neg \tr(n)$. 
         Let $\wtd W_1$, $\wtd W_2$, $\wtd y$, and $\wtd w$ such that 
         $$P_1\subst{\tilde W}{\tilde x} = 
         \bout{n}{V_1\subst{\tilde W_1}{\tilde y}}P_2 
         \subst{\tilde W_2}{\tilde w}$$
        
         We have the following transition: 
         \begin{align*} 
           \AxiomC{} 
           \LeftLabel{\scriptsize $\rulename{Snd}{}$}
           \UnaryInfC{$P_1\subst{\tilde W}{\tilde x} 
           \by{\bactout{n}{V_1 \subst{\tilde W_1}{\tilde y}}} 
           P_2 \subst{\tilde W_2}{\tilde w}
           $}
           \DisplayProof
         \end{align*}
     
         Let    
         $\sigma_1 \in \indices{\wtd u}$ where 
         $\wtd u = \fn{P_1\subst{\tilde W}{\tilde x}}$ such that $\subst{n_i}{n} \in \sigma_1$.  
          Also, let $\sigma_2 = \sigma_1 \cdot \subsqn{n_i}$. 
          Further, let 
          $\wtd v = \rfn{P_1\subst{\tilde W}{\tilde x}}$, 
       $\wtd \prop_r = \cup_{r \in \tilde v} \prop^r$, 
       $\wtd \prop_k = (\prop_k, \ldots, \prop_{k+\plen{P_1}-1})$, and $\wtd
       \prop_{k+1} = (\prop_{k+1}, \ldots, \prop_{k+\plen{P_1}-1})$.
          When $P_1$ is not a
         trigger, by definition of $\relS$ (\tabref{mst:t:tablecd}), for both
         sub-cases, 
         we have 
         $Q_1 \in N_1 \cup N_2$ where: 
         \begin{align*}
           N_1 &= \big\{   
             \news{\wtd \prop_r} \news{\wtd \prop_k} \Rb{\tilde v} \Par 
           \apropout{k}{\wtd B}
           \Par \B{k}{\tilde x}{P_1\sigma_1} : \wtd W \vrelate \wtd B \big\} 
             \\
              N_2 &= \big\{
               \news{\wtd \prop_r}
               \news{\wtd \prop_{k+1}}
                \Rb{\tilde v} \Par 
              \bout{n_i}{V_2} \apropout{k+1}{\wtd B_2} \Par 
              \B{k+1}{\tilde w}{P_2\sigma_2} 
              :  
              V_1\sigma\subst{\wtd W_1}{\wtd y} \vrelate V_2,\ 
              \wtd W_2 \vrelate \wtd B_2 \big\}  
           \end{align*}
     
       \noindent 
        If $Q_1 \in N_1$, 
       then there is some $Q_2 \in N_2$ such that $Q_1$ reduces to
       $Q_2$ through communication on non-essential prefixes. By
       \lemref{l:c-prop-closed} it is then sufficient to consider the situation
       when $Q_2 \in N_2$. 
      Let $\wtd v_1=\rfn{P_2\subst{\tilde W_2}{\tilde z}}$, 
      $\wtd v_2=\rfn{V_2\subst{\tilde W_1}{\tilde y}}$, 
      $\wtd \prop_{r_1} = \cup_{r \in \tilde v_1} \prop^r$, 
      and 
      $\wtd \prop_{r_2} = \cup_{r \in \tilde v_2} \prop^r$.
       By \defref{mst:def:rec-providers} and the assumption that 
       $P_1\subst{\tilde W}{\tilde x}$ is well-typed, 
        we have  
       $\wtd \prop_r = \wtd \prop_{r_1} \cdot \wtd \prop_{r_2}$ 
       and $\wtd \prop_{r_1} \cap \wtd \prop_{r_2} = \es$. 
       In that case we have the following transition:
       \begin{align}
       \label{pt:snd1}
       \AxiomC{} 
       \LeftLabel{\scriptsize $\rulename{Snd}{}$}
         \UnaryInfC{$\bout{n_i}{V_2}\apropout{k+1}{\wtd B_2} 
         \by{\bactout{n_i}{V_2}} 
         \apropout{k+1}{\wtd B_2}$} 
         \AxiomC{}
         \LeftLabel{\scriptsize $\rulename{Par}{_L}$}
         \BinaryInfC{$\bout{n_i}{V_2}\apropout{k+1}{\wtd B_2} \Par 
         \B{k+1}{\tilde w}{P_2\sigma_2}
         \by{\bactout{n_i}{V_2}} R_1$} 
         \AxiomC{}
         \LeftLabel{\scriptsize  $\rulename{Par}{_R}$}
           \BinaryInfC{
             $\Rb{\tilde v} \Par 
             \bout{n_i}{V_2} \apropout{k+1}{\wtd B_2} \Par 
             \B{k+1}{\tilde w}{P_2\sigma_2} \by{\bactout{n_i}{V_2}} R_1$
         }
         \AxiomC{$
         \wtd \prop_r \cdot 
         \wtd \prop_{k+1} \cap \fn{\bactout{n_i}{V_2}} = \emptyset$} 
         \LeftLabel{\scriptsize $\rulename{New}{}$}
         \BinaryInfC{
           $
           \news{\wtd \prop_r}
           \news{\wtd \prop_{k+1}}
           (\Rb{\tilde v} \Par 
         \bout{n_i}{V_2} \apropout{k+1}{\wtd B_2} \Par 
         \B{k+1}{\tilde z}{P_2\sigma_2}) \by{\news{\wtd \prop_{r_2}}\bactout{n_i}{V_2}} Q'_2$
         }
         \DisplayProof
       \end{align}
     
     \noindent where 
     $Q_2' = 
     \news{\wtd \prop_{r_1}}
     \news{\wtd \prop_{k+1}}\Rb{\tilde v} \Par 
     \apropout{k+1}{\wtd B_2} 
     \Par \B{k+1}{\tilde w}{P_2\sigma_2}$ with 
     $V_1 \subst{\tilde W_1}{\tilde y} \sigma_2 \vrelate V_2$ and 
     $\wtd W_2 \sigma_1 \vrelate \wtd B_2$. 
     Then, we shall show the following:  
     \begin{align}
       \label{eq:snd-goal1}
       ({P_2 \parallel \htrigger{t}{V_1}})
         \subst{\tilde W}{\tilde x}
                ~\relS~
        \news{\wtd \prop_{r_2}} (Q'_2 \parallel \htrigger{\tni}{V_2}) 
      \end{align}
   
   By assumption that $P_1\subst{\tilde W}{\tilde x}$ is well-typed, 
   we know $\wtd v = \wtd v_1 \cdot \wtd v_2$ and $\wtd v_1 \cap \wtd v_2 = \emptyset$. 
   Thus, 
   by \defref{mst:def:rec-providers} we know 
   $\Rb{\tilde v} =  \Rb{\tilde v_1} \Par \Rb{\tilde v_2}$, i.e., 
   \begin{align*}
     &\news{\wtd \prop_{r_2}} (Q'_2 \parallel \htrigger{\tni}{V_2})  
     \\
     & \qquad \equiv 
     \news{\wtd \prop_{r_1}} \news{\wtd \prop_{r_2}}     
     {\newsp{\wtd\prop_{k+1}}
     {\Rb{\tilde v_1} 
     \Par \apropout{k+1} {\wtd B} \Par \B{k+1}{\tilde w}{P_2\sigma_1} 
     \Par \Rb{\tilde v_2} \parallel 
     \htrigger{\tni}{V_2}}} 
     \\ 
     & \qquad \equiv 
     \news{\wtd \prop_{r}}   
     {\newsp{\wtd\prop_{k+1}}
     {\Rb{\tilde v_1} 
     \Par \apropout{k+1} {\wtd B} \Par \B{k+1}{\tilde w}{P_2\sigma_1} 
     \Par \Rb{\tilde v_2} \parallel 
     \htrigger{\tni}{V_2}}} 
     \\
     & \qquad =  
     \news{\wtd \prop_{r}}   
     \news{\wtd\prop_{k+1}} R 
   \end{align*}
      
      
     %
     %
     
     From the definition of $\relS$ (\defref{mst:d:relation-s}) we have that $n \not\in \fn{\wtd W_2}$ and thus
     $\wtd W_2 \sigma_1 = \wtd W_2 \sigma_2$. 
     Thus, by the definition of $\wtd B_2$ ($\wtd W_2 \sigma_2 \vrelate \wtd B_2$)
     we can see that  
     \begin{align}
       \label{eq:ml-snd-p1}
       \Rb{\tilde v_1} 
       \Par \apropout{k+1} {\wtd B} \Par \B{k+1}{\tilde w}{P_2\sigma_1} 
       \Par \Rb{\tilde v_2} \in \Cb{\tilde W_2\sigma_2}{\tilde w}{P_2\sigma_2}
     \end{align}
     Now, we can see that assertion $\subsqn{n_i} \in \sigma_2$ holds, as by
     definition $\sigma_2 = \sigma_1 \cdot \subsqn{n_i}$.
     Let $\sigma_2' = \sigma_2 \cdot \subst{\tni}{t}$. 
     Then, we have  
     \begin{align*}
       \Cb{\tilde W_1 \sigma'_2}{\tilde y}{\htrigger{\tni}{V_1\sigma_2}} =  
       \{P': (\htrigger{\tni}{V_1})\subst{\tilde W_1}{\tilde y} \sigma_2 
       \processrelate P' \}
     \end{align*}
     
     As 
     $(\htrigger{\tni}{V_1}) \subst{\tilde W_1}{\tilde y} \sigma_2
     =\htrigger{\tni}{V_1 \subst{\tilde W_1}{\tilde y} \sigma_2}$
     and $V_1 \subst{\tilde W_1}{\tilde y} \sigma_2 \vrelate V_2$ 
     we have 
     \begin{align}
       \label{eq:ml-snd-v1}
       (\htrigger{\tni}{V_1}) \subst{\tilde W_1}{\tilde y} \sigma_2 \processrelate 
       (\htrigger{\tni}{V_2})  
     \end{align}
     that is 
     $$
     \Rb{\tilde v_2} \parallel 
     \htrigger{\tni}{V_2} \in \Cb{\tilde W_1\sigma_2'}{\tilde y}{\htrigger{\tni}{V_1\sigma_2}}
     $$

     Finally, by \tabref{mst:t:tablecd}  we have 
     \begin{align*}
      \Cb{\tilde W\sigma_2'}{\tilde x}{P_2\sigma_2 \parallel \htrigger{\tni}{V_1\sigma_2}} 
      = \big\{ R_1\parallel R_2 :  
      R_1 \in \Cb{\tilde W_2\sigma_2'}{\tilde w}{P_2\sigma_2},\ 
      R_2 \in \Cb{\tilde W_1\sigma_2'}{\tilde y}{\htrigger{\tni}{V_1\sigma_2}} \big\}	
     \end{align*}
     So, by this, \eqref{eq:ml-snd-p1}, and \eqref{eq:ml-snd-v1}
     we have:  
      \begin{align*}
        R 
        \in \Cb{\tilde W \sigma_2'}{\tilde x}{P_2\sigma_2' \parallel \htrigger{\tni}{V_1\sigma_2}}
      \end{align*}
      
      Further, by the assumption $\dual n \not\in \wtd u$ we know 
      $\subst{\dual n_i}{\dual n} \not\in \sigma_1$. 
      Thus, by $\sigma_2' = \sigma_1 \cdot \subsqn{n_i} \cdot \subst{t_j}{t}$ and \defref{mst:d:indexedsubstitution}, 
      we have  
     \begin{align*}
       \sigma_2' \in 
       \indices{\fn{(P_2 \parallel  \htrigger{t}{V_1}) \subst{\tilde W}{\tilde x}} }
     \end{align*} 
     
      Hence, the goal \eqref{eq:snd-goal1} follows. This concludes sub-case~$\neg\tr(n)$.
         \item Sub-case $\tr(n)$. 
       Let $\wtd W_1$, $\wtd W_2$, $\wtd y$, and $\wtd w$ be such that 
         $$P_1\subst{\tilde W}{\tilde x} = 
         \bout{n}{V_1\subst{\tilde W_1}{\tilde y}}P_2 \subst{\tilde W_2}{\tilde w}$$
      
       The transition inference tree is as follows: 
   
       \begin{align*} 
         \AxiomC{} 
         \LeftLabel{\scriptsize $\rulename{Snd}{}$}
         \UnaryInfC{$P_1\subst{\tilde W}{\tilde x} 
         \by{\bactout{n}{V_1 \subst{\tilde W_1}{\tilde y}}} 
         P_2 \subst{\tilde W_2}{\tilde w}
         $}
         \DisplayProof
       \end{align*}
   
       Let $\sigma_1 \in \indices{\wtd u}$ 
       where $\wtd u = \fn{P_1\subst{\tilde W}{\tilde x}}$. 
       Also, let $\wtd v = \rfn{P_1}$, 
       $\wtd \prop_r = \cup_{r \in \tilde v} \prop^r$, 
       $\wtd \prop_k = (\prop_k, \ldots, \prop_{k+\plen{P_1}-1})$,  
       $\wtd \prop_{k+1} = (\prop_{k+1}, \ldots, \prop_{k+\plen{P_1}-1})$, 
       and let $S$ be such that $n:S$. 
       Then, by the definition of $\relS$ (\defref{mst:d:relation-s})  
       we have $Q_1 \in N_1 \cup N_2 \cup N_3 \cup N_4$ where 
   \begin{align*} 
       N_1 &= \{
         \news{\wtd \prop_r}    
       \news{\wtd \prop_k} \Rb{\tilde v} \Par \apropout{k}{\wtd B}
       \Par \B{k}{\tilde x}{P_1\sigma_1}
       : \wtd W \sigma_1 \vrelate \wtd B \} 
       \\ 
       N_2 &= \{ 
         \news{\wtd \prop_r}    
       \news{\wtd \prop_{k+1}} \Rb{\tilde v} \Par
       \abbout{\prop^n}{M^{\tilde B_2}_{V_2}} \Par \B{k+1}{\tilde w}{P_2\sigma_1} 
       : \wtd W_2 \sigma_1 \vrelate \wtd B_2  \}
       \\
       N_3 &=  \{ 
         \news{\wtd \prop_r}  
         \news{\wtd \prop_{k+1}} \Rb{\tilde v} \Par 
       \appl{M^{\tilde B_2}_{V_2}}{\wtd n} \Par 
       \B{k+1}{\tilde w}{P_2\sigma_1} 
       : V_1\subst{\tilde W_1}{\tilde y} \sigma_1 \vrelate V_2,\ 
       \wtd W_2 \sigma_1 \vrelate \wtd B_2 \} 
       \\
       N_4 &= \big\{ 
         \news{\wtd \prop_r}    
       \news{\wtd \prop_{k+1}} \Rb{\tilde v \setminus n} \Par 
       {\bbout{n_{\indT{S}}}{V_2}}{} 
       (\apropout{k+1}{\wtd B_2} \Par 
       \recprovx{n}{x}{\wtd z}
       \Par \B{k+1}{\tilde w}{P_2\sigma_1} \\ 
     & \qquad  \qquad\qquad\qquad\qquad\qquad\qquad\qquad\qquad\qquad
     : V_1\subst{\tilde W_1}{\tilde y} \sigma_1 \vrelate V_2,\ 
     \wtd W_2 \sigma_1 \vrelate \wtd B_2 \big\} 
   \end{align*}
       \noindent with 
       \begin{align*}
         M^{\tilde B_2}_{V_2} = \abs{\wtd z}
           {\bbout{z_{\indT{S}}}{V_2}}{} \big(\apropout{k+1}{\wtd B_2} 
           \Par 
           \recprovx{n}{x}{\wtd z}
           \big) 
     \end{align*}
   If $Q_1 \in N_1 \cup N_2 \cup N_3$, then $Q_1$ reduces to some $Q_4
   \in N_4$ through communication on non-essential prefixes. By
   \lemref{l:c-prop-closed} it suffices to consider the case when $Q_4 \in N_4$. 
    Let $\wtd v_1=\rfn{P_2}$, 
      $\wtd v_2=\rfn{V_2}$, 
      $\wtd \prop_{r_1} = \cup_{r \in \tilde v_1} \prop^r$, 
      and 
      $\wtd \prop_{r_2} = \cup_{r \in \tilde v_2} \prop^r$.
     By \defref{mst:def:rec-providers} and the assumption that $P_1\subst{\tilde W}{\tilde x}$ is well-typed 
      we have  
     $\wtd \prop_r = \wtd \prop_{r_1} \cdot \wtd \prop_{r_2}$ 
     and $\wtd \prop_{r_1} \cap \wtd \prop_{r_2} = \es$. 
   We
   then infer the following transition:    
   \begin{align*}
     Q_4 \by{\bactout{\news{\wtd \prop_{r_2}}n_{\indT{S}}}{V_2}} Q'_4
   \end{align*}
   where $Q'_4 =  \news{\wtd \prop_{r_1}}
   \news{\wtd \prop_{k+1}}\Rb{\tilde v \setminus n} \Par \apropout{k+1} {\wtd B_2}
   \Par 
   \binp{\prop^n}{b} (\appl{b}{\wtd n}) \Par \B{k+1}{\tilde w}{P_2\sigma_1}$. 
   Then, we shall show the following 
   \begin{align}
       ({P_2 \parallel \htrigger{t}{V_1}})
           \subst{\tilde W}{\tilde x}
                        \ \relS\
       {{
         \news{\wtd \prop_{r_2}} \big( 
           Q'_4
       \parallel \htrigger{\tni}{V_2}}} \big) 
       \label{eq:snd-rec-goal1}
    \end{align}
   
   By assumption that $P_1\subst{\tilde W}{\tilde x}$ is well-typed, 
   we know $\wtd v = \wtd v_1 \cdot \wtd v_2$ and $\wtd v_1 \cap \wtd v_2 = \emptyset$. 
   Hence, 
   by \defref{mst:def:rec-providers} we know 
   $\Rb{\tilde v} =  \Rb{\tilde v_1} \Par \Rb{\tilde v_2}$, that is 
   \begin{align*}
     Q'_4 &\equiv 
     \news{\wtd \prop_{r_1}}
   \news{\wtd \prop_{k+1}}\Rb{\tilde v \setminus n} \Par
   \recprovx{n}{x}{\wtd n}
   \Par \apropout{k+1} {\wtd B_2} 
    \Par \B{k+1}{\tilde w}{P_2\sigma_1} \\ 
   &= 
   \news{\wtd \prop_{r_1}}
   \news{\wtd \prop_{k+1}}
   \Rb{\tilde v}  
   \Par \apropout{k+1} {\wtd B_2} 
    \Par \B{k+1}{\tilde w}{P_2\sigma_1}
    \\ 
    &= 
    \news{\wtd \prop_{r_1}}
   \news{\wtd \prop_{k+1}}
   \Rb{\tilde v_1} \Par \Rb{\tilde v_2}
   \Par \apropout{k+1} {\wtd B_2} 
    \Par \B{k+1}{\tilde w}{P_2\sigma_1}
   \end{align*}
   
   That is, we have 
   \begin{align*}
     &\news{\wtd \prop_{r_2}} \big(  Q'_4 \parallel \htrigger{\tni}{V_2} \big)  
     \\ 
     & \qquad \equiv \news{\wtd \prop_{r_2}} \big(   \news{\wtd \prop_{r_1}}
     \news{\wtd \prop_{k+1}}
     \Rb{\tilde v_1} \Par \Rb{\tilde v_2}
     \Par \apropout{k+1} {\wtd B_2} 
      \Par \B{k+1}{\tilde w}{P_2\sigma_1} \parallel \htrigger{\tni}{V_2} \big)  
      \\ 
     & \qquad \equiv 
     \news{\wtd \prop_{r_1}} \news{\wtd \prop_{r_2}}     
     {\newsp{\wtd\prop_{k+1}}
     {\Rb{\tilde v_1} 
     \Par \apropout{k+1} {\wtd B} \Par \B{k+1}{\tilde w}{P_2\sigma_1} 
     \Par \Rb{\tilde v_2} \parallel 
     \htrigger{\tni}{V_2}}} 
     \\ 
     & \qquad \equiv 
     \news{\wtd \prop_{r}}   
     {\newsp{\wtd\prop_{k+1}}
     {\Rb{\tilde v_1} 
     \Par \apropout{k+1} {\wtd B} \Par \B{k+1}{\tilde w}{P_2\sigma_1} 
     \Par \Rb{\tilde v_2} \parallel 
     \htrigger{\tni}{V_2}}} =  
     \news{\wtd \prop_{r}}   
     \news{\wtd\prop_{k+1}} R 
   \end{align*}
   
   
   Now, by \defref{mst:def:trigger} we may notice that $\wtd v_2 = \rfn{\htrigger{t}{V_2}}$. 
   Let $\sigma_1' = \sigma_1 \cdot \subst{\tni}{t}$.
   So, by  \tabref{mst:t:tablecd} we have 
   $$\Rb{\tilde v_1} \Par \apropout{k+1} {\wtd B} \Par \B{k+1}{\tilde w}{P_2\sigma_1} 
   \in \Cb{\tilde W_2\sigma_1}{\tilde w}{P_2\sigma_1}$$
   
   \noindent and 
   \begin{align} 
   \label{p:snd-trigger-v-c}
   \Rb{\tilde v_2} \parallel \htrigger{\tni}{V_2} \in 
   \Cb{\tilde W_1\sigma_1}{\tilde y}{\htrigger{\tni}{V_1\sigma'_1}}
   \end{align} 
   Thus, by the definition of the parallel composition 
   case of $\Cb{-}{-}{-}$ (\tabref{mst:t:tablecd}) we have 
   \begin{align*}
   	  R \in 
     \Cb{\tilde W\sigma'_1}{\tilde x}{(P_2 \parallel \htrigger{t}{V_1})\sigma'_1}
   \end{align*}
   Now, we can notice that $\wtd \prop_r = \fcr{R}$ and $\wtd \prop_{k+1} = \fpn{R}$. Further, we have 
   \begin{align*}
   	 \sigma_1' \in \indices{\fn{(P_2 \parallel  \htrigger{t}{V_1}) \subst{\tilde W}{\tilde x}} }
   \end{align*}

   Thus,
   \eqref{eq:snd-rec-goal1}
   follows.  
    This concludes sub-case~$\tr(n)$ of case~$\rulename{Snd}{}$. 
   \end{enumerate}

   Finally, we briefly analyze the case when $P_1$ is a trigger collection. 
   Let $\sigma_1$ be defined as above. 
   Let $H'_1$ be such that $P_1  \subst{\tilde W}{\tilde x} \sigma_1 \processrelate H'_1$.
   Then, by \tabref{mst:t:tablecd},   $Q_1$ 
   is as follows: 
   \begin{align*}
     Q_1 = \Rb{\tilde v} \parallel H_1'
   \end{align*}
   \noindent where $\wtd v = \rfn{P_1\subst{\tilde W}{\tilde x}}$.
   Further, by  \defref{mst:def:valuesprelation} we know 
   $H'_1 = \bout{n_i}{V_2}H'_2$
   such that $V_1 \subst{\tilde W_1}{\tilde y} \sigma_2 \vrelate V_2$ and $P_2\subst{\tilde W_2}{\tilde w}\sigma_2
   \processrelate H'_2$ with $\sigma_2 = \sigma_1 \cdot \subsqn{n_i}$.
   We can see that 
   \begin{align*}
     H'_1 \by{\bactout{n_i}{V_2}} H'_2
   \end{align*}
   
   So, we should show  
   \begin{align}
   \label{pt:goal-snd-trigger}
     (P_2 \parallel \htrigger{t}{V_1}) \subst{\tilde W}{\tilde x} ~\relS~ 
     (\Rb{\tilde v} \parallel  H'_2 \Par \htrigger{\tni}{V_2})
   \end{align}
   
   Similarly to previous sub-cases, we have  
   \begin{align*}
   	\Rb{\tilde v} \parallel  H'_2 \Par \htrigger{\tni}{V_2} 
   	\equiv 
   	\Rb{\tilde v_1} \parallel H'_2 \parallel 
   	\Rb{\tilde v_2} \parallel \htrigger{\tni}{V_2}
   \end{align*}

   By 
   $P_2\subst{\tilde W_2}{\tilde w}\sigma_2
   \processrelate H'_2$
   and $\wtd v_1 =\rfn{P_2\subst{\tilde W_2}{\tilde w}}$
   we have 
   \begin{align}
   \label{p:snd-trigger-h-c}
   	\Rb{\tilde v_1} \parallel 
   	H'_2 \in \Cb{\tilde W_2\sigma_2}{\tilde
   w}{P_2\sigma_2}
   \end{align}
   
   Let $\sigma'_2 = \sigma_2 \cdot \subst{\tni}{t}$. 
   By  \eqref{p:snd-trigger-h-c}, \eqref{p:snd-trigger-v-c}, 
   and definition of $\Cb{-}{-}{-}$ (\tabref{mst:t:tablecd}) for the parallel composition case we have 
   \begin{align*}
   	 	\Rb{\tilde v_1} \parallel H'_2 \parallel 
   	\Rb{\tilde v_2} \parallel \htrigger{\tni}{V_2}
 		\in \Cb{\tilde W \sigma_2'}{\tilde x}{P_2 \parallel \htrigger{t}{V_1}}
   \end{align*}
   
   Thus, we reach goal~\eqref{pt:goal-snd-trigger}.
   This concludes case~$\rulename{Snd}{}$. 
    
   \item Case $\rulename{Rv}{}$. 
    In this case we know $P_1 = \binp{n}{y}P_2$. 
    We first consider cases when $P_1$ is not a trigger collection, and then
    briefly discuss the case when it is a trigger collection. As in the previous
    case, we distinguish two sub-cases: \rom{1} $\neg \tr(n)$ and \rom{2} $\tr(n)$: 
       \begin{enumerate}[i)]
         \item Sub-case $\neg \tr(n)$. We have the following transition: 
   \begin{align*}
       \AxiomC{} 
       \LeftLabel{\scriptsize $\rulename{Rv}{}$}
     \UnaryInfC{$(\binp{n_i}{y}P_2) \subst{\tilde W}{\tilde x} \by{\abinp{n}{V_1}} 
     P_2 \subst{\tilde W}{\tilde x}\subst{V_1}{y}$}
       \DisplayProof  	
   \end{align*}
   
   Here we assume $y \in \fv{P_2}$. 
   Let   
   $\sigma_1 \in \indices{\wtd u}$ with $\wtd u = \fn{P_1\subst{\tilde W}{\tilde x}}$
   such that $\initname{n}{i} \in \sigma_1$.  
   Also, let 
   $\sigma_2 = \sigma_1 \cdot \subsqn{n_i}$. 
   Further, let 
   $\wtd v = \rfn{P_1\subst{\tilde W}{\tilde x}}$, 
$\wtd \prop_r = \cup_{r \in \tilde v} \prop^r$, 
$\wtd \prop_k = (\prop_k, \ldots, \prop_{k+\plen{P_1}-1})$, and $\wtd
\prop_{k+1} = (\prop_{k+1}, \ldots, \prop_{k+\plen{P_1}-1})$.   By the definition of $\relS$ (\tabref{mst:t:tablecd}) we have 
   $Q_1 \in N_1 \cup N_2$ where 
   \begin{align*}
     N_1 &= \big\{   
       \news{\wtd \prop_r} \news{\wtd \prop_k} \Rb{\tilde v} \Par 
     \apropout{k}{\wtd B}
     \Par 
     \B{k}{\tilde x}{P_1\sigma_1} : 
     \wtd W \sigma_1 \vrelate \wtd B \big\} 
     \\
        N_2 &= \big\{ 
       \news{\wtd \prop_r}
      \news{\wtd \prop_{k+1}}
      \Rb{\tilde v} \Par 
      \binp{n_i}{y} 
        \apropout{k+1}{\wtd By}  
   \Par   \B{k+1}{\tilde xy}{P_2 \sigma_2}
        : 
        \wtd W \sigma_1 \vrelate \wtd B \big\}
   \end{align*}
   %
   Similar to the other cases, if $Q_1 \in N_1$, then $Q_1$ reduces to
   some $Q'_1 \in N_2$ through communication on non-essential prefixes.
   
   We may notice that $\sigma_2 \in \indices{\wtd u}$
   as by the assumption 
   $\dual n \not\in \wtd u$ we know    
   $\subst{\dual n_i}{\dual n} \not\in \sigma_1$. 
  Now, we pick $V_2$ such that $V_1\sigma_v\cdot\sigma_2 \vrelate V_2$ 
  where 
  $\sigma_v \in \indices{\fn{V} \setminus \wtd u}$
  such that 
\begin{align}
  \label{eq:rcv-subst-1}
  \sigma_v \cdot \sigma_2 
    \in 
    \indices{ \fn{P_2 \subst{\tilde W V_1}{\tilde x y}}}
\end{align}
    By
   \lemref{l:c-prop-closed} it suffices to consider the case when $Q'_1 \in N_2$,
   under which we have the following transition:
   \begin{align}
   \AxiomC{} 
   \LeftLabel{\scriptsize $\rulename{Rv}{}$}
    \UnaryInfC{$\binp{n_i}{y}\apropout{k+1}{\wtd By} 
    \by{\abinp{n_i}{V_2}}
    \apropout{k+1}{\wtd B V_2}$} 
    \DisplayProof
    \label{pt:inp1}
   \end{align}

    \begin{align}
    \AxiomC{\eqref{pt:inp1}} 
    \AxiomC{$\bn{\abinp{n_i}{V_2}} \cap \fn{\B{k+1}{\tilde xy}{P_2\sigma}} = \emptyset$}
    \LeftLabel{\scriptsize $\rulename{Par}{_L}$}
    \BinaryInfC{$\binp{n_i}{y}\apropout{k+1}{\wtd By} 
            \Par 
            \B{k+1}{\tilde xy}{P_2\sigma_2} \by{\abinp{n_i}{V_2}}
            \apropout{k+1}{\wtd B V_2} 
            \Par 
            \B{k+1}{\tilde xy}{P_2\sigma_2}$}
    \DisplayProof	
    \label{pt:rv1}
    \end{align}
     \begin{align*} 
         \AxiomC{\eqref{pt:rv1}} 
         \AxiomC{$
         \wtd \prop_r \cdot 
         \wtd \prop_{k+1} \cap \fn{\abinp{n_i}{V_2}} = \emptyset$} 
      \LeftLabel{\scriptsize $\rulename{Res}{}$} 
         \BinaryInfC{
           $\news{\wtd \prop_r}
           \news{\wtd \prop_{k+1}}
           \Rb{\tilde v} \Par 
           \binp{n_i}{y} 
           \apropout{k+1}{\wtd By}  
        \Par   \B{k+1}{\tilde xy}{P_2 \sigma_2}
           \by{\abinp{n_i}{V_2}} R$
         }
         \DisplayProof
       \end{align*}
     
       \noindent where 
    $R = 
    \news{\wtd \prop_r}
    \news{\wtd \prop_{k+1}}
    \apropout{k+1}{\wtd BV_2} 
           \Par \B{k+1}{\tilde xy}{P_2 \sigma_2 }$ 
           with $\wtd W \sigma_1 \vrelate
           \wtd B$.
           We can see that assertion $\subsqn{n_i} \in \sigma_2$ holds by the definition. 
           We should show that 
   \begin{align}
       P_2 \subst{\tilde W V_1}{\tilde x y} ~\relS~ R
       \label{eq:goal-rv1}
   \end{align}
    
   We know $n \not\in \fn{\wtd W}$ and $n \not\in \fn{V_1}$. 
   %
   Thus, $\wtd W V_1
   \sigma_v \cdot \sigma_1 = \wtd W V_1 \sigma_v \cdot \sigma_2$. 
   That is, we may
   notice that $\wtd W V_1 \sigma_v \cdot \sigma_2 \vrelate \wtd B V_2$.
   Further, by the definition of $\sigma_v$, we have 
   $P_2\sigma_2 = P_2 \sigma_v \cdot \sigma_2$. 
   Thus, we have   
   \begin{align*}
       R \in 
       \Cb{\tilde WV_1 \sigma_v \cdot \sigma_2}{\tilde xy}
       {P_2 \sigma_v \cdot \sigma_2} 	
   \end{align*}

   
   Finally, by this and \eqref{eq:rcv-subst-1}
   the goal \eqref{eq:goal-rv1} follows. 
   This concludes sub-case~$\neg\tr(n)$. 
   \item Sub-case $\tr(n)$. 
   The transition inference tree is as follows: 
   \begin{align*}
       \AxiomC{} 
       \LeftLabel{\scriptsize $\rulename{Rv}{}$}
     \UnaryInfC{$(\binp{n}{y}P_2) \subst{\tilde W}{\tilde x} \by{\abinp{n}{V_1}} 
     P_2 \subst{\tilde W}{\tilde x}\subst{V_1}{y}$}
       \DisplayProof  	
   \end{align*}
   
   Let $\sigma_1 = \indices{\wtd u}$
   where $\wtd u = \fn{P_1\subst{\tilde W}{\tilde x}}$. 
   Also, let $\wtd v = \rfn{P_1}$, 
   $\wtd \prop_r = \cup_{r \in \tilde v} \prop^r$, 
   $\wtd \prop_k = (\prop_k, \ldots, \prop_{k+\plen{P_1}-1})$,  
   $\wtd \prop_{k+1} = (\prop_{k+1}, \ldots, \prop_{k+\plen{P_1}-1})$, 
   and let $S$ be such that $n:S$. 
   Then, by the definition of $\relS$ (\defref{mst:d:relation-s})  
       we have $Q_1 \in N_1 \cup N_2 \cup N_3 \cup N_4$ where  
   \begin{align*} 
     N_1 &= \{ 
      \news{\wtd \prop_r}
       \news{\wtd \prop_k} 
     \Rb{\tilde v} \Par \apropout{k}{\wtd B}
     \Par \B{k}{\tilde x}{P_1\sigma_1}
     : \wtd W \sigma_1 \vrelate \wtd B\} 
     \\ 
     N_2 &= \{ 
      \news{\wtd \prop_r}
       \news{\wtd \prop_k} 
       \Rb{\tilde v} \Par 
               \abbout{\prop^n}{M^{\tilde B}_{V}} \Par 
               \B{k+1}{\tilde x y}{P_2\sigma_1} 
               : \wtd W \sigma_1 \vrelate \wtd B \} 
               \\
     N_3 &=  \{ 
      \news{\wtd \prop_r}
       \news{\wtd \prop_k}
       \Rb{\tilde v \setminus n} \Par 
               \appl{M^{\tilde B}_V}{\wtd n} \Par 
               \B{k+1}{\tilde x y}{P_2\sigma_1} 
               : \wtd W \sigma_1 \vrelate \wtd B \} 
               \\
     N_4 &= \{
      \news{\wtd \prop_r}
        \news{\wtd \prop_{k+1}}
     {\Rb{\tilde v \setminus n} \Par \binp{n_{\indT{S}}}{y}
     (\apropout{k+1}{\wtd By} \Par 
     \recprovx{n}{x}{\wtd n}
     \Par \B{k+1}{\tilde xy}{P_2\sigma_1}} :  
     \wtd W \sigma_1 \vrelate \wtd B \} 
   \end{align*}
     \noindent with  
     \begin{align*}
         M^{\tilde B}_V = \abs{\wtd z}
             {\binp{z_{\indT{S}}}{y}}{} \big(\apropout{k+1}{\wtd B y} 
             \Par 
             \recprovx{n}{x}{\wtd z}
             \big) 
     \end{align*}
    Similar to the other cases, if $Q_1 \in N_1 \cup N_2 \cup N_3$, then
    there exists some $Q'_1 \in N_4$ such that $Q_1$ reduces to $Q'_1$ through
    communication on non-essential prefixes. By \lemref{l:c-prop-closed} it suffices
    to consider the case when $Q_1 \in N_4$. We infer the following transition:
     \begin{align*}
       Q_1  \by{\abinp{n_{\indT{S}}}{V_2}}         
       \news{\wtd \prop_r}
       \news{\wtd \prop_{k+1}}  R
     \end{align*} 
     \noindent where $R = \Rb{\tilde v \setminus n} \Par \apropout{k+1} {\wtd B V_2}
     \Par \binp{\prop^n}{b} (\appl{b}{\wtd n}) \Par 
     \B{k+1}{\tilde xy}{P_2\sigma_1}$ 
     and $V_1 \sigma \vrelate V_2$ for some $\sigma$. 
     We should show that 
     \begin{align}
      \label{eq:rv-tr-goal}
       P_2 \subst{\tilde W V_1}{\tilde x y}
                \ \relS\
                \news{\wtd \prop_r}
         \news{\wtd \prop_{k+1}}R
      \end{align}
      We may notice that 
      we have $\wtd v =\rfn{P_1} = \rfn{P_2}$ 
      and as $\tr(n)$ we have 
      $n \in \wtd v$. Thus, we have the following structural equivalence 
      \begin{align*}
      R \equiv \Rb{\tilde v} \Par \apropout{k+1} {\wtd B V_2} \Par 
      \B{k+1}{\tilde xy}{P_2\sigma_1}
     \end{align*}
     Further, we have $V_1 \sigma \vrelate V_2$, 
     $\wtd W \sigma_1 \vrelate \wtd B$. Thus by 
      the definition of $\Cb{-}{-}{-}$ (\tabref{mst:t:tablecd}) 
      the goal~\eqref{eq:rv-tr-goal} follows. 
    This concludes sub-case~$\tr(n)$ of case~$\rulename{Rv}{}$. 
   \end{enumerate}
   
   Now, we briefly consider the case when $P_1$ is a trigger collection. 
   Let $\sigma_1$ be defined as above. 
   Let $H'_1$ be such that 
   $P_1\subst{\tilde W}{\tilde x} \sigma_1 \processrelate H'_1$. 
   Then, by definition of $\relS$, we know $Q_1$ has the 
   following shape: 
   \begin{align*}
     Q_1 =  \Rb{\tilde v} \parallel H_1
   \end{align*}
   \noindent where $\wtd v = \rfn{P_1\subst{\tilde W}{\tilde x}}$. 
   Further, by \defref{mst:def:valuesprelation} 
   we know $H_1' = \binp{n_i}{y}H_2'$
   such that $P_2\subst{\tilde W}{\tilde x} \sigma_2 \processrelate H_2'$, 
   where $\sigma_2 =\sigma_1 \cdot \subsqn{n_i}$. 
   Now, let $V_2$ be such that $V_1 \sigma \vrelate V_2$, for some $\sigma$. 
   We could see that 
   $$H_1' \by{\abinp{n_i}{V_2}} H_2'\subst{V_2}{y}$$
   We should show that 
   \begin{align}
   \label{pt:rcv-goal-trigger}
     P_2 \subst{\tilde W}{\tilde x}\subst{V_1}{y} ~\relS~ 
     \Rb{\tilde v} \parallel H_2\subst{V_2}{y}
   \end{align}
  
  By 
  $P_2\subst{\tilde W}{\tilde x} \sigma_2 \processrelate H_2'$ 
  and noticing that  $\processrelate$ is closed 
   under the substitution of $\vrelate$-related values 
   we have 
   \begin{align*}
   	P_2 \subst{\tilde W}{\tilde x}\subst{V_1}{y} \sigma_2 
   	\processrelate H_2' \subst{V_2}{y}
   \end{align*}
   
   Thus, goal~\eqref{pt:rcv-goal-trigger} follows. 
%
   This concludes case~$\rulename{Rv}{}$. 
   
   \item Case~$\rulename{App}{}$. 
    Here we know $P_1=\appl{V_1}{(\wtd r, u)}$ 
   where $\wtd r = (r_1,\ldots,r_n)$.
   We distinguish two sub-cases:
   \rom{1} $V_1 = x$ 
   where $\subst{V_x}{x} \in \subst{\tilde W}{\tilde x}$ and 
   \rom{2} $V_1$ is an abstraction. 
   Let  $V_1\subst{\tilde W}{\tilde x} = \abs{(\wtd y, z)}P_2$ 
   where $\wtd y = \wtd y^1, \ldots, \wtd y^n$. 
   The inference tree is as follows: 
   \begin{align*}
   \AxiomC{} 
   \LeftLabel{\scriptsize $\rulename{App}{}$}
   \UnaryInfC{$(\appl{V_1}{(\wtd r, u)})\subst{\tilde W}{\tilde x} 
    \by{\tau} P_2\subst{\tilde r, u}{\tilde y, z}$} 	
   \DisplayProof 
   \end{align*}
   
   \noindent Let $\sigma_1 = \indices{\fn{P_1\rho_1}}$ 
   such that $\subst{u_i}{u} \in \sigma_1$.
   Further, 
   let 
   $\wtd v = \rfn{V_1}$,  
   $\wtd \prop_{vr} = \bigcup_{r \in \tilde v, \tilde r}\prop^r$, 
   $\wtd \prop_k = (\prop_k, \ldots, \prop_{k+\plen{P_1}-1})$, 
   and $\wtd \prop_{k+1} = (\prop_{k+1}, \ldots, \prop_{k+\plen{P_1}-1})$. 
   Also, let $\wtd m = (u_i, \ldots, u_{i+\len{\Gt{C}}-1})$ 
  with $u_i:C$ 
   and 
   $\wtd{r_i} = (r^i_1,\ldots,r^i_{\len{\Rts{}{s}{S_i}}}))$
   with $r_i : S_i$
   for $i = \{1,\ldots,n\}$.
   Then, by the definition of $\relS$ 
   we have $Q_1 \in N$ where $N$ is defined as follows: 
    \begin{align*}
     N = N_1 \cup N_4 
      \cup \big(
      \cup_{1 \leq l \leq n} N^l_2  \big) 
      \cup \big(
      \cup_{1 \leq l \leq n - 1} N^l_3 \big) 
    \end{align*}
    \noindent where 
   \begin{align*} 
    N_1 &= \{ 
      \news{\wtd \prop_k} 
      \news{\wtd \prop_{vr}}  
    \Rb{\tilde v, \tilde r} \Par \apropout{k}{\wtd B}
    \Par \B{k}{\tilde x}{P_1\sigma_1} : \wtd W \sigma_1 \vrelate \wtd B \} 
    \\ 
    N^l_2 &= \{
      \news{\wtd \prop_{k+1}} 
      \news{\wtd \prop_{vr}}
    \Rb{\tilde v, r_l, \ldots, r_n} \Par
                 \overbracket{\prop^{r_l}!\big\langle 
                \lambda \wtd z_l. \prop^{r_{l+1}}!\langle\lambda \wtd z_{l+1}.\cdots. 
             \prop^{r_n}!\langle \lambda \widetilde z_n.}^{|\tilde r| - l +1} 
             Q^{V_2}_l \rangle \,\rangle \big\rangle 
             \\ 
             & \qquad 
             : V_1 \subst{\tilde W}{\tilde x} \sigma_1 \vrelate V_2 \}
             \\
    N^l_3 &= \{
      \news{\wtd \prop_{k+1}} 
      \news{\wtd \prop_{vr}}
    \Rb{\tilde v, r_{l+1}, \ldots, r_n} \Par \lambda \wtd z_l.
             \overbracket{
                 \prop^{r_{l+1}}!\langle\lambda \wtd z_{l+1}.\cdots. 
             \prop^{r_n}!\langle \lambda \wtd z_n.}^{|\tilde r| - l} 
             Q^{V_2}_l \rangle \, \big\rangle \ {\wtd r_l} 
             \\ 
             & \qquad : V_1 \subst{\tilde W}{\tilde x} \sigma_1 \vrelate V_2  \}
             \\
    N_4 &= \{
      \news{\wtd \prop_{k+1}}
     \news{\wtd \prop_{vr}}
    \Rb{\tilde v} \Par \appl{V_2}{(\wtd r_1,\ldots,\wtd r_{n}, \wtd m)}
    :  V_1\subst{\tilde W}{\tilde x} \sigma_1  \vrelate V_2 \}
    \end{align*}

    \noindent where 
    \begin{align*} 
    Q^{V_2}_l = \appl{V_2}{(\wtd r_1,\ldots,\wtd r_{l-1}, \wtd z_{l}, \ldots, 
             \wtd z_n, \wtd m)}
    \end{align*} 
   Note that for any
    $$Q_1 \in N_1 \cup N_4 
    \cup \big(
    \cup_{1 \leq l \leq n} N^l_2  \big) 
    \cup \big(
    \cup_{1 \leq l \leq n - 1} N^l_3 \big)$$
   there exists a $Q'_1 \in N_5$ such that 
   $Q_1$ reduces to  $Q'_1$ through communication on non-essential prefixes.
   By \lemref{l:c-prop-closed} it then suffices to consider the case $Q'_1 \in N_5$.
   
    Let $V_2 = \abs{\wtd y^1, \ldots, \wtd y^n, \wtd z}{Q_2}$ 
    where $\wtd z = (z_1,\ldots,z_{\len{\Gt{C}}})$.
    Then, we have the following transition: 
    \begin{align*}
      Q'_1  \by{\tau} \news{\wtd \prop_{k+1}}\news{\wtd \prop_r}R
    \end{align*} 
    \noindent where 
    $R = \Rb{\tilde v} \Par 
    Q_2\subst{\tilde r_1, \ldots, \tilde r_n, \tilde m}
    {\tilde y^1, \ldots, \tilde y^n, \tilde z}$. 
    We should show that 
    \begin{align}
      \label{eq:app-rec-s}
      P_2\subst{\tilde r, u}{\tilde y, z} \ \relS \ 
      \news{\wtd \prop_{k+1}} \news{\wtd \prop_{vr}} R
    \end{align} 
   
    \noindent 
    By 
    $V_1 \rho_1 \sigma_1 \vrelate V_2$ 
    (with $\rho_1 = \subst{\tilde W}{\tilde x}$)
    and 
    \defref{mst:def:vrelate} either 
    $V_2 \in \Cb{}{}{V_1\rho_1\sigma_1}$ or 
    $V_1 \rho_1 \valuesrelate V_2$. 
    In the former case, we know 
   $V_2 \in \Cbs{\rho_1'}{\sigma'_1}{V'_1}$
    where 
    $\rho_1'=\subst{\tilde W'}{\tilde x'}$ is such that 
    $V'_1 \rho_1'\sigma'_1 = V_1 \rho_1\sigma_1$. 

   Let $\wtd B'$ be such that $\wtd W' \sigma_1' \vrelate \wtd B'$. 
    By \tabref{mst:t:tablecd} we have 
   $$V_2 = \V{k+1}{\tilde x'}{\abs{(\wtd y, z)}P'_2}\subst{\tilde B'}{\tilde x'} = 
   \abs{(\wtd{y^1},\ldots,\wtd{y^n}, \wtd z)}Q_2 \subst{\tilde B'}{\tilde x'}
   $$
   
   \noindent where 
   \begin{align*} 
   Q_2 = 
   \news{ \wtd \prop_y}
   \prod_{i \in \len{\wtd y}}(\binp{\prop^{y_i}}{b}(\appl{b}{\wtd y^i})) \Par 
   \apropout{k+1}{\wtd B'}
   \Par
   \B{k+1}{\tilde x'}{P_2' \subst{z_1}{z}}
   \end{align*}
   
   \noindent with $\wtd \prop_y = \bigcup_{i \in \len{\tilde y}} c^{y_i}$ 
   and $P_2'$ is such that 
   \begin{align*}
    P_2'\subst{\tilde W'}{\tilde x'} = P_2
   \end{align*}
   
   Thus, we know  
   \begin{align*}
     R \equiv \Rb{\tilde v} \Par \news{ \wtd \prop_y}
   \prod_{i \in \len{\wtd y}}(\binp{\prop^{y_i}}{b}(\appl{b}{\wtd r_i})) \Par 
   \apropout{k+1}{\wtd B'}
   \Par
   \B{k+1}{\tilde x'}{P_2' \subst{z_1}{z}}
   \end{align*}

   Now, we know $\news{\wtd \prop_{vr}} R \equiv \news{\wtd \prop_v} R $ 
   where $\wtd \prop_v = \bigcup_{r \in \tilde v}\prop^r$ since 
   $\wtd \prop_{vr} \setminus \wtd \prop_v \not\subseteq \fn{R}$. 
   Further, by renaming bound names we have 
   $$R\equiv
   \Rb{\tilde v} \Par \news{ \wtd \prop_r}
   \prod_{r \in \wtd r}(\binp{\prop^{r}}{b}(\appl{b}{\wtd r})) \Par 
   \apropout{k+1}{\wtd B'}
   \Par
   \B{k+1}{\tilde x'}{P_2' \subst{z_1}{z}} \subst{\tilde \prop_r}{\tilde \prop_y}
   $$
   
   \noindent where $\wtd \prop_r = \bigcup_{r \in \tilde r}\prop^r$. 
   Now, 
   by the definition of $\Rb{\tilde v}$ (\defref{mst:def:rec-providers}) 
   we know 
   \begin{align*}
    \Rb{\tilde v, \tilde r} = 
     \Rb{\tilde v} \Par \prod_{r \in \wtd r}
     (\recprovx{n}{x}{\wtd r})
   \end{align*}
   Thus, we have 
   \begin{align*}
    R \equiv \news{\wtd \prop_r} 
    \Rb{\tilde v, \tilde r} \Par  \apropout{k+1}{\wtd B'}
    \Par
    \B{k+1}{\tilde x'}{P_2' \subst{z_1}{z}}\subst{\tilde \prop_r}{\tilde \prop_y} 
    = \news{\wtd \prop_r} R' 
   \end{align*}
   
   and by the definition we have $R' \in \Cb{\tilde W'}{\tilde x'}{P_2'}$.  
   We may notice that $\wtd v, \wtd r = \rfn{P_2'}$ 
   and $\news{\prop_{vr}}R \equiv \news{\prop_v}\news{\prop_r}R'$. 
   
   The later case, when $V_1 \rho_1 \valuesrelate V_2$, 
   follows by the fact that bodies of 
   characteristic and triggers values are $\processrelate$-related 
   to their minimal counterparts as shown in~\lemref{mst:lemm:processrelate-triggers}
   and that relation $\processrelate$ is closed under name substitutions. 
   So, the goal \eqref{eq:app-rec-s} follows. 

   This concludes case~$\rulename{App}{}$ (and base cases) of the proof of \lemref{lemma:mstbs1}. Next, we consider 
  the inductive cases.

   \item Case~$\rulename{Par}{_L}$.
   In this case we 
   distinguish two sub-cases: \rom{1} 
   $P_1 \rho = P_1' \rho'_1 \Par P_1'' \rho''_1$ 
   and \rom{2} 
   $P_1 \rho = P_1' \rho'_1 \parallel P_1'' \rho''_1$
   where $P_1'$ is a trigger collection. 
    The final rule in the inference tree is: 
   \begin{align*} 
       \AxiomC{$P_1' \rho_1' \by{\ell} P_2' \rho_2' $} 
       \AxiomC{$\bn{\ell} \cap \fn{P_1''} = \emptyset$} 
       \LeftLabel{\scriptsize $\rulename{Par}{_L}$}
       \BinaryInfC{$P_1' \rho_1' \Par P_1'' \rho_1''
       \by{\ell} 
       P_2' \rho_2' \Par P_1'' \rho_1''$} 	
   \DisplayProof 
   \end{align*}
   
   \noindent 
   Let $\sigma_1 \in \indices{\wtd u}$
   where $\wtd u = \fn{P_1\subst{\tilde W}{\tilde x}}$. 
   Further, we know $\rho_1' = \subst{\tilde W_1}{\tilde y}$ 
   and $\rho_1'' = \subst{\tilde W_2}{\tilde z}$. 
   Further, let $\sigma'_1$ and $\sigma_1''$
   such that 
   $$P_1 \rho_1 \sigma_1 = 
   P_1' \rho_1' \sigma'_1 \Par P''_1 \rho''_1 \sigma''_1$$
   
   In sub-case \rom{1}, by the definition of $\relS$
   (\tabref{mst:t:tablecd}) we have $Q_1 \in N_1 \cup N_2 \cup N_3$ where 
   \begin{align*}
     N_1 =& \{ 
       \news{\widetilde \prop_k}
       \news{\wtd \prop_r}
       (\apropout{k}{\widetilde B} 
     \Par \B{k}{\tilde x}{P_1'\sigma_1' \Par P_1''\sigma_1''}) : 
     \wtd W \sigma_1 \vrelate \wtd B \} \\
     N_2 =& \{ 
       \news{\widetilde \prop_{k+1}}
       \news{\wtd \prop_r}
       (\propout{k+1}{\wtd B_1}
     \apropout{k+2}{\wtd B_2} 
     \Par \B{k+1}{\tilde y}{P_1'\sigma_1'} \Par \B{k+2}{\tilde z}{P_1''\sigma_1''}) 
     \\ 
     & \qquad : \wtd W_i \sigma_1 \vrelate \wtd B_i, i \in \{1,2 \}\}
     \\
       N_3 =& 
        \{ 
       \news{\widetilde \prop}
      \news{\wtd \prop_r}  
      R'_1 \Par R''_1 : R'_1 \in
        \Cb{\tilde W_1}{\tilde y}{P_1'\sigma_1'},\ 
        R''_1 \in \Cb{\tilde W_2}{\tilde z}{P_1''\sigma_1''}\} 
   \end{align*}
   \noindent where  $\wtd \prop_k = (\prop_k, \ldots, \prop_{k+\plen{P_1}-1})$, 
    $\wtd \prop_{k+1} = (\prop_{k+1}, \ldots, \prop_{k+\plen{P_1}-1})$, 
   and $\wtd \prop = \fpn{R_1' \Par R''_1}$.
   Note that for $Q_1 \in N_1 \cup N_2$ there exists some $Q'_1$ and $Q''_1$
   such that $Q_1$ reduces to $Q'_1\Par Q''_1 \in N_3$ through communication on non-essential prefixes,
   with 
   \begin{align}
     Q'_1 &\in \Cbs{\rho'_1}{\sigma'_1}{P_1'}
   \\
     Q''_1 &\in \Cbs{\rho''_1}{\sigma''_1}{P_1''}
     \label{eq:parl-q''_1}
   \end{align}
   
   Then, by \lemref{l:c-prop-closed} it suffices to consider the case of $Q'_1 \Par Q''_1 \in N_3$.
   By the definition of $\relS$ we have 
   \begin{align} 
     &P_1' \rho_1' \relS Q'_1 
     \label{parl-q'1}
     \\
     &P_1'' \rho_1'' \relS Q''_1
     \label{parl-q''1}
   \end{align}
   
    To apply IH we do a case analysis on the action $\ell$: 
   
   \begin{itemize}
   
   \item Sub-case $\ell \not\equiv \news{\wtd m_1}\about{n}{V_1}$. 
   By \eqref{parl-q'1} and IH we know 
   there is $Q_2'$ such that 
   $Q_1' \By{\ell} Q_2'$ and 
   \begin{align}
   \label{eq:parl-ih2}
   P_2' \rho_2' \ \relS \ Q_2' 	
   \end{align}
   
   \noindent We should show that 
   \begin{align}
     \label{eq:parl-goal1}
       P_2' \rho_2' \Par 
       P''_1 \rho''_1  ~\relS~  Q'_2 \Par Q''_1
   \end{align}
   
   We know that there is $R'$ such that 
   \begin{align} 
   Q'_1 \By{\tau} R' \by{\iname \ell} Q'_2
   \end{align}

   Thus, by Rule~$\rulename{Par}{_L}$ we can infer the following: 
   \begin{align*} 
     Q'_1 \Par Q''_1 \By{\tau} R' \Par Q''_1
   \end{align*} 
   
   Further, we can infer 
   \begin{align*}
     \AxiomC{$R' \by{\iname \ell} Q'_2$} 
     \AxiomC{} 
     \LeftLabel{\scriptsize $\rulename{Par}{_L}$}
     \BinaryInfC{$R' \Par Q_1'' \by{\iname \ell} Q'_2 \Par Q''_1$} 
     \DisplayProof
   \end{align*}
   
   Then, by the IH \eqref{eq:parl-ih2} and the definition of $\relS$ 
   (\defref{mst:d:relation-s}) we know 
   $$ Q'_2 \in \Cbs{\rho'_2}{\sigma_1' \cdot \sigma_2'}{P_2'}$$
   
   So, we may notice that 
   \begin{align*}
     \Cb{\rho_1''}{\sigma_1'' \cdot \sigma_2'}{P_1''} 
     =
     \Cb{\rho_1''}{\sigma_1''}{P_1''} 
   \end{align*}
   
   
   
   
   So, by \eqref{eq:parl-q''_1} and  definition of $\Cb{-}{-}{-}$ we have 
   \begin{align*} 
     Q'_2 \Par Q''_1 \in 
       \Cbs{ \rho_2' \cdot \rho_1''}
       {\sigma_2' \cdot \sigma_1''}
       {P_2'  \Par P_1''}
   \end{align*} 
   
   By IH and assertion, we know that if $\ell = \bactinp{n}{V_1}$
   then $\subsqn{n_i} \in \sigma_2'$. So, assertion 
   $\subsqn{n_i} \in \sigma_2' \cdot \sigma_1''$ holds in this case. 
   Now, by \eqref{parl-q'1} we have $\sigma_1'' \in \indices{\fn{P_1''\rho_1''}}$ and 
   by \eqref{eq:parl-ih2} we have $\sigma_2' \in \indices{\fn{P_2'\rho_2'}}$. 
   We may notice that if $\ell = \bactinp{n}{V_1}$, by 
   transition rule~$\rulename{SRv}{}$ we have 
   $\bar n \not\in \fn{P_2'\rho_2' \Par P_1''\rho_1''}$ so 
   by \ref{mst:d:indexedsubstitution} we have $\subst{\dual n_j}{\dual n} \not\in \sigma_1''$ for 
   any $j > 0$. 
   So, we have  
   $$\sigma_2' \cdot \sigma_1'' \in \indices{\fn{P_2'\rho_2' \Par P_1''\rho_1''}}$$
   Thus, the goal \eqref{eq:parl-goal1} follows. 
   This concludes this sub-case.

       \item Sub-case $\ell \equiv \news{\wtd m_1}\about{n}{V_1}$. This sub-case 
   follows the essential steps of the previous sub-case. 
   By \eqref{parl-q'1} and IH we know there is $Q_2'$ such that 
   $Q_1' \By{\news{\tilde m_2}\about{n_i}{V_2}} Q _2'$ 
   and 
   \begin{align}
       \label{eq:parl-out-ih1}
     \news{\wtd m_1}(P_2' \parallel \htrigger{t}{V_1}) \rho_1' 
     \ \relS \
       \news{\wtd m_2}(Q_2' \parallel \htrigger{t}{V_2})
   \end{align}
   
   We should show that 
   \begin{align}
     \news{\wtd m_1}(P_2' \Par P_1'' \parallel \htrigger{t}{V_1}) 
    \rho_1
     \ \relS \ 
       \news{\wtd m_2} 
     (Q_2' \Par Q''_1 \parallel \htrigger{t}{V_2})
     \label{pr:parlout-goal}
   \end{align}
   
   \noindent We pick $R'$ as in the previous sub-case. 
   So, we can infer the following transition: 
   \begin{align*} 
   \AxiomC{$R' 
   \by{\news{\tilde m_2}\about{n_i}{V_2}} 
   Q_2' $} 
   \LeftLabel{\scriptsize $\rulename{Par}{_L}$}
   \UnaryInfC{$R'\Par Q''_1
          \by{\news{\tilde m_2}\about{n_i}{V_2}}
            Q_2' \Par Q''_1$}
   \DisplayProof  	
   \end{align*}
   
   Next, by \tabref{mst:t:tablecd} we can infer the following. 
   First, by \eqref{eq:parl-out-ih1} we know 
   \begin{align*}
       Q_2' \parallel \htrigger{t}{V_2} \in 
       \Cbs{\rho_1'}{\sigma_2'}
       {P_2' \parallel \htrigger{t}{V_1}}
   \end{align*}
   
   By IH and assertion, we know $\subsqn{n_i} \in \sigma_2'$. 
   So, assertion $\subsqn{n_i} \in \sigma_2' \cdot \sigma_1''$ holds. 
   Similarly to the previous sub-case, we have 
   \begin{align*}
     \Cbs{\rho_1''}{\sigma_1'' \cdot \sigma_2'}{P_1''} 
     =
     \Cb{\rho_1''}{\sigma_1''}{P_1''} 
   \end{align*}

   Thus, we have  
   \begin{align*}
     \news{\wtd m_2} 
     (Q_2' \Par Q''_1 \parallel \htrigger{t}{V_2}) \in 
     \Cbs{\rho_1}{\sigma_1'' \cdot \sigma_2'}
     {\news{\wtd m_1}(P_2' \Par P''_1 \parallel \htrigger{t}{V_1})}
   \end{align*}
   
   Now, by \eqref{parl-q'1} we have $\sigma_1'' \in \indices{\fn{P_1''\rho_1''}}$ and 
    by \eqref{eq:parl-out-ih1} we have $\sigma_2' \in \indices{P_2'\rho_2'}$. 
    We may notice that if $\ell = \bactout{n}{V_1}$, by 
   transition rule~\textsc{SSnd} we have 
   $\bar n \not\in \fn{P_2'\rho_2' \Par P_1''\rho_1''}$ so 
   by \defref{mst:d:indexedsubstitution} we have $\subst{\dual n_j}{\dual n} \not\in \sigma_1''$ for 
   any $j > 0$. 
   So, we have 
   $$\sigma_2' \cdot \sigma_1'' \in \indices{\fn{P_2'\rho_2' \Par P_1''\rho_1''}}$$
   Thus, \eqref{pr:parlout-goal} follows.

   \end{itemize}
   This concludes case~$\rulename{Par}{_L}$. 
   
   \item Case~$\rulename{Tau}{}$. We distinguish two sub-cases: 
   \rom{1}
   $P_1 \rho_1  = P_1' \rho_1'
   \Par P_1'' \rho_1''$ 
   and \rom{2} 
   $P_1 \rho_1  = P_1' \rho_1' \parallel P_1'' \rho_1''$ 
   where one of parallel components is a trigger collection.
   Without loss of generality, we assume $\ell_1 = \news{\wtd m_1}\about{\dual
   n}{V_1}$ and $\ell_2 = \abinp{n}{V_1}$.
    The final rule 
   in the inference tree is then as follows: 
   \begin{align*}
   \AxiomC{$
       P_1' \rho_1' \by{\ell_1} P_2' \rho_2'$} 
   \AxiomC{$
       P_1'' \rho_1'' \by{\ell_2} P_2'' \rho_2''$} 
   \AxiomC{$\ell_1 \asymp \ell_2$} 
   \LeftLabel{\scriptsize $\rulename{Tau}{}$}
   \TrinaryInfC{$
         P_1' \rho'_1 \Par P_1'' \rho''_1
         \by{\tau} 
         \news{\wtd m_1}(P_2'  \rho_2' 
         \Par P_2'' \rho_2'')$}	
   \DisplayProof
   \end{align*}

   Let $\sigma_1 = \indices{\wtd u}$ where $\wtd u =
   \fn{P_1\subst{\tilde W}{\tilde x}}$. 
   Further, let $\sigma'_1$ and $\sigma_1''$
   such that 
   $$P_1 \rho_1 \sigma_1 = 
   P_1' \rho_1' \sigma'_1 \Par P''_1 \rho''_1 \sigma''_1$$
   
   We know $\rho_1' = \subst{\tilde W_1}{\tilde y}$ 
   and $\rho_1'' = \subst{\tilde W_2}{\tilde w}$.
   By the definition of $\relS$
   (\tabref{mst:t:tablecd}) we have 
   $Q_1 \in N_1 \cup N_2 \cup N_3$ where
   \begin{align*}
     N_1 &= \{  \news{\wtd \prop_k}(\apropout{k}{\wtd B} 
     \Par \B{k}{\tilde x}{P_1'\sigma_1' \Par P_1''\sigma_1''}) 
     : \wtd W \sigma_1 \vrelate \wtd B \}
     \\
     N_2 &= \{ \news{\wtd \prop_{k+1}}(\propout{k+1}{\wtd B_1}
     \apropout{k+l+1}{\wtd B_2} 
     \Par \B{k+1}{\tilde y}{P_1'\sigma_1'} \Par \B{k+l+1}{\tilde w}{P_1''\sigma_1''})
     \\ 
     & \qquad : \wtd W_i \sigma_1 \vrelate \wtd B_i, i \in \{1,2 \}\}
     \\
       N_3 &= 
        \{R'_1 \Par R''_1 : R'_1 \in
        \Cb{\tilde W_1}{\tilde y}{P_1'\sigma_1'},\
        R''_1 \in \Cb{\tilde W_2}{\tilde w}{P_1''\sigma_1''} \} 
   \end{align*}
  
   By \lemref{l:c-prop-closed}, for $Q^1_1 \in N_1$ there exists 
   $Q^2_1 \in N_2$
   such that 
   \begin{align*}
     Q^1_1 \By{\tau} Q^2_1 \By{\tau} Q'_1 \Par Q''_1
     \end{align*}
     \noindent where $Q'_1 \Par Q''_1 \in N_3$, that is 
   \begin{align}
     Q'_1 &\in \Cbs{\rho'_1}{\sigma'_1}{P_1'}
     \label{eq:tau-q'_1} 
     \\
     Q''_1 &\in \Cbs{\rho''_1}{\sigma''_1}{P_1''}
     \label{eq:tau-q''_1}
   \end{align}

   Thus, in both cases we only consider how $Q_1' \Par Q_1''$ evolves. 
   By the definition of $\relS$ we have 
   \begin{align} 
     P_1' \rho_1' &\relS Q'_1 
     \label{tau-q'1}
     \\
     P_1'' \rho''_1  &\relS Q''_1
     \label{tau-q''1}
   \end{align}
   
   We have the following IH: 
   \begin{enumerate}
     \item 
   By \eqref{tau-q'1} and IH there is 
   $Q_2'$ such that 
   $Q'_1 \By{\news{\wtd m_1}\about{\dual n_i}{V_2}} Q_2'$ and 
   \begin{align}
   \news{\wtd m_1}(P_2' \parallel \htrigger{t}{V_1}) \rho_1'
    \ \relS \ 
   \news{\wtd m_2}(Q_2'  \parallel \htrigger{\tni}{V_2})
   \label{eq:tau-ih-pair-1}
   \end{align}

   By \defref{mst:d:relation-s} we know there is 
   $\sigma_v \cdot \sigma'_2 \in \indices{\fn{(P_2' \parallel \htrigger{t}{V_1})\rho_1'}}$
   such that  $\sigma_v \in \indices{\fn{\htrigger{t}{V_1}\rho_1'}}$ and 
   $\sigma_2' \in \indices{\fn{P_2'\rho_2'}}$ and  
   \begin{align*}
     Q_2'  \parallel \htrigger{\tni}{V_2} \in
      \Cbs{\rho_1'}{\sigma_v \cdot \sigma_2'}
      {P_2' \parallel \htrigger{\tni}{V_1}}
   \end{align*}
    
   So, we can infer  
   \begin{align}
     \label{eq:tau-ih1}
     Q'_2 \in \Cbs{\rho_2'}{\sigma'_2}{P_2'}
   \end{align}
   
   \item 
   By \eqref{tau-q''1} and IH there is 
   $Q''_2$ such that 
   $Q''_1 \By{\abinp{n_j}{V_2}} Q_2''$ and 
   \begin{align}
   P_2'' \rho_2''  \ \relS \ Q_2''
   \label{eq:tau-ih-pair-2}
   \end{align}

   By \defref{mst:d:relation-s} and \eqref{eq:tau-ih-pair-2} we know there is 
   $\sigma''_2 \in \indices{\fn{P_2''\rho_2''}}$ such that  
   \begin{align}
     \label{eq:tau-ih2}
     Q_2'' \in \Cb{\rho''_2}{\sigma''_2}{P_2''}
   \end{align}
   
   \end{enumerate}

   Similarly to the $\texttt{Par}_\texttt{L}$ case, we know there is 
   $R'$ such that 
   \begin{align*}
     Q'_1 \By{\tau} R' \by{\news{\wtd m_2}\about{\dual n_i}{V_2}} Q'_2
   \end{align*}
   \noindent where $\iname \ell_1 =  \news{\wtd m_2}\about{\dual n_i}{V_2}$. 
   Further, there is 
    $R''$ such that 
   \begin{align*}
     Q''_1 \By{\tau} R'' \by{\abinp{n_j}{V_2}} Q''_2
   \end{align*}
   
   \noindent where $\iname \ell_2 = \abinp{n_j}{V_2}$. 
   By Rule~$\textsc{Par}_\textsc{L}$ and Rule~$\textsc{Par}_\textsc{R}$ 
   we can infer the following: 
   $$Q'_1 \Par Q''_1 \By{\tau} R' \Par R''$$

   Now, to proceed we must show $\iname \ell_1 \asymp \iname \ell_2$, 
   which boils down to showing that indices of $\dual n_i$ and $n_j$ match. 
   For this, we distinguish two sub-cases: \rom{1} $\neg \tr(\dual n_i)$ and 
   $\neg \tr(n_j)$ 
   and \rom{2} $\tr(\dual n_i)$ and $\tr(n_j)$. In the former sub-case,  
     we have 
   $\subst{\dual n_i}{n} \in \sigma_1$ and $\subst{n_j}{n} \in \sigma_1$, 
   where $\sigma_1 = \indices{\wtd u}$. 
   Further, by this and 
   and \defref{mst:d:indexedsubstitution} 
   we know that $i =j$. Now, we consider the later case. 
   By assumption that $P_1\subst{\tilde W}{\tilde x}$ is well-typed, 
   we know there $\Gamma_1, \Lambda_1$, and $\Delta_1$ such that 
    $\Gamma_1;\Lambda_1;\Delta_1 \proves P_1\subst{\tilde W}{\tilde x} 
    \hastype \Proc$ with $\balan{\Delta_1}$, Thus,  we have 
    $n:S \in \Delta_1$ and $\dual n:T \in \Delta_1$ such that 
    $S ~\textbf{dual}~ T$. 
    Hence, by the definition of $\indT{-}$ (\Cref{mst:def:indexfunction}) we have 
     $i = \indT{S} = \indT{T} = j$. 
   Hence, we can infer the following transition: 
   \begin{align*}
   \AxiomC{$R' \by{\news{\wtd m_2}\about{\dual n_i}{V_2}} Q_2'$} 
   \AxiomC{$R'' \by{\abinp{n_i}{V_2}} Q_2''$} 
   \AxiomC{$\iname \ell_1 \asymp \iname \ell_2$} 
   \LeftLabel{\scriptsize $\rulename{Tau}{}$}
   \TrinaryInfC{$(R' \Par R'') \by{\tau} \news{\wtd m_2}(Q_2' \Par Q_2'')$}
   \DisplayProof	
   \end{align*}
   
   Now, we should show that 
   \begin{align}
     \news{\wtd m_1}(P_2' \rho_2' \Par P_2'' \rho_2'')  
     \ \relS \ 
     \news{\wtd m_2}(Q_2' \Par Q_2'')
     \label{eq:tau-goal2}
    \end{align}
    Further, we have 
    $(P_2' \Par P_2'') \rho'_2 \cdot \rho''_2 = P_2' \rho'_2 \Par P_2'\rho''_2$.  
    So, by \eqref{eq:tau-ih1} and \eqref{eq:tau-ih2} we have 
   \begin{align*} 
     \news{\wtd m_2}(Q_2' \Par Q_2'') \in 
     \Cbs{\rho_2' \cdot \rho_2''}{\sigma'_2 \cdot \sigma''_2} 
     { \news{\wtd m_1}
     (P_2' \rho_2' \sigma'_2 \Par P_2'' \rho_2'' \sigma''_2)}
   \end{align*} 
   
   Finally, we need to show 
   $\sigma'_2 \cdot \sigma''_2 \in \indices{\wtd u}$, where 
    $\wtd u = \fn{P_2' \rho_2' \Par P_2'' \rho_2''}$. 
    By IH we have 
    $\sigma_2' \in \indices{\fn{P_2'\rho_2'}}$ and 
    $\sigma_2'' \in \indices{\fn{P_2''\rho_2''}}$. 
    Further, as $P$ is well-typed, we have 
    $\dual n \in \fn{P_2'\rho_2'}$,  
    $n \not\in \fn{P_2'\rho_2'}$, 
    $\dual n \in \fn{\fn{P_2''\rho_2''}}$, 
    and  $n \not\in {\fn{P_2''\rho_2''}}$. 
    Thus,  by \defref{mst:d:indexedsubstitution} 
    in sub-case $\neg\tr(n)$ we only need to show that 
    for some $k > 0$ we have 
    $\subst{\dual n_k}{\dual n} \in \sigma_2'$ and 
    $\subst{n_k}{n} \in \sigma_2''$. 
    This follows by the assertion 
   as we know $\subsqn{\dual n_i} = \incrname{\dual n}{i} \in \sigma_2'$ 
   and $\subsqn{n_i} = \incrname{n}{i} \in \sigma_2''$.  
  The sub-case $\tr(n)$ follows directly by the \Cref{mst:d:indexedsubstitution}
 as  we have $\subst{\dual n_1}{\dual n} \in \sigma_2'$ 
  and $\subst{n_1}{n} \in \sigma_2''$.  
  So, we have $\sigma'_2 \cdot \sigma''_2 \in \indices{\wtd u}$. 
    Thus, the goal \eqref{eq:tau-goal2} follows. 
   This concludes case~$\rulename{Tau}{}$.

   \item Case~$\rulename{New}{}$. In this case we know  $P_1= \news{m:C}P_1'$.  
       The final rule in the transition inference tree is as follows: 
       \begin{align}
         \AxiomC{$P_1'\rho_1 
         \by{\bactout{\news{\wtd n_1}u}{V_1}}
         P_2 \rho_2 $} 
         \AxiomC{$m \in \fn{V_1}$} 
             \LeftLabel{\scriptsize $\rulename{New}{}$}
             \BinaryInfC{$\news{m}P_1' \rho_1
             \by{\bactout{\news{m \cdot \wtd n_1}u}{V_1}} P_2 \rho_1$} 
         \DisplayProof
       \end{align}
   \noindent
   Let $\sigma_1 = \indices{\fn{P_1 \rho_1 }}$.
   By the
   definition of $\relS$ (\tabref{mst:t:tablecd})
   we have $Q_1 \in N_1$ where 
   \begin{align*}
     N_1 =& \{\news{\wtd \prop_r}\news{\wtd \prop}\news{\wtd m_2}
     \news{\tilde \prop^{m}}{R} : 
     R \in \Cb{\tilde W\sigma_1}{\tilde x}{P_1'\sigma_1 \cdot 
     \subst{m_1 \dual m_1}{m \dual m}} \}
   \end{align*}
   \noindent where $\wtd \prop_r = \fcr{R}$, 
   $\wtd m_2 = (m_1,\ldots,m_{\len{\Gt{C}}})$,
    and  $\tilde \prop^m = \prop^m \cdot \prop^{\dual m}$ if $\tr(C)$, 
   otherwise $\tilde \prop^m = \epsilon$.  
   By IH, if
   $P_1'\rho_1 \relS Q'_1$ 
   there are $Q_2$ and $V_2$ such that 
   $$Q'_1 \By{\bactout{\news{\wtd n_2}u_i}{V_2}} Q_2$$
   and 
   \begin{align}
     \label{eq:news-send-s-ih}
     \news{\wtd n_1}(P_2 \parallel \htrigger{t}{V_1}) \rho_1 
     \ \relS \ 
     \news{\wtd n_2}(Q_2 \parallel \htrigger{t}{V_2})
   \end{align}
   For $Q_1 \in N_1$ we should show that 
   \begin{align}
     Q_1 \By{\bactout{\news{\wtd m_2 \cdot \wtd n_2}u_i}{V_2}} Q_2
     \label{pt:new-goal1}
   \end{align}
   such that  
   \begin{align}
     \label{eq:news-send-s-goal}
     \news{m \cdot \wtd n_1}(P_2 \parallel \htrigger{t}{V_1}) \rho_1 
    \ \relS \ 
   \news{\wtd \prop'_r \cdot \wtd m_2 \cdot \wtd  n_2}(Q_2 \parallel \htrigger{t}{V_2})
   \end{align}
   \noindent where $\wtd \prop'_r = \fcr{V_2}$. 
   Note that by the definition we have $\fpn{V_2} = \emptyset$.
   By \lemref{l:c-prop-closed}, we know there is $R$ such that 
   $P_1' \rho_1 \relS R$ and 
   \begin{align}
     \label{eq:news-q1-r}
     Q_1' \By{\tau} R \by{\bactout{\news{\tilde n_2}u_i}{V_2}} Q_2
   \end{align}
   
   Now, by rule~$\rulename{New}{}$ we have 
   $$Q_1 \By{\tau} \news{\wtd \prop'_r}\news{\wtd m_2}R$$
   
   Now, we need to apply Rule~$\rulename{New}{}$ $\len{\wtd m_2}$ times 
   to \eqref{eq:news-q1-r} 
   to infer the following: 
   \begin{align*}
     \label{pt:news-r-q2}
     \AxiomC{$R' \by{\bactout{\news{\wtd n_2}u_i}{V_2}} Q_2$} 
     \AxiomC{$\wtd m_2 \subseteq \fn{V_2}$} 
     \LeftLabel{\scriptsize $\rulename{New}{}$}
     \BinaryInfC{
       $
       \news{\wtd m_2}R
       \by{\bactout{\news{\wtd m_2 \cdot \wtd n_2}u_i}{V_2}} Q_2
       $
     } 
     \DisplayProof
   \end{align*}
   
   Therefore, the sub-goal \eqref{pt:new-goal1} follows. 
   Now, by \eqref{eq:news-send-s-ih} 
   and by \defref{mst:d:relation-s} we can infer the following: 
   $$\news{\wtd n_2}(Q_2 \parallel \htrigger{t}{V_2}) \equiv 
   \news{\wtd \prop'_r}
   \news{\wtd n}
   ( \Rb{\tilde v} \Par R_2 \Par  \Rb{\tilde w} \parallel \htrigger{t}{V_2})
   $$
   \noindent where $\wtd v =  \rfn{R_2}$,  $\wtd w = \rfn{V_2}$, 
   $\wtd \prop'_r = \fcr{V_2}$, 
   and 
   $$\news{\wtd n}
   ( \Rb{\tilde v} \Par R_2 \Par \Rb{\tilde w} \parallel \htrigger{t}{V_2}) \in 
   \Cbs{\rho_1}{\sigma_1}{\news{\wtd n_1}(P_2 \parallel \htrigger{t}{V_1})}$$
   
   By this and the definition of $\Cb{-}{-}{-}$ (\tabref{mst:t:tablecd}) 
   we have 
   $$\news{\wtd m}
   \news{\tilde \prop^m}
   \news{\wtd n}
   (\Rb{\tilde v} \Par R_2 \Par \Rb{\tilde w} \parallel \htrigger{t}{V_2})
   \in \Cbs{\rho_1}{\sigma_1}
   {\news{m}\news{\wtd n_1}(P_2 \parallel \htrigger{t}{V_1})}$$
   
   Further, we may notice 
   \begin{align*}
     \news{\wtd m_2 \cdot \wtd  n_2}(Q_2 \parallel \htrigger{t}{V_2})
     \equiv
     \news{\wtd \prop_r} 
     \news{\wtd m_2}
     \news{\wtd n}(\Rb{\tilde v} \Par R_2 \Par \Rb{\tilde w} \parallel 
     \htrigger{t}{V_2})
   \end{align*}
   
   Therefore, the sub-goal \eqref{eq:news-send-s-goal} follows. 
   This concludes case~$\rulename{New}{}$ and the proof of \Cref{lemma:mstbs1}.
     \end{enumerate}
   \end{proof}

\subsection{Proof of \lemref{mst:lemma:mstbs3}}
\label{mst:app:mstbs3}

\mstbsrev*

\begin{proof}[Proof (Sketch)] 
    By transition induction. 
    First, we analyze the case of non-essential prefixes, which induce  $\tau$-actions 
    that do not correspond to actions in $P_1$. 
    This concerns the sub-case \rom{1} of Part 3. 
    This directly follows by \lemref{l:c-prop-closed}, that is by the fact that 
     $\Cb{\tilde W}{\tilde x}{P_1}$ is closed under transitions on 
    non-essential prefixes.

    \noindent Now, assume $Q_1 \by{\ell} Q_2$ when $\ell$ is an essential prefix.
    This is mainly the converse of the proof of \lemref{lemma:mstbs1} noting that
    there are no essential actions in $\Cb{\tilde W}{\tilde x}{P_1}$ not matched
    in $P_1$. We consider only one case: 
    \begin{itemize}
      \item Case $\rulename{Snd}{}$. In this case we know $P_1 =\bout{n}{V_1}P_2$. 
      We distinguish two sub-cases: \rom{1} $\neg \tr(u_i)$ 
      and \rom{2} $\tr(u_i)$. 
      In both sub-cases, we  distinguish two kinds of an object value $V_1$: 
      (a)~ $V_1 \equiv x$, such that  
      $\subst{V_x}{x}\in \subst{\tilde W}{\tilde  x}$ 
      and (b)~ $V_1 = \abs{y:C}P'$, that is $V_1$ is a pure abstraction.
      We only consider sub-case (a). 
  
      Let $\wtd W_1$, $\wtd W_2$, $\wtd y$, and $\wtd w$ such that 
      $$P_1\subst{\tilde W}{\tilde x} = 
      \bout{n}{V_1\subst{\tilde W_1}{\tilde y}}P_2 \subst{\tilde W_2}{\tilde w}$$
  
      Let    
      $\sigma_1 \in \indices{\wtd u}$ where 
      $\wtd u = \fn{P_1\subst{\tilde W}{\tilde x}}$ such that $\subst{n_i}{n} \in \sigma_1$.  
       Also, let $\sigma_2 = \sigma_1 \cdot \subsqn{n_i}$. 
       When $P_1$ is not a
      trigger, by the definition of $\relS$ (\tabref{mst:t:tablecd}), for both
      sub-cases, 
      we have 
      $Q_1 \in N_1$ where: 
      \begin{align*}
           N_1 &= \big\{
            \news{\wtd \prop_r}
            \news{\wtd \prop_{k+1}}
             \Rb{\tilde v} \Par 
           \bout{n_i}{V_2} \apropout{k+1}{\wtd B_2} \Par 
           \B{k+1}{\tilde w}{P_2\sigma_2} 
           :  V_1\sigma\subst{\wtd W_1}{\wtd y} \vrelate V_2,\ 
           \wtd W_2 \vrelate \wtd B_2 \big\}  
        \end{align*}
        For $Q_1 \in N_1$ we have the following transition inference tree: 
        \begin{align*} 
          \AxiomC{} 
          \LeftLabel{\scriptsize $\rulename{Snd}{}$}
          \UnaryInfC{$Q_1 \by{\bactout{n_i}{V_1}}Q_2$}
          \DisplayProof
        \end{align*}
        where 
        $$Q_2 =\news{\wtd \prop_r}
        \news{\wtd \prop_{k+1}}\apropout{k+1}{\wtd B_2} \Par 
        \B{k}{\tilde w}{P_2\sigma_2}$$
        
        \noindent We have 
        \begin{align*} 
          \AxiomC{} 
          \LeftLabel{\scriptsize $\rulename{Snd}{}$}
          \UnaryInfC{$P_1\subst{\tilde W}{\tilde x} 
          \by{\bactout{n}{V_1 \subst{\tilde W_1}{\tilde y}}} 
          P_2 \subst{\tilde W_2}{\tilde w}
          $}
          \DisplayProof
        \end{align*}
        We should show that 
        \[
          {\newsp{\wtd{m_1}}{P_2 \parallel \htrigger{t}{V_1}}}
            \ \relS\
            {\newsp{\wtd{m_2}}{Q_2 \parallel \htrigger{\tni}{V_2}}}
         \]
        
         \noindent This immediately follows by the definition of $\relS$ 
         and \tabref{mst:t:tablecd}. 
         This concludes \texttt{Snd} case.   
  \end{itemize}
  As can be seen the proof of this part is essentially the inverse of 
  the proof of
  \lemref{lemma:mstbs1}. We just need to show that 
  $\Cb{\tilde W\sigma}{\tilde x}{P_1\sigma}$ does 
  not introduce extra actions on essential prefixes not present in $P_1$.
  This is evident by the inspection of the definition of $\Cb{-}{-}{-}$.  
  Briefly, only in the case of the input and the output prefix does $\Db{-}{-}{-}$  
  introduce actions that mimic those prefixes. The remaining cases only
  introduce actions on non-essential prefixes ($\tau$-actions on propagator names). 
  \end{proof}


\end{document}